\def\slasha#1{\setbox0=\hbox{$#1$}#1\hskip-\wd0\hbox to\wd0{\hss\sl/\/\hss}}
\def\periodb#1{\setbox0=\hbox{$#1$}#1\hskip-\wd0\hbox to\wd0{-}}
\newcommand{\ident}{\mathbbm{1}}   			
\newcommand{\ii}{\mathrm{i}}   			
\newcommand{\jj}{\mathrm{j}}   			
\newcommand{\kk}{\mathrm{k}}   			
\newcommand{\e}{\mathrm{e}}   			
\newcommand{\CA}{\mathcal{A}}    			
\newcommand{\CH}{\mathcal{H}}
\newcommand{\CCK}{\mathscr{K}}
\newcommand{\CN}{\mathcal{N}}
\newcommand{\CO}{\mathcal{O}}
\newcommand{\CP}{\mathcal{P}}
\newcommand{\CQ}{\mathcal{Q}}
\newcommand{\CR}{\mathcal{R}}
\newcommand{\CS}{\mathcal{S}}
\newcommand{\CCV}{\mathscr{V}}
\newcommand{\CE}{\mathcal{E}}
\newcommand{\frg}{\mathfrak{g}}				
\newcommand{\frF}{\mathfrak{F}}
\newcommand{\frM}{\mathfrak{M}}
\newcommand{\frB}{\mathfrak{B}}
\newcommand{\frZ}{\mathfrak{Z}}
\newcommand{\ulfour}{\underline{\sf 4}}
\newcommand{\ulthree}{\underline{\sf 3}}
\DeclareMathOperator*{\timesbig}{\scalerel*{\times}{\textstyle\sum}}
\DeclareMathOperator*{\Timesbig}{\scalerel*{\times}{\displaystyle\sum}}
\def\tv{{\textrm{\tiny $V$}}}
\def\tv1{{\textrm{\tiny $V[1]$}}}
\newcommand{\mbf}[1]{{\boldsymbol {#1} }}
\newcommand{\FR}{\mathbbm{R}}     			
\newcommand{\FC}{\mathbbm{C}}     			
\newcommand{\RZ}{\mathbbm{Z}}     			
\newcommand{\PP}{{\mathbbm{P}}}    			
\newcommand{\sone}{\mathbb{S}}
\newcommand{\dd}{\mathrm{d}}     			
\newcommand{\sU}{\mathsf{U}}     			
\newcommand{\sG}{\mathsf{G}}
\newcommand{\sT}{\mathsf{T}}
\newcommand{\sHom}{\mathsf{Hom}}
\newcommand{\sH}{\mathsf{H}}
\newcommand{\sSU}{\mathsf{SU}}
\newcommand{\sSL}{\mathsf{SL}}
\newcommand{\sGL}{\mathsf{GL}}
\newcommand{\sGamma}{\mathsf{\Gamma}}
\newcommand{\shGamma}{{\widehat{\mathsf{\Gamma}}}}
\newcommand{\sSO}{\mathsf{SO}}
\newcommand{\sSpin}{\mathsf{Spin}}
\newcommand{\sEnd}{\mathsf{End}}
\newcommand{\comment}[1]{}     				
\def\tyng(#1){\hbox{\tiny$\yng(#1)$}}			
\def\tyoung(#1){\hbox{\tiny$\young(#1)$}}			
\newcommand{\beq}{\begin{eqnarray}}
\newcommand{\eeq}{\end{eqnarray}}
\newcommand{\Hilb}{{\sf Hilb}}
\newcommand{\Ob}{{\sf Ob}}
\definecolor{outrageousorange}{rgb}{1.0, 0.43, 0.29}
\newenvironment{myitemize}{\begin{itemize}[itemsep=-0.05cm, leftmargin=*, topsep=0.1cm]}{\end{itemize}}
\newcommand{\qu}{\mathtt{q}}
\newcommand{\Qu}{\mathtt{Q}}
\newcommand{\ttO}{\mathtt{O}}
\newcommand{\tto}{\mathtt{o}}
\newcommand{\Tr}{\mathrm{Tr}}
\theoremstyle{definition}
\newtheorem{corollary}[equation]{Corollary}
\newtheorem{lemma}[equation]{Lemma}
\newtheorem{proposition}[equation]{Proposition}
\newtheorem{conjecture}[equation]{Conjecture}
\newtheorem{remark}[equation]{Remark}
\newtheorem{example}[equation]{Example}
\newcommand{\midwedge}{\text{\Large$\wedge$}}
\def\beq{\begin{equation}}
\def\bee{\begin{equation}}
\def\eeq{\end{equation}}
\def\bea{\begin{eqnarray}}
\def\eea{\end{eqnarray}}
\def\ba{\begin{align}}
\def\ea{\end{align}}
\numberwithin{equation}{section}
\newcommand{\ttQ}{{\sf Q}}
\newcommand{\ttA}{{\sf A}}
\newcommand{\ttR}{{\sf R}}
\newcommand{\ttZ}{{\sf Z}}
\newcommand{\ttX}{{\tt X}}
\newcommand{\ttY}{{\tt Y}}
\newcommand{\Quot}{{\sf Quot}}
\newcommand{\ch}{\mathrm{ch}}
\begin{document}

\title{\bf Instanton Counting and Donaldson--Thomas Theory \\ on Toric Calabi--Yau Four-Orbifolds}

\author{Richard J. Szabo\thanks{R.J.Szabo@hw.ac.uk} \ }
\author{ Michelangelo Tirelli\thanks{mt2001@hw.ac.uk}}

\affil{\textit{\normalsize Department of Mathematics,
Heriot–Watt University}\\ \vspace{-1mm}
\textit{\normalsize Colin Maclaurin Building, Riccarton, Edinburgh EH14 4AS, UK}\\ \vspace{1mm}
\textit{\normalsize Maxwell Institute for Mathematical Sciences, Edinburgh, UK}\\ \vspace{1mm}
\textit{\normalsize Higgs Centre for Theoretical Physics, Edinburgh, UK}}
\date{}
\maketitle

\vspace{1cm}

\begin{abstract}
\noindent
We study rank $r$ cohomological Donaldson--Thomas theory on a toric Calabi--Yau orbifold of $\mathbbm{C}^4$ by a finite abelian subgroup $\sGamma$ of $\sSU(4)$, from the perspective of instanton counting in cohomological gauge theory on a noncommutative crepant resolution of the quotient singularity. We describe the moduli space of noncommutative instantons on $\mathbbm{C}^4/\sGamma$ and its generalized ADHM parametrization. Using toric localization, we compute the orbifold instanton partition function as a combinatorial series over $r$-vectors of $\sGamma$-coloured solid partitions. When the $\sGamma$-action fixes an affine line in $\FC^4$, we exhibit the dimensional reduction to rank $r$ Donaldson--Thomas theory on the toric K\"ahler three-orbifold $\FC^3/\sGamma$.
Based on this reduction and explicit calculations, we conjecture closed infinite product formulas, in terms of generalized MacMahon functions, for the instanton partition functions on the orbifolds $\FC^2/\RZ_n\times\FC^2$ and  $\mathbbm{C}^3/(\mathbbm{Z}_2\times\RZ_2)\times\mathbbm{C}$, finding perfect agreement with new mathematical results of Cao, Kool and Monavari.
\end{abstract}

\vspace{2cm}

\begin{flushright}
		\small
		{\sf EMPG--23--01}
	\end{flushright}

\newpage

{
\tableofcontents
}

\bigskip

\section{Introduction}
\label{sec:Intro}

The counting of BPS states in string theory and quantum field theory often leads to deep mathematical insights into counting problems of enumerative geometry. Of particular interest are vacuum moduli spaces of D-branes and instantons, which often provide alternative contructions of the relevant moduli spaces that are otherwise technically difficult to define rigorously in geometrical approaches. Looking in the other direction, mathematical constructions of enumerative geometry can shed light on structural properties of the BPS spectrum of particles in string theory and quantum field theory. 

In this paper we are concerned with the connections between BPS state counting problems and Donaldson--Thomas theory~\cite{Donaldson:1996kp,Donaldson:2009yq}. Donaldson--Thomas invariants are virtual numbers counting sheaves on a complex variety. They are defined as integrals of cohomology classes over virtual cycles of moduli spaces of sheaves. Their best understood physics connection is to type~II string theory compactified on a K\"ahler threefold $M$, where BPS states preserve half of the $\CN=2$ supersymmetry and correspond to bound states of D-branes.  The vacuum degeneracies are computed by the Witten index and the corresponding partition function reproduces the generating function for the Donaldson--Thomas invariants of $M$ (see e.g.~\cite{Yamazaki:2010fz,Cirafici:2012qc} for reviews). 

The enumeration of D-brane bound states on $M$ can be equivalently reformulated as an instanton counting problem in a six-dimensional $\CN_{\textrm{\tiny T}}=2$ cohomological gauge theory on $M$. The computation of Donaldson--Thomas partition functions from this perspective has been studied in great detail by~\cite{Iqbal:2003ds,coho,Quiver3d,Cirafici:2011cd} in the case where $M$ is a toric Calabi--Yau threefold. For this, the moduli space of $\sU(r)$ instantons is compactified in two ways: by deforming the first order partial differential equations defining BPS states to operator algebraic equations for noncommutative instantons, and by a local $\varOmega$-deformation of $M$ which preserves its $\sSU(3)$-holonomy. 

For the simplest example $M=\FC^3$, the Donaldson--Thomas partition function enumerates BPS bound states of D$p$-branes inside $r$ D$(p{+}6)$-branes. A six-dimensional version of the ADHM construction in four dimensions~\cite{Atiyah:1978ri} establishes that, from a geometric point of view, the compactified instanton moduli space is isomorphic to a moduli space of torsion-free sheaves on complex projective space $\mathbbm{P}^3$ with suitable characteristic classes and framing conditions~\cite{coho,Quiver3d}. The Coulomb branch instanton partition function can be written in a simple compact form as~\cite{Iqbal:2003ds,coho} 
\begin{align}
    Z^{r}_{\mathbbm{C}^3}(\qu)=M\big((-1)^r\,\qu\big)^r \ ,
\end{align}
where $M(q)=\prod_{n\geq1}\,(1-q^n)^{-n}$ is the MacMahon function and $\qu$ is the Boltzmann weight parameter for instantons. See~\cite{Szabo:2009vw,Szabo:2011mj} for reviews of various physical and geometrical aspects of this correspondence. 

General D$p$--D$p'$-brane systems can be considered after turning on fluxes for the Neveu--Schwarz (NS) $B$-field, in such a way that supersymmetry is restored in the vacuum~\cite{Witten:1995im,Boundstates}. In this paper we are interested in the codimension $p'-p=8$. Many new features appear in eight dimensions. The corresponding eight-dimensional cohomological gauge theories were constructed and studied in the late 1990s~\cite{special,8d,8ds}. They are in many respects similar to their four-dimensional counterparts, namely  Donaldson--Witten theory~\cite{Witten:1988ze} and Vafa--Witten theory~\cite{Vafa:1994tf} on complex surfaces. The equivariant instanton partition functions on $\FC^4$ were recently studied by Nekrasov and Piazzalunga in~\cite{m4,m4c}. 

The renewed interest in these eight-dimensional quantum field theories has been sparked by recent mathematical advances in the Donaldson--Thomas theory of Calabi--Yau fourfolds, starting with the seminal work of Cao and Leung~\cite{Cao:2014bca} which constructed a virtual fundamental class in special cases. Subsequently, virtual cycles of the Donaldson--Thomas moduli spaces for fourfolds were developed more generally by Borisov and Joyce~\cite{Borisov:2015vha} in the setting of derived differential geometry, as well as
by Oh and Thomas~\cite{Oh:2020rnj} in the setting of algebraic geometry. 
The virtual cycle is defined via a suitable choice of a local orientation at each point of the moduli space. This corresponds to a choice of \emph{signs}, a well-known phenomenon in eight dimensions that does not arise in lower dimensions. These signs are the most important and non-trivial aspects of the theory, while at the same time presenting one of the major difficulties. The computations of Cao and Kool~\cite{Cao:2017swr} show that the choice of signs seems unique (up to overall orientation), and that they always give the simplest answer for the Donaldson--Thomas partition functions. 

In the following we shall elaborate on several aspects of the construction of noncommutative instantons in eight dimensions and the evaluation of their partition functions. We give a new derivation of the equivariant instanton partition function on $\FC^4$ that carefully incorporates the correct sign choices. Our sign choice differs from other choices that have appeared so far in the literature.

The central achievement of this paper is a detailed systematic study and computations of the rank $r$ degree zero cohomological Donaldson--Thomas invariants of the toric Calabi--Yau four-orbifolds $\FC^4/\sGamma$, extending the flat space treatment on $\FC^4$.
These have so far received only limited attention in both the physics and mathematics literature. The analogous instanton counting problem in six dimensions has been studied in detail in~\cite{Quiver3d,Cirafici:2011cd}; see~\cite{Cirafici:2012qc} for a review with comparisons to the instanton counting problems in the Donaldson--Witten and Vafa--Witten theories. In eight dimensions, only the instanton partition functions for some simple cyclic orbifolds have been briefly discussed in~\cite{Bon,Kimura:2022zsm}. A thorough mathematical treatment of the rank one K-theoretic Donaldson--Thomas theory of Calabi--Yau four-orbifolds appears in parallel to our work in~\cite{CKMpreprint}, complementing the results of the present paper, as we discuss further below.

In this paper we construct and study rank $r$ cohomological gauge theory on quotient stacks $\big[\FC^4/\sGamma\big]$, where $\sGamma$ is a finite abelian subgroup of $\sSL(4,\FC)$. This is equivalent to the gauge theory on a noncommutative crepant resolution of the quotient singularity $\FC^4/\sGamma$ provided by the path algebra $\ttA$ of a generalization of the McKay quiver determined by representation theory data of $\sGamma$, with relations given by a generalized ADHM parametrization of the orbifold noncommutative instanton equations.
The topological gauge theory localizes by construction on $\sGamma$-equivariant instanton configurations. Using toric localization, the orbifold instanton partition function can be reduced to the fixed points of the moduli space under the action of the maximal torus of $\sSU(4)$ which are also $\sGamma$-invariant. These are classified by $r$-vectors of $\sGamma$-coloured solid partitions whose Boltzmann weights depend on the representations of $\sGamma$; this was also mentioned by~\cite{Bon} and is analogous to the combinatorial description of orbifold instantons in six dimensions in terms of plane partitions~\cite{Quiver3d}. We assert that the BPS partition function on $[\FC^4/\sGamma]$ conjecturally provides the corresponding orbifold Donaldson--Thomas invariants.

Noncommutative Donaldson--Thomas invariants of four-Calabi--Yau algebras  already appear in the seminal work of~\cite{Cao:2014bca}; see also~\cite{Cao:2020huo} for the example of the local resolved conifold. 
Our path algebras $\ttA$ are always four-Calabi--Yau, and in fact they are Koszul since $\sGamma\subset\sSL(4,\FC)$. In the three-orbifold case, this fact was used repeatedly in~\cite{Quiver3d} to establish the relation between equivariant sheaves on $\FC^3$ and BPS states in the noncommutative resolution chamber of resolved Calabi--Yau singularities. The wall-and-chamber structure of the K\"ahler moduli space is discussed in~\cite{Cao:2020huo} for the example of the local resolved conifold, building on the standard threefold case~\cite{Szendroi:2007nu}. 

If there exists a toric crepant resolution $X$ of the quotient singularity $\FC^4/\sGamma$, we can expect an analogous relation, as well as a version of the McKay correspondence which would establish an equivalence between the derived category of $\ttA$-modules and the derived category of $X$, when the former admits a tilting object. This should then relate the instanton partition functions in the `orbifold' and `large radius' phases by changes of variables and wall-crossing formulas. As such an analysis is out of reach with our current techniques, we defer it to future investigations.

\subsubsection*{Summary of Results}

Before starting a systematic study of our main topic, we first review in some detail the analogous problem on the flat space $\FC^4$, in order to set the stage as the extension to orbifolds is then relatively straightforward. We elaborate on the analysis of noncommutative instantons in $\sU(r)$ cohomological  gauge theory on $\mathbbm{C}^4$ with $\sSU(4)$-holonomy and their generalized ADHM parametrization. They are realized in a dimensional reduction of $\CN=1$ supersymmetric Yang--Mills theory from ten dimensions, and also in type~II string theory where the noncommutative deformation corresponds to turning on a non-zero constant background $B$-field in the flat ten-dimensional target spacetime~\cite{Seiberg:1999vs}. The instanton partition function on an $\varOmega$-background is then regarded as an equivariant integral over the instanton moduli space~\cite{SWcounting} and can be evaluated using toric localization techniques as a combinatorial expansion in random solid partitions.

We evaluate the equivariant  instanton partition function of the $\sU(r)$ cohomological gauge theory on $\FC^4$ with $r$ massive fundamental matter fields. It can be expressed in an exact closed form as ({Conjecture}~\ref{Prop1})
\begin{align}\label{eq:result_1}
Z_{\FC^4}^r(\qu;\vec{\epsilon},m)=M(-\qu)^{-\frac{r\,m\,\epsilon_{12}\,\epsilon_{13}\,\epsilon_{23}}{\epsilon_1\,\epsilon_2\,\epsilon_3\,\epsilon_4}} 
\end{align}
with
\begin{align}\label{eq:massdef}
m=\frac{1}{r} \, \sum_{l=1}^r\,(m_l-a_l) \ ,
\end{align}
where $a_l$ and $m_l$ are the Coulomb and mass parameters associated to the maximal tori of the global $\sU(r)$ colour and flavour symmetry groups, respectively, while $\epsilon_a$  are coordinates on the maximal torus of $\sSU(4)$ satisfying the Calabi--Yau constraint $\epsilon_1 + \cdots +\epsilon_4=0$; we use the notation $\epsilon_{ab}=\epsilon_a+\epsilon_b$. 

The formula \eqref{eq:result_1} for the equivariant Donaldson--Thomas partition function on $\FC^4$ is well-known and has appeared many times before in the literature. We sketch a possible alternative analytic proof which, while incomplete and lacking conceptual insight, highlights the symmetries of the theory which are not evident in other approaches and can be potentially extended to theories on more complicated spaces, like some of our orbifolds. It is based on the  fact that in six dimensions the Coulomb branch instanton partition function is known for generic $\varOmega$-deformation of $\FC^3$~\cite{Maulik:2004txy,Szaboconj,proofconj} and is recovered from the eight-dimensional theory through the mass specialization $m_l=\epsilon_4+a_l$ (Proposition~\ref{prop:ZDTgeb}). The idea then is to show that~\eqref{eq:result_1} is the unique expression determined by the instanton deformation complex which correctly reduces in the six-dimensional limit and preserves all symmetries of the cohomological matrix model representation of the instanton partition function.

Moving on to our orbifold theories, we prove an orbifold version of Proposition~\ref{prop:ZDTgeb} which relates the equivariant instanton partition function on an orbifold of the form $\FC^3/\sGamma\times\FC$, where $\sGamma\subset\sSL(3,\FC)$, to the noncommutative Donaldson--Thomas partition function for the toric K\"ahler orbifold $\FC^3/\sGamma$ through mass specialisation (Proposition~\ref{prop3}). In the rank one case $r=1$, the noncommutative Donaldson--Thomas partition function  on $\FC^3/\sGamma$ is expressed by a closed formula precisely for the the Kleinian group $\sGamma=\RZ_n$ in $\sSL(2,\FC)$ and the orbifold group $\sGamma=\RZ_2\times \RZ_2$ in $\sSL(3,\FC)$ at the Calabi--Yau specialization $\epsilon_1+\epsilon_2+\epsilon_3=0$ on~$\FC^3$~\cite{Young:2008hn,Quiver3d}. 

For $\sGamma=\RZ_n$  the infinite product formula is extended to generic triples $(\epsilon_1,\epsilon_2,\epsilon_3)$ in \cite{Zhou18}. From this we assert that our proposed proof of {Conjecture}~\ref{Prop1} should be possible to adapt to show that the equivariant instanton partition function  of the cohomological $\sU(1)$ gauge theory with massive fundamental matter on $[\mathbb{C}^2/\mathbb{Z}_n]\times \mathbb{C}^2$ is given by (Conjecture~\ref{con2})
\begin{align}
\begin{split}
Z_{[\FC^2/\RZ_n]\times \FC^2}(\vec \qu;\vec{\epsilon},m) & = M\big((-1)^n\,\Qu\big)^{-n\,\frac{m\,\epsilon_{12}\,\epsilon_{13}\,\epsilon_{23}}{\epsilon_1\,\epsilon_2\,\epsilon_3\,\epsilon_4}-\frac{n^2-1}{n}\,\frac{m\,\epsilon_{12}}{\epsilon_1\,\epsilon_2}} \\
& \quad \, \times \prod_{0<p\leq s<n}\,\widetilde{M}\big((-1)^{p-s+1}\,\qu_{[p,s]},(-1)^n\,\Qu\big)^{-\frac{m\,\epsilon_{12}}{\epsilon_3\,\epsilon_4}} \ ,
\end{split}
\end{align}
where $\Qu=\qu_0\,\qu_1\cdots \qu_{n-1}$, $\qu_{[p,s]}=\qu_p\,\qu_{p+1}\cdots \qu_{s-1}\,\qu_s$, and we weigh the fractional instanton contributions with fugacities $\qu_s$ indexed by the irreducible representations of $\sGamma=\RZ_n$. The infinite product \smash{$M(x,q)=\prod_{k\geq1}\,(1-x\,q^k)^{-k}$} is the generalized MacMahon function and we have set \smash{$\widetilde{M}(x,q) = M(x,q) \, M(x^{-1},q)$}. 

For $\sGamma=\RZ_2\times\RZ_2$ we do not yet have available results for a generic $\varOmega$-deformation of $\FC^3$, but we provide strong evidence in favour of the closed formula of Conjecture~\ref{con3}. It seems unlikely that there are any other orbifolds $\FC^4/\sGamma$ for which exact infinite product expressions for the instanton partition functions  are even conjecturally possible; see~\cite{CKMpreprint} for a geometric explanation of this.
 
For higher rank theories we formulate various conjectural closed formulas for orbifold instanton partition functions with particular framing decompositions $\vec r$ of the rank $r$ according to the irreducible representations of $\sGamma$, for the orbifolds  $\FC^2/\RZ_n\times\FC^2$ and $\FC^3/(\RZ_2\times\RZ_2)\times\FC$ (Conjectures~\ref{con4}, \ref{con:r0r1} and~\ref{con4a}). These conjectures are developed from the $\sGamma$-equivariant instanton deformation complex associated to the ADHM-type parametrization of the orbifold instanton moduli space. From this we obtain a combinatorial expression for the orbifold instanton partition function as a sum over $r$-vectors of $\sGamma$-coloured solid partitions, from which we deduce its symmetries and properties that are enforced in our conjectures. Then we require that they correctly reduce to the known expressions on $\FC^3/\sGamma$. 

One of the novelties in eight dimensions is that, contrary to the standard noncommutative Donaldson--Thomas invariants~\cite{Szendroi:2007nu}, our invariants depend explicitly on the framing vector $\vec r$, and contrary to the invariants of~\cite{Quiver3d,Cirafici:2011cd}, the $\vec r$ dependence here is genuine and cannot simply be encoded in a multiplicative factor. 
In  the case of the orbifold $\FC^2/ \RZ_2\times \FC^2$ we obtain a conjectural formula for  the equivariant  instanton partition function  of type $\vec{r}=(r_0,r_1)$ for the rank $r$ cohomological gauge theory with $r$ massive fundamental matter fields on $[\FC^2/\RZ_2]\times \FC^2$. It is given by  (Conjecture~\ref{con:r0r1})
\begin{align}
\begin{split}
Z_{[\FC^2/\RZ_2]\times\FC^2}^{\vec{r}}(\vec \qu;\vec{\epsilon},\vec{m}) &= M(\Qu)^{-2\,\frac{m\,r\,\epsilon_{12}\,\epsilon_{13}\,\epsilon_{3}}{\epsilon_1\,\epsilon_2\,\epsilon_3\,\epsilon_4}-\frac{3}{2}\,\frac{m\,r\,\epsilon_{12}}{\epsilon_1\,\epsilon_2}} \\
& \quad\, \times \widetilde{M}(-\qu_1,\qu_0\,\qu_1)^{-\frac{m^0\,r_0\,\epsilon_{12}}{\epsilon_3\,\epsilon_4}} \ \widetilde{M}(-\qu_0,\qu_0\,\qu_1)^{-\frac{m^1\,r_1\, \epsilon_{12}}{\epsilon_3\,\epsilon_4}} \ ,
\end{split}
\end{align}
where $\vec m=(m^0,m^1)$ is the $\sGamma$-module decomposition of the center of mass parameter \eqref{eq:massdef}.

As offsprings of our conjectures, we obtain predictions for the rank $r$ Donaldson--Thomas partition functions on the K\"ahler three-orbifolds $\FC^2/\RZ_n\times\FC$ and $\FC^3/\RZ_2\times\RZ_2$ with generic $\sU(3)$-holonomy, in the cases where they are not yet known. See Propositions~\ref{prop:C2ZnU3}, \ref{prop:C2Z2CU3}, \ref{prop:6dZC2C2} and~\ref{prop:C3Z2Z2U3}.

The $\varOmega$-deformation also enables the definition of the ``pure'' cohomological gauge theory on $\FC^4$ without the massive fundamental matter fields. The instanton partition function for the pure gauge theory is related to~\eqref{eq:result_1} by the infinite mass limit of Proposition~\ref{prop:puremassiverel} which decouples the fundamental hypermultiplet.  For rank $r=1$ this gives (Corollary~\ref{prop:pureC4})
\begin{align}
Z_{\FC^4}(\Lambda;\vec\epsilon\,)^{\rm pure}= \exp\Big(-\Lambda\ \frac{\epsilon_{12}\,\epsilon_{13}\,\epsilon_{23}}{\epsilon_1\,\epsilon_2\,\epsilon_3\,\epsilon_4}\,\Big) \ ,
\end{align}
where $\Lambda$ is the ultraviolet scale of the quantum field theory, whereas for higher rank $r>1$ the instanton contributions all vanish. 

Analogously, we define the ``pure'' gauge theory on orbifolds. The orbifold instanton partition function for the pure gauge theory is related to the orbifold instanton partition function of the cohomological gauge theory with massive fundamental matter by the infinite mass limit of Proposition~\ref{Prop4}. For rank $r>1$ the orbifold instanton contributions also all vanish (Propositions~\ref{prop:pureorbZn}~and~\ref{prop:pureorbZ2Z2}).

\begin{remark}
As the writing of this paper was nearing completion, a related work by Kimura appeared~\cite{Kimura:2022zsm}, which writes down the same matrix integral expression \eqref{eq:Orb_Zin} from a different perspective for the instanton partition function on the orbifold $\FC^4/\sGamma$, in the special case where $\sGamma=\RZ_n$ is a generic cyclic subgroup of $\sSU(4)$. Kimura also gives a free field representation of the orbifold instanton partition function in that case, in the spirit of the BPS/CFT correspondence~\cite{Nekrasov:2015wsu}, but does not discuss the explicit evaluation of the contour integrals nor the geometrical applications to Donaldson--Thomas theory.
\end{remark}

\subsubsection*{Relation to the Work of Cao, Kool and Monavari}

Our work is complementary to new independent mathematical work by Cao, Kool and Monavari~\cite{CKMpreprint} who consider Donaldson--Thomas invariants of general Calabi--Yau four-orbifolds using an algebro-geometric approach. In the local toric model, they compute rank one Donaldson--Thomas invariants using equivariant localization and a vertex formalism on quotient singularities $\FC^4/\sGamma$, where $\sGamma$ is a finite abelian subgroup of $\sSL(4,\FC)$. When $\sGamma$ is the Kleinian subgroup $\RZ_n\subset \sSL(2,\FC$) or the subgroup $\RZ_2\times \RZ_2\subset \sSL(3,\FC)$, they also conjecture closed formulas for the equivariant K-theoretic partition functions, generalizing the Nekrasov--Piazzalunga partition function~\cite{m4c}. They recover the cohomological invariants as well as three-orbifold invariants via the dimensional reduction of~\cite{Cao:2019tvv}. 

They also consider the Pandharipande--Thomas invariants of the crepant resolutions of $\FC^4/\sGamma$ (when they exist), including the cases of non-abelian orbifold groups $\sGamma$. They again conjecture closed formulas for the equivariant K-theoretic partition functions, as well as relate the Donaldson--Thomas invariants for orbifold groups $\RZ_n$ and $\RZ_2\times\RZ_2$ with the Pandharipande--Thomas invariants of the corresponding crepant resolutions by certain changes of variables.

Their work complements ours by developing the rank one K-theory version of the story from a rigorous purely algebro-geometric perspective, whereas we also develop the higher rank cohomological Donaldson--Thomas invariants of toric Calabi--Yau four-orbifolds from the viewpoint of quantum field theory. Up to slightly different conventions, our results perfectly match. More precisely, we can summarise the main agreements as follows:
\begin{myitemize}
\item Conjecture~\ref{con2} agrees with the cohomological limit of~\cite[Conjecture~{5.13}]{CKMpreprint} (see~\cite[{Corollary~6.6}]{CKMpreprint}).
\item The rank one case of Proposition~\ref{prop:pureorbZn} agrees with~\cite[{Corollary~6.10}]{CKMpreprint}.
\item Conjecture~\ref{con3} agrees with the cohomological limit of~\cite[Conjecture~{5.14}]{CKMpreprint} (see~\cite[{Corollary~6.6}]{CKMpreprint}).
\item The rank one case of Proposition~\ref{prop:pureorbZ2Z2} agrees with~\cite[{Corollary~6.10}]{CKMpreprint}.
\end{myitemize}

\subsubsection*{Outline}

Throughout our presentation we gloss over numerous technical subtleties without comment. The organisation of the remainder of this paper can be briefly summarised as follows:
\begin{myitemize}
\item In Section~\ref{sec:Instantons} we provide an elaborate review of the construction of an eight-dimensional cohomological gauge theory for the holonomy group $\sSU(4)$. We study its instanton solutions, discuss properties of the instanton moduli space and define an ADHM-type parametrization. We evaluate the equivariant instanton partition function using a quiver matrix model for the ADHM data and the tangent-obstruction complex of the instanton moduli space. We discuss the relations to Donaldson--Thomas invariants of $\FC^4$ and the analogous enumerative theory in six dimensions.
\item In Section~\ref{Orb8d} we study instantons on orbifolds $\mathbbm{C}^4/\sGamma$ with $\sGamma$ a finite abelian subgroup of $\sSL(4,\FC)$. We extend the results of Section~\ref{sec:Instantons} to the cohomological gauge theories on quotient stacks $[\FC^4/\sGamma]$ for generic $\sGamma$-actions, providing numerous explicit examples. We also give a supplementary overview of crepant resolutions of~$\FC^4/\sGamma$ as well as their relevance to the BPS state counting problems for D-brane bound states and instantons.
\item In the final two Sections~\ref{sec:20C2ZnC2}~and~\ref{sec:30C3Z2Z2C} we focus on two examples of orbifolds where it appears possible to explicitly sum the instanton series, namely $\FC^2/{\RZ_n}\times\FC^2$ and $\FC^3/(\RZ_2\times\RZ_2)\times\FC$. We conjecture closed infinite product formulas for the orbifold instanton partition functions in terms of combinations of generalized MacMahon functions. 
\item Appendix~\ref{app:ADHMconstruction} presents some details of the generalized ADHM construction of the instanton moduli space, while in Appendix~\ref{app:closedformulas} we sketch possible analytic proofs of the closed expressions for the instanton partition functions on $\FC^4$ for general rank $r$, as well as on the orbifold $\FC^2/{\RZ_n}\times\FC^2$ in the rank one case.
\end{myitemize}

\subsubsection*{Acknowledgements}

We thank Yalong Cao, Martijn Kool and Sergej Monavari for helpful discussions and correspondence, and for sharing a draft of their manuscript~\cite{CKMpreprint} with us prior to submission. {M.T.} is particularly grateful to Sergej Monavari for valuable comments  on our manuscript and enlightening conversations. We thank an anonymous referee whose comments prompted us to clarify various aspects of our presentation. R.J.S. is grateful to the Mainz Institute for Theoretical Physics (MITP) of the DFG Cluster of Excellence PRISMA${}^+$ (Project ID 39083149) for its hospitality and support during the completion of this work. The work of {\sc R.J.S.} was supported in part by the STFC Consolidated Grant ST/P000363/1. The work of {\sc M.T.} is supported by an EPSRC Doctoral Training Partnership grant. 

\section{Eight-Dimensional Cohomological Gauge Theory}\label{sec:Instantons}

In this section we review and elaborate on several well-known aspects of generalized instantons and their equivariant partition functions in cohomological gauge theory on $\FC^4$. We do this in some detail, as the treatment is needed and is parallel to the set-up used in discussing the orbifold theories. In particular, we discuss in detail $\sSpin(7)$-instantons in noncommutative gauge theory and their ADHM parametrizations, which also figure in the generalized gauge theories of~\cite{Nekrasov:2016qym,Pomoni:2021hkn}, where some explicit solutions are also discussed. It is instructive for the reader to keep in mind the analogous treatments in lower dimensions; see~\cite{Szabo:2022zyn} for a review and comparison of instanton counting theories in four, six and eight dimensions, and~\cite{Kanno:2020ybd} for the analogue discussion of ADHM-type quiver matrix models.

\subsection{$\sSpin(7)$-Instanton Equations}
\label{sec:noncom_inst}

Let $M\simeq\FR^8$ be an oriented eight-dimensional real vector space endowed with the standard flat Euclidean metric. Instantons on $M$ are solutions to first order self-duality equations for gauge fields, which are preserved by a holonomy group $\sH\subset \sSO(8)$~\cite{8d,8ds}. 
The largest possible holonomy group is $\sSpin(7)\subset \sSO(8)$~\cite{Matrix}, where  the $\sSpin(7)$-structure is specified by the closed non-degenerate Cayley four-form $\Phi$.  It can be expressed in an oriented orthonormal coframe $\{e^\mu\}_{\mu=1}^8$ of $M\simeq\mathbbm{R}^8$, where the metric is $\delta_{\mu\nu}\,e^\mu\otimes e^\nu$, as~\cite{spin}
\begin{equation}\begin{split}
    \Phi&=e^{1256}+ e^{1278} + e^{3456} + e^{3478} + e^{1257}-e^{1368}-e^{2456} \\ & \quad \, +e^{2468} -e^{1458}-e^{1467}-e^{2358} -e^{2367}+e^{1234}+e^{5678} \ , \label{omega}\end{split}
\end{equation}
where we generally use the shorthand notation $e^{\mu_1\cdots\mu_n}:=e^{\mu_1}\wedge \cdots\wedge e^{\mu_n}$.
The four-form $\Phi$ is self-dual, $\ast\,\Phi=\Phi$, where $\ast:\midwedge^k\,\FR^8\longrightarrow\midwedge^{8-k}\,\FR^8$ is the Hodge duality operator on the space of $k$-forms associated to the metric on $M$.

The generalized instanton equations can  be written as
\begin{align}\label{eq:spin7inst}
\lambda\, F = \ast\,(\Phi\wedge F) \ ,
\end{align}
where $F=\dd A+A\wedge A$ is the curvature two-form of a $\sU(r)$ gauge connection $A$ on $M$. The  eigenvalues $\lambda=-3$ and $\lambda=1$ correspond respectively to the two eigenspaces of the self-adjoint operator $\ast\,(\Phi\wedge-)$ in the decomposition of the $28$-dimensional space of two-forms $\midwedge^2\,\FR^8$ into two irreducible representations $\mathbf{28}|_{\sSpin(7)}=\mathbf{7}\oplus \mathbf{21}$ of $\sSpin(7)$. In this paper we focus on solutions of \eqref{eq:spin7inst} with $\lambda=1$, which are called $\sSpin(7)$-instantons and form the basis of Donaldson--Thomas theory in eight dimensions~\cite{Donaldson:1996kp,Donaldson:2009yq}.  

The Bianchi identity $DF=\dd F+A\wedge F=0$ together with \eqref{eq:spin7inst} for $\lambda=1$ imply
\begin{align}
   D \ast F = D(\Phi\wedge F) = \Phi\wedge DF = 0 \ ,
\end{align}
where we used $\dd\Phi=0$. Thus
every solution of the $\sSpin(7)$-instanton equations is a solution of the second order Yang--Mills equations. A typical solution has moduli parametrizing the center of the instanton in $\FR^8$ and its scale $\ell\in\FR_{>0}$ (see e.g.~\cite{Lotay:2022dty}). Thus the instanton moduli space $\frM_r$, i.e.~the space of smooth solutions to \eqref{eq:spin7inst} for $\lambda=1$ modulo gauge transformations, is not compact because instantons can be shrunk to zero size ($\ell\rightarrow0^+$) where they become singular. Deforming the equations \eqref{eq:spin7inst} to noncommutative instanton equations provides an ultraviolet regularization which resolves these small instanton singularities. 

The noncommutative deformation amounts to quantizing a constant Poisson structure on $\FR^8$ defined by a bivector $\theta$ of maximal rank. After using an $\sSO(8)$ rotation to transform $\theta$ into its Jordan canonical form and a suitable rescaling of coordinates, it defines a complex structure on $\FR^8$. Conversely, we can choose one of the seven complex structures on $\FR^8$, consider the associated K\"ahler form $\omega$ corresponding to the metric $\delta_{\mu\nu}\,e^\mu\otimes e^\nu$, and set $\theta=\zeta\,\omega^{-1}$, where $\zeta$ is a positive real parameter. The noncommutative deformation introduces an additional length scale $\zeta$ which serves as a minimum size for the instantons, so it resolves the moduli space singularity as instantons can no longer reach the singularity where their size vanishes. The introduction of a complex structure $J$ breaks the holonomy group to $\sSU(4)\simeq\sSpin(6)\subset \sSpin(7)$, so that the instanton equations are now defined on the Calabi--Yau fourfold~$\FC^4$. 

Let $\Theta^a=e^{2a-1}+\ii\,e^{2a}$ and $\Theta^{\Bar a}=\Bar\Theta^a$ be a coframe on $M=\FC^4$ which forms a basis of $(1,0)$-forms and $(0,1)$-forms with respect to $J$, that is
\begin{align}\label{eq:complexJ}
J\,\Theta^a = \ii\,\Theta^a \qquad \mbox{for} \quad 
a \ \in \ \ulfour := \{1,2,3,4\} \ .
\end{align}
Then there is a non-degenerate $(4,0)$-form $\Omega$ and a $(1,1)$-form $\omega$ defined by~\cite{Popov:2010rf}
\begin{align}\label{eq:Omegaomega}
\Omega = \Theta^1\wedge\Theta^2\wedge\Theta^3\wedge\Theta^4 \qquad \mbox{and} \qquad \omega = \tfrac\ii2\,\big(\Theta^1\wedge\Theta^{\Bar1} + \Theta^2\wedge\Theta^{\Bar2} +\Theta^3\wedge\Theta^{\Bar3} +\Theta^4\wedge\Theta^{\Bar4}\big) \ ,
\end{align}
which obey
\begin{align}\label{eq:omegarels}
\Omega\wedge\omega = 0 \qquad \mbox{and} \qquad \Omega\wedge\Bar\Omega = \tfrac23\, \omega^4 = 16 \, {\tiny\tt vol} \ .
\end{align}
The pair $(\omega,\Omega)$ defines an $\sSU(4)$-structure on $M$ and there is a compatible $\sSpin(7)$-structure determined by
\begin{align}\label{eq:Spin7SU4}
    \Phi=\tfrac{1}{2}\,\omega\wedge\omega-{\rm Re}\,\Omega \ ,
\end{align}
 which coincides with the Cayley four-form \eqref{omega}. The corresponding K\"ahler metric is $\omega\circ J = \delta_{a\bar a}\,\Theta^a\otimes\Theta^{\bar a}$.
 
We can write the field strength $F$ in the basis of $(1,0)$-forms and $(0,1)$-forms as
\begin{align}\label{eq:Fdecomp}
F = F^{2,0} + F^{1,1} + F^{0,2} = \tfrac12\, F_{ab}\,\Theta^a\wedge\Theta^b + F_{a\Bar b}\,\Theta^a\wedge\Theta^{\Bar b} + \tfrac12\,F_{\Bar a\Bar b} \, \Theta^{\Bar a}\wedge \Theta^{\Bar b} \ .
\end{align}
When the $\sSpin(7)$-structure is determined by an $\sSU(4)$-structure via \eqref{eq:Spin7SU4}, the $\sSpin(7)$-instanton equations \eqref{eq:spin7inst} can be reduced via the inclusion $\sSU(4)\subset \sSpin(7)$ to the equations~\cite{Donaldson:1996kp,Donaldson:2009yq}
\begin{align}\label{in}
F_{ab} = \tfrac12\,\varepsilon_{ab\Bar c\Bar d} \, F_{\Bar c\Bar d} \qquad \mbox{and} \qquad \delta^{a\Bar a} \, F_{a\Bar a} = 0 \ ,
\end{align}
for $a,b\in\ulfour$, where $\varepsilon_{ab\bar c\bar d}$ is the Levi--Civita tensor in four dimensions.

The $\sSpin(7)$-instanton equations \eqref{in} are weaker than the Hermitian Yang--Mills equations that appear in the analogue instanton equations in four and six dimensions, which amount to replacing the first equation of \eqref{in} with $F_{ab}=0$ and describe semistable holomorphic vector bundles on $M$. Any solution of the Hermitian Yang--Mills equations is automatically a solution of the $\sSpin(7)$-instanton equations, but not conversely in general. 

\subsection{Cohomological Gauge Theory}\label{sec:coho_gauge_theory}

The cohomological Yang--Mills theory which describes the instanton moduli problem in eight dimensions corresponding to the $\sSU(4)$-structure defined by \eqref{eq:Omegaomega} was constructed in~\cite{top,special} using the BRST formalism to gauge fix the topological action functional
\begin{align}\label{eq:topaction}
    S_0:=\int_{M} \,\Omega\wedge \Tr\big(F^{0,2}\wedge F^{0,2} \big) \ ,
\end{align}
where $\Tr$ denotes the trace in the fundamental representation of the (complexified) gauge group. It can be equivalently obtained by dimensional reduction from ten-dimensional $\mathcal{N}=1$ supersymmetric Yang--Mills theory with gauge group $\sU(r)$~\cite{reduction}, whose field content consists of the gauge field $A$ and a Majorana--Weyl spinor in the adjoint representation of $\sU(r)$. This also identifies it as the low-energy effective field theory on D-branes in type II string theory. 

One starts by compactifying the time dimension and the ninth spatial dimension of $\FR^{1,9}$ on a square torus $\mathbbm{T}^{1,1} = \FR^{1,1}\,\big/\,2\pi\,R\,\RZ^2$ of radius $R$. Accordingly, the field theory is defined on the spacetime $\mathbbm{R}^{8}\times \mathbbm{T}^{1,1}$, and in the limit $R\rightarrow 0$ the unbroken global symmetry group is
\begin{align}\label{eq:unbroken}
    \sSO(1,9) \ \supset  \ \sSO(8)\times \sSO(1,1) \ .
\end{align}
The components of the gauge field $A_0$ and $A_9$ become a pair of adjoint scalar fields on $M=\FR^8$ after dimensional reduction. The reduced field theory has a non-compact R-symmetry group $\sSO(1,1)$ which in the bosonic sector acts only on these scalars. The Majorana--Weyl spinor field reduces according to the decomposition of the  positive chirality real spinor representation of $\sSO(1,9)$ into irreducible representations of the subgroup \eqref{eq:unbroken} given by
\begin{align}
    {\mbf 16}_+\big|_{\sSO(8)\times \sSO(1,1)} = \mbf{( 8}{}_{\rm s}\mbf{,+1)} \, \oplus \, \mbf{( 8}{}_{\rm c}\mbf{,-1)} \ , \label{deco}
\end{align}
where ${\mbf 8}_{\rm s}$ and ${\mbf 8}_{\rm c}$ are respectively the  chiral and antichiral real spinor representations of $\sSO(8)$. 

With the reduced holonomy group $\sSU(4)\subset \sSpin(7)$, the ambient space becomes the Calabi--Yau fourfold $\mathbbm{C}^4$, characterized by a complex structure $J$ and a compatible $\sSU(4)$-structure $(\omega,\Omega)$ from \eqref{eq:complexJ}--\eqref{eq:omegarels}. The doublet of real scalars are naturally grouped into a complex Higgs field $\varphi$.
Under $\sSU(4)$ the curvature two-form $F$ decomposes as in \eqref{eq:Fdecomp}.

The $\sSU(4)$ holonomy preserves two supercharges. Under $\sSU(4)$ the spinor representations $\mathbf{8}_{\rm s}$ and $\mathbf{8}_{\rm c}$ branch into irreducible representations as
\begin{align}\label{eq:SU4branching}
\begin{split}
\mathbf{8}_{\rm s}\big|_{\sSU(4)}&=\mbf 1\oplus \mbf 6\oplus\mbf 1 \simeq \midwedge^{0}\,\FC^4 \,\oplus\, \midwedge^{2,0}\,\FC^4 \,\oplus\,\midwedge^{4,0}\,\FC^4 \ , \\[4pt]
\mathbf{8}_{\rm c}\big|_{\sSU(4)}&=\mbf 4\oplus\Bar{\mbf 4} \simeq \midwedge^{1,0}\,\FC^4 \,\oplus\,\midwedge^{3,0}\,\FC^4 \ .
\end{split}
\end{align}
Accordingly, in the fermionic sector the spectrum consists of a complex scalar and a $(2,0)$-form, along with $(1,0)$-forms and $(0,1)$-forms. This matches the field content of the bosonic spectrum, which consists of the complex gauge field
\begin{align}
A=A^{1,0}+A^{0,1}=A_a\,\Theta^a+A_{\Bar{a}}\,\Theta^{\Bar{a}}
\end{align}
and Higgs field $\varphi$ along with their complex conjugates. This recovers the field content of the holomorphic $\CN_{\textrm{\tiny T}}=2$ topological Yang--Mills theory in eight dimensions~\cite{top,special}.

Here we shall focus on the purely bosonic sector of the dimensionally reduced field theory on $\FC^4$. For this, let $\ast:\midwedge^{k,l}\,\FC^4\longrightarrow \midwedge^{4-l,4-k}\,\FC^4$ be the map induced by the Hodge duality operator. Together with the holomorphic four-form $\Omega$, it defines an anti-linear operator $\star_\Omega:\midwedge^{0,k}\,\FC^4\longrightarrow\midwedge^{0,4-k}\,\FC^4$ by $\star_\Omega\,\beta:=\overline{\ast(\beta\wedge\Omega)}$ for $\beta\in\midwedge^{0,k}\,\FC^4$. With $\frg:= \mathfrak{gl}(r,\FC)$, this operation has the property that the standard inner product of $\alpha,\beta\in\midwedge^{0,k}\,T^*\FC^4\otimes\frg$ can be expressed as
\begin{align}\label{eq:Omegainnerproduct}
\langle\alpha,\beta\rangle := \int_{M}\, \Tr\big(\alpha\wedge\ast\,\beta^\dag\big) = \int_{M}\,\Omega\wedge\Tr\big(\alpha\wedge\star_\Omega\,\beta\big) \ .
\end{align}
There is a related gauge-invariant complex quadratic form  defined for $\alpha\in\midwedge^{0,2}\,T^*\FC^4\otimes\frg$ by
\begin{align}\label{eq:Omegaquadraticform}
(\alpha,\alpha)_\Omega := \int_M\,\Omega\wedge\Tr(\alpha\wedge \alpha) \ .
\end{align}

The Yang--Mills Lagrangian density on $M$ can be written as ${\tiny\tt vol} \, \Tr\big(-\frac14\,F_{\mu\nu}^2\big)=\frac12\,\Tr(F\wedge\ast F)$, and so the Yang--Mills action functional is expressed in terms of the norm induced by the inner product \eqref{eq:Omegainnerproduct} as $\frac12\,\|F\|^2$. This may be rewritten for gauge fields with suitable boundary conditions at infinity and vanishing first Chern form (i.e. $F\in\midwedge^2\,T^*\FR^8\otimes\mathfrak{su}(r)$)  using the useful identity~\cite{prop}
\begin{align}
\tfrac12\,\|F\|^2= 2\,\big\|F^{0,2}\big\|^2+\tfrac12\,\|\omega\,\lrcorner\,F\|^2+\mathcal{K}\ ,
\end{align}
where $\omega\,\lrcorner\,F = \ast\,(\omega\wedge\ast\, F)$ is the contraction of the two-form $F$ with the $(1,1)$-form $\omega$, while $\mathcal{K}=6\,\int_M\,\omega^2\wedge\Tr(F^2)$ is a topological term.
This enables us to write the purely bosonic part of the dimensionally reduced action functional in the form
\begin{align}\label{eq:Sbos}
S = 2\,\big\|F^{0,2}\big\|^2 +\tfrac12\,\|\omega\,\lrcorner\,F\|^2 + \|D\varphi\|^2 + V(\varphi,\bar\varphi) + \mathcal{K} \ ,
\end{align}
where $V(\varphi,\bar\varphi)=\frac12\,\int_M\,{\tiny\tt vol} \ \Tr\big([\varphi,\bar\varphi]^2\big)$ is the Higgs potential. 

The involution $\star_\Omega:\midwedge^{0,2}\,\FC^4\longrightarrow\midwedge^{0,2}\,\FC^4$ enables an orthogonal decomposition of the field strength \smash{$F^{0,2} =\bar\partial A^{0,1}+A^{0,1}\wedge A^{0,1} = F_+^{0,2}+F_-^{0,2}$} into eigencurvatures as
\begin{align}
F_\pm^{0,2} = \tfrac12\,\big(F^{0,2}\pm\star_\Omega\, F^{0,2}\big) \qquad \mbox{with} \quad \star_\Omega F_\pm^{0,2} = \pm\,F_\pm^{0,2} \ .
\end{align}
This self-duality condition arises from the reality of the representation $\bf 6$ in the branching of ${\bf 8}_{\rm s}$ from \eqref{eq:SU4branching}, which yields a splitting 
\begin{align}
\midwedge^{0,2}\,\FC^4=\midwedge_+^{0,2}\,\FC^4\oplus\midwedge_-^{0,2}\,\FC^4
\end{align}
into \emph{real} $\pm1$-eigenspaces $\midwedge^{0,2}_\pm\,\FC^4$ of $\star_\Omega$, since $(\star_\Omega)^2=\ident$. 

The quadratic form \eqref{eq:Omegaquadraticform} is maximally positive/negative definite on \smash{$\midwedge_\pm^{0,2}\,T^*\FC^4\otimes\frg$}. This enables us to write the topological action functional \eqref{eq:topaction} as
\begin{align}\label{eq:S0def}
S_0=\big(F^{0,2}, F^{0,2} \big)_\Omega = \big\|F_+^{0,2}\big\|^2 - \big\|F_-^{0,2}\big\|^2 \ .
\end{align}
From $\big\|F^{0,2}\big\|^2 = \big\|F_+^{0,2}\big\|^2 + \big\|F_-^{0,2}\big\|^2 = 2\, \big\|F_-^{0,2}\big\|^2 + S_0$ we can then bring \eqref{eq:Sbos} to the form
\begin{align}\label{eq:Sbosfinal}
S = 4\,\big\|F_-^{0,2}\big\|^2 +\tfrac12\,\|\omega\,\lrcorner\,F\|^2 + \|D\varphi\|^2 + V(\varphi,\bar\varphi) + 2\,S_0 +  \mathcal{K} \ .
\end{align}
In particular, it follows that $S \geq 2\,S_0 + \mathcal{K}$.

Since $\sSU(4)\subset  \sSpin(7)$, and dimensional reduction of ten-dimensional $\CN=1$ supersymmetric Yang--Mills theory on a manifold of $\sSpin(7)$-holonomy is equivalent to a topological twist of the resulting eight-dimensional supersymmetric gauge theory~\cite{special}, this field theory is cohomological and has a BRST symmetry. Hence in the topological sector with $2\,S_0+\mathcal{K}$ fixed, it localizes onto the moduli space of solutions of the BRST fixed point equations given by
\begin{align}\label{eq:F-02}
    F_{-}^{0,2}=0 \ , \quad \omega\,\lrcorner\,F=0 \qquad \mbox{and} \qquad 
    D\varphi=0 \ .
\end{align}
The first equation implies $F_{ab}=\frac{1}{2}\,\varepsilon_{ab\bar{c}\bar{d}}\,F_{\Bar{c}\Bar{{d}}}$, while the second equation  gives $\delta^{a\bar a}\,F_{a\bar a}=0$. 
In other words, the gauge theory localizes on the instanton equations \eqref{in} on the Calabi--Yau fourfold $\mathbbm{C}^4$. The solutions of these equations minimize the action functional \eqref{eq:Sbosfinal}. When $S_0=0$, the $\sSpin(7)$-instanton equations are equivalent to the Hermitian Yang--Mills equations.

\subsection{Equivariant Gauge Theory}
\label{sec:instantoncounting}

We shall now describe the moduli space $\frM_r$ of solutions to the $\sSpin(7)$-instanton equations \eqref{in}, and use this to compute the instanton partition function of the eight-dimensional cohomological gauge theory. We begin with an informal discussion of `integration' over instanton moduli spaces, and then quantify how to compute these formal expressions practically in the remainder of this section through a generalized ADHM parametrization of $\frM_r$. 

Generally, the BRST symmetry of a rank $r$ cohomological gauge theory localizes its path integral (in a fixed topological sector) to an integral over a virtual fundamental class $[\frM_r]^{\rm vir}$,  which is represented by a coarse moduli space $\frM_r$ and a vector bundle $\Ob_r\longrightarrow\frM_r$ called the obstruction bundle, whose fibres are spanned by the antighost zero modes; it is non-trivial when there is a quadratic Kuranishi map obstructing the extension of first order deformations, parametrized by the fibres of the tangent bundle $T\frM_r\longrightarrow\frM_r$, to second order. If $\frM_r$ is smooth and oriented (which is seldom the case), then $[\frM_r]^{\rm vir}$ is the Poincar\'e dual of the Euler class $e(\Ob_r)$. Partition functions of the topological field theory are then defined geometrically by integrating the Euler class $e(\CCK)$ of a `matter bundle' $\CCK\longrightarrow\frM_r$ whose rank equals the virtual dimension of the moduli space, i.e.~the rank of the virtual tangent bundle $T^{\rm vir}\frM_r=T\frM_r\ominus\Ob_r$. 
 
In our eight-dimensional theory, we let $P_\Omega^-=\frac12\,(\ident-\star_\Omega):\midwedge^{0,2}\,\FC^4\longrightarrow \midwedge_-^{0,2}\,\FC^4$ be the projection to the $-1$-eigenspace of the involution $\star_\Omega$ induced by the K\"ahler metric and the holomorphic $(4,0)$-form $\Omega$ on the Calabi--Yau fourfold $M=\FC^4$. The local geometry of the instanton moduli space $\frM_r$ is captured by the instanton deformation complex~\cite{special}
\begin{align}\label{eq:defcomplex}
0\longrightarrow\midwedge^0\,T^*\FC^4\otimes\frg \xrightarrow{ \ \bar\partial_A \ } \midwedge^{0,1}\,T^*\FC^4\otimes\frg \xrightarrow{ \ \bar\partial_A^- \ } \midwedge_-^{0,2}\,T^*\FC^4\otimes\frg\longrightarrow 0 \ ,
\end{align}
where the first arrow is an infinitesimal complex gauge transformation, while the second arrow with $\bar\partial_A^-:=P_\Omega^-\circ\bar\partial_A$ is the linearization of the first equation in \eqref{eq:F-02}. Associated to this complex is a local cyclic $L_\infty$-algebra which describes the full cohomological field theory in the BV formalism and $\frM_r$ as the corresponding Maurer--Cartan moduli space.

The cochain complex \eqref{eq:defcomplex} is elliptic and its degree one cohomology represents the (complex) tangent space to the moduli space $\frM_r$ at a fixed holomorphic self-dual connection $A$. We assume that the degree zero cohomology vanishes, i.e.~$\ker\big(\bar\partial_A\big)=0$, which amounts to restricting to irreducible connections with only trivial automorphisms. There is also a non-trivial degree two cohomology which defines the real self-dual obstruction bundle \smash{$\Ob_r^-\longrightarrow\frM_r$}, whose fibre over $A$ is given by ${\rm coker}\big(\bar\partial_A^-\big)\subset \midwedge_-^{0,2}\,T^*\FC^4\otimes\frg$. 

This is the starting point for the construction of a corresponding (real) virtual fundamental class $[\frM_r]^{\rm vir}$ in~\cite{Cao:2014bca,Cao:2015gra,Borisov:2015vha,Cao:2018rft,Bojko:2020rfg,Oh:2020rnj, KiemPark20}. We consider the stratification of the moduli space $\frM_r$ into its connected components $\frM_{r,k}$ which are labelled by the instanton number (fourth Chern class)
\begin{align}
k = \frac1{384\pi^4} \, \int_M \, \Tr(F\wedge F\wedge F\wedge F) \ .
\end{align}
The obstruction bundle correspondingly restricts to real vector bundles \smash{$\Ob_{r,k}^-\longrightarrow\frM_{r,k}$}, which are orientable. The real virtual dimension is $2\,r\,k$ and the virtual fundamental class $[\frM_{r,k}]_\tto^{\rm vir}$ depends on the choice of an orientation $\tto$ of $\Ob_{r,k}^-$. 

The Euler class of the self-dual obstruction bundle, which is naturally induced by integration over the BRST antighost fields in the cohomological gauge theory, can be thought of as defining a square root Euler class for the complexification $\Ob_{r,k}:=\Ob_{r,k}^-\otimes_\FR\FC$ through
\begin{align}
\sqrt{e}(\Ob_{r,k}):= e(\Ob^-_{r,k}) \ .
\end{align}
In~\cite{Oh:2020rnj} an alternative definition of the square root Euler class is given as follows. The complex vector bundle $\Ob_{r,k}$ carries a natural quadratic form induced by \eqref{eq:Omegaquadraticform}, which restricts to a metric on \smash{$\Ob_{r,k}^-$}. Then $\sqrt{e}(\Ob_{r,k})$ can be defined using a maximally isotropic holomorphic sub-bundle of $\Ob_{r,k}$ with respect to this quadratic form, which is isomorphic to \smash{$\Ob_{r,k}^-$} as a real bundle. This comes with a sign determined by the choice of orientation. It has the advantage of yielding a class that lifts to Chow cohomology (with $\RZ[\frac12]$-coefficients) and provides a construction of \smash{$[\frM_{r,k}]_\tto^{\rm vir}$} as an algebraic cycle. 

A natural choice of `matter bundle' $\CCK$ in this case is provided by the bundle $\CCV_{r,k}\longrightarrow\frM_{r,k}$ whose fibre over a gauge orbit $[A]$ is the complex vector space $V$ entering the generalized ADHM parametrization of $\frM_{r,k}$ in Section~\ref{sec:ADHMC4} below; this is a complex vector bundle of rank $k$, and taking $r$ tensor powers of it gives a vector bundle $\CCK$ of the desired real rank $2\,r\,k$. To incorporate masses $\vec m=(m_1,\dots,m_r)$ for the matter fields we use the Chern polynomial of the vector bundle $\CCV_{r,k}$ which is defined by the formula
\begin{align}
c(\CCV_{r,k};m) = \sum_{i=0}^k \, m^i \, c_{k-i}(\CCV_{r,k}) \ ,
\end{align} 
where $c_i(\CCV_{r,k})$ is the $i$-th Chern class of $\CCV_{r,k}$. The Euler class is the top Chern class $e(\CCV_{r,k}) = c_k(\CCV_{r,k})$.

The instanton partition function of the eight-dimensional cohomological gauge theory then schematically has the form
\begin{align}\label{eq:ZC4rqformal}
Z_{\FC^4}^{r}(\qu;\vec m) = \sum_{k=0}^\infty \, \qu^k \ \int_{[\frM_{r,k}]_\tto^{\rm vir}} \ \prod_{l=1}^r \, c(\CCV_{r,k};m_l) \ ,
\end{align}
where the counting parameter $\qu$ weighs the instanton number and is determined by the couplings of the gauge theory.

To make sense of the symbolic expression \eqref{eq:ZC4rqformal}, we will work in the setting of equivariant gauge theory and \emph{define} it via an equivariant integral over the instanton moduli space. The global symmetry group of the cohomological field theory with matter is
\begin{align}
\sG = \sU(r)_{\rm col} \, \times \, \sSU(4) \, \times \, \sU(r)_{\rm fla} \ ,
\end{align}
where the colour symmetry $\sU(r)_{\rm col}$ acts by rotating the framing of the gauge bundle at infinity in $\FC^4$, and the flavour symmetry $\sU(r)_{\rm fla}$ acts on its vector representation $\mbf r$. This group can be rotated into its maximal torus
\begin{align}
\sT = \sT_{\vec a} \, \times \, \sT_{\vec\epsilon} \, \times \, \sT_{\vec m} \ ,
\end{align}
where we label the torus factors by the corresponding complexified Cartan subalgebra elements, which we assume to consist of generic complex parameters in the following. 

The framing rotation parameters $\vec a=(a_1,\dots,a_r)$ are vacuum expectation values of the complex Higgs field $\varphi$, while $\vec\epsilon = (\epsilon_1,\epsilon_2,\epsilon_3,\epsilon_4)$ satisfy
\begin{align}\label{eq:CY4constraint}
\epsilon_1+\epsilon_2+\epsilon_3+\epsilon_4=0
\end{align}
and correspond to the natural complex scaling action of the three-torus $\sT_{\vec\epsilon}$ on $\FC^4$. This couples the gauge theory to Nekrasov’s $\varOmega$-background~\cite{SWcounting} through an $\sSU(4)$-invariant deformation of the BRST supercharges. The $\varOmega$-deformation does not change the instanton equations \eqref{in}, but it confines their solutions to the fixed point $0\in\FC^4$ and so provides an infrared regularization of the singularities of $\frM_{r,k}$ due to instantons that escape to infinity.
The flavour rotation parameters $\vec m = (m_1,\dots,m_r)$ are masses of $r$ fundamental matter fields. 

The torus $\sT$ acts on the moduli space $\frM_{r,k}$ and its obstruction bundle $\Ob_{r,k}$ (with $\sT_{\vec m}$ acting trivially), as well as on the vector bundle $\CCV_{r,k}\otimes\mbf r$ (with $\sT_{\vec m}$ acting trivially on $\CCV_{r,k}$ and $\sT_{\vec a,\vec\epsilon}:=\sT_{\vec a}\times\sT_{\vec\epsilon}$ acting trivially on $\mbf r$). 
The product of Chern polynomials in \eqref{eq:ZC4rqformal} can be regarded as the equivariant Euler class $e_\sT(\CCV_{r,k}\otimes\mbf r)$. The integrations are then interpreted as the pushforwards \smash{$\int_{[\frM_{r,k}]_{\tto}^{\rm vir}}^\sT \, e_\sT(\CCV_{r,k}\otimes\mbf r)$} to a point in the $\sT$-equivariant Chow cohomology of $\frM_{r,k}$, whose coefficient ring is $\FC[\vec a,\vec\epsilon,\vec m]/\langle\epsilon_1+\cdots+\epsilon_4\rangle$. The $\varOmega$-deformation localizes the moduli space $\frM_{r,k}$ onto its isolated $\sT$-fixed points $\vec\sigma\in\frM_{r,k}^\sT$. The virtual localization formula of~\cite{Graber,Oh:2020rnj} computes the integrals in \eqref{eq:ZC4rqformal} as a sum over these fixed points, giving the instanton partition function as a combinatorial expansion in $\qu$ whose terms are rational functions of the equivariant parameters $(\vec a,\vec\epsilon,\vec m)$. It reads as
\begin{align}\label{eq:ZC4matter}
Z_{\FC^4}^r(\qu;\vec a,\vec\epsilon,\vec m) = \sum_{k=0}^\infty \, \qu^k \ \sum_{\vec\sigma\in\frM_{r,k}^{\sT}} \, \frac{\sqrt{e_{\sT}}\big((\Ob_{r,k})_{\vec\sigma}\big) \ e_{\sT}\big((\CCV_{r,k})_{\vec\sigma}\otimes\mbf r\big)}{e_{\sT}\big(T_{\vec\sigma}\frM_{r,k}\big)} \ .
\end{align}

The fixed points of the instanton moduli spaces have a combinatorial significance that we discuss below. The  square root Euler class $\sqrt{e_\sT}\big((\Ob_{r,k})_{\vec\sigma}\big)$ in \eqref{eq:ZC4matter} is defined up to a sign which depends explicitly on the orientation of \smash{$\Ob_{r,k}^-$} at the fixed point $\vec\sigma$. The formula \eqref{eq:ZC4matter} assumes that the spaces $T_{\vec\sigma}\frM_{r,k}$ and $(\Ob_{r,k})_{\vec\sigma}$ carry only non-zero weights for the action of the torus $\sT_{\vec a,\vec\epsilon}\,$, so that the equivariant Chow cohomology classes $e_\sT(T_{\vec\sigma}\frM_{r,k})$ and $\sqrt{e_\sT}\big((\Ob_{r,k})_{\vec\sigma}\big)$ are invertible. We shall elaborate further on these points later on in this section.
 
The equivariant theory also allows for the definition of the instanton partition function of the cohomological gauge theory without matter~\cite{m4}, that is, the ``pure'' holomorphic $\CN_{\textrm{\tiny T}}=2$ topological Yang--Mills theory in eight dimensions defined by integrating $1$ over $[\frM_{r,k}]_\tto^{\rm vir}$. This is analogous to the definition of the Nekrasov partition function for the pure $\CN=2$ supersymmetric Yang--Mills theory in four dimensions~\cite{SWcounting}. The field theory is now equivariant with respect to the torus $\sT_{\vec a,\vec\epsilon}\subset \sT$, and the localization theorem gives
\begin{align}\label{eq:ZC4pure}
Z_{\FC^4}^r(\Lambda;\vec a,\vec\epsilon\,)^{\rm pure} := \sum_{k=0}^\infty\,\Lambda^k \ \int_{[\frM_{r,k}]_\tto^{\rm vir}}^{\sT_{\vec a,\vec\epsilon}} \, 1 = \sum_{k=0}^\infty\,\Lambda^k \ \sum_{\vec\sigma\in\frM_{r,k}^{\sT}} \, \frac{\sqrt{e_{\sT}}\big((\Ob_{r,k})_{\vec\sigma}\big)}{e_{\sT}\big(T_{\vec\sigma}\frM_{r,k}\big)} \ .
\end{align}

\subsection{Noncommutative Field Theory}\label{sec:ADHMC4}

As we discussed in Section~\ref{sec:noncom_inst}, the noncommutative deformation is equivalent to the choice of a complex structure on $\mathbbm{R}^8$, which induces a Poisson bivector $\theta=\zeta\,\omega^{-1}$ that we wish to quantize, where $\zeta\in\FR_{>0}$ and $\omega$ is the corresponding K\"ahler form of $\FR^8$. 
The quantization map sends the local complex coordinates $z_a,\bar z_{\bar a}$ of $M=\FC^4$ to operators with the commutation relations
\begin{align}
[z_a,z_b]=0=[\bar z_{\bar a},\bar z_{\bar b}] \qquad \mbox{and} \qquad [z_a,\bar z_{\bar b}]=\zeta \, \delta_{a\bar b} \ ,
\end{align}
for $a,\bar a,b, \bar b\in\ulfour$. We write $\CA$ for the noncommutative algebra generated by these operators over $\FC$. The unique irreducible representation of this algebra is given by the Fock module
\begin{align}\label{eq:fock_space}
\CH = \FC[\bar z_1,\bar z_2,\bar z_3,\bar z_4]|\vec0\,\rangle = \bigoplus_{\vec n\in\RZ_{\geq0}^4} \, \FC|\vec n\,\rangle \ ,
\end{align}
with vacuum vector $|\vec 0\,\rangle$.

Let $W\simeq\FC^r$ be the fundamental representation of the colour symmetry $\sU(r)_{\rm col}$, regarded as a Hermitian vector space.
Using the quantization map, we send all fields of the eight-dimensional cohomological gauge theory to operators acting on the separable Hilbert space $\CH^r:=W\otimes\CH$, i.e.~to elements of the algebra ${\sf Mat}_{r{\times}r}(\CA):=\sEnd_\FC(W)\otimes\CA$ of $r{\times}r$ matrices valued in $\CA$. This turns the gauge theory into an infinite-dimensional matrix model. The fixed point equations \eqref{eq:F-02} become operator algebraic equations for the noncommutative fields given by
\begin{align}
    [Z_a,Z_b]=\tfrac{1}{2}\,\varepsilon_{ab\bar c\bar d}\,\big[Z^\dagger_{\Bar{c}},Z^\dagger_{\Bar{d}}\big] \ , \quad \sum_{a=1}^4 \, \big[Z_{\Bar{a}}^\dagger,Z_a\big]= \zeta \, \ident_{\CH^r}  \qquad \mbox{and} \qquad [Z_a,\varphi]=0 \, \label{Zin}
\end{align}
for $a,b\in\ulfour$, where the operators \smash{$Z_a:=\frac1{2}\,(z_a+\ii\,A_{\bar a})$} are called \emph{covariant coordinates}.
The $\varOmega$-deformation leaves unchanged the instanton equations but alters the last equation of \eqref{Zin} to 
\begin{align}
[Z_a,\varphi]=\epsilon_a\,Z_a \ .
\end{align}

It follows that charge $k$ noncommutative $\sU(r)$ instantons in eight dimensions are described as elements of the algebra ${\sf Mat}_{r{\times}r}( \mathcal{A})$ acting on the free $\CA$-module $\CA^r:=W\otimes\mathcal{A}$. Through the natural isomorphism of $\CA$-modules $\mathcal{E}^{r,k}\simeq \mathcal{A}^r$, they can be related to connections $\nabla: \mathcal{E}^{r,k}\longrightarrow  \mathcal{E}^{r,k}\otimes_\CA \Omega_{\mathcal{A}}^1$ on the projective modules $\mathcal{E}^{r,k}:=\mathcal{H}^ { k}\oplus\mathcal{A}^{r}$ over $\CA$, where $\mathcal{H}^ { k}:=V\otimes\CH$ for a Hermitian vector space $V\simeq \FC^k$ and $\Omega_\CA^1$ is the bimodule of one-forms over the algebra $\CA$.
This induces a decomposition of the covariant coordinates $Z_a\in \sEnd_{\mathcal{A}}(\mathcal{E}^{r,k})$ as
\begin{align}\label{eq:Zamatrices}
Z_1={\small\bigg(
    \begin{matrix}
    B_1 & \hat I_1\\
    \hat I'_1 & R_1
        \end{matrix} \bigg)} \normalsize \ ,\quad
        Z_2={\small\bigg(
    \begin{matrix}
    B_2 & \hat I_2'\\
    \hat I_2 & R_2
        \end{matrix}\bigg) } \normalsize \ ,\quad
          Z_3={\small\bigg(
    \begin{matrix}
    B_3 & \hat I_3'\\
    \hat I_3 & R_3
        \end{matrix} \bigg)} \normalsize \qquad \mbox{and} \qquad
          Z_4={\small\bigg(
    \begin{matrix}
    B_4 & \hat I_4'\\
    \hat I_4 & R_4
        \end{matrix} \bigg)} \normalsize \ ,
\end{align}
where the diagonal blocks consist of linear maps $B_{a} \in\sEnd_\mathbbm{C}(V)$ and operators $R_a\in \sEnd_{\mathcal{A}}(W\otimes \mathcal{A})$, while the off-diagonal blocks are operators
\begin{align}\label{off_diagonal_op}
\hat I_1,\hat I_2',\hat I_3',\hat I_4' \ \in \ \sHom_{\mathcal{A}}(W\otimes \mathcal{A},V) \qquad \mbox{and} \qquad \hat I_1',\hat I_2,\hat I_3,\hat I_4 \ \in \ \sHom_{\mathcal{A}}(V,W\otimes \mathcal{A}) \ .
\end{align}
We can set  $\hat I_a'=0$  for $a\in\ulfour$ thanks to the freedom of gauge choice.  

Using the isomorphisms
\begin{align}
\begin{split}
\sHom_{\mathcal{A}}(W\otimes \mathcal{A},V) & \simeq 	\sHom_\FC(W,V) \otimes \sEnd_{\mathcal{A}} (\mathcal{A}^{r})
\end{split}
\end{align}
and
\begin{align}
\begin{split}
\sHom_{\mathcal{A}}(V,W\otimes \mathcal{A}) & \simeq \sHom_\FC(V,W) \otimes \sEnd_{\mathcal{A}} (\mathcal{A}^{ r})
\end{split}
\end{align}
we decompose the operators from \eqref{off_diagonal_op} as $\hat I_a=I_a\otimes \psi_a$, where $I_1\in\sHom_\FC(W,V)$, $I_\alpha\in\sHom_\FC(V,W)$ for $\alpha\in\{2,3,4\}$ and the operators $\psi_a\in \sEnd_{\mathcal{A}} (\mathcal{A}^{ r}) \simeq {\sf Mat}_{r{\times}r}( \mathcal{A})$ satisfy \smash{$\psi_1\,\psi_\alpha = \ident_{\mathcal{A}^r}$}. Relabelling $I:=I_1$ and $J_\alpha:=I_{\alpha+1}$, it is then straightforward to see that the first and second equations of \eqref{Zin} yield the matrix equations
\begin{align} \begin{split}\label{eq:ADHM_IJKL}
[B_1,B_\alpha]+I\,J_\alpha -\tfrac12\,\varepsilon_{1\alpha\bar\beta\bar\gamma}\,\big[B_{\bar\beta}^\dagger,B_{\bar\gamma}^\dagger\big]&=0 \ , \\[4pt]
[B_\alpha,B_\beta]-\tfrac12\,\varepsilon_{1\alpha\bar\beta\bar\gamma}\,\big(\big[B_{\bar 1}^\dagger,B_{\bar\gamma}^\dagger\big] - J_{\bar\gamma}^\dag\,I^\dag\big) &=0 \ , \\[4pt]
\sum_{a=1}^4\,\big[B_a,B_{\bar a}^\dagger\big]+I\,I^\dagger-\sum_{\alpha=1}^3\,J_{\bar\alpha}^\dagger\,J_\alpha&=\zeta\, \ident_{ k} \ , \end{split}
\end{align}
for $\alpha,\beta,\gamma\in\{1,2,3\}$, along with other relations which are not relevant here.  

We observe that the space of solutions to the system of equations \eqref{eq:ADHM_IJKL} is independent of the value of $\zeta>0$ as a consequence of the scaling symmetry $B_a\to\kappa\,B_a$, $I\to\kappa\,I$, $J_\alpha\to\kappa\,J_\alpha$ and $\zeta\to\kappa^2\,\zeta$ for $\kappa\in\FR$. If the $\sSpin(7)$-instanton equations reduce to the Hermitian Yang--Mills equations (i.e. $S_0=0$), then the first instanton equation of \eqref{Zin} would reduce to the holomorphic equations $[Z_a,Z_b]=0$ and the non-holomorphic terms in the first two matrix equations of \eqref{eq:ADHM_IJKL} would drop out. In this instance one can repeat the stability argument of~\cite{coho} to infer that, in the equivariant gauge theory of Section~\ref{sec:instantoncounting}, one may set $J_\alpha=0$ for $\alpha\in\{1,2,3\}$. In fact, it is the equations $[B_a,B_b]=0$ that arise for the vacua in the string theory setting discussed below, which we shall see are equivalent to the non-holomorphic equations with $J_\alpha=0$ that retain the information of the original $\sSpin(7)$-instantons. With this in mind, we restrict the solutions of \eqref{eq:ADHM_IJKL} to the closed subvariety defined by the condition
\begin{align}
J_\alpha=0 \qquad \mbox{for} \quad \alpha\in\{1,2,3\} \ .
\end{align}

Altogether we have shown that the noncommutative instanton equations \eqref{Zin} can be reduced to the ADHM-type equations
\begin{align}
\mu^\FC_{ab} :=[B_a,B_b]-\tfrac{1}{2}\,\varepsilon_{ab\bar c\bar d}\,\big[B_{\bar c}^\dagger,B_{\bar d}^\dagger\big]=0 \qquad \mbox{and} \qquad     \mu^\FR:=\sum_{a=1}^4\,\big[B_a,B_{\bar a}^\dagger\big]+I\,I^\dagger=\zeta\, \ident_{ k} \ .
    \label{ADHMeq}
\end{align}
{In Appendix~\ref{app:ADHMconstruction} we describe how to explicitly construct an $\sSU(4)$-instanton connection $A$ from the generalized ADHM data.}
For later considerations, it is convenient to encode the ADHM data in a framed representation of the four-loop quiver $\mathsf{L}_4$:
\begin{equation}\label{eq:ADHMquiverC4}
{\small
\begin{tikzcd}
	{\boxed{W}} && {\boxed{V}}\arrow["B_1"',out=150,in=120,loop,swap,]\arrow["B_2"',out=60,in=30,loop,swap,]\arrow["B_3"',out=330,in=300,loop,swap,]\arrow["B_4"',out=240,in=210,loop,swap,]
	\arrow["I", from=1-1, to=1-3] 
\end{tikzcd}  }\normalsize
\end{equation}

\subsubsection*{Physical Interpretation}

The ADHM equations \eqref{ADHMeq} describe supersymmetric bound states of $k$ coincident D$1$-branes along $\FR^{1,1}$ with $r$ D$9$-branes filling $\FR^{1,9}\simeq\FR^{1,1}\times\FC^4$ in the low energy limit  of type~IIB string theory (or the T-dual D$0$--D$8$  and D$(-1)$--D$7$ systems), with a large constant Kalb--Ramond field preserving $\frac18$ of the supercharges~\cite{Hiraoka:2002wm,Nekrasov:2016qym,m4,m4c,Fucito:2020bjd,Billo:2021xzh, Pomoni:2021hkn}. The massless spectrum of D9--D9 strings yields ten-dimensional $\mathcal{N}=1$ Yang--Mills theory with gauge group $\sU(r)$, and the noncommutative instanton equations \eqref{Zin} describe D$1$--D$9$-brane bound states from the perspective of the D$9$-branes. Our choice of vanishing first Chern form in Section \ref{sec:coho_gauge_theory} excludes D$7$--D$9$ bound states, while the restriction of vanishing second Chern form, i.e. $S_0=0$, also excludes D$5$--D$9$ bound states. 

From the point of view of the D$1$-branes, the D$9$-branes are heavy and the degrees of freedom supported on them are frozen to their vacuum expectation values. Then the Chan--Paton gauge symmetry becomes a global $\sU(r)_{\rm col}$ colour symmetry on the D$9$-branes.
The parameter $\zeta$ is determined by the constant background $B$-field along $\FC^4$ and it plays the role of the coupling of a Fayet--Iliopoulos term in the low energy $\CN=(0,2)$ field theory  with gauge group $\sU(k)$ on the D$1$-branes. The arrows $B_a$ of the ADHM quiver representation \eqref{eq:ADHMquiverC4} for $a\in\ulfour$ are the lowest components of  chiral superfields on $\FR^{1,1}$ that arise from quantizing the D$1$--D$1$ strings which generate four complex scalar fields in the adjoint representation of $\sU(k)$, while the arrow $I$ is the lowest component of a chiral superfield on $\FR^{1,1}$ associated to the massless spectrum of D$1$--D$9$ open strings in the Neveu--Schwarz sector which give rise to a complex scalar field transforming in  the bifundamental representation of~$\sU(k)\times\sU(r)_{\rm col}$. 

The ADHM equations \eqref{ADHMeq} are the equations for the classical Higgs branch of the D1-brane theory~\cite{Nekrasov:2016qym,Pomoni:2021hkn}. Since
    \begin{align}
        \Tr\big(\mu^\FR\big)=\Tr\big(I\,I^\dagger\big)=\zeta \, k \ ,
    \end{align}
and the term in the middle is non-negative, a solution to $\mu^\FR=\zeta\,\ident_k$ exists only if $\zeta\geq 0$. If $\zeta=0$ then $I=0$, which corresponds to the two-dimensional $\CN=(8,8)$ supersymmetric Yang--Mills theory with gauge group $\sU(k)$ on $\FR^{1,1}$ associated to the massless spectrum of D1--D1 strings alone. It can be obtained as the dimensional reduction of ten-dimensional $\CN=1$ supersymmetric Yang--Mills theory with gauge group $\sU(k)$.

Thus the interesting case for us is $\zeta>0$; in this case the D1--D9-brane system is tachyonic and unstable~\cite{Boundstates}, so it decays to a supersymmetric vacuum via tachyon condensation. The presence of the Fayet--Iliopoulos term with coupling $\zeta$ in string theory corresponds to a noncommutative deformation of the $\sU(r)$ gauge theory on the D9-branes~\cite{Nekrasov:1998ss,Seiberg:1999vs}, and 
this is why one has to work with noncommutative instantons. Thus the situation here is somewhat different from the four-dimensional case: for instantons on $\FR^4\simeq\FC^2$ a similar analysis, now done for a D1--D5-brane system, leads to a theory where $\zeta=0$ is a permitted and non-trivial choice, though $\zeta\neq0$ prevents the system from entering the Coulomb branch through the small instanton singularity~\cite{Nekrasov:1998ss}.

Finally, as discussed in Section~\ref{sec:instantoncounting}, the partition function of the topological field theory is defined by adding $r$ fundamental fermions whose functional integration brings down the Euler class of the matter bundle $\mathscr{V}_{r,k}\otimes\mbf r\longrightarrow\frM_{r,k}$. In string theory language this is conjecturally equivalent to adding $r$ anti-D9-branes to the interacting system of D-branes~\cite{m4,m4c}, with the masses $\vec m=(m_1,\dots,m_r)$ corresponding to the equivariant Chern roots of the Chan--Paton bundle on the $\overline{\rm D9}$-branes. Similarly to the D9--D9 strings, the degrees of freedom supported on the $\overline{\rm D9}$--$\overline{\rm D9}$ strings  are frozen to their vacuum expectation values, leaving a global $\sU(r)_{\rm fla}$ flavour symmetry on the $\overline{\rm D9}$-branes. Although the D9-branes and anti-D9-branes need not annihilate completely to a supersymmetric vacuum in the background of a $B$-field~\cite{Sen:1998sm}, any remaining open string ground states will be invisible to the D1-branes.

Altogether the spectrum of stable states (free from tachyonic modes) in the presence of $r$ anti-D9-branes involves reversed GSO projections and changes only the Fermi field content in the $\CN=(0,2)$ field theory on the D$1$-branes, adding to the ADHM data an antichiral spinor on $\FR^{1,1}$ in the bifundamental representation of $\sU(k)\times\sU(r)_{\rm fla}$ associated to the D1--$\overline{\rm D9}$-strings from the Ramond sector, whose lowest component we denote by $\bar I$. It does not modify the form of the ADHM equations \eqref{ADHMeq} nor the non-trivial values of the Fayet--Iliopoulos coupling $\zeta$.

\subsubsection*{Geometrical Interpretation}

The moduli space $\frM_{r,k}$ of $\sU(r)$ noncommutative $k$-instantons can be described as the {ADHM moduli space}, that is, the space of quintuples $(B_a,I)_{a\in\ulfour}$ satisfying the ADHM equations \eqref{ADHMeq} modulo the action  of  the unitary group $\sG^{\textrm{\tiny$V$}}=\sU(k)$ of the vector space $V$ given by
\begin{align}\label{eq:GVaction}
    g \cdot (B_{a}, I)_{a\in\ulfour}= (g\,B_{a}\,g^{-1}, g\,I)_{a\in\ulfour} \qquad \mbox{with} \quad g\in \sU(k) \ ,
\end{align}
which leaves invariant the ADHM equations \eqref{ADHMeq}. In this parametrization, the vector space $V$ descends to $\frM_{r,k}$ as the vector bundle $\CCV_{r,k}$ discussed in Section~\ref{sec:instantoncounting}.

Let us 
consider the complex moment map equations in \eqref{ADHMeq}, given by $\mu^\FC_{ab}=0$ for $1\leq a<b\leq 4$. As noted by~\cite{m4}, there is an identity
\begin{align}\label{eq:muabCholomorphic}
    \sum_{1\leq a< b\leq 4}\, \big\| \mu^\FC_{ab}\big\|_{\textrm{\tiny F}}^2&=\sum_{1\leq a< b\leq 4}\, \big\| [B_a,B_b] \big\|_{\textrm{\tiny F}}^2 \ ,
\end{align}
where $\|-\|_{\textrm{\tiny F}}$ is the Frobenius norm on $\sEnd_\FC(V)$. From \eqref{eq:muabCholomorphic} it follows that the equations $\mu^\FC_{ab}=0$ are equivalent to the equations $[B_a,B_b]=0$. Instead, the real moment map equation $\mu^\FR=\zeta\,\ident_k$ in \eqref{ADHMeq} for $\zeta>0$ is equivalent to a \emph{stability} condition:
There is no proper subspace $S\subset\mathbbm{C}^k$ such that $B_a(S)\subset S$ for $a\in\ulfour$ and ${\rm im}(I)\subset S$. The stability condition is equivalent to the condition that the operators $B_a$ for $a\in\ulfour$ acting on ${\rm im}(I)$ generate the vector space $V=\FC[B_1,B_2,B_3,B_4]\,I(W)$; it can be interpreted physically as the requirement that all D1-branes are bound to D9-branes.

Writing $\frB^{\rm st}$ for the space of stable quintuples, it follows that the instanton moduli space has an explicit holomorphic parametrization as the geometric invariant theory (GIT) quotient
\begin{align}\label{eq:Mkrholo}
    \mathfrak{M}_{r,k}=\big\{(B_a,I)_{a\in\ulfour}\in\frB^{\rm st} \ \big| \ [B_a,B_b]=0 \ , \ 1\leq a<b\leq 4\big\} \, \big/ \, \sGL(k,\mathbbm{C}) \ .
\end{align}
From this one easily sees that the real virtual dimension of $\frM_{r,k}$ is $(8\,k^2+2\,r\,k) - 6\,k^2-2\,k^2 = 2\,r\,k$. 

As a consequence, by results of~\cite{Henni:2017,sheaf} the instanton moduli space $\frM_{r,k}$ is isomorphic to the Quot scheme $  \Quot^k_{r}(\FC^4)$ of zero-dimensional quotients of the free sheaf $\CO_{\FC^4}^{\oplus r}$ with length $k$, 
\begin{align}\label{eq:MkrQuot}
\frM_{r,k} \, \simeq \,   \Quot^k_{r}(\FC^4) \ ,
\end{align}
which parametrizes framed torsion free sheaves $\CE$ on complex projective space $\PP^4$ of rank $r$ and $\ch_4(\CE)=k$. Any such sheaf sits in a short exact sequence
\begin{align}
0\longrightarrow \CE\longrightarrow \CO_{\PP^4}^{\oplus r}\longrightarrow \CS_Z \longrightarrow 0 \ ,
\end{align}
where $\CS_Z$ is a pure torsion sheaf of length $k$ supported on a zero-dimensional subscheme $Z\subset\PP^4$. 

\begin{remark}
In the rank one case $r=1$, it follows from \eqref{eq:MkrQuot} that the instanton moduli space is isomorphic to the Hilbert scheme of $k$ points on $\FC^4$,
\begin{align}
\frM_{1,k} \, \simeq \, \Hilb^k(\FC^4) \ ,
\end{align}
which is parametrized by ideals $J$ of codimension $k$ in the polynomial ring $\FC[z_1,z_2,z_3,z_4]$. The correspondence follows from defining the $k$-dimensional vector space $V=\FC[z_1,z_2,z_3,z_4]/J$ for an ideal $J\in\Hilb^k(\FC^4)$. We then take $B_a\in\sEnd_{\FC}(V)$ to be given as multiplication by $z_a$ mod~$J$ and $I\in\sHom_\FC(\FC,V)$ as $I(1) = 1$ mod~$J$.
\end{remark}

\subsection{Quiver Matrix Model}\label{BRST}

From the perspective of D$9$-branes, the noncommutative field theory of Section~\ref{sec:ADHMC4} computes the instanton partition function \eqref{eq:ZC4rqformal} for the cohomological field theory of Section~\ref{sec:coho_gauge_theory}~\cite{Szabo:2022zyn}. Here we shall study the theory from the perspective of the D$1$-branes. Following~\cite{coho,Quiver3d}, this uses the generalized ADHM parametrization of the instanton moduli space $\frM_{r,k}$ to construct a matrix model representation of the integral in~\eqref{eq:ZC4rqformal}, which is based on the fields of the quiver~\eqref{eq:ADHMquiverC4}. The construction of the cohomological matrix model proceeds analogously to the construction of the eight-dimensional cohomological gauge theory from~\cite{top,special}.

The symmetry group of the quiver matrix model is $\sU(k)\times\sU(r)_{\rm col}\times\sSU(4)$. The gauge and global framing symmetries act on the ADHM variables $(B_a,I)_{a\in\ulfour}$ as
\begin{align}
(g,h)\cdot (B_a,I)_{a\in\ulfour} = (g\,B_a\,g^{-1},g\,I\,h^{-1})_{a\in\ulfour} \qquad \mbox{for} \quad (g,h)\in\sU(k)\times\sU(r)_{\rm col} \ .
\end{align}
We work equivariantly with respect to the (complex) maximal torus $\sT_{\vec\epsilon}$ of $\sSU(4)$, which  acts on the ADHM data as
 \begin{align}\label{eq:torusquiver}
(t_a)_{a\in\ulfour}\cdot (B_a,I)_{a\in\ulfour}=(t_a^{-1}\,B_a,I)_{a\in\ulfour} \qquad \mbox{for} \quad (t_a)_{a\in\ulfour}=\big(\e^{\,\ii\,\epsilon_a}\big)_{a\in\ulfour}\in \sT_{\vec\epsilon} \ .
\end{align}
The BRST multiplets are $(B_a,\psi_a)_{a\in\ulfour}$ and $(I,\varrho)$ with the BRST transformations
\begin{align}
\begin{split}
    \mathcal{Q}B_a=\psi_a\qquad , & \qquad \mathcal{Q}\psi_a=[\phi,B_a]-\epsilon_a\,B_a \ , \\[4pt]
    \mathcal{Q}I=\varrho\qquad , & \qquad  \mathcal{Q}\varrho=\phi \, I-I\,\mbf{a} \ ,
\end{split}
\end{align}
for $a\in\ulfour$,
where $\phi$ is the generator of $\sU(k)$ gauge transformations and $\mbf{a}={\rm diag}(a_1,\dots, a_r)$ is a background field which parametrizes an element of the (complex) Cartan subalgebra of $\sU(r)_{\rm col}$.

The matrix fields $(B_{a},I)_{a\in\ulfour}$ are required to satisfy the seven constraints \eqref{ADHMeq}. Note that the complex moment map equations $\mu^\FC_{ab} = 0$,  for $a,b\in\ulfour$ with $a<b$, only contribute three independent constraints, which we choose to be 
\begin{align}\label{eq:EabC4prime}
\mu^\FC_{\alpha\beta}=0 \qquad \mbox{with} \quad (\alpha,\beta) \ \in \  \ulthree^{\perp}:=\big\{(1,2)\,,\,(1,3)\,,\,(2,3)\big\} \ . 
\end{align}
To implement the equations \eqref{eq:EabC4prime} and $\mu^\FR=\zeta\,\ident_k$ in \eqref{ADHMeq}, we add the Fermi multiplets $(\vec\chi,\vec H)$, where \smash{$\vec\chi=(\chi^\FC_{\alpha\beta},\chi^\FR)_{(\alpha,\beta)\in\ulthree^\perp}$} are the antighost fields in $\sEnd_\FC(V)$ and \smash{$\vec H=(H_{\alpha\beta}^\FC,H^\FR)_{(\alpha,\beta)\in\ulthree^\perp}$} are the auxiliary fields. Their BRST transformations are
\begin{align}
\begin{split}
\mathcal{Q}\chi^\FC_{\alpha\beta}=H^\FC_{\alpha\beta} \qquad , & \qquad \mathcal{Q}H^\FC_{\alpha\beta}=[\phi,\chi^\FC_{\alpha\beta}]-\epsilon_{\alpha\beta}\,\chi^\FC_{\alpha\beta} \ , \\[4pt]
\mathcal{Q}\chi^\FR=H^\FR \qquad , & \qquad \mathcal{Q}H^\FR=[\phi,\chi^\FR] \ ,
      \end{split}
\end{align}
for $(\alpha,\beta)\in\ulthree^\perp$, where generally we abbreviate $\epsilon_{ab\cdots}:= \epsilon_a+\epsilon_b+\cdots$ throughout. Finally, we add the scalar gauge multiplet $(\phi,\bar{\phi},\eta)$ to close the algebra of BRST transformations as
\begin{align}
     \mathcal{Q}\phi=0 \ , \quad \mathcal{Q}\bar{\phi}=\eta \qquad \mbox{and} \qquad \mathcal{Q}\eta=[\bar{\phi},\phi] \ .
\end{align}

We also add matter fields to the theory, associated to the matter bundle $\mathscr{V}_{r,k}\otimes\mbf r$, which transform under a $\sU(r)_{\rm fla}$ flavour symmetry. Corresponding to $r$ anti-D$9$-branes, in the cohomological matrix model we include the Fermi multiplet $(\bar{I},\bar{\varrho})$ of fields in $\sHom_\FC(\mbf r,V)$ with the BRST transformations
 \begin{align}
     \mathcal{Q}\bar{I}=\bar{\varrho}\qquad \mbox{and} \qquad \mathcal{Q}\bar{\varrho}=\phi\,\bar{I}-\bar{I}\,\mbf m \ ,
 \end{align}
where the masses $\mbf{m}={\rm diag}(m_1,\dots, m_r)$ parametrize an element of the (complex) Cartan subalgebra of $\sU(r)_{\rm fla}$.

The action functional which corresponds to this system of fields and equations is given by
\begin{align}\label{eq:action}
\begin{split}
    S&=\mathcal{Q} \, \Tr\Big( \sum_{(\alpha,\beta)\in\ulthree^\perp} \, {\chi}_{\alpha\beta}^{\FC\,\dagger}\,\big(g_\FC\,{H}^\FC_{\alpha\beta}-\mu^\FC_{\alpha\beta}\big)+\chi^\FR\,\big(g_\FR\,H^\FR -\mu^\FR-\zeta\,\ident_k\big) \\
    &\hspace{2cm} + \sum_{a\in\ulfour} \, \psi_a\,\big[\bar{\phi},B^\dagger_a\big]+I^\dagger\,\bar{\phi}\,\varrho+\bar{I}^\dagger\,\bar{\phi}\,\bar{\varrho}+\eta\,\big[\phi,\bar{\phi}\,\big]+{\rm c.c.}\Big) \\
    & \quad \, + g_1\,\mathcal{Q}\,\Tr \big(\chi^\FR\,\bar{\phi}\,\big) + g_2\,\mathcal{Q}\,\Tr\Big(\sum_{a\in\ulfour} \, \big(B_a\,\psi_a^\dagger-\psi_a\, B_a^\dagger\big)+I\,\varrho^\dagger-\varrho\, I^\dagger + \bar I\,\bar\varrho^\dagger-\bar\varrho\, \bar I^\dagger\Big) \ ,
    \end{split}
\end{align}
where ${\rm c.c.}$ means complex conjugate and we introduced appropriate coupling constants for convenience. The last two terms have been added by hand to respectively give a nondegenerate mass matrix for $\chi^\FR$ as well as kinetic terms for $\psi_a$, $\varrho$ and $\bar\varrho$.
After dividing out by the volume of the gauge group, the path integral form of the equivariant integral over the ADHM variables can then be written symbolically as
\begin{align}
    \int \, \frac{\mathscr{D}[{\rm fields}]}{{\tiny\tt vol}\big(\sU(k)\big)} \ \e^{-S} \ . \label{pf}
\end{align}

To evaluate the integral~\eqref{pf}, we follow the technique employed in~\cite{coho}.  
The first step is to use the $\sU(k)$ gauge invariance to diagonalize the gauge generator $\phi$. This produces a measure on $\FR^k$ with Vandermonde determinant
\begin{align}
    \frac{1}{k!}\, \prod_{i=1}^k \, \frac{\dd\phi_i}{2\pi \,\ii} \ \prod^k_{\stackrel{\scriptstyle i,j=1}{\scriptstyle i\neq j}} \, (\phi_i-\phi_j) \ ,
\end{align}
where $k!$ is the order of the Weyl group of $\sU(k)$. 

By construction of the cohomological theory, the path integral \eqref{pf} is independent of the coupling constants $g_\FC$, $g_\FR$, $g_1$ and $g_2$. Taking the limit $g_1\rightarrow\infty$, the relevant part of the action functional is $\Tr(H^\FR\,\bar\phi+\chi^\FR\,\eta)$, which shows that the fields $(H^\FR,\bar{\phi},\chi^\FR,\eta)$ can simply be integrated out and do not contribute non-trivially to the path integral. 

Sending $g_\FC\rightarrow\infty$, the relevant part of the action functional is
\begin{align}
\sum_{(\alpha,\beta)\in\ulthree^\perp} \, \Tr\Big(H_{\alpha\beta}^{\FC\,\dag} \, H^\FC_{\alpha\beta} + \chi_{\alpha\beta}^{\FC\,\dag}\,\big([\phi,\chi^\FC_{\alpha\beta}]-\epsilon_{\alpha\beta}\,\chi^\FC_{\alpha\beta}\big)\Big) \ .
\end{align}
This shows that the auxiliary fields $H^\FC_{\alpha\beta}$ do not contribute non-trivially.  On the other hand, integrating the antighost fields \smash{$\chi^\FC_{\alpha\beta}$} produces
\begin{align}\label{eq:antighostsign}
\prod_{i, j=1}^k \ \prod_{(\alpha,\beta)\in\ulthree^\perp} \, (\phi_i-\phi_j-\epsilon_{\alpha\beta}) \ ,
 \end{align}
which contributes to the equivariant Euler class of the self-dual obstruction bundle ${\sf Ob}_{r,k}^-$, with global orientation corresponding to the choice \eqref{eq:EabC4prime} and our choice of ordering of antighosts in the Grassmann integration measure. 

Finally, for $g_2\rightarrow\infty$ the contribution of the ADHM fields $(B_a,I)_{a\in\ulfour}$ to the path integral \eqref{pf} is
 \begin{align}
\prod_{i,j=1}^k \ \prod_{a\in\ulfour} \, \frac{1}{\phi_i-\phi_j-\epsilon_a} \ \prod_{i=1}^{k } \ \prod_{l=1}^{r} \, \frac{1}{\phi_i-a_l} \ ,
 \end{align}
while the contribution of the matter fields $\bar I$ is
 \begin{align}\label{eq:matterintegral}
     \prod_{i=1}^{k} \ \prod_{l=1}^{r} \, (\phi_i-m_l)\ .
 \end{align}

Putting everything together we arrive at the matrix model representation of the equivariant integral in \eqref{eq:ZC4matter} given by
 \begin{equation}
     \begin{split}
Z_{\FC^4}^{r,k}(\vec a,\vec\epsilon,\vec m) := \int^\sT_{[\frM_{r,k}]_\tto^{\rm vir}} \, e_\sT(\CCV_{r,k}\otimes\mbf r) &= \frac{(-1)^k}{k!} \, \Big(\frac{\epsilon_{12} \, \epsilon_{13} \, \epsilon_{23}}{\epsilon_1\,\epsilon_2\,\epsilon_3\,\epsilon_4}\Big)^k \\ & \quad \, \times \ \oint_{\varGamma_{r,k}} \ \prod_{i=1}^k \, \frac{\dd\phi_i}{2\pi\,\ii} \, \frac{\CP_r(\phi_i|\vec m)}{\CP_r(\phi_i|\vec a)} \ \prod_{\stackrel{\scriptstyle i,j=1}{\scriptstyle i\neq j}}^k \, \CR(\phi_i-\phi_j|\vec\epsilon\,) 
   \end{split}\label{partition}
   \end{equation}
at a generic point $\vec\epsilon$ of the $\varOmega$-deformation, where
\begin{align}\label{eq:CPrCR}
\CP_r(x|\vec w) = \prod_{l=1}^r \, (x-w_l) \qquad \mbox{and} \qquad \CR(x|\vec\epsilon\,) = \frac{x\,(x-\epsilon_{12})\,(x-\epsilon_{13})\,(x-\epsilon_{23})}{(x-\epsilon_1)\,(x-\epsilon_2)\,(x-\epsilon_3)\,(x-\epsilon_4)} \ .
\end{align}
The instanton partition function \eqref{eq:ZC4matter} is then the generating function for the integrals \eqref{partition}, obtained from a weighted sum over all the instanton sectors:
  \begin{align}\label{partitionfull}
      Z_{\FC^4}^r(\qu;\vec a,\vec\epsilon,\vec m) = 1+\sum_{k=1}^\infty \, \qu^k \ Z_{\FC^4}^{r,k}(\vec a,\vec\epsilon,\vec m) \ .
  \end{align}
  
The polynomial $\CP_r(x|\vec w)$ encodes the $S_r$-invariants for the action of the Weyl group $S_r$ of $\sU(r)$ by permutations of the entries of $\vec w=(w_1,\dots,w_r)$. There is also a symmetry under the group $\RZ_2$ which acts on the $\varOmega$-deformation parameters $\vec\epsilon$ by permuting $\epsilon_1\leftrightarrow\epsilon_4=-\epsilon_{123}$. 
Note that the action of the center $\sU(1)\subset \sU(r)_{\rm col}$ on the partition function \eqref{partition} can be absorbed into an action of the center $\sU(1)\subset \sU(r)_{\rm fla}$: an overall shift $a_l\to a_l+a$ of the Coulomb parameters can be absorbed by a simultaneous overall shift $\phi_i\to\phi_i+a$ of the integration variables and redefinition $m_l\to m_l+a$ of the masses. Hence there is a natural action of ${\sf P}\big(\sU(r)_{\rm col}\times \sU(r)_{\rm fla}\big)$ on $Z_{\FC^4}^{r,k}(\vec a,\vec\epsilon,\vec m)$.

The integral \eqref{partition} also appears in~\cite{m4c,Fucito:2020bjd,Billo:2021xzh,Kimura:2022zsm} from different perspectives. It is ill-defined as a Lebesgue integral over $\FR^k$ because its integrand does not converge to zero at infinity. Instead, as suggested by the notation, it should be interpreted as a contour integral where the sum of residues reproduces the sum over $\sT$-fixed points in \eqref{eq:ZC4matter}. However, assigning a choice of contour $\varGamma_{r,k}\subset\FC^k$ to the integral \eqref{partition} directly is not straightforward. We shall discuss this point further below.

\subsubsection*{Dimensional Reduction}

From \eqref{partition} we may immediately deduce

\begin{proposition}\label{prop:ZDTgeb}
The equivariant instanton partition function of the cohomological gauge theory with a massive fundamental hypermultiplet on $\FC^4$ is related to the Coulomb branch partition function \smash{$Z_{\FC^3}^{r}(\qu;\vec a,\epsilon_1,\epsilon_2,\epsilon_3)$} for Donaldson--Thomas theory on the toric K\"ahler threefold $\FC^3$ through the mass specialisations
\begin{align}\label{eq:ZDT}
Z_{\FC^4}^{r}(\qu;\vec a,\vec\epsilon,m_l=a_l+\epsilon_4) =  Z_{\FC^3}^{r}\big((-1)^{r+1}\,\qu;\epsilon_1,\epsilon_2,\epsilon_3\big)={M}(-\qu)^{-\frac{r\,\epsilon_{12}\,\epsilon_{13}\,\epsilon_{23}}{\epsilon_1\,\epsilon_2\,\epsilon_3}} \ ,
\end{align}
where
\begin{align}\label{eq:MacMahon}
{M}(q)=\prod_{n=1}^\infty\,\frac{1}{(1-q^n)^n}
\end{align}
is the MacMahon function.
\end{proposition}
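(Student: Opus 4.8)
The plan is to work directly from the contour-integral representation \eqref{partition}–\eqref{partitionfull} and show that the mass specialisation $m_l = a_l + \epsilon_4$ collapses the eight-dimensional integrand to the known six-dimensional one. First I would substitute $m_l = a_l + \epsilon_4$ into the ratio $\CP_r(\phi_i|\vec m)/\CP_r(\phi_i|\vec a)$, giving $\prod_{l=1}^r (\phi_i - a_l - \epsilon_4)/(\phi_i - a_l)$. The key observation is that this factor, together with the prefactor $\big(\tfrac{\epsilon_{12}\,\epsilon_{13}\,\epsilon_{23}}{\epsilon_1\,\epsilon_2\,\epsilon_3\,\epsilon_4}\big)^k$ and the interaction kernel $\CR(\phi_i - \phi_j|\vec\epsilon\,)$, should reorganise so that the $\epsilon_4$-dependence combines into the six-dimensional kernel. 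Concretely, after the shift $\phi_i \to \phi_i$ one expects the $\epsilon_4 = -\epsilon_{123}$ in the numerator of the matter polynomial to cancel against the $(x-\epsilon_4)$ in the denominator of $\CR$ in a way that, summed over fixed points, reduces the Nekrasov-type factor for $\FC^4$ to the one for $\FC^3$. I would make this precise by comparing the $\sT$-fixed-point contributions: each fixed point $\vec\sigma \in \frM_{r,k}^\sT$ is labelled by an $r$-tuple of solid partitions, and under the specialisation the eight-dimensional vertex weight (the ratio $\sqrt{e_\sT}(\Ob)\,e_\sT(\CCV\otimes\mbf r)/e_\sT(T\frM)$) should telescope to the six-dimensional vertex weight on the sub-locus of plane partitions, with the solid partitions not of this form contributing zero because a factor in $e_\sT(\CCV\otimes\mbf r)$ vanishes.

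The cleanest route is probably \emph{not} to manipulate residues by hand but to invoke the cohomological/K-theoretic rigidity already implicit in the matrix model: the specialisation $m_l = a_l+\epsilon_4$ is exactly the condition under which the ``matter'' contribution $e_\sT(\CCV_{r,k}\otimes\mbf r)$ becomes, up to sign, the equivariant Euler class of a sub-bundle of $T^{\rm vir}\frM_{r,k}$ whose quotient is the virtual tangent complex of the Quot/Hilbert scheme on $\FC^3$. This is the eight-to-six dimensional reduction of Cao–Kool–Monavari type (the excerpt cites \cite{Cao:2019tvv} for precisely this). So the second step is: identify the character of $T^{\rm vir}\frM_{r,k}(\FC^4)$ at a fixed point, write $T^{\rm vir}\frM_{r,k}(\FC^4) = T^{\rm vir}\frM(\FC^3) + (\text{a piece involving } \epsilon_4) - (\text{dual piece})$, observe that the matter character at $m_l = a_l+\epsilon_4$ cancels the $\epsilon_4$-pieces, and conclude that the fixed-point sum over solid partitions reduces to the fixed-point sum over plane partitions with the three-dimensional weights. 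Then the right-hand side is just the known closed form $Z^r_{\FC^3}(\qu;\epsilon_1,\epsilon_2,\epsilon_3) = M(-\qu)^{-r\,\epsilon_{12}\epsilon_{13}\epsilon_{23}/(\epsilon_1\epsilon_2\epsilon_3)}$ from \cite{Maulik:2004txy} and the earlier discussion, and the sign $(-1)^{r+1}$ is tracked through the prefactor $(-1)^k$ in \eqref{partition} and the $r$-dependent orientation sign in $\sqrt{e_\sT}$.

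The main obstacle I anticipate is the careful bookkeeping of \emph{signs} — both the overall $(-1)^{r+1}$ in the counting parameter and the orientation-dependent sign in $\sqrt{e_\sT}\big((\Ob_{r,k})_{\vec\sigma}\big)$ — since, as emphasised in the introduction, these square-root Euler classes are defined only up to a sign and the paper's sign convention differs from others in the literature. One must check that the chosen ordering of antighosts in \eqref{eq:antighostsign} and the choice \eqref{eq:EabC4prime} of independent complex moment-map equations produce exactly the orientation for which the reduction gives $(-1)^{r+1}\,\qu$ and not some other sign, uniformly in $k$. A secondary technical point is the contour: one must verify that the specialisation is compatible with the prescription $\varGamma_{r,k}$ that selects the $\sT$-fixed-point residues, i.e. that no poles collide or cross the contour as $m_l \to a_l + \epsilon_4$, so that the residue sum on the $\FC^4$ side genuinely limits to the residue sum on the $\FC^3$ side. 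Granting the contour compatibility, the identity \eqref{eq:ZDT} then follows term by term in $\qu$.
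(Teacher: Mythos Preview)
Your first instinct---substitute $m_l = a_l + \epsilon_4$ directly into the contour integral \eqref{partition}---is exactly what the paper does, and in that sense your approach is correct. However, you substantially over-engineer the argument. After the substitution, since $\epsilon_4 = -\epsilon_{123}$, the matter-to-framing ratio becomes
\[
\frac{\CP_r(\phi_i|\vec m)}{\CP_r(\phi_i|\vec a)} \;=\; \frac{\CP_r(\phi_i + \epsilon_{123}|\vec a)}{\CP_r(\phi_i|\vec a)} \ ,
\]
and the paper simply observes that the resulting integral \emph{is literally} the known rank~$r$ matrix integral for $\FC^3$ at a generic $\varOmega$-deformation point, as written down in~\cite[eq.~(5.23)]{coho}, up to an overall sign $(-1)^{(r+1)k}$ which accounts for the $\qu \mapsto (-1)^{r+1}\qu$ shift. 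That is the entire proof of the first equality: recognition of an existing formula, not a telescoping of fixed-point weights or a character-level cancellation. Your detour through solid-versus-plane partitions and virtual tangent characters is a valid alternative (and is indeed the Cao--Kool perspective of~\cite{Cao:2017swr,Cao:2019tvv}, as you note), but it is unnecessary here and the paper does not take it.

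The second equality---the closed MacMahon form---is not proven in the paper at all; it is quoted from~\cite{Maulik:2004txy} for $r=1$ and from~\cite{proofconj} (conjectured in~\cite{Awata:2009dd,Szaboconj}) for $r>1$. Your worries about sign bookkeeping and contour compatibility are therefore misplaced: the paper sidesteps both by matching integrands at the level of the contour integral itself, before any residue is taken, so there is no question of poles colliding or orientation signs needing to be traced fixed-point by fixed-point.
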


\begin{proof}
Using the Calabi--Yau condition $\epsilon_4=-\epsilon_{123}$ on $\FC^4$ one finds that the matrix integral \eqref{partition} gives
\begin{align}
\begin{split}
& Z_{\FC^4}^{r,k}(\vec a,\vec\epsilon, m_l=a_l-\epsilon_{123}) \\[4pt] & \qquad =
\frac{1}{k!} \, \Big(\frac{\epsilon_{12} \, \epsilon_{13} \, \epsilon_{23}}{\epsilon_1\,\epsilon_2\,\epsilon_3\,\epsilon_{123}}\Big)^k \, \oint_{\varGamma_{r,k}} \ \prod_{i=1}^k \, \frac{\dd\phi_i}{2\pi\,\ii} \, \frac{\CP_r(\phi_i+\epsilon_{123}|\vec a)}{\CP_r(\phi_i|\vec a)}  \ \prod_{\stackrel{\scriptstyle i,j=1}{\scriptstyle i\neq j}}^k \, \CR(\phi_i-\phi_j|\epsilon_1,\epsilon_2,\epsilon_3,-\epsilon_{123}) \ .
\end{split}
\end{align}
Up to an overall sign $(-1)^{(r+1)\,k}$, this is exactly the charge $k$ integral contribution \smash{$Z_{\FC^3}^{r,k}(\vec a,\epsilon_1,\epsilon_2,\epsilon_3)$} from~\cite[eq.~(5.23)]{coho} to the rank $r$ instanton partition function on $\FC^3$ at a generic point $(\epsilon_1,\epsilon_2,\epsilon_3)$ of the $\varOmega$-deformation. This proves the first equality in \eqref{eq:ZDT}. 

In the rank one case $r=1$, the second equality in \eqref{eq:ZDT} was originally proven in~\cite[Theorem~1]{Maulik:2004txy}, where \smash{$Z_{\FC^3}^{1,k}(\epsilon_1,\epsilon_2,\epsilon_3)$} computes the equivariant volume of the Hilbert scheme of $k$ points on $\FC^3$:
\begin{align}\label{eq:ZC4C3k1}
Z_{\FC^4}^{1,k}(\vec\epsilon,m=a-\epsilon_{123}) = Z_{\FC^3}^{1,k}(\epsilon_1,\epsilon_2,\epsilon_3) = \int^{\sT_{\vec\epsilon}}_{[\Hilb^k(\FC^3)]^{\rm vir}} \, 1 \ .
\end{align}
The general higher rank result for $r>1$ was conjectured in {\cite{Awata:2009dd,Szaboconj}} and proven in \cite{proofconj}, where \smash{$Z_{\FC^3}^{r,k}(\vec a,\epsilon_1,\epsilon_2,\epsilon_3)$} computes the equivariant volume of the Quot scheme \smash{$\Quot^k_{r}(\FC^3)$} of zero-dimensional quotients of $\CO_{\FC^3}^{\oplus r}$ of length $k$. 
\end{proof}

\begin{remark}\label{rem:8d6d2}
From the string theory perspective, Proposition~\ref{prop:ZDTgeb} superficially supports Sen's conjecture~\cite{Sen:1998sm}: it implies that the specification $m_l=a_l+\epsilon_4$, which corresponds to the diagonal coordinates for the maximal torus $\sU(1)^r_{\rm fla}\times\sU(1)^r_{\rm col}$ of the global symmetry group $\sU(r)_{\rm fla}\times\sU(r)_{\rm col}$, can be interpreted as a particular  configuration of D9-branes and anti-D9-branes that decay into D7-branes, whose bound states with D1-branes correspond to instantons on $\FC^3$. This was noted in the K-theory version of the theory for $r=1$ by Nekrasov~\cite{m4}. The rank one version of the dimensional reduction of Proposition~\ref{prop:ZDTgeb} was also studied rigorously by~\cite{Cao:2017swr} and its K-theory counterpart by~\cite{Cao:2019tvv}.
\end{remark}

\subsubsection*{Fixed Points and Solid Partitions}

Starting from the integral \eqref{partition} directly, we can identify the combinatorial enumeration problem computed by the sum over residues. The integrand has poles along the hyperplanes
  \begin{align}
      \phi_i-\phi_j-\epsilon_a=0 \qquad \mbox{and} \qquad \phi_i-a_l=0 \ .\label{fixpoint}
  \end{align}
These are exactly the fixed points of the equivariant action of $\mathsf{T}_{\vec\epsilon}$ on the ADHM variables, which are given by
  \begin{align}\label{eq:fixedpointsC4}
      t_a\,B_a=g\,B_a\,g^{-1} \qquad \mbox{and} \qquad
      I=g\,I\,h^{-1} \ ,
  \end{align}
for $t_a=\e^{\,\ii\,\epsilon_a}\in\sT_{\vec\epsilon}$, $g=\exp(\ii\,\phi)\in\sU(k)$ and $h=\exp(\ii\,\mbf a)\in\sU(r)_{\rm col}$. These equations define a group representation $\sT_{\vec a}\times\sT_{\vec\epsilon}\longrightarrow\sU(k)$, and they are equivalent to
  \begin{align}
      (B_a)_{ij}\,(\phi_i-\phi_j-\epsilon_a)=0\qquad \mbox{and} \qquad I_{il}\,(\phi_i-a_l)=0 \ .
  \end{align}
Together with $\bar\varrho_{il}=0$, these are the equations for the fixed points of the BRST charge $\CQ$, onto which the path integral of the topological field theory localizes.

The fixed points \eqref{fixpoint} have a standard combinatorial description. A subtlety here is that the fixed point equations \eqref{eq:fixedpointsC4} are given for the equivariant action of the (complex) maximal torus $\sT_{\vec\epsilon}$ of $\sSU(4)$, rather than the maximal torus $\sT_{\vec\epsilon\,'}$ of $\sU(4)$. Following Nekrasov's approach from~\cite{m4}, one can argue that the two sets of fixed point equations are equivalent: The gauge transformations generated by $\phi$ can be replaced with  $\sU(k)$ gauge transformations generated by $\phi'$ such that
\begin{align}\label{eq:fixedpointprime}
[B_a,\phi']=\epsilon'_a\,B_a \ ,
\end{align}
for $a\in\ulfour$ and \emph{generic} $\vec\epsilon\,'=(\epsilon_1',\epsilon_2',\epsilon_3',\epsilon'_4)$. Clearly if \eqref{eq:fixedpointprime} holds generically, then it holds in particular at the Calabi--Yau specialisation with $\epsilon_4=-\epsilon_{123}$. 

Conversely, if $[B_a,\phi]=\epsilon_a\,B_a$ with $\epsilon_{1234}=0$, then $X:=B_1\,B_2\,B_3\,B_4$ is nilpotent, that is, $[X,\phi]=0$. By the Jacobson--Morozov theorem, it follows that $X$ can be extended to an $\mathfrak{sl}(2,\FC)$ triple $(X,Y,H)$ in $\sEnd_\FC(V)$:
\begin{align}
    [X,H]=2\,X \ , \quad [H,Y]=-2\,Y \qquad \mbox{and} \qquad [X,Y]=H \ ,
\end{align}
where $Y$ is nilpotent and $H$ is the generator of the associated Cartan subalgebra. Given arbitrary $\vec\epsilon\,'$, we define $\phi'=\frac12\,\epsilon'_{1234}\, H$ with $\epsilon'_{1234}\neq0$. Then $[X,\phi']=\epsilon'_{1234}\, X$, and in particular
\begin{align}
\begin{split}
& [B_1,\phi']\,B_2\,B_3\,B_4+B_1\,[B_2,\phi']\,B_3\,B_4+B_1\,B_2\,[B_3,\phi']\,B_4+B_1\,B_2\,B_3\,[B_4,\phi']  =\epsilon'_{1234}\, X \ . 
\end{split}
\end{align}
It follows that $[B_a,\phi']=\epsilon'_a \,B_a$ (up to relabelling of the entries of $\vec\epsilon\,'$), as required. A geometric proof of this fact (at the level of fixed points of Hilbert schemes of points on $\FC^4$) is found in~\cite{Cao:2017swr}.

With this equivalence in mind, we write a generic fixed point \eqref{fixpoint} by setting the eigenvalues of $\phi$ to
\begin{align}\label{eq:phifixedpoints}
    \phi_{(a_l;\vec p\,)}=a_l+\vec p\cdot \vec\epsilon \ ,
\end{align}
where $l\in\{1,\dots,r\}$ and $\vec p=(p_1,p_2,p_3,p_4)\in\RZ^4_{>0}$ is a quadruple of positive integers; here and in the following we denote $\vec p\cdot\vec\epsilon:=\sum_{a\in\ulfour}\,p_a\,\epsilon_a$ and we used the Calabi--Yau condition \eqref{eq:CY4constraint}. In other words, a fixed point is parametrized by a set of tuples $(a_l;\vec p\,)$. There are $k$ different tuples, because there are $k$ eigenvalues. Given an instanton number $k$, we can partition it into integers
$k=k_1+\cdots + k_r$ such that for fixed $l$, and hence fixed $a_l$, $k_l\geq0$ is the total number of quadruples $\vec p$ of integers. 

Since the quadruple $(1,1,1,1)$ is always present, the collection of tuples $(a_l;\vec p\,)$ for fixed $l$ may be represented by a \emph{solid partition} \smash{$\sigma_l = (\sigma_{i,j,n})_{i,j,n\geq1}$}, that is, a sequence of non-negative integers $\sigma_{i,j,n}\in\RZ_{\geq0}$ satisfying
\begin{align}
 \sigma_{i,j,n}\geq  \sigma_{i+1,j,n} \ , \quad  \sigma_{i,j,n}\geq\sigma_{i,j+1,n} \qquad \mbox{and} \qquad \sigma_{i,j,n}\geq  \sigma_{i,j,n+1} \ ,
\end{align}
for all $i,j,n\geq1$. The \emph{size} of the solid partition $\sigma_l$ is
\begin{align}
|\sigma_l| := \sum_{i,j,n\geq1} \, \sigma_{i,j,n} = k_l \ .
\end{align}

It follows that there is a one-to-one correspondence between fixed points $(a_l;\vec p\,)$ and solid partitions $\sigma_l=\{\vec p\in\RZ^4_{>0}~|~1\leq p_4\leq\sigma_{p_1,p_2,p_3}\}$, and therefore one may associate to any fixed point an array of solid partitions 
$
\vec{\sigma}=(\sigma_1,\dots, \sigma_r)
$
whose total size is the instanton number:
\begin{align}
|\vec{\sigma}|=\sum_{l=1}^r \, |\sigma_l|=\sum_{l=1}^r \, k_l=k \ .
\end{align}
Note that here the only role of the ADHM field $I$, which arises from D1--D9 strings, is to assign a `colouring' of the solid partitions into the $r$-tuple~$\vec\sigma$.

\begin{remark}\label{not:solidpartitions}
Recall that a {plane partition} is a collection of triples of positive integers $(p_1,p_2,p_3)$ where the pairs $(p_1,p_2)$ define a Young diagram $\lambda$ and $p_3\leq\pi_{p_1,p_2}$, with $\pi$ a $\RZ_{>0}$-valued  function of $(i,j)\in\lambda$  such that $\pi_{i,j}\geq\pi_{i+1,j}$ and $\pi_{i,j}\geq\pi_{i,j+1}$. The rank one instanton partition function for the $\CN_{\textrm{\tiny T}}=2$ cohomological gauge theory on $\FC^3$ at the Calabi--Yau specialisation of the $\varOmega$-deformation, or equivalently the MacMahon function \eqref{eq:MacMahon}, is the generating function for the number of plane partitions $\pi$ of fixed size $|\pi|=\sum_{i,j\geq1}\,\pi_{i,j}$:
\begin{align}
Z_{\FC^3}^{r=1}\big(\qu;\epsilon_1,\epsilon_2,\epsilon_3\big)\big|_{\epsilon_{123}=0} = M(-\qu) = \sum_\pi \, (-\qu)^{|\pi|} \ .
\end{align}
More generally, $Z_{\FC^3}^{r}\big(\qu;\epsilon_1,\epsilon_2,\epsilon_3\big)\big|_{\epsilon_{123}=0} = M\big((-1)^r\,\qu\big)$ is the generating function for $r$-tuples $\vec\pi=(\pi_1,\dots,\pi_r)$ of plane partitions.

Given a \emph{solid} partition $\sigma=(\sigma_{i,j,n})_{i,j,n\geq 1}$, we can view it as a stack of \emph{plane} partitions, with cubes piled in the positive $16$-multiant $(x_1,x_2,x_3,x_4)\in\RZ_{\geq0}^4$, where $\sigma_{i,j,n}$ is the height function of the stack of hypercubes defined on the $(x_1,x_2,x_3)$ hyperplane. 
\end{remark}

\subsection{Instanton Partition Function}\label{sec:Coho_Part_Func}

We shall now discuss the explicit evaluation of the equivariant partition function \eqref{eq:ZC4matter} as a combinatorial expansion based on the generalized ADHM parametrization of the instanton moduli spaces~$\frM_{r,k}$.
As discussed in Section~\ref{BRST}, the partition function \eqref{partition} receives contributions from the fixed points $[(B_a,I)_{a\in\ulfour}]\in\frM^\sT_{r,k}$ of the ADHM moduli space. Thus we need to describe the local geometry of the instanton moduli space $\frM_{r,k}$ around these fixed points, and suitably incorporate the moduli of the matter fields. 

Let $\vec{\sigma}$ be a fixed point, corresponding to an $r$-tuple of solid partitions, and consider the ADHM deformation complex
\begin{align}
    0\longrightarrow \sEnd_\FC(V_{\vec{\sigma}})\xrightarrow{ \ \dd_1 \ }\begin{matrix}
        \sEnd_\FC(V_{\vec{\sigma}}) \otimes  Q \\[1ex] \oplus\\[1ex] \sHom_\FC(W_{\vec{\sigma}},V_{\vec{\sigma}})
    \end{matrix}\xrightarrow{ \ \dd_2 \ } \sEnd_\FC(V_{\vec{\sigma}})\otimes\midwedge_-^{0,2}\,  Q  \longrightarrow 0 \ , \label{complex}
\end{align}
where $ Q \simeq\FC^4$ is the four-dimensional fundamental $\sT_{\vec\epsilon}\,$-module with weight decomposition 
\begin{align}
 Q  = t_1^{-1} + t_2^{-1} + t_3^{-1} + t_4^{-1} \ ,
\end{align}
and the weights $t_a=\e^{\,\ii\,\epsilon_a}$ satisfy the Calabi--Yau condition 
\begin{align}
t_1\,t_2\,t_3\,t_4=1 \ . 
\end{align}
The map $\dd_1$ is an infinitesimal $\sGL(k,\FC)$ gauge transformation, while $\dd_2$ is the linearization of the holomorphic ADHM equations 
$[B_\alpha,B_\beta]=0$; explicitly
\begin{align}
    \dd_1\phi=\bigg(\begin{matrix}
    \big([\phi ,B_a]\big)_{a\in\ulfour} \\
    \phi \, I
    \end{matrix}\bigg) \qquad \mbox{and} \qquad
   \dd_2\bigg(\begin{matrix}
    (b_a)_{a\in\ulfour}\\i 
    \end{matrix}\bigg)=\big(
    [b_\alpha,B_\beta]+[B_\alpha,b_\beta]
    \big)_{(\alpha,\beta)\in\ulthree^\perp} \ .
\end{align}

By construction, the first cohomology of the cochain complex \eqref{complex} is a local model for the tangent space $T_{\vec{\sigma}}\mathfrak{M}_{r,k}$ at the fixed point $\vec\sigma$, while the second cohomology parametrizes the local obstruction space \smash{$(\Ob_{r,k}^-)_{\vec\sigma}$}. We assume that $\ker(\dd_1)=0$. Then the equivariant index of this complex computes the virtual sum 
\begin{align}
\sqrt{\ch_{\sT}}\big(T_{[\CE]}^{\rm vir}\,\frM_{r,k}\big) = {\sf Ext}_{\CO_{\PP^4}}^1(\CE,\CE) \, \ominus \, {\sf Ext}_{\CO_{\PP^4}}^{2\,-}(\CE,\CE)
\end{align} 
of cohomology groups, where $[\CE]$ is the isomorphism class of a framed torsion free sheaf on $\PP^4$ corresponding to the fixed point $\vec\sigma$. This gives 
\begin{align}\label{eq:chTvir}
\begin{split}
\sqrt{\ch_{\sT}}\big(T_{\vec\sigma}^{\rm vir}\,\frM_{r,k}\big)&=V_{\vec{\sigma}}^*\otimes V_{\vec{\sigma}} \, \big(t^{-1}_1+t^{-1}_2+t^{-1}_3+t^{-1}_4\big)+W_{\vec{\sigma}}^*\otimes V_{\vec{\sigma}}\\
    & \quad \, -V_{\vec{\sigma}}^*\otimes V_{\vec{\sigma}}\,\big(1+t^{-1}_1\,t^{-1}_2+t^{-1}_1\,t^{-1}_3+t^{-1}_2\,t^{-1}_3\big) \ . \end{split}
\end{align}

Recalling that the fibre of the matter bundle $(\CCV_{r,k})_{\vec\sigma}\otimes\,\mbf r$ at the fixed point $\vec\sigma$ is $\sHom_\FC(\mbf r,V_{\vec\sigma})$, the total index we wish to compute is
\begin{align}\begin{split}
    \chi_{\vec{\sigma}}:\!&= \sqrt{\ch_\sT}\big(T^{\rm vir}_{\vec{\sigma}}\,\mathfrak{M}_{r,k}\big)-\ch_\sT\big((\mathscr{V}_{r,k})_{\vec\sigma}\otimes\mbf r\big) \\[4pt]
    &= -V_{\vec{\sigma}}^*\otimes V_{\vec{\sigma}} \, \big(1-t^{-1}_1-t^{-1}_2-t^{-1}_3-t^{-1}_4+t^{-1}_1\,t^{-1}_2+t^{-1}_1\,t^{-1}_3+t^{-1}_2\,t^{-1}_3 \big)\\
    & \quad \,  +\big(W_{\vec{\sigma}}^*-\mbf r^*\big) \otimes V_{\vec{\sigma}} \ . \end{split} \label{chi}
\end{align}
In these expressions we regard the various vector spaces in \eqref{eq:chTvir} and \eqref{chi} as elements of the representation ring of the torus \smash{$\sT=\sT_{\vec a}\times\sT_{\vec\epsilon}\times\sT_{\vec m}$}, i.e. as polynomials in $t_a$ for $a\in\ulfour$ as well as in $e_l:=\e^{\,\ii\,a_l}$ and $f_l:=\e^{\,\ii\,m_l}$ for $l=1,\dots,r$. The dual involution acts on the weights as $t_a^*=t_a^{-1}$, $e_l^*=e_l^{-1}$ and $f_l^*=f_l^{-1}$.

As the notation suggests, the character \eqref{eq:chTvir} can be identified as a square root of the character of the virtual tangent bundle at the fixed point $\vec\sigma$:
\begin{align}\begin{split} \label{eq:chTvirfull}
   \ch_{\sT}\big(T_{\vec{\sigma}}^{\rm vir}\frM_{r,k}\big)&= -V_{\vec{\sigma}}^*\otimes V_{\vec{\sigma}} \ \prod_{a\in\ulfour }\,\big(1-t^{-1}_a\big)+W_{\vec{\sigma}}^*\otimes V_{\vec{\sigma}}+V_{\vec{\sigma}}^*\otimes W_{\vec{\sigma}} \\[4pt]
   &=\sqrt{\ch_{\sT}}\big(T_{\vec{\sigma}}^{\rm vir}\frM_{r,k}\big)+\sqrt{\ch_{\sT}}\big(T_{\vec{\sigma}}^{\rm vir}\frM_{r,k}\big)^* \ . \end{split}
\end{align}
The choice of square root is not unique and different choices will produce contributions to the $k$-instanton partition functions $Z_{\FC^4}^{r,k}(\vec a,\vec\epsilon,\vec m)$ below which coincide up to a sign $\pm1$ which we fix by hand, see e.g.~\cite{Cao:2017swr,m4c,Cao:2019tvv,Monavari:2022rtf,CKMpreprint}. Every sign choice is equivalent to a choice of local orientation at each $\sT_{\vec a,\vec\epsilon}\,$-fixed point $\vec\sigma$ of the instanton moduli space~\cite{Oh:2020rnj}, and it produces a sign factor $(-1)^{\ttO_{\vec\sigma}}$.

With suitable choices of bases we can decompose the $\sU(r)_{\rm col}$-module $W$ at a fixed point $\vec\sigma$ and the $\sU(r)_{\rm fla}$-module $\mbf r$ into one-dimensional vector spaces for the $\sT_{\vec a}\,$-action and the $\sT_{\vec m}$-action, respectively, with the weight decompositions
\begin{align}
W_{\vec{\sigma}}=\sum_{l=1}^r \, e_l \qquad \mbox{and} \qquad \mbf r=\sum_{l=1}^r \, f_l \ .
\label{decom_WM1}
\end{align}
On the other hand,  from the fixed point equations \eqref{eq:fixedpointsC4} and the stability condition (cf. Section~\ref{sec:ADHMC4}) it follows that the $\sT_{\vec a,\vec\epsilon}\,$-module decomposition of the vector space $V$ at the fixed point $\vec\sigma$ is given by
\begin{align}\label{decom_V}
V_{\vec{\sigma}}=\sum_{l=1}^r \, e_l \ \sum_{\vec p\,\in\sigma_l} \, t_1^{p_1}\,t_2^{p_2}\,t_3^{p_3}\,t_4^{p_4} \ .
\end{align}
Hence each term in the weight decomposition of the vector space $V_{\vec\sigma}$ corresponds to an element in the collection of solid partitions $\vec\sigma=(\sigma_1,\dots,\sigma_r)$.

After inserting the decompositions \eqref{decom_WM1} and \eqref{decom_V} into \eqref{chi}, we can use the equivariant characters $\chi_{\vec\sigma}$ to evaluate the instanton partition function \eqref{eq:ZC4matter} by using the operation
\begin{align}\label{eq:top}
\widehat{\tt e}\Big[\sum_I \, n_I\,\e^{\,{\sf w}_I}\Big] = \prod_{{\sf w}_I\neq0} \, {\sf w}_I^{n_I}
\end{align}
that converts the additive Chern characters to multiplicative (top form) Euler classes. 
Then ${\sf w}_I(\vec a,\vec\epsilon,\vec m)$ gives the weights of the $\sT$-action on the virtual tangent bundle and on the matter bundle.

Altogether this leads to the general combinatorial formula
\begin{align}
\begin{split}
    Z_{\FC^4}^{r,k}(\vec a,\vec\epsilon,\vec m) &=\sum_{|\vec{\sigma}|=k} \, (-1)^{{\tt O}_{\vec\sigma}} \ \widehat{\tt e}[-\chi_{\vec{\sigma}}] \\[4pt]
&=\sum_{|\vec{\sigma}|=k} \, (-1)^{{\tt O}_{\vec\sigma}} \ \prod_{l=1}^r \ \prod_{\vec p_l\in\sigma_l}^{\neq0} \, \frac{\CP_r(a_l+\vec p_l\cdot\vec\epsilon\,|\vec m)}{\CP_r(a_l+\vec p_l\cdot\vec\epsilon\,|\vec a)} \\
& \hspace{5cm} \times \prod_{l'=1}^r \ \prod_{\vec p^{\,\prime}_{l'}\in\sigma_{l'}}^{\neq0} \, \CR\big(a_l-a_{l'}+(\vec p_l-\vec p^{\,\prime}_{l'})\cdot\vec\epsilon\,\big|\vec\epsilon\,\big)
\end{split}\label{Zk}
\end{align}
for the partition function \eqref{partition}, where the polynomial $\CP_r$ and the rational function $\CR$ are defined in \eqref{eq:CPrCR}. The superscripts ${}^{\neq0}$ on the products indicate the omission of terms with zero numerator or denominator from the formula \eqref{Zk}, according to the prescription in \eqref{eq:top}. The sign factors $(-1)^{\ttO_{\vec\sigma}}$ are given by sums of cardinalities
\begin{align}\label{eq:signfactor}
{\tt O}_{\vec\sigma} = \sum_{l=1}^r \, \Big|\Big\{(\vec p_l,\vec p_{l}^{\,\prime})\in\sigma_l\times\sigma_{l} \ \Big| \ {\small \begin{matrix}(p_l)_\alpha\neq(p_{l}')_\alpha \\ (p_l)_\alpha-(p_l)_4+1 = (p_{l}')_\alpha - (p_{l}')_4 \end{matrix} } \normalsize \ , \ \alpha\in\{1,2,3\}\Big\}\Big| \ .
\end{align}

The full partition function \eqref{partitionfull}, including all instanton sectors, is given by
\begin{align}\label{Zfull}
Z_{\FC^4}^r(\qu;\vec a,\vec\epsilon,\vec m) = \sum_{\vec\sigma} \, (-1)^{{\tt O}_{\vec\sigma}} \ \qu^{|\vec\sigma|} \ \widehat{\tt e}[-\chi_{\vec\sigma}] \ ,
\end{align}
where the sum runs over all $r$-tuples $\vec\sigma=(\sigma_1,\dots,\sigma_r)$ of solid partitions $\sigma_l$ (including the empty solid partitions of sizes $|\sigma_l|=0$). 

\begin{remark}
The expression for ${\ttO_{\vec\sigma}}$ in \eqref{eq:signfactor} differs from the sign factor conjectured by Nekrasov and Piazzalunga in~\cite{m4c}, which is proved by Kool and Rennemo in~\cite{KRinprep}, and also from those of~\cite{Cao:2019tvv,Oh:2020rnj,CKMpreprint}, because our choice of square root \eqref{eq:chTvir} is different. It was found by comparing the character  $\sqrt{\ch_{\sT}}\big(T_{\vec\sigma}^{\rm vir}\frM_{r,k}\big)$ in \eqref{eq:chTvir} with the character of \cite[Section~{5.7}]{CKMpreprint} and {\cite{KRinprep}}: they differ in the sign factor $(-1)^{k\,(r-1) + {\rm rank}\, \ttA_{\vec\sigma}}$ given by
\begin{align}\label{eq:rank_A}
    \ttA_{\vec\sigma}= \big(V_{\vec\sigma}^*\otimes V_{\vec\sigma} \ t_4^{-1}\big)^{\rm fix} \ ,
\end{align}
where the superscript ${}^{\rm fix}$ stands for the \smash{$\sT_{\vec a,\vec\epsilon}\,$-fixed} part. This is tantamount to counting the zeroes of \smash{$a_l-a_{l'}+(\vec p_l-\vec p^{\,\prime}_{l'})\cdot\vec\epsilon-\epsilon_4$},
for $\vec p_l\in \sigma_l$ and $\vec p^{\,\prime}_{l'}\in\sigma_{l'}$.
Then the sign factor \eqref{eq:signfactor} is calculated as the difference of \eqref{eq:rank_A} and the sign factor from~\cite{CKMpreprint}.

{Since our choice of square root differs from the square root of~\cite{CKMpreprint} by conjugation of a finite number of terms, which is shown to be movable in~\cite{m4c}, our square root is movable as well.\footnote{We thank an anonymous referee for pointing this out to us.} That is, each term contains weights $t_a$ and $e_l$.}

With our choice of square root $\sqrt{\ch_{\sT}}\big(T_{\vec\sigma}^{\rm vir}\frM_{r,k}\big)$ in \eqref{eq:chTvir}, and subsequently of the index $\chi_{\vec\sigma}$, the first term affected by the sign $(-1)^{\ttO_{\vec\sigma}}$ appears at instanton number $k=16$.
\end{remark}

The formula \eqref{Zk} makes evident the evaluation of \eqref{partition} as a sum over residues; the signs $(-1)^{\ttO_{\vec\sigma}}$ should then come from a careful residue calculation. From this perspective the dimensional reduction of Proposition~\ref{prop:ZDTgeb} can be seen to arise through the introduction of extra fixed parts to the index $\chi_{\vec\sigma}$ from terms involving $f^{-1}_l\,e_l\,t_4$, which keeps only contributions to \eqref{Zfull} from solid partitions $\vec\sigma$ with $(p_l)_4=1$ for all $\vec p_l\in\sigma_l$, $l=1,\dots,r$; these correspond to arrays $\vec\pi$ of plane partitions (cf.~Remark~\ref{not:solidpartitions}) and the sign factors \eqref{eq:signfactor} become $\ttO_{\vec\pi}=|\vec\pi|$.
In fact, it is possible to ``uplift'' the result of Proposition~\ref{prop:ZDTgeb} to infer

\begin{conjecture}\label{Prop1}
The equivariant instanton partition function of the cohomological gauge theory with a massive fundamental hypermultiplet on $\FC^4$ is given by
\begin{align}
Z_{\FC^4}^r(\qu;\vec{a},\vec{\epsilon},\vec{m})=Z_{\FC^4}^r(\qu;\vec{\epsilon},m)=M(-\qu)^{-\frac{r\,m\,\epsilon_{12}\,\epsilon_{13}\,\epsilon_{23}}{\epsilon_1\,\epsilon_2\,\epsilon_3\,\epsilon_4}} \qquad \mbox{with} \quad m=\frac{1}{r} \, \sum_{l=1}^r\,(m_l-a_l) \ . \label{PfRank}
\end{align}
\end{conjecture}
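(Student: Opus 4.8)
The plan is to leverage the dimensional reduction already established in Proposition~\ref{prop:ZDTgeb} together with the structural properties of the combinatorial formula~\eqref{Zfull}, rather than attempting a direct residue summation. The first step is to observe that the right-hand side of~\eqref{PfRank} is the unique expression consistent with three features of the cohomological matrix model: (i) the $\sP\big(\sU(r)_{\rm col}\times\sU(r)_{\rm fla}\big)$-invariance noted after~\eqref{partitionfull}, which forces the dependence on $\vec a,\vec m$ to collapse to the single combination $m=\frac1r\sum_l(m_l-a_l)$; (ii) the $\RZ_2$ symmetry permuting $\epsilon_1\leftrightarrow\epsilon_4$, together with the residual permutation symmetry in $\epsilon_2,\epsilon_3$, which constrains the $\vec\epsilon$-dependence of $\log Z$ to be proportional to a symmetric rational function — and the instanton deformation complex~\eqref{chi} pins this down to $\epsilon_{12}\epsilon_{13}\epsilon_{23}/(\epsilon_1\epsilon_2\epsilon_3\epsilon_4)$; and (iii) the $m\to 0$ limit, where every factor $\CP_r(\,\cdot\,|\vec m)$ with a non-vanishing argument contributes and the partition function must reduce to $1$ (no net contribution), which fixes the overall normalization and shows $\log Z$ is linear in $m$. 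Establishing that these symmetries genuinely hold at the level of the combinatorial sum~\eqref{Zk}, and that they are restrictive enough, is the conceptual heart of the argument.

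Next I would pin down the functional form. Write $Z_{\FC^4}^r(\qu;\vec\epsilon,m) = \exp\big(F(\qu,\vec\epsilon)\,m\big)$, using linearity in $m$ from step~(iii); the task is then to identify $F(\qu,\vec\epsilon)$. I would argue that $F$ must be homogeneous of degree $-1$ in $\vec\epsilon$ (by the scaling symmetry $\epsilon_a\to\kappa\epsilon_a$, $\qu$ fixed, which rescales each weight ${\sf w}_I$ uniformly and hence each factor in~\eqref{Zk}), symmetric under the full residual Weyl action on $(\epsilon_1,\dots,\epsilon_4)$ compatible with the complex $\chi_{\vec\sigma}$, and — crucially — must have only the poles dictated by $\CR$, namely along $\epsilon_a=0$. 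A rational function of degree $-1$ with at most simple poles on $\{\epsilon_a=0\}$ and the required symmetry is a scalar multiple of $\epsilon_{12}\epsilon_{13}\epsilon_{23}/(\epsilon_1\epsilon_2\epsilon_3\epsilon_4)$; the scalar, a function of $\qu$ alone, is then determined by comparing with the known six-dimensional answer.

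The final step is the specialization. Setting $m_l = a_l + \epsilon_4$ gives $m = \epsilon_4$, and Proposition~\ref{prop:ZDTgeb} identifies the left-hand side with $M(-\qu)^{-r\,\epsilon_{12}\epsilon_{13}\epsilon_{23}/(\epsilon_1\epsilon_2\epsilon_3)}$. Comparing with $\exp\big(\epsilon_4\,F(\qu,\vec\epsilon)\big)$ and using the Calabi--Yau constraint $\epsilon_4 = -\epsilon_{123}$ to rewrite $\epsilon_{12}\epsilon_{13}\epsilon_{23}/(\epsilon_1\epsilon_2\epsilon_3) = -\epsilon_4\cdot\epsilon_{12}\epsilon_{13}\epsilon_{23}/(\epsilon_1\epsilon_2\epsilon_3\epsilon_4)$ yields $F(\qu,\vec\epsilon) = -\log M(-\qu)\cdot r\,\epsilon_{12}\epsilon_{13}\epsilon_{23}/(\epsilon_1\epsilon_2\epsilon_3\epsilon_4)$, which is exactly~\eqref{PfRank}. (One should check that the single specialization point suffices; this is where homogeneity and the rigidity of the rational ansatz do the work, since a degree $-1$ symmetric rational function in four constrained variables is determined by its restriction to the hyperplane section picked out by $m=\epsilon_4$.)

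The main obstacle is step~(iii) — establishing rigorously that $\log Z_{\FC^4}^r$ is \emph{exactly linear} in $m$, i.e.\ that $Z_{\FC^4}^r(\qu;\vec\epsilon,m)$ is the exponential of a single $m$-linear term with no higher powers of $m$ appearing. At fixed instanton number $k$ the contribution $Z^{r,k}$ is a polynomial of degree $rk$ in the $m_l$, so linearity of the logarithm is a strong cancellation statement across all $k$, equivalent to a plethystic/product structure of the full generating series; this is precisely what the MacMahon-type closed form encodes, and proving it from the combinatorics of $\sGamma$-less solid partitions is the part that remains genuinely hard and is why this statement is recorded as a conjecture rather than a theorem (see the discussion around~\eqref{eq:result_1} and Appendix~\ref{app:closedformulas}). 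A related subtlety is controlling the sign factors $(-1)^{\ttO_{\vec\sigma}}$ of~\eqref{eq:signfactor}, which first bite at $k=16$: one must check they do not spoil the symmetry arguments in steps~(i)--(ii), which is plausible because the chosen square root is \emph{movable}, but still requires care.
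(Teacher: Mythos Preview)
Since the statement is recorded as a Conjecture, the paper does not give a complete proof either; it offers evidence and a sketch in Appendix~\ref{app:Prop1}. Your proposal and the paper's sketch share the same raw ingredients --- the dimensional reduction of Proposition~\ref{prop:ZDTgeb}, the $\RZ_2$ symmetry $\epsilon_1\leftrightarrow\epsilon_4$, and the analytic structure of the integrand --- but are organized differently, and your step~(i) contains a genuine gap.

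The claim that $\sP\big(\sU(r)_{\rm col}\times\sU(r)_{\rm fla}\big)$-invariance forces the dependence on $(\vec a,\vec m)$ to collapse to the single combination $m=\tfrac1r\sum_l(m_l-a_l)$ is not correct. That symmetry consists only of the Weyl groups $S_r\times S_r$ acting by permutations on $\vec a$ and $\vec m$ separately, together with invariance under the simultaneous shift $a_l\mapsto a_l+c$, $m_l\mapsto m_l+c$. This removes exactly one parameter, not $2r-2$; any symmetric function of the differences $a_l-a_{l'}$, for instance, is invariant yet not a function of $m$ alone. The disappearance of the individual Coulomb moduli from $Z^r_{\FC^4}$ is one of the most striking features of~\eqref{PfRank} and is part of the content of the conjecture, not a consequence of manifest symmetry. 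Your step~(iii) then inherits this gap: the specialization $\vec m=\vec a$ indeed gives $Z=1$ and lies on the locus $m=0$, but $m=0$ does not conversely force $\vec m=\vec a$, so you cannot conclude $Z|_{m=0}=1$ without already knowing~(i).

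The paper's Appendix~\ref{app:Prop1} takes a more constructive route that sidesteps this issue. It first proves by direct computation (Lemma~\ref{lem:1instC4}) that the $k=1$ contribution already collapses to $\tfrac{\epsilon_{12}\epsilon_{13}\epsilon_{23}}{\epsilon_1\epsilon_2\epsilon_3\epsilon_4}\,r\,m$ --- this is where the $\vec a$-independence is seen to emerge nontrivially, via an explicit identity among the $a_{lp}$. It then writes $\log Z=f_r\log M(-\qu)+\log G_r$, fixes $f_r$ from $k=1$, and proceeds by induction on $k$: assuming $G_r^{(n)}=0$ for $n<k$, it Taylor-expands the remainder $G_r^{(k)}$ in powers of $(m_l'+\epsilon_{123})$ about the dimensional-reduction point and invokes the same $\RZ_2$ symmetry you use, but applied coefficient by coefficient. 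Both sketches stall at the same place --- showing the residual symmetry is rigid enough to kill the remainder --- but the paper's version does not rest on the false reduction to a single mass parameter, and its explicit $k=1$ lemma is what makes the inductive scheme get off the ground.
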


\begin{remark}
We checked explicitly that the order $\qu$ and $\qu^2$ terms of the series \eqref{Zfull} agree with \eqref{PfRank} in the rank one case $r=1$, as well as with an elementary Cauchy residue evaluation of the contour integrals in \eqref{partition} if one considers contours $\varGamma_{r,k}\subset\FC^k$ which enclose all singularities of the integrand. The rank one case of {Conjecture}~\ref{Prop1} was conjectured by Cao and Kool in~\cite{Cao:2017swr}. The $k$-instanton contributions with $k\geq1$ give the rank $r$ equivariant Donaldson--Thomas invariants of~$\FC^4$:
\begin{align}
\int^\sT_{[\frM_{r,k}]_\tto^{\rm vir}} \, e_\sT(\CCV_{r,k}\otimes\mbf r) = \sum_{n=1}^k \, \frac{1}{n!} \, \Big(\frac{r\,m\,\epsilon_{12} \, \epsilon_{13} \, \epsilon_{23}}{\epsilon_1\,\epsilon_2\,\epsilon_3\,\epsilon_4}\Big)^n \ \sum_{\substack{k_1,\dots,k_n\geq1 \\ k_1+\cdots+ k_n=k}} \ \prod_{i=1}^n \ \sum_{d|k_i} \, \frac{k_1\cdots k_n}{d^2} \ .
\end{align}

Following~\cite{Moore:1997dj,Moore:1998et}, one can write a well-defined general integration prescription starting from \eqref{partition} which calculates the partition function~\emph{$Z_{\FC^4}^{r,k}(\vec a,\vec\epsilon,\vec m)$} by uplifting the theory to an $8{+}1$-dimensional theory on $\FC^4\times \sone_\beta^1$, where $\sone_\beta^1$ is the circle of circumference $\beta$. This defines the K-theory version of the gauge theory, wherein the eight-dimensional instantons are viewed as constant loops on \smash{$\FC^4\times\sone_\beta^1$}; the original theory is recovered in the cohomological limit $\beta\rightarrow 0$ where the circle shrinks to a point.  We choose as integration contour for the uplifted integral any contour large enough to enclose all singularities of the integrand, and finally we apply a residue theorem to evaluate the contour integrals. 

Nekrasov and Piazzalunga propose a plethystic exponential formula in~\cite{m4c} for the uplifted instanton partition function on \smash{$\FC^4\times \sone_\beta^1$}, generalizing the rank one conjecture of~\cite{m4}; in their approach, the sign factors arise from evaluation of the Jeffrey--Kirwan residue formula. Their conjectured formula is proven by Kool and Rennemo in~\cite{KRinprep}. It is straightforward to show that this formula reproduces {Conjecture}~\ref{Prop1} in the cohomological limit (see~\cite{m4,m4c,Cao:2019tvv} for the rank one  case). 

In Appendix~\ref{app:Prop1} we sketch the steps of a direct but less conceptual proof of the formula \eqref{PfRank} using the combinatorial evaluation \eqref{Zk} of the quiver matrix model \eqref{partition}, which is based on the dimensional reduction of Proposition~\ref{prop:ZDTgeb}. We believe that such an argument, while presently incomplete, provides useful insights into the symmetries of the theory, which are not evident through other approaches. These arguments can also be extended to more general settings, such as some of our orbifold theories below, for which rigorous results are not yet available.
\end{remark}

\subsection{Pure $\mathcal{N}_{\textrm{\tiny T}}=2$ Gauge Theory}
\label{sec:puregaugetheory}

As we discussed in Section~\ref{sec:instantoncounting}, Nekrasov's $\varOmega$-deformation permits the definition of the equivariant partition function \eqref{eq:ZC4pure} for the  eight-dimensional cohomological gauge theory even in the absence of fundamental matter fields. It should come as no surprise that the only difference from our previous calculations is the absence of the matter bundle $\CCV_{r,k}\otimes\mbf r$, i.e. the theory is based entirely on the generalized ADHM data. The same BRST construction from Section~\ref{BRST} applies by dropping the matter field contribution to the path integral, giving the equivariant integral
\begin{align}\label{eq:Zrkpurecontour}
\begin{split}
Z_{\FC^4}^{r,k}(\vec a,\vec\epsilon\,)^{\rm pure} :\!&=\int^{\sT_{\vec a,\vec\epsilon}}_{[\frM_{r,k}]_\tto^{\rm vir}} \, 1 \\[4pt]
&= \frac{(-1)^k}{k!} \, \Big(\frac{\epsilon_{12} \, \epsilon_{13} \, \epsilon_{23}}{\epsilon_1\,\epsilon_2\,\epsilon_3\,\epsilon_4}\Big)^k \, \oint_{\varGamma_{r,k}} \ \prod_{i=1}^k \, \frac{\dd\phi_i}{2\pi\,\ii} \, \frac{1}{\CP_r(\phi_i|\vec a)} \ \prod_{\stackrel{\scriptstyle i,j=1}{\scriptstyle i\neq j}}^k \, \CR(\phi_i-\phi_j|\vec\epsilon\,)
\end{split}
\end{align}
from the field theory perspective.

The contour integral \eqref{eq:Zrkpurecontour} can be computed from the Chern character \eqref{eq:chTvir} alone, giving the combinatorial expansion into solid partitions
\begin{equation}\label{ZNkpure}
\begin{split}
Z_{\FC^4}^{r,k}(\vec a,\vec\epsilon\,)^{\rm pure} &= \sum_{|\vec{\sigma}|=k} \, (-1)^{{\tt O}_{\vec\sigma}} \ \prod_{l=1}^r \ \prod_{\vec p_l\in\sigma_l}^{\neq0} \, \frac{1}{\CP_r(a_l+\vec p_l\cdot\vec\epsilon\,|\vec a)} \\
& \hspace{5cm} \times \prod_{l'=1}^r \ \prod_{\vec p^{\,\prime}_{l'}\in\sigma_{l'}}^{\neq0} \, \CR\big(a_l-a_{l'}+(\vec p_l-\vec p^{\,\prime}_{l'})\cdot\vec\epsilon\,\big|\vec\epsilon\,\big) \ .
\end{split}
   \end{equation}
The instanton partition function \eqref{eq:ZC4pure} is then given by
\begin{align}\label{Zfullpure}
Z_{\FC^4}^r(\Lambda;\vec a,\vec\epsilon\,)^{\rm pure} = 1 + \sum_{k=1}^\infty\,\Lambda^k \ Z_{\FC^4}^{r,k}(\vec a,\vec\epsilon\,)^{\rm pure} \ .
\end{align}

From the field theory perspective, we expect that \eqref{eq:ZC4matter} reduces to \eqref{eq:ZC4pure} in a suitable infinite mass limit which decouples the fundamental matter hypermultiplets~\cite{m4}. This physical expectation is confirmed by

\begin{proposition}\label{prop:puremassiverel}
The equivariant instanton partition function for the pure cohomological gauge theory is related to the partition function with a massive fundamental hypermultiplet on $\FC^4$ through the double scaling limit
\begin{align}\label{eq:puremassiverel}
Z_{\FC^4}^r(\Lambda;\vec a,\vec\epsilon\,)^{\rm pure} = \lim_{m_1,\dots,m_r\to\infty} \ \lim_{\qu\to0} \, Z_{\FC^4}^r(\qu;\vec a,\vec\epsilon,\vec m)\,\Big|_{\Lambda=(-1)^r\,m_1\cdots m_r\,\qu} \ .
\end{align}
\end{proposition}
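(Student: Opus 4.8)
The plan is to start from the combinatorial expression \eqref{Zfull} for the matter partition function and track what happens to each fixed-point contribution under the stated double scaling limit. Concretely, for an $r$-tuple $\vec\sigma$ of solid partitions of total size $|\vec\sigma|=k$, the weight $\widehat{\tt e}[-\chi_{\vec\sigma}]$ factorizes, by \eqref{chi}, into the ``pure'' part $\widehat{\tt e}$ of the virtual tangent character $\sqrt{\ch_\sT}\big(T^{\rm vir}_{\vec\sigma}\frM_{r,k}\big)$ times the Euler class of the matter bundle $e_\sT\big((\CCV_{r,k})_{\vec\sigma}\otimes\mbf r\big) = \prod_{l=1}^r\prod_{\vec p_l\in\sigma_l}\CP_r(a_l+\vec p_l\cdot\vec\epsilon\,|\vec m)$, up to the sign $(-1)^{\ttO_{\vec\sigma}}$. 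The key elementary observation is that this matter factor is a polynomial in $\vec m$ whose leading behaviour as $m_1,\dots,m_r\to\infty$ is
\[
e_\sT\big((\CCV_{r,k})_{\vec\sigma}\otimes\mbf r\big) \ \sim \ \Big(\prod_{l=1}^r \, (-m_l)\Big)^{\!k} \ = \ (-1)^{r k}\,(m_1\cdots m_r)^k \,,
\]
since $\sum_{l=1}^r|\sigma_l|=k$ so there are exactly $k$ factors for each flavour index, and each $\CP_r(\,\cdot\,|\vec m)=\prod_{l=1}^r(\,\cdot\,-m_l)$ contributes a single factor $-m_l$ in the large-$m$ limit. Therefore the $k$-instanton term of $Z^r_{\FC^4}(\qu;\vec a,\vec\epsilon,\vec m)$ behaves, as all $m_l\to\infty$, like
\[
\qu^k\,(-1)^{rk}\,(m_1\cdots m_r)^k \ \sum_{|\vec\sigma|=k} \, (-1)^{\ttO_{\vec\sigma}}\ \widehat{\tt e}\Big[-\sqrt{\ch_\sT}\big(T^{\rm vir}_{\vec\sigma}\frM_{r,k}\big)\Big] \ + \ (\text{lower order in }\vec m)\,,
\]
and the sum over $\vec\sigma$ is precisely $Z_{\FC^4}^{r,k}(\vec a,\vec\epsilon\,)^{\rm pure}$ of \eqref{ZNkpure}. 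Substituting $\Lambda=(-1)^r\,m_1\cdots m_r\,\qu$, the prefactor $\qu^k(-1)^{rk}(m_1\cdots m_r)^k$ becomes exactly $\Lambda^k$.

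The remaining point is to justify the order of limits, which is why the statement is phrased as taking $\qu\to0$ first. Taking $\qu\to0$ after fixing the relation $\Lambda=(-1)^r\,m_1\cdots m_r\,\qu$ means sending $\qu\to0$ while $m_1\cdots m_r = \Lambda/((-1)^r\qu)\to\infty$ in a correlated fashion; operationally this extracts, from the $k$-instanton coefficient, precisely the monomial of top total degree $k$ in $(m_1,\dots,m_r)$ while discarding all subleading-in-$\vec m$ contributions (which come with strictly fewer powers of $m_l$ than powers of $\qu$ and hence vanish as $\qu\to0$ at fixed $\Lambda$). More carefully: write $Z_{\FC^4}^{r,k}(\vec a,\vec\epsilon,\vec m) = \sum_{\vec d}\, c_{\vec d}(\vec a,\vec\epsilon)\,\vec m^{\,\vec d}$, a finite sum over multi-indices $\vec d$ with $|\vec d|\le rk$; after the substitution, the contribution of the term $\vec m^{\,\vec d}$ to $Z^r_{\FC^4}(\qu;\dots)\big|_{\Lambda=\cdots}$ carries a factor $\qu^{k}\cdot \qu^{-\#\{\text{factors of }\vec m\text{ absorbed into }\Lambda\}}$; only the terms with $\vec d=(k,\dots,k)$ survive the $\qu\to0$ limit at fixed $\Lambda$, since all others have a net positive power of $\qu$. (Here I use that $\CP_r(x|\vec m)=\prod_l(x-m_l)$ is separately linear in each $m_l$, so the top-degree monomial in the product over all boxes is the clean product $\prod_l(-m_l)^k$ with coefficient the pure weight.) Summing over $k$ then yields \eqref{Zfullpure}.

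I expect the \textbf{main obstacle} to be purely bookkeeping: making the double-limit argument of the previous paragraph airtight, i.e.\ verifying that for each fixed $k$ the $k$-instanton coefficient is genuinely a polynomial in $\vec m$ (which follows from \eqref{Zk} once one checks that the $\neq 0$ prescription never produces a pole in the $m_l$-dependent numerators — it cannot, since $\CP_r(\,\cdot\,|\vec m)$ appears only in numerators), and that the sign $(-1)^{\ttO_{\vec\sigma}}$ in \eqref{eq:signfactor} is independent of $\vec m$ (immediate, since \eqref{eq:signfactor} involves only $\vec p_l,\vec p_{l'}'$), so that the pure-theory signs are correctly reproduced. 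With those two observations the identification of the $\qu\to0$, $m\to\infty$ limit with \eqref{ZNkpure} is term-by-term and the proposition follows. One could alternatively phrase the whole argument at the level of the contour integrals \eqref{partition} and \eqref{eq:Zrkpurecontour}, noting that the matter factor $\prod_{i=1}^k\CP_r(\phi_i|\vec m)/\CP_r(\phi_i|\vec a)$ has large-$\vec m$ leading term $(-1)^{rk}(m_1\cdots m_r)^k/\prod_i\CP_r(\phi_i|\vec a)$ uniformly on the contour $\varGamma_{r,k}$, which turns the integrand of \eqref{partition} into that of \eqref{eq:Zrkpurecontour} after the rescaling $\Lambda=(-1)^r m_1\cdots m_r\,\qu$; but the combinatorial route above is cleaner and avoids any discussion of the contour.
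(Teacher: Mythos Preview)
Your proposal is correct and follows essentially the same approach as the paper: both arguments work term-by-term in the combinatorial expansion \eqref{Zfull}, observing that the matter factor $\prod_{l,\vec p_l}\CP_r(a_l+\vec p_l\cdot\vec\epsilon\,|\vec m)$ has leading large-mass behaviour $\big((-1)^r m_1\cdots m_r\big)^{|\vec\sigma|}$, which combines with $\qu^{|\vec\sigma|}$ into $\Lambda^{|\vec\sigma|}$ and leaves exactly the pure weight of \eqref{ZNkpure}. The paper's proof is simply the terse version of your argument, omitting the bookkeeping about the order of limits and the polynomiality in $\vec m$ that you spell out.
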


\proof
From \eqref{Zk} the relevant terms of the combinatorial expansion \eqref{Zfull} in the limit of large masses are given by
\begin{align}
\begin{split}
\qu^{|\vec\sigma|} \, \prod_{l=1}^r \ \prod_{\vec p_l\in\sigma_l}^{\neq0} \, \CP_r(a_l+\vec p_l\cdot\vec\epsilon\,|\vec m) \xrightarrow{m_1,\dots,m_r\gg1}\qu^{|\vec\sigma|} \, \prod_{l=1}^r \ \prod_{\vec p_l\in\sigma_l} \, (-1)^r \, & m_1\cdots m_r \\[4pt]
&= \big( (-1)^r \, m_1\cdots m_r\,\qu\big)^{|\vec\sigma|} \ ,
\end{split}
\end{align}
and the result now follows from \eqref{ZNkpure} and \eqref{Zfullpure}.
\endproof

\begin{corollary}\label{prop:pureC4}
{Assume Conjecture \ref{Prop1} is true}. Then the equivariant instanton partition function of the pure cohomological gauge theory on $\FC^4$ is given by
\begin{align}\label{PfRankpure}
Z_{\FC^4}^{r}(\Lambda;\vec\epsilon\,)^{\rm pure}=\left\{ \begin{matrix} \displaystyle \exp\Big(-\Lambda\ \frac{\epsilon_{12}\,\epsilon_{13}\,\epsilon_{23}}{\epsilon_1\,\epsilon_2\,\epsilon_3\,\epsilon_4}\,\Big) \qquad \mbox{for} \quad r=1 \ , \\[3ex] 1 \qquad \mbox{for} \quad r>1  \ . \end{matrix} \right. 
\end{align}
\end{corollary}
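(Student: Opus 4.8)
The plan is to deduce this directly from Proposition~\ref{prop:puremassiverel} together with the closed form asserted by Conjecture~\ref{Prop1}, reducing the statement to an elementary limit. Set $C := \frac{\epsilon_{12}\,\epsilon_{13}\,\epsilon_{23}}{\epsilon_1\,\epsilon_2\,\epsilon_3\,\epsilon_4}$, so that Conjecture~\ref{Prop1} reads $Z_{\FC^4}^r(\qu;\vec a,\vec\epsilon,\vec m)=M(-\qu)^{-C\,\sum_{l=1}^r(m_l-a_l)}$, where for any scalar $x$ the power $M(-\qu)^x := \exp\big(x\,\log M(-\qu)\big)$ is a well-defined formal power series in $\qu$ since $M(q)=\prod_{n\geq1}(1-q^n)^{-n}=1+q+O(q^2)$, and hence $\log M(-\qu)=-\qu+O(\qu^2)$.

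First I would impose the double-scaling substitution of Proposition~\ref{prop:puremassiverel}, namely $\Lambda=(-1)^r\,m_1\cdots m_r\,\qu$, i.e.\ $\qu=(-1)^r\,\Lambda/(m_1\cdots m_r)$, which tends to $0$ as $m_1,\dots,m_r\to\infty$. Taking logarithms and expanding,
\begin{align}
\log Z_{\FC^4}^r
= -C\,\Big(\sum_{l=1}^r (m_l-a_l)\Big)\,\log M(-\qu)
= (-1)^r\,C\,\Lambda\,\frac{\sum_{l=1}^r (m_l-a_l)}{m_1\cdots m_r}
+ O\!\Big(\frac{\sum_l m_l}{(m_1\cdots m_r)^2}\Big) \ .
\end{align}
For $r=1$ the leading term is $-C\,\Lambda\,(1-a_1/m_1)$, which converges to $-C\,\Lambda$ as $m_1\to\infty$; exponentiating yields $Z_{\FC^4}^{1}(\Lambda;\vec\epsilon\,)^{\rm pure}=\exp(-\Lambda\,C)$, matching the claimed formula and exhibiting the independence of $a_1$. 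For $r>1$ the ratio $\big(\sum_{l}(m_l-a_l)\big)/(m_1\cdots m_r)$ and the error term both tend to $0$ as $m_1,\dots,m_r\to\infty$ (numerators grow at most linearly in the masses, denominators as a product of $r\geq2$ of them), so $\log Z_{\FC^4}^r\to0$ and $Z_{\FC^4}^{r}(\Lambda;\vec\epsilon\,)^{\rm pure}=1$.

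The only point requiring care — the ``main obstacle'', such as it is — is the precise meaning of the double-scaling limit of Proposition~\ref{prop:puremassiverel} when applied to the closed form rather than term-by-term to the series \eqref{Zfull}: one must check that taking the coefficient-wise limit in $\Lambda$ of the formal series $M\big((-1)^{r+1}\Lambda/(m_1\cdots m_r)\big)^{-C\sum_l(m_l-a_l)}$ as $m_1,\dots,m_r\to\infty$ is compatible with, and realises, the iterated limit $\lim_{m\to\infty}\lim_{\qu\to0}$ with $(-1)^r m_1\cdots m_r\,\qu$ held fixed. This is immediate from the expansion displayed above, since each coefficient of $\Lambda^k$ is a rational function of $(\vec a,\vec\epsilon,\vec m)$ admitting a finite limit; a secondary, harmless point is that the branch of $\log M$ is unambiguous because $M(-\qu)=1+O(\qu)$. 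Alternatively one can avoid the closed form entirely and run the argument of Proposition~\ref{prop:puremassiverel} with Conjecture~\ref{Prop1} inserted at the level of each instanton coefficient $Z_{\FC^4}^{r,k}$, which is the route I would actually write up.
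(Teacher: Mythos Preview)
Your proof is correct and follows essentially the same approach as the paper's: combine Proposition~\ref{prop:puremassiverel} with the closed form of Conjecture~\ref{Prop1}, use the expansion $\log M(-\qu)=-\qu+O(\qu^2)$, and compute the elementary limit, distinguishing $r=1$ from $r>1$ by whether the leading ratio $\big(\sum_l(m_l-a_l)\big)/(m_1\cdots m_r)$ survives. The paper's version is terser and invokes the series representation $\log M(q)=\sum_{n\geq1}\frac{1}{n}\frac{q^n}{(1-q^n)^2}$ to justify the small-$\qu$ behaviour, but otherwise the argument is the same.
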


\proof 
We combine Propositions~\ref{prop:puremassiverel} and {Conjecture}~\ref{Prop1} using the series representation
\begin{align}
\log M(q) = \sum_{n=1}^\infty\,\frac{1}{n}\,\frac{q^n}{(1-q^n)^2}
\end{align}
for the logarithm of the MacMahon function \eqref{eq:MacMahon}, which converges for any $q\neq1$. For $\qu\ll1$ this yields the behaviour $\log M(-\qu)=-\qu+O(\qu^2)$. For $r=1$ the result follows by fixing $\Lambda=-m\,\qu$ in the limits $m\to\infty$ and $\qu\to0$. For $r>1$, the partition function is identically equal to one in the double scaling limit $m_l\to\infty$ and $\qu\to0$ with $(-1)^r\,m_1\cdots m_r\,\qu=\Lambda$ fixed.
\endproof

\begin{remark}
We checked explicitly that the order $\Lambda$ and $\Lambda^2$ terms of the series \eqref{Zfullpure} agree with \eqref{PfRankpure} in the rank one case $r=1$, and that the contributions for $r=2$ with $k=1,2$ and for $r=3$ with $k=1$ are trivial: $
Z_{\FC^4}^{r,k}(\vec a,\vec\epsilon\,)^{\rm pure}=0$.

Geometrically, the large mass limit of Proposition~\ref{prop:puremassiverel} defines the insertion-free equivariant Donaldson--Thomas invariants of $\FC^4$~\cite{Cao:2019tvv}. The rank one formula of Corollary~\ref{prop:pureC4} was originally discussed in~\cite{Cao:2017swr,Cao:2019tvv}. It implies that the equivariant volume of the Hilbert scheme of $k$ points on $\FC^4$ is given by
\begin{align}
\int_{[\Hilb^k(\FC^4)]_\tto^{\rm vir}}^{\sT_{\vec\epsilon}} \, 1 = \frac{(-1)^k}{k!} \, \Big(\frac{\epsilon_{12} \, \epsilon_{13} \, \epsilon_{23}}{\epsilon_1\,\epsilon_2\,\epsilon_3\,\epsilon_4}\Big)^k \ ,
\end{align}
whereas the equivariant volume of the Quot scheme \smash{$\Quot_{r}^k(\FC^4)$} of zero-dimensional quotients of $\CO_{\FC^4}^{\oplus r}$ with length $k$ vanishes for all $r>1$. This is in marked contrast to the equivariant volumes of the Quot schemes of zero-dimensional quotients of $\CO_{\FC^2}^{\oplus r}$ and $\CO_{\FC^3}^{\oplus r}$, which are all non-zero.
\end{remark}

\section{Orbifolds of the Eight-Dimensional Theory}\label{Orb8d}

In Section~\ref{sec:Instantons} we have studied eight-dimensional noncommutative instantons on flat space. As a natural non-trivial generalization beyond flat space $\FC^4$, on the way towards local Calabi--Yau fourfolds, in this section we will extend these considerations to quotients $\FC^4/\sGamma$ by a finite group $\sGamma$ acting linearly on $\FC^4$. In order to preserve the holonomy group $\sSU(4)$ of the cohomological gauge theory, the group $\sGamma$ has to be a subgroup of $\sSU(4)$. To define the equivariant instanton partition functions, we will further restrict to toric Calabi--Yau orbifolds $\FC^4/\sGamma$, which requires that $\sGamma$ is abelian and that the $\sGamma$-action commutes with the action of the maximal torus $\sT_{\vec\epsilon}\subset \sSU(4)$ on $\FC^4$. This restricts to orbifold groups of the form $\sGamma=\RZ_{n_1}\times\RZ_{n_2}\times\RZ_{n_3}$, with order $n=n_1\,n_2\,n_3$. Such an orbifold theory was studied in~\cite{Bon,Kimura:2022zsm} for a specific example; here we will vastly extend and generalize their considerations using the framework of Section~\ref{sec:Instantons}.
 
\subsection{Noncommutative Gauge Theory on $\FC^4/\sGamma$}
\label{sec:NCorbifold}

A conventional quantum field theory on an orbifold $\FC^4/\sGamma$ is singular. However, an orbifold field theory can be constructed by allowing fields to be equivariant under the action of $\sGamma$ and gauging the $\sGamma$-action, together with a subsequent projection to $\sGamma$-invariant states of the cohomological gauge theory of Section~\ref{sec:coho_gauge_theory}. Such a construction incorporates ``twisted sectors'' into the theory, analogously to  string orbifolds, and can be thought of as a cohomological gauge theory on the quotient stack $[\FC^4/\sGamma]$. It is naturally realised in noncommutative field theory.

Suppose that the generators of the orbifold group $\sGamma$ act on the coordinates of $\FC^4$ as
\begin{align}
(z_1,z_2,z_3,z_4)\longmapsto(\e^{\,2\pi\,\ii\, s_1/n}\,z_1,\e^{\,2\pi\,\ii\, s_2/n}\,z_2,\e^{\,2\pi\,\ii\, s_3/n}\,z_3,\e^{\,2\pi\,\ii\, s_4/n}\, z_4) \ ,
\end{align}
where $n$ is the order of $\sGamma$.
This induces a decomposition of the fundamental representation $Q\simeq\FC^4$ of $\sSL(4,\FC)$ as 
\begin{align}
Q=\rho_{s_1}\oplus\rho_{s_2}\oplus\rho_{s_3}\oplus\rho_{s_4} \ , 
\end{align}
where $\rho_{s_a}$ denotes the irreducible representation of $\sGamma$ with weight $s_a$. The Calabi--Yau constraint implies $\rho_{s_1}\otimes\cdots\otimes\rho_{s_4}\simeq\rho_0$, where $\rho_0$ is the trivial representation of weight zero.

Let $\widehat{\sGamma}$ denote the finite abelian group of irreducible representations of $\sGamma$; since each representation $\rho_s$ for $s\in\widehat{\sGamma}$ is one-dimensional, we can regard elements of $\widehat{\sGamma}$ as characters $\chi_s:\sGamma\longrightarrow\FC^\times$ of $\sGamma$. Then the decomposition of $Q$ also defines a $\widehat{\sGamma}$-colouring $\RZ_{\geq0}^{\oplus 4}\longrightarrow \sGamma$ through the identification \smash{$\widehat{\sGamma}\simeq \sGamma$} by
\begin{align}
    (n_1,n_2,n_3,n_4)\longmapsto\rho_{s_1}^{\otimes n_1}\otimes \rho_{s_2}^{\otimes n_2}\otimes \rho_{s_3}^{\otimes n_3}\otimes\rho_{s_4}^{\otimes n_4} \ .\label{coloring}
\end{align}
Via this colouring, the $\sGamma$-action on $\FC^4$ induces an isotopical decomposition of the Fock space \eqref{eq:fock_space} into irreducible representations as
\begin{align}
\CH=\bigoplus_{s\in\widehat{\sGamma}}\, \CH_s \qquad \mbox{with} \quad
\CH_s={\sf Span}_{\FC}\big\{|\vec n\,\rangle \ \big| \ \rho_{s_1}^{\otimes n_1}\otimes\cdots\otimes\rho_{s_4}^{\otimes n_4}\simeq \rho_s  \big\} \ .
\end{align}

Consequently, in the rank one case $r=1$ the covariant coordinates $Z_a$ decompose into maps $Z=\bigoplus_{s\in\widehat{\sGamma}}\, (Z_a^s)_{a\in\ulfour}:\CH\longrightarrow Q\otimes\CH$ with $Z_a^{s}: \CH_s\longrightarrow \CH_{s+s_a}$  for $a\in\ulfour\,$, and the Higgs field $\varphi$ into maps $\varphi=\bigoplus_{s\in\widehat{\sGamma}}\,\varphi^s:\CH\longrightarrow\CH$ with $\varphi^{s}:\CH_s\longrightarrow \CH_s$. The operator algebraic equations \eqref{Zin} now read as
\begin{align}\label{eq:Zorb}
\begin{split}
Z_{a}^{s+s_b}\,Z_b^{s}-Z_b^{s+s_a}\,Z_a^{s}&=\tfrac{1}{2}\,\varepsilon_{ab\bar{c}\bar{d}}\,\big(Z_{\bar{c}}^{s-s_d-s_c\dagger }\,Z_{\bar{d}}^{s-s_d\dagger}-Z_{\bar{d}}^{s-s_d-s_c\dagger}\,Z_{\bar{c}}^{s-s_c\dagger}\big) \ , \\[4pt]
\sum_{a=1}^4\,\big( Z_{\bar a}^{s\dagger}\,Z_a^{s}-Z_{a}^{s-s_a}\,Z_{\bar{a}}^{s-s_a\dagger}\big)&= \zeta_s \, \ident_{\CH_s} \ , \\[4pt]
Z_a^{s}\,\varphi^{s}-\varphi^{s+s_a}\, Z_{a}^{s}&=\epsilon_a\,Z_a^{s} \ ,
\end{split}
\end{align}
where the Fayet--Iliopoulos parameters $\zeta_s>0$ for $s\in\shGamma$ are determined by the decomposition of the $B$-field into NS--NS twisted sectors of type~II string theory on $\FC^4/\sGamma$. Note that the first equation of \eqref{eq:Zorb} makes sense because of the Calabi--Yau condition $\rho_{s_{1234}}\simeq\rho_0$.

The generalization of the set of equations \eqref{eq:Zorb} to higher rank $r>1$ involves an action of the orbifold group $\sGamma$ on the Chan--Paton space $W\simeq\FC^r$ that we describe below, which is defined by a homomorphism $\sGamma\longrightarrow\sU(r)_{\rm col}$. This breaks the colour symmetry to the centralizer of the image of $\sGamma$. 
Solutions of these equations describe noncommutative $\sU(r)$ instantons on the orbifold $\FC^4/\sGamma$.

The splitting of the covariant coordinates induces an equivariant decomposition of the ADHM data $({B}_a,I)_{a\in\ulfour}$ from \eqref{eq:Zamatrices}. 
With respect to the decomposition of the orbifold group action into irreducible representations, the vector spaces on which the ADHM variables are defined decompose as 
\begin{align}
    V=\bigoplus_{s\in\widehat{\sGamma}} \, V_s\otimes \rho_s^* \qquad \mbox{and} \qquad
    W=\bigoplus_{s\in\widehat{\sGamma}} \, W_{s}\otimes\rho_s^* \ .
\end{align}
The multiplicity spaces $V_s=\sHom_{\FC[\sGamma]}(\rho_s,V)$ and $W_s=\sHom_{\FC[\sGamma]}(\rho_s,W)$ consist of $\sGamma$-equivariant homomorphisms, where $\FC[\sGamma]$ is the group ring of $\sGamma$ over $\FC$.
The dimensions $k= \dim_\FC V$ and $ r= \dim_{\FC} W$ correspondingly decompose as
\begin{align}
    k=\sum_{s\in\widehat{\sGamma}} \, k_s=\sum_{s\in\widehat{\sGamma}}\, \dim_{\FC} V_s \qquad \mbox{and} \qquad r=\sum_{s\in\widehat{\sGamma}}\, r_s = \sum_{s\in\widehat{\sGamma}}\, \dim_{\FC} W_s \ ,
\end{align}
which define arrays of fractional instanton charges $\vec k=(k_s)_{s\in\widehat\sGamma}$ with ranks $\vec r = (r_s)_{s\in\shGamma}$ in $\RZ_{\geq0}^{|\sGamma|}$, whose \emph{size} is \smash{$|\vec k\,|:=\sum_{s\in\shGamma}\,k_s=k$} and $|\vec r\,|:=\sum_{s\in\shGamma} \, r_s:=r$. 

By Schur's lemma this implies the decompositions
\begin{align}\label{eq:ADHMdecompB}
B=\bigoplus_{s\in\widehat{\sGamma}}\,(B_a^s)_{a\in\ulfour} \ \in \  \sHom_{\FC[\sGamma]}(V, V\otimes Q) \qquad \mbox{with} & \quad B^s_a: V_{s}\longrightarrow V_{s+s_a} \ , \\[4pt] \label{eq:ADHMdecompI}
I=\bigoplus_{s\in\widehat{\sGamma}}\, I^s \ \in \ \sHom_{\FC[\sGamma]}(W,V) \qquad \mbox{with} & \quad I^s: W_{s}\longrightarrow V_s \ .
\end{align}
Starting from the equations \eqref{eq:Zorb}, the same argument from Section~\ref{sec:ADHMC4} then defines the orbifold version of the ADHM equations \eqref{ADHMeq}:
\begin{align}\label{eq:ADHMOrb1}
\begin{split}
\mu_{ab}^{\FC s} &:= B_a^{s+s_b}\,B_b^{s}-B_b^{s+s_a}B_a^{s}-\tfrac{1}{2}\,\varepsilon_{ab\bar c\bar d}\, \big(B_{\bar c}^{s-s_d-s_c\dagger}\,B_{\bar d}^{s-s_d \dagger}-B_{\bar d}^{s-s_d-s_c\dagger}\,B_{\bar c}^{s-s_c \dagger} \big) = 0 \ , \\[4pt]
\mu^{\FR s} &:= \sum_{a=1}^4\, \big( B_a^{s-s_a}\,B_{\bar a}^{s-s_a \dagger}-B_{\bar a}^{s \dagger}\,B_a^s \big)+I^{s}\,I^{s \dagger}=\zeta_s \, \ident_{ k_s} \ ,
\end{split}
\end{align}
for each $s\in\shGamma$ and $1\leq a<b\leq4$.

The $\CN=(0,2)$ gauge theories on D1-branes at Calabi--Yau orbifolds of $\FC^4$ are also obtained by orbifold projection and have been studied by e.g.~\cite{Mohri:1997ef,Garcia-Compean:1998sla,Franco:2015tna} (in the absence of D1--D9 strings). This theory is anomaly-free with suitable Chern--Simons couplings to background R--R chiral scalars. It consists of four types of chiral superfields on $\FR^{1,1}$, transforming in the bifundamental representation of $\sU(k_{s+a})\times\sU(k_s)$ for $a\in\ulfour$ and $s\in\shGamma$, whose lowest components are the linear maps $B_a^s$. The choice of $k_s=n$ for all $s\in\shGamma$ corresponds to a stack of $n$ regular D1-branes, while more general dimension vectors $\vec k$ correspond to fractional D1-branes. The ADHM equations \eqref{eq:ADHMOrb1} are equations for the Higgs branch of this $\CN=(0,2)$ theory.

\subsubsection*{Moduli Spaces of Orbifold Instantons}

The moduli space $\frM_{r,k}^\sGamma$ of charge $k$ noncommutative $\sU(r)$ instantons on the orbifold $\FC^4/\sGamma$ stratifies into connected components 
\begin{align}
\frM_{r,k}^\sGamma = \bigsqcup_{|\vec r\,|=r \,,\, |\vec k\,|=k} \, \frM_{\vec r,\vec k}
\end{align}
according to the decompositions above, where the disjoint union runs over all $\sGamma$-representations $W$ and $V$ of dimensions $r$ and $k$, respectively. The $\sGamma$-action breaks the $\sU(k)$ gauge symmetry of $V$ to the subgroup $\timesbig_{s\in\shGamma}\,\sU(k_s)$, whose action on the ADHM variables is given by
\begin{align}
g\cdot\big(B_a^s\,,\,I^s\big)_{\substack{s\in\shGamma \\ a\in\ulfour}} = \big(g_{s+s_a}\, B_a^s\,g_s^{-1}\,,\,g_s\,I^s\big)_{\substack{s\in\shGamma \\ a\in\ulfour}} \qquad \mbox{with} \quad g=(g_s)_{s\in\shGamma} \ \in \ \Timesbig_{s\in\widehat{\sGamma}}\,\sU(k_s) \ .
\end{align}
Then \smash{$\frM_{\vec r,\vec k}$} can be described as the space of solutions to \eqref{eq:ADHMOrb1} modulo these gauge transformations.

The moduli space \smash{$\frM_{\vec r,\vec k}$} admits an equivalent holomorphic parametrization as a GIT quotient. 
For this, we note that the equations \smash{$ \mu^{\FC s}_{ab}=0$} from \eqref{eq:ADHMOrb1} arise as the complex moment map equations $\mu_{ab}^\FC=0$ from \eqref{ADHMeq} for the equivariant decomposition $\mu_{ab}^\FC=\bigoplus_{s\in\shGamma}\,\mu_{ab}^{\FC s}$. Since the latter are equivalent to the equations $[B_a,B_b]=0$ by the identity \eqref{eq:muabCholomorphic}, the equivariant decomposition \eqref{eq:ADHMdecompB} shows that the former can be substituted with the holomorphic equations
\begin{align}\label{eq:orbADHM}
B_a^{s+s_b}\,B_b^{s}=B_b^{s+s_a}\,B_a^s \ ,
\end{align}
for $1\leq a<b\leq4$ and $s\in\widehat{\sGamma}$.

On the other hand, the real moment map equations $\mu^{\FR s}=\zeta_s\,\ident_{k_s}$ in \eqref{eq:ADHMOrb1} for generic Fayet--Iliopoulos parameters $\zeta_s>0$ can be traded for a stability condition: A set of maps \eqref{eq:ADHMdecompB} and \eqref{eq:ADHMdecompI} is \emph{stable} if there are no proper $\sGamma$-subrepresentations
\begin{align}
S = \bigoplus_{s\in\shGamma} \, S_s\otimes\rho_s^*
\end{align}
of $V$ such that $B_a^s(S_s)\subset S_{s+s_a}$ for $a\in\ulfour$ and ${\rm im}(I^s)\subset S_s$ for all $s\in\shGamma$. 

The proof is similar to the stability proof of~\cite{Nekrasov:2015wsu}: Let $P_{s}$ be the orthogonal projection of $V_s$ to the orthogonal complement $S_s^\perp$ of the invariant subspace $S_s\subset V_s$, for each $s\in\shGamma$. Then $P_{s}\,I^s=0$ and $P_{s+s_a}\,B_a^s\,P_{s}=P_{s+s_a}\,B_a^s$, so
\begin{align}
\begin{split}
0 \ \leq \ \sum_{s\in\shGamma}\,\zeta_s \dim_\FC S_s^\perp = \sum_{s\in\shGamma} \, \Tr\big(P_{s}\,\mu^{\FR s}\big) 
&= \sum_{s\in\shGamma} \ \sum_{a\in\ulfour}\,\Tr\big(P_{s+s_a}\,B_a^{s}\,B_{\bar a}^{s \dagger}-B_a^s\, P_s\, B_{\bar a}^{s \dagger}\big) \\[4pt]
&= -\sum_{s\in\shGamma} \ \sum_{a\in\ulfour}\, \big\|(\ident_{k_{s+s_a}}-P_{s+s_a})\,B^s_a\,P_{s}\big\|_{\textrm{\tiny F}}^2 \ \leq \ 0 \ ,
\end{split}
\end{align}
which implies $S_s=V_s$ for all $s\in{\shGamma}$. 
The stability condition is equivalent to the condition that the actions of the operators $B_a^s$ for $a\in\ulfour$ and $s\in\shGamma$ on \smash{$I^{s'}(W_{s'})$} generate the isotopical subspaces $V_{s''}$. Conversely, the orbit of any set of stable maps $(B,I)$ under 
\begin{align}\label{eq:sGvec}
\sG_{\vec k}:=\Timesbig_{s\in\widehat{\sGamma}}\,\sGL(k_s,\FC)
\end{align}
intersects the locus~\smash{$\bigoplus_{s\in\shGamma}\,\mu^{\FR s}=\bigoplus_{s\in\shGamma}\,\zeta_s\,\ident_{k_s}$}. 

Then \smash{$\frM_{\vec r,\vec k}$} can be described as the space of stable maps $(B,I)$ which satisfy the holomorphic equations \eqref{eq:orbADHM}, modulo the natural action of the complex gauge group \smash{$\sG_{\vec k}$} of the $\sGamma$-module $V$.
It follows that the real virtual dimension of the moduli space  \smash{$\frM_{\vec r,\vec k}$} is given by
\begin{align}
{\rm vdim}_\FR\, \frM_{\vec r,\vec k} = \sum_{s\in\shGamma} \, \Big(2 \,  \sum_{a=1}^4 \, k_s\,k_{s+s_a} + 2 \, r_s\,k_s - \sum_{1\leq a<b\leq4} \, k_s\,k_{s+s_{ab}} - 2 \, k_s^2\Big) \ .
\end{align}

The moduli space  \smash{$\frM_{\vec r,\vec k}$} has a remaining symmetry under framing rotations: The global colour symmetry $\sU(r)_{\rm col}$ is broken to the centralizer $\timesbig_{s\in\shGamma}\,\sU(r_s)_{\rm col}$, which acts on the orbifold ADHM data as
\begin{align}
h\cdot\big(B_a^s\,,\,I^s\big)_{\substack{s\in\shGamma \\ a\in\ulfour}} = \big( B_a^s\,,\,I^s\,h_s^{-1}\big)_{\substack{s\in\shGamma \\ a\in\ulfour}} \qquad \mbox{with} \quad h=(h_s)_{s\in\shGamma} \ \in \ \Timesbig_{s\in\widehat{\sGamma}}\,\sU(r_s)_{\rm col} \ .
\end{align}
Its maximal torus is
\begin{align}
\sT_{\vec a} = \Timesbig_{s\in\shGamma}\,\sT_{\vec a^s} \ , 
\end{align}
where $\sT_{\vec a^s}$ is the maximal torus of $\sU(r_s)_{\rm col}$.

In addition,  \smash{$\frM_{\vec r,\vec k}$} has an $\sSU(4)$ symmetry inherited from the holonomy group of the toric Calabi--Yau four-orbifold \smash{$\FC^4/\sGamma$}. In particular, the action of the (complexified) maximal torus $\sT_{\vec\epsilon}$ is the evident descent of \eqref{eq:torusquiver}.

\subsection{Crepant Resolutions}\label{Orbcon}

While our constructions of orbifold instanton partition functions will hold quite generally for any toric Calabi--Yau four-orbifold, a central role is played by those orbifolds which admit a \emph{crepant resolution} \cite{crepant}: Recall that, for a singular variety $Y$, a proper algebraic map
 $\pi: X\longrightarrow Y$ is a {crepant resolution} if $X$ is smooth and $\pi$ is a birational morphism which preserves the canonical bundles, i.e. $K_X\simeq \pi^*K_{Y}$.
We are interested in the case where $Y=\FC^4/\sGamma$, or  more generally $Y=\FC^d/\sGamma$. The existence of a crepant resolution then requires the group $\sGamma$ to be a finite subgroup of $\sSL(d,\FC)$~\cite{crepant1,crepant2}. 

Type~II string theory on an orbifold is defined by imposing equivariance under the action of the finite group $\sGamma$. When all twisted NS--NS sectors have age one, a crepant resolution is induced by vacuum expectation values of scalar fields in these sectors. This allows one to continuously smoothen the quotient singularities while preserving the Calabi--Yau properties. The spectrum of the orbifold string theory is then the spectrum of type~II string theory on a compactification over a smooth Calabi--Yau space obtained by blow-up of the orbifold singularities~\cite{AsyOrb,Alexandrov:2011va}, whose sizes are controlled by the moduli of the scalar fields. In particular, the BPS states can be considered in `orbifold' and `large radius' phases, which are related by collapsing the compact cycles of the resolution~\cite{Douglas:2000qw}. For example, partition functions of fractional D-branes on orbifolds can be related to partition functions of D-branes wrapping compact cycles of a crepant resolution through changes of variables and wall-crossing formulas.

It is well-known that the existence of a crepant resolution of $\FC^d/\sGamma$ depends dramatically on the dimension $d$ of the orbifold \cite{crepant3}. In dimensions two and three such resolutions always exist~\cite{ItoNakamura,Bridgeland:2001xf}, and in particular in dimension two they are also unique. In these cases a crepant resolution is always provided by the Hilbert--Chow morphism from the
$\sGamma$-Hilbert scheme  $X=\Hilb^\sGamma(\FC^d)$  of 
$\sGamma$-invariant zero-dimensional subschemes $Z\subset \FC^d$ of length $|\sGamma|$ whose global sections $H^0(\mathcal{O}_Z)$ form the regular representation $R$ of $\sGamma$, which are called \emph{$\sGamma$-clusters} in $\FC^d$; it can be parametrized by $\sGamma$-invariant ideals $J\subset\FC[z_1,\dots,z_d]$ of codimension $|\sGamma|$,  where $\sGamma$ acts on the coordinate functions $z_1,\dots,z_d$. 
On the other hand, for $d\geq 4$ little is known: they exist in rather special cases.  A simple example for which no crepant resolution exists is given by the action of the generator of $\sGamma=\RZ_2$ on $\FC^4$ by simultaneous reflection $z_a\longmapsto-z_a$ for $a\in\ulfour$. It is unknown if \emph{a priori} there exists a crepant resolution given an orbifold group action, and even when one does it is not necessarily given by the $\sGamma$-Hilbert scheme, which generally behaves badly and can be singular. 

In the type~II string theory picture, the absence of a crepant resolution means that compactification on $\FC^d/\sGamma$ does not yield the requisite moduli fields from the twisted NS--NS sectors. For the worldvolume gauge theories of D-branes localized at the  singularity of a Calabi--Yau orbifold $\FC^d/\sGamma$ for $d=2$~\cite{Douglas:1996sw,Johnson:1996py} or $d=3$~\cite{Douglas:1997de}, the Higgs moduli space is a smooth Calabi--Yau manifold which is a resolution of the original orbifold, for generic choices of Fayet--Iliopoulos parameters. In marked contrast, for $d=4$ the Higgs moduli space is not necessarily a smooth Calabi--Yau fourfold even if the orbifold admits a smooth resolution: \smash{$\frM_{\vec r,\vec k}^\sGamma$} is a Calabi--Yau fourfold only if $\FC^4/\sGamma$ admits a crepant blow-up to it~\cite{Mohri:1997ef}.

Now suppose that $\sGamma$ is abelian, and that
\begin{align}
    \pi:X\longrightarrow \FC^d/\sGamma \label{cre}
\end{align}
is a crepant resolution. Under certain technical conditions which are spelled out in~\cite[Theorem~5.2]{Constellation}, the crepant resolution $X$ can be obtained as a fine moduli space of stable $\sGamma$-equivariant coherent sheaves $\CE$ on $\FC^d$ such that $H^0(\CE)\simeq R$ as a $\sGamma$-module, which are called \emph{$\sGamma$-constellations}; it can be parametrized by $\sGamma$-equivariant modules over the coordinate ring $\FC[z_1,\dots,z_d]$ which are isomorphic to $R$ as $\sGamma$-modules. This includes the cases where $X$ is the $\sGamma$-Hilbert scheme: the structure sheaf $\CO_Z$ of a $\sGamma$-cluster $Z$ in $\FC^d$ is a $\sGamma$-constellation. Physically $X$ can be interpreted as an orbifold resolution by regular D-branes~\cite{Douglas:1997de}, with Fayet--Iliopoulos parameters controlling the size of the blow-ups.

Specialising to $d=4$, if $\FC^4/\sGamma$ admits a suitable crepant resolution $X$, then $X$ can be realised as a moduli space of $\sGamma$-constellations, which in turn should be isomorphic to a moduli space of noncommutative instantons in the Coulomb branch. This is the starting point for a generalization of the $d=3$ construction of~\cite{Quiver3d}, which is based on the $\sGamma$-Hilbert scheme and uses Beilinson's theorem along with some homological algebra to interpret orbifold noncommutative instantons geometrically in terms of equivariant sheaves. This would lend a geometrical parametrization of the instanton moduli space \smash{$\frM_{r,k}^\sGamma$} in terms of $\sGamma$-equivariant coherent sheaves on $\FC^4$. Such a construction is briefly sketched in~\cite{Bon} for a special case where $X$ is the $\sGamma$-Hilbert scheme by directly applying the techniques of~\cite{Quiver3d}. We do not pursue these technical considerations in this paper, and instead proceed to look at these blow-ups in more detail for the types of theories we are interested in;  the orbifold example of~\cite{Bon} is contained in our general formalism as a special case.

\subsubsection*{Toric Orbifold Resolutions}
 \label{Orb_res}
 
Let $\sGamma$ be a finite abelian subgroup of $\sSL(d,\FC)$ acting linearly on  $\FC^d$, which we require to be a subgroup of the maximal torus of the holonomy group $\sSU(d)$. The space $\FC^d$ is a toric Calabi--Yau manifold and the coarse moduli space of the toric Calabi--Yau orbifold $\FC^d/\sGamma$  has quotient singularities.  These can be resolved through a blow-up process \cite{crepant2} which defines a crepant resolution. To discuss the
blow-up we will use the language of toric geometry \cite{TV1,TV2}.

We start by recalling the relevant notions from toric geometry. Let $N\simeq\RZ^d$ be a $\RZ$-lattice and let $\{v_1,\dots, v_p\}$ be a set of linearly independent vectors in the vector space $N_{\FR}=N\otimes_\RZ \FR\simeq\FR^d$. The set
\begin{align}
\sigma:=\FR_{\geq0}\,v_1+\cdots +\FR_{\geq0}\,v_p \ \subset \ N_\FR 
\end{align}
is called a polyhedral cone if $\sigma\cap(-\sigma)=\{0\}$. We say that $\sigma$ is rational if $v_i\in N$, and that the cone is smooth if its generators $v_1,\dots, v_p$ form part of an integral basis of $N$.
In the following we use the notation $\langle v_1,\dots,v_p\rangle$ to indicate the rational cone generated by the vectors $v_1,\dots,v_p$. For a lattice point $ \vec{\alpha}=(\alpha_1,\dots \alpha_d)$ of $N\cap\sigma,$ the age of $\vec{\alpha}$ is defined by \smash{${\rm age}(\vec{\alpha})=\sum_{a=1}^d\,\alpha_a$}.

A fan $\Delta\subseteq N_{\FR}$ is a finite collection of polyhedral cones in $N_{\FR}$ such that every face of any cone $\sigma\in\Delta$ is also a cone of $\Delta$, and the intersection of any two cones of $\Delta$ is also a cone of $\Delta$; it is smooth if all of its cones are smooth. For our purposes it suffices to say that a toric variety $X_\Delta$ is a $d$-dimensional complex variety defined combinatorially by a fan $\Delta$, which has an open covering by invariant subsets $X_\sigma\subset X_\Delta$ with $\sigma\in\Delta$ for the natural action of the algebraic torus $N_{\FC^\times}=N\otimes_\RZ\FC^\times\simeq(\FC^\times)^d$; it is smooth if $\Delta$ is smooth.

Our main example is the quotient variety $\FC^d/\sGamma := {\rm Spec}\big(\FC[z_1,\dots,z_d]^\sGamma\big)$, the spectrum of the $\sGamma$-invariant subring of $\FC[z_1,\dots,z_d]$, which can be represented by a fan containing a single rational polyhedral cone $\sigma$. Indeed, if $\sGamma$ is a finite abelian group of order $n$ whose generators act on the coordinates of $\FC^d$ as
\begin{align}
(z_1,\dots,z_d)\longmapsto (\e^{\,2\pi \,\ii\,s_1/n}\, z_1,\dots,\e^{\,2\pi \,\ii\,s_d/n}\, z_d) \ ,
\end{align} 
then the local coordinates of $X_\sigma$ are $U^a = (z_1)^{(v_1)_a}\cdots (z_d)^{(v_d)_a}$ for $a=1,\dots,d$, where $v_1,\dots, v_d$ are the generators of the cone $\sigma$~\cite{4d-fold}. We require the coordinates $U^a$ to be invariant
under the action of $\sGamma$, thus the cone is given by the solution of the system of equations
\begin{align}
s_1\,(v_1)_a+\cdots+ s_d\,(v_d)_a= 0\quad \mbox{mod} \ n \ , \label{vconditions}
\end{align}
for $a=1,\dots,d$.

Let $\Sigma$ be a subdivision of $\sigma$ through lattice points $\vec\alpha$ with ${\rm age}(\vec{\alpha})=1$, which defines a hyperplane $\Pi$ in $N_\FR$. If $\Sigma$ is smooth, then the
toric variety $X_\Sigma$ determined by $\Sigma$ is a crepant resolution of~$\FC^d/\sGamma$; it is Calabi--Yau because its canonical bundle $K_{X_\Sigma}$ is trivial, and so is acted on by a $d{-}1$-dimensional Calabi--Yau subtorus in $N_{\FC^\times}$. The McKay correspondence associates the prime exceptional divisors of $X_\Sigma$ to the elements of $\sGamma$ with age $\frac1n\,(s_1+\cdots+s_d)=1$~\cite{Constellation}.

In what follows we will consider two particular classes of orbifolds of $\FC^4$ explicitly for which crepant resolutions are always guaranteed. They both involve non-generic subgroups $\sGamma\subset\sSL(4,\FC)$ which also preserve a smaller holonomy subgroup of $\sSU(4)$.

\subsubsection*{$\mbf{(2,0)}$ Orbifolds $\mbf{\FC^2/\RZ_n\times \FC^2}$}

Consider the quotient singularity $\FC^2/\RZ_n\times \FC^2$, where $\sGamma=\RZ_n$ is the cyclic group of order $n$ in $\sSL(2,\FC)$ whose generator acts on $\FC^4$ as
\begin{align}\label{gact}
(z_1,z_2,z_3,z_4)\longmapsto(\omega\, z_1,\omega^{-1}\, z_2,z_3,z_4) \ ,
\end{align}
with $\omega=\e^{\,2\pi\,\ii/n}$ a primitive $n$-th root of unity.
We call these \emph{(2,0) orbifolds}. They have $\sSU(2)$ holonomy, leading to an enhanced $\CN=(0,4)$ supersymmetry in the two-dimensional D1-brane theory, which can be obtained by dimensional reduction of the six-dimensional $\CN=2$ worldvolume theories of D5-branes at ALE singularities $\FC^2/\RZ_n$. 

The conditions \eqref{vconditions} leave an arbitrariness in choosing the generators $v_3$ and $v_4$. Instead, the generators $v_1$ and $v_2$ define a plane which, with an appropriate choice of an orthonormal basis $\{e_a\}_{a\in\ulfour}$ of $\FR^4$, is spanned by $e_1$ and $e_2$. Consider now the rational cone $\sigma_{12}=\langle v_1,v_2\rangle$. Since $\FC^2/\RZ_n$ admits a crepant resolution (by $\Hilb^{\RZ_n}(\FC^2)$), there is a subdivision $\Sigma_{12}$ of $\sigma_{12}$ which is smooth. By arbitrariness of $v_3$ and $v_4$ we can pick them such that they define a smooth rational cone; for instance, $v_3=e_3$ and $v_4=e_4$ with respect to the basis $\{e_a\}_{a\in\ulfour}$.

The fan 
\begin{align}
\Sigma=\big\{\langle\beta_{12},v_3,v_4\rangle \ \big| \  \beta_{12} \in \Sigma_{12}\big\}
\end{align} 
is a smooth subdivision of $\sigma=\langle v_1,v_2,v_3,v_4\rangle$.
Therefore the orbifold $\FC^4/\RZ_n\simeq \FC^2/\RZ_n\times \FC^2$, with the action of $\RZ_n\subset  \sSL(2,\FC)\subset  \sSL(4,\FC)$ given by \eqref{gact}, admits a crepant resolution $X_\Sigma$, which is a (trivial) fibration of $A_{n-1}$ ALE spaces over the affine plane, or equivalently the total space of the rank three bundle $\CO_{\PP^1}(-n+1)\oplus\CO_{\PP^1}\oplus\CO_{\PP^1}$.

\begin{example}
Take the orbifold $\FC^2/\RZ_2\times \FC^2$ considered in~\cite{Bon}.
By the conditions \eqref{vconditions} we can choose the generators of the rational cone $\sigma$ to be
\begin{align}
v_1=(2,-1,0,0) \quad ,\quad
v_2=(0,1,0,0)\quad,\quad v_3=(0,0,1,0) \quad,\quad v_4=(0,0,0,1) \ .
\end{align}
Its subdivision $\Sigma$ using the hyperplane $\Pi$ is 
\begin{align}
\begin{split}
\Sigma=\big\{\langle v_1,v_3,v_4,v_5\rangle\,,\,\langle v_2,v_3,v_4,v_5\rangle \,,\,
 \langle v_1,v_3,v_4\rangle \,,\, \langle v_2,v_3,v_4\rangle \,,\, \langle v_3,v_4,v_5\rangle \,,\, \langle v_3,v_4\rangle
\big\} \ ,
\end{split}
\end{align}
where
\begin{align}
v_5=(1,0,0,0) \ .
\end{align}
Each tuple of vectors $v_i$ forms part of a basis of the lattice $\RZ^4$, so $\Sigma$ is smooth and $X_\Sigma$ is a crepant resolution of $\FC^2/\RZ_2\times \FC^2$.
\end{example}

\subsubsection*{$\mbf{(3,0)}$ Orbifolds $\mbf{\FC^3/\sGamma\times \FC}$}

Another tractable quotient singularity we will consider explicitly is $\FC^4/\sGamma\simeq \FC^3/\sGamma\times \FC$ where $\sGamma$ is a finite abelian subgroup of $\sSL(3,\FC)\subset  \sSL(4,\FC)$. We call these \emph{(3,0) orbifolds}. They have $\sSU(3)$ holonomy and again lead to an enhancement of the supersymmetry in the two-dimensional D1-brane theory, which can be obtained from dimensional reduction of the four-dimensional $\CN=2$ worldvolume theories of D3-branes at toric Calabi--Yau threefold singularities $\FC^3/\sGamma$.

The construction for $(2,0)$ orbifolds above can be generalized directly here, with obvious modifications, starting from the key fact that $\FC^3/\sGamma$ always admits a crepant resolution for any finite abelian subgroup $\sGamma$ of $\sSL(3,\FC$) (for example by $\Hilb^\sGamma(\FC^3)$). Note that any $(2,0)$ orbifold is also a $(3,0)$ orbifold.

\subsection{Noncommutative Crepant Resolutions and their ADHM Representations} \label{Quivers}

As in the example studied in \cite{Bon}, all information about the moduli space of orbifold instantons can be encoded in the data of a quiver $\ttQ$ which generalizes the McKay quiver for surface singularities and is analogous to the four-loop quiver ${\sf L}_4$ of \eqref{eq:ADHMquiverC4}. It is determined by the representation theory data of the $\sGamma$-action, with ${\sf L}_4$ corresponding to the trivial group $\sGamma=\{\ident\}$, and the generalized McKay correspondence asserts that it captures the geometry of the orbifold $\FC^4/\sGamma$. The quiver $\ttQ$ encodes the isotopical decomposition of the usual ADHM data according to the $\sGamma$-action and it has the orbifold ADHM equations \eqref{eq:orbADHM} as relations. For background, see~\cite{Savage06} for a concise overview of the theory of quivers.

One starts from the irreducible representations $\widehat{\sGamma}$ of $\sGamma\subset  \sSL(4,\FC)$. To each representation $\rho_s$ in $\widehat{\sGamma}$, including the trivial representation $\rho_0$ of weight zero, we associate a node of a quiver $\ttQ$. A node $s$ is connected to a node $s'$ by a number of arrows $a_{ss'}$ from $s$ to $s'$ determined by the adjacency  matrix $A=(a_{ss'})$ in the decomposition of $\sGamma$-modules 
\begin{align}
    Q\otimes \rho_s=\bigoplus_{s'\in\widehat{\sGamma}} \, a_{ss'} \, \rho_{s'} \ ,
\end{align}
where the multiplicities are given by
\begin{align}
    a_{ss'}=\dim_{\FC} \sHom_{\FC[\sGamma]}(\rho_{s'},Q\otimes\rho_s) = \frac1{|\sGamma|} \, \sum_{g\in\sGamma} \, \chi_Q(g) \, \chi_s(g) \, \overline{\chi_{s'}(g)} \ .
\end{align}

In this graphical representation the sum over the irreducible representations $\widehat{\sGamma}$ of $\sGamma$ in \eqref{eq:ADHMdecompB} becomes a sum over the nodes $\ttQ_0$ of the quiver and the matrices $B_a^{s}$ are associated to the arrows $\ttQ_1$. The resulting quiver is known as the (generalized) bounded McKay quiver $(\ttQ,\ttR)$ and it is associated with a two-sided ideal of relations $\langle \ttR\rangle$ in
the corresponding path algebra $\FC\,\ttQ$ generated by the arrows $\ttQ_1$; recall that the product in $\FC\,\ttQ$ is the concatenation of paths in $\ttQ$ whenever this makes sense and $0$ otherwise. The remaining data in \eqref{eq:ADHMdecompI} define a framing of the quiver. 

Representations of the McKay quiver $\ttQ=(\ttQ_0,\ttQ_1)$ with relations \eqref{eq:orbADHM} are functorially equivalent to finitely-generated left modules over its path algebra $\ttA := \FC\,\ttQ/\langle\ttR\rangle$. This can be identified as the standard \emph{noncommutative} crepant resolution of the abelian quotient singularity $\FC^4/\sGamma$, in the sense that $\ttA$ is Morita equivalent to the skew group ring $\FC[z_1 , z_2 , z_3,z_4 ] \rtimes \FC[\sGamma]$ of the quotient stack $[\FC^4 /\sGamma]$, whereas the centre $\ttZ(\ttA)$ of $\ttA$ is isomorphic to the $\sGamma$-invariant subring $\FC[z_1 , z_2 , z_3,z_4 ]^\sGamma$. Thus while $\FC^4/\sGamma$ may not have a geometric crepant resolution, it always possesses a noncommutative crepant resolution by $\ttA$. This can be seen as a consequence of the fact that the derived category of coherent sheaves of $\ttA$-modules on $\FC^4/\sGamma$ is a \emph{categorical} crepant resolution of $\FC^4/\sGamma$~\cite{Abuaf16}, in the sense that it mimicks the functorial behaviour one expects from the derived category of a (geometric) crepant resolution; see e.g.~\cite{Wemyss:2012ee,vandenbergh} for introductions. 

The moduli space \smash{$\frM_{r,k}^\sGamma$} of instantons on $\FC^4/\sGamma$, viewed as the moduli space of $\sGamma$-equivariant instantons on $\FC^4$, is then identified as the moduli space of stable framed representations of the bounded quiver $(\ttQ,\ttR)$. The connected component \smash{$\frM_{\vec r,\vec k}$} is the moduli space of stable framed quiver representations with fixed dimension vector~\smash{$(\vec r,\vec k\,)$}, that is, the (stacky) quotient by the action of the group \eqref{eq:sGvec} of the subvariety of the framed quiver representation space
\begin{align}
{\mathfrak{Rep}}_{\vec r,\vec k} = \sHom_{\FC[\sGamma]}(V,V\otimes Q) \, \oplus \, \sHom_{\FC[\sGamma]}(W,V)
\end{align}
cut out by the  ADHM equations \eqref{eq:orbADHM}.
 In particular, regular instantons with $V\simeq R$ the regular representation of $\sGamma$ correspond to orbits in \smash{$\frM_{\vec r,\vec k}$} for $r=1$ and $\vec k = (1,\dots,1)$, which is a moduli space parametrizing isomorphism classes of $\sGamma$-constellations by~\cite{CMT2007,Constellation}. 
 
From this perspective, the (complex) obstruction bundle over the instanton moduli space has the explicit description~\cite{Cao:2014bca} as the pullback along \smash{$\frM_{\vec r,\vec k}\lhook\joinrel\longrightarrow{\mathfrak{Rep}}_{\vec r,\vec k} $} of the bundle
\begin{align}
\Ob_{\vec r,\vec k} = {\mathfrak{Rep}}_{\vec r,\vec k}  \, \times_{\sG_{\vec k}} \, {\sf Ext}_\ttA^2\Big(\textstyle\bigoplus\limits_{s\in\shGamma}\, D_s\otimes V_s\,,\,\bigoplus\limits_{s\in\shGamma}\, D_s\otimes V_s\Big) \ ,
\end{align}
whose fibre computes the relations $\ttR$ of the quiver $\ttQ$. Here $D_s$ is the one-dimensional simple $\ttA$-module defined by $(D_s)_{s'}=\delta_{s,s'}\,\FC$ for $s,s'\in\shGamma$.

Recall from Section~\ref{sec:ADHMC4} that the introduction of a flat $B$-field on $\FC^4$ leads to a representation of D-branes by complexes of modules over the noncommutative algebra ${\sf Mat}_{r{\times}r}(\CA)$. This description is valid when the $B$-field is large compared to the metric on $\FC^4$. For generic orbifold groups $\sGamma\subset\sSL(4,\FC)$, the theory on $\FC^4/\sGamma$ is regarded as a limit where the volume of a subspace containing the orbifold fixed point shrinks to zero but the $B$-field is non-vanishing, which regularises the string worldsheet theory and leads to a noncommutative crepant resolution of the quotient singularity; this is a highly non-geometric limit of the worldvolume gauge theory.
The vacua of the D1-brane theory of Section~\ref{sec:NCorbifold}, and therefore the stable D-brane configurations, then correspond to stable representations of the noncommutative algebra $\ttA$ associated to~$(\ttQ,\ttR)$. In particular, a regular D-brane corresponds to a module over $\ttA$ of dimension vector $\vec k=(1,\dots,1)$ which is the $\ttA$-module of global sections $H^0(\CE)$ of a $\sGamma$-constellation~$\CE$.

\begin{example}\label{ex:orbZ4}
Let $\sGamma=\RZ_4$ acting on $\FC^4$ with generator
\begin{align}
(z_1,z_2,z_3,z_4)\longmapsto(\ii\,z_1,-\ii\,z_2,z_3,z_4) \ . \label{Gaction}
\end{align}
This $\RZ_4$-action defines a $(2,0)$ quotient singularity $\mathbbm{C}^4/\mathbbm{Z}_4\simeq \mathbbm{C}^2/\mathbbm{Z}_4\times \mathbbm{C}^2$. From the general discussion of Section~\ref{Orbcon}, it admits a geometric crepant resolution by a fibration of $A_3$ ALE spaces over~$\FC^2$.

With respect to this $\RZ_4$-action, the fundamental representation $Q\simeq\FC^4$ decomposes into irreducible $\RZ_4$-modules as  $Q=\rho_1\oplus\rho_3\oplus\rho_0\oplus\rho_0$, which implies
\begin{align}
\begin{split}
    Q\otimes\rho_0=\rho_1\oplus\rho_3\oplus\rho_0\oplus\rho_0 \qquad , \qquad Q\otimes\rho_2=\rho_3\oplus\rho_1\oplus\rho_2\oplus\rho_2 \ , \\[4pt]
    Q\otimes\rho_1=\rho_2\oplus\rho_0\oplus\rho_1\oplus\rho_1 \qquad , \qquad Q\otimes\rho_3=\rho_0\oplus\rho_2\oplus\rho_3\oplus\rho_3 \ .
\end{split}
\end{align}
The finite abelian group $\shGamma$ is defined by the character table of $\RZ_4$:
\begin{equation}
{\small
\begin{tabular}{|l|c|c|c|c|l}
\cline{1-5}
                  & \multicolumn{1}{l|}{\textbf{$\rho_0$}} & \multicolumn{1}{l|}{\textbf{$\rho_1$}} & \multicolumn{1}{l|}{\textbf{$\rho_2$}} & \multicolumn{1}{l|}{\textbf{$\rho_3$}} &  \\ \cline{1-5}
\textbf{$\chi_0$} & $1$                                      & $1$                                      & $1$                                      & $1$                                      &  \\ \cline{1-5}
\textbf{$\chi_1$} & $1$                                      & $\ii$                                      & $ -1$                                     & $-\ii$                                     &  \\ \cline{1-5}
\textbf{$\chi_2$} & $1$                                      & $-1$                                     & $1$                                      & $-1$                                     &  \\ \cline{1-5}
\textbf{$\chi_3$} & $1$                                      & $-\ii$                                     & $-1$                                     & $\ii$                                      &  \\ \cline{1-5}
\end{tabular} }
\normalsize
\end{equation}
The generalized McKay quiver $\ttQ$ is
\begin{equation}\label{eq:McKayZ4}
{\small
\begin{tikzcd}
	& 0\arrow[,out=75,in=105,loop,swap,]    
	\arrow[,out=255,in=285,loop,swap,] \\
	1 \arrow[,out=75,in=105,loop,swap,]    
	\arrow[,out=255,in=285,loop,swap,]&& 3\arrow[,out=75,in=105,loop,swap,]    
	\arrow[,out=255,in=285,loop,swap,] \\
	& 2\arrow[,out=75,in=105,loop,swap,]    
	\arrow[,out=255,in=285,loop,swap,]
	\arrow[curve={height=6pt}, from=1-2, to=2-1]
	\arrow[curve={height=6pt}, from=2-1, to=3-2]
	\arrow[shift left=1, curve={height=6pt}, from=3-2, to=2-1]
	\arrow[curve={height=6pt}, from=3-2, to=2-3]
	\arrow[shift left=1, curve={height=6pt}, from=2-3, to=3-2]
	\arrow[curve={height=6pt}, from=2-3, to=1-2]
	\arrow[shift left=1, curve={height=6pt}, from=1-2, to=2-3]
	\arrow[curve={height=6pt}, from=2-1, to=1-2]
\end{tikzcd}}
\normalsize
\end{equation}

The generalized ADHM equations, which determine the relations $\ttR$ of the quiver \eqref{eq:McKayZ4},  assume the form
\begin{align}
    B^{s+s_a}_b\, B_a^s=B^{s+s_b}_a \, B_b^s \qquad \text{with} \quad \begin{cases} \ B_{1}^s: V_s \longrightarrow V_{s+1}\\
    \ B_{2}^s: V_s \longrightarrow V_{s+3}\\
    \ B_{3,4}^s: V_s \longrightarrow V_{s}
    \end{cases} \ .
\end{align}
Explicitly
\begin{equation}
{\small
\begin{tabular}{llll}
$B_2^1\, B_1^0=B_1^3\, B_2^0$ \ ,  & $B_3^1\, B_1^0=B_1^0\, B_3^0$  \ , & $B_4^1\, B_1^0=B_1^0\, B_4^0$  \ ,  & $B_3^0\, B_2^0=B_2^0\, B_3^0$  \ , \\[4pt]
$B_4^3\, B_2^0=B_2^0\, B_4^0$  \ ,  & $B_4^0\, B_3^0=B_3^0\, B_4^0$  \ ,  & $B_2^2\, B_1^1=B_1^0\, B_2^1$  \ ,  & $B_3^2\, B_1^1=B_1^1\, B_3^1$  \ , \\[4pt]
$B_4^2\, B_1^1=B_1^1\, B_4^1$  \ ,  & $B_3^0\, B_2^1=B_2^1\, B_3^1$  \ ,  & $B_4^0\, B_2^1=B_2^1\, B_4^1$ \ ,   & $B_4^1\, B_3^1=B_3^1\, B_4^1$  \ , \\[4pt]
$B_2^3\, B_1^2=B_1^1\, B_2^2$  \ ,  & $B_3^3\, B_1^2=B_1^2\, B_3^2$  \ ,  & $B_4^3\, B_1^2=B_1^2\, B_4^2$  \ ,  & $B_3^1\, B_2^2=B_2^2\, B_3^2$  \ , \\[4pt]
$B_4^1\, B_2^2=B_2^2\, B_4^2$  \ ,  & $B_4^2\, B_3^2=B_3^2\, B_4^2$  \ ,  & $B_2^0\, B_1^3=B_1^2\, B_2^3$  \ ,  & $B_3^0\, B_1^3=B_1^3\, B_3^3$  \ , \\[4pt]
$B_4^0\, B_1^3=B_1^3\, B_4^3$  \ ,  & $B_3^0\, B_2^1=B_2^1\, B_3^1$  \ ,  & $B_4^2\, B_2^3=B_2^3\, B_4^3$  \ ,  & $B_4^3\, B_3^3=B_3^3\, B_4^3 \ \  . $
\end{tabular} }
\normalsize
\end{equation}

The center $\ttZ(\ttA)$ of the path algebra $\ttA$ of $(\ttQ,\ttR)$  is generated as a ring by elements
\begin{align}
\begin{split}
    \ttX_{11}=B^3_1\, B_1^2\,B_1^1\,B_1^0 \quad , \quad & \ttX_{22}=B^1_2\,B^2_2\,B^3_2\,B^0_2 \quad,\quad
     \ttX_{12}=B^3_1\,B^0_2 \ , \\[4pt]
    \ttY_{3}&=B^0_3 \quad,\quad \ttY_{4}=B^0_4 \ .
\end{split}
\end{align}
Then given the $\RZ_4$-action \eqref{Gaction}, we can identify these generators with the $\RZ_4$-invariant elements in the ring  $\FC[z_1,z_2,z_3,z_4]$ through
\begin{align}
\begin{split}
\ttX_{11}\leadsto z_1^4\quad,\quad &\ttX_{22}\leadsto z_2^4\quad, \quad \ttX_{12}\leadsto z_1\,z_2 \ ,\\[4pt]
&\ttY_{3}\leadsto z_3\quad, \quad \ttY_{4}\leadsto z_4 \ .
\end{split}
\end{align}
It follows that ${\rm Spec}(\ttZ(\ttA))\simeq\FC^2/\RZ_4\times \FC^2$ and  the path algebra $\ttA$ is a noncommutative crepant resolution of the quotient singularity $\FC^4/\RZ_4$.

Finally, the generalized ADHM data $(B_a,I)_{a\in\ulfour}$ define a framed representation of the quiver \eqref{eq:McKayZ4}, which we depict by the oriented graph
\begin{equation}
{\small
\begin{tikzcd}
	&& {\boxed{W_0}} \\
	&& {\boxed{V_0}}\arrow[,out=30,in=60,loop,swap,]    
	\arrow[,out=150,in=120,loop,swap,] \\
	{\boxed{W_1}} & {\boxed{V_1}}\arrow[,out=75,in=105,loop,swap,]    
	\arrow[,out=255,in=285,loop,swap,] && {\boxed{V_3}}\arrow[,out=75,in=105,loop,swap,]    
	\arrow[,out=255,in=285,loop,swap,] & {\boxed{W_3}} \\
	&& {\boxed{V_2}}\arrow[,out=-30,in=-60,loop,swap,]    
	\arrow[,out=-150,in=-120,loop,swap,] \\
	&& {\boxed{W_2}}
	\arrow[curve={height=6pt}, from=2-3, to=3-2]
	\arrow[curve={height=6pt}, from=3-2, to=2-3]
	\arrow[curve={height=6pt}, from=4-3, to=3-4]
	\arrow[curve={height=-6pt}, from=3-2, to=4-3]
	\arrow[curve={height=-6pt}, from=4-3, to=3-2]
	\arrow[curve={height=6pt}, from=3-4, to=4-3]
	\arrow[curve={height=-6pt}, from=3-4, to=2-3]
	\arrow[curve={height=-6pt}, from=2-3, to=3-4]
	\arrow[from=3-1, to=3-2]
	\arrow[from=1-3, to=2-3]
	\arrow[from=5-3, to=4-3]
	\arrow[from=3-5, to=3-4]
\end{tikzcd}
}
\normalsize
\end{equation}
\end{example}

\begin{example}\label{ex_orbZ2Z2}
 Let $\sGamma=\RZ_2\times\RZ_2$ be the subgroup of order four in $\sSL(4,\FC)$ acting non-trivially on all four coordinates of $\mathbbm{C}^4$ with weights $s_1=(1,1,0,0)$, $s_2=(0,0,1,1)$ and $s_3= s_1+ s_2=(1,1,1,1)$. Although this does not define a $(3,0)$ orbifold, the existence of a geometric crepant resolution of $\FC^4/\RZ_2\times\RZ_2$ is guaranteed by \cite[Proposition~3.1]{TV2}, which may be taken to be $\Hilb^{\RZ_2\times\RZ_2}(\FC^4)$ by  \cite[Proposition~3.2]{TV2}. 
 
 The generators of $\RZ_2\times\RZ_2$ with respect to this action are
\begin{align}
    g_1={\small \begin{pmatrix}
    -1&&&\\
    &-1&&\\
    &&1&\\
    &&&1
    \end{pmatrix}} \normalsize \qquad \mbox{and} \qquad  g_2={\small \begin{pmatrix}
    1&&&\\
    &1&&\\
    &&-1&\\
    &&&-1
    \end{pmatrix} } \normalsize \ .
\end{align}
The group action has four irreducible representations $\rho_s$ where $\rho_0$ is the trivial representation, $\rho_1$ and $\rho_2$ have   weights $s_1$ and $s_2$, respectively, while $\rho_3=\rho_1\otimes\rho_2$ has   weight $s_3$.
The character table of $\RZ_2\times\RZ_2$ is
\begin{equation}
{\small
\begin{tabular}{|l|c|c|c|c|l}
\cline{1-5}
                  & \multicolumn{1}{l|}{\textbf{$\rho_0$}} & \multicolumn{1}{l|}{\textbf{$\rho_1$}} & \multicolumn{1}{l|}{\textbf{$\rho_2$}} & \multicolumn{1}{l|}{\textbf{$\rho_3$}} &  \\ \cline{1-5}
\textbf{$\chi_0$} & $1$                                      & $1$                                      & $1$                                      & $1$                                      &  \\ \cline{1-5}
\textbf{$\chi_1$} & $1$                                      & $1$                                      & $ -1$                                     & $-1$                                     &  \\ \cline{1-5}
\textbf{$\chi_2$} & $1$                                      & $-1$                                     & $1$                                      & $-1$                                     &  \\ \cline{1-5}
\textbf{$\chi_3$} & $1$                                      & $-1$                                     & $-1$                                     & $1$                                      &  \\ \cline{1-5}
\end{tabular} }
\normalsize
\end{equation}

The ADHM equations are
\begin{equation}\label{eq:Z2Z2ADHM}
{\small
\begin{tabular}{llll}
$B_2^1\, B_1^0=B_1^1\, B_2^0$ \ , & $B_3^1\, B_1^0=B_1^2\, B_3^0$ \ , & $B_4^1\, B_1^0=B_1^2\, B_4^0$  \ ,  & $B_3^1\, B_2^0=B_2^2\, B_3^0$  \ ,  \\[4pt]
$B_4^1\, B_2^0=B_2^2\, B_4^0$  \ ,  & $B_4^2\, B_3^0=B_3^2\, B_4^0$  \ ,  & $B_2^0\, B_1^1=B_1^0\, B_2^1$  \ ,  & $B_3^0\, B_1^1=B_1^3\, B_3^1$  \ ,  \\[4pt]
$B_4^0\, B_1^1=B_1^3\, B_4^1$  \ ,  & $B_3^0\, B_2^1=B_2^3\, B_3^1$  \ ,  & $B_4^0\, B_2^1=B_2^3\, B_4^1$  \ ,  & $B_4^3\, B_3^1=B_3^3\, B_4^1$ \ ,   \\[4pt]
$B_2^3\, B_1^1=B_1^3\, B_2^2$  \ ,  & $B_3^3\, B_1^2=B_1^0\, B_3^2$  \ ,  & $B_4^3\, B_1^2=B_1^0\, B_4^2$  \ ,  & $B_3^3\, B_2^2=B_2^0\, B_3^2$  \ ,  \\[4pt]
$B_4^3\, B_2^2=B_2^0\, B_4^2$  \ ,  & $B_4^0\, B_3^2=B_3^0\, B_4^2$  \ ,  & $B_2^2\, B_1^3=B_1^2\, B_2^3$  \ ,  & $B_3^2\, B_1^3=B_1^1\, B_3^3$  \ ,   \\[4pt]
$B_4^2\, B_1^3=B_1^1\, B_4^3$  \ ,  & $B_3^2\, B_2^3=B_2^1\, B_3^3$  \ ,  & $B_4^2\, B_2^3=B_2^1\, B_4^3$  \ ,  & $B_4^1\, B_3^3=B_3^1\, B_4^3$ \ .
\end{tabular} }
\normalsize
\end{equation}
These are relations $\ttR$ for the corresponding McKay quiver $\ttQ$ given by
\begin{equation}\label{eq:Z2Z2quiver}
{\small
\begin{tikzcd}
	& 0 \\
	1 && 3 \\
	& 2
	\arrow[curve={height=12pt}, from=1-2, to=2-1]
	\arrow[curve={height=12pt}, from=2-1, to=3-2]
	\arrow[shift left=1, curve={height=12pt}, from=3-2, to=2-1]
	\arrow[curve={height=12pt}, from=3-2, to=2-3]
	\arrow[shift left=1, curve={height=12pt}, from=2-3, to=3-2]
	\arrow[curve={height=12pt}, from=2-3, to=1-2]
	\arrow[shift left=1, curve={height=12pt}, from=1-2, to=2-3]
	\arrow[curve={height=12pt}, from=2-1, to=1-2]
	\arrow[curve={height=6pt}, from=1-2, to=2-1]
	\arrow[curve={height=6pt}, from=2-1, to=1-2]
	\arrow[curve={height=6pt}, from=2-1, to=3-2]
	\arrow[shift left=1, curve={height=6pt}, from=3-2, to=2-1]
	\arrow[curve={height=6pt}, from=3-2, to=2-3]
	\arrow[shift left=1, curve={height=6pt}, from=2-3, to=3-2]
	\arrow[shift left=1, curve={height=6pt}, from=1-2, to=2-3]
	\arrow[curve={height=6pt}, from=2-3, to=1-2]
\end{tikzcd} }
\end{equation}

The centre $\ttZ(\ttA)$ of the path algebra $\ttA$ of the quiver \eqref{eq:Z2Z2quiver} with relations \eqref{eq:Z2Z2ADHM} is generated as a ring by the elements
\begin{align}
\begin{split}
    \ttX_{abcd}&=B^2_a\, B_b^3\, B_c^1\, B_d^0 \qquad \mbox{with} \quad b\leq d<3\leq a\leq c\leq 4 \ , \\[4pt]
    \ttX_{ab}&=B^2_a\, B^0_b \qquad \mbox{with} \quad b\leq 2<a\leq 4 \ ,
\end{split}
\end{align}
which are again identified  with the $\RZ_2\times\RZ_2$-invariant elements in $\FC[z_1,z_2,z_3,z_4]$. It follows that ${\rm Spec}(\ttZ(\ttA))\simeq\FC^4/\RZ_2\times\RZ_2$.

Similarly to Example \ref{ex:orbZ4}, the ADHM  data define a framed quiver representation
\begin{equation}
{\small
\begin{tikzcd}
	&& {\boxed{W_0}} \\
	&& {\boxed{V_0}} \\
	{\boxed{W_1}} & {\boxed{V_1}} && {\boxed{V_2}} & {\boxed{W_2}} \\
	&& {\boxed{V_3}} \\
	&& {\boxed{W_3}}
	\arrow[shift right=1, curve={height=6pt}, from=2-3, to=3-2]
	\arrow[shift right=1, curve={height=6pt}, from=3-2, to=4-3]
	\arrow[shift right=1, curve={height=6pt}, from=4-3, to=3-4]
	\arrow[shift right=1, curve={height=6pt}, from=3-4, to=2-3]
	\arrow[shift right=2, curve={height=6pt}, from=2-3, to=3-2]
	\arrow[shift right=2, curve={height=6pt}, from=3-2, to=4-3]
	\arrow[shift right=2, curve={height=6pt}, from=4-3, to=3-4]
	\arrow[shift right=2, curve={height=6pt}, from=3-4, to=2-3]
	\arrow[curve={height=6pt}, from=3-2, to=2-3]
	\arrow[curve={height=6pt}, from=2-3, to=3-4]
	\arrow[curve={height=6pt}, from=3-4, to=4-3]
	\arrow[curve={height=6pt}, from=4-3, to=3-2]
	\arrow[shift right=1, curve={height=6pt}, from=3-2, to=2-3]
	\arrow[shift right=1, curve={height=6pt}, from=2-3, to=3-4]
	\arrow[shift right=1, curve={height=6pt}, from=3-4, to=4-3]
	\arrow[shift right=1, curve={height=6pt}, from=4-3, to=3-2]
	\arrow[from=1-3, to=2-3]
	\arrow[from=3-5, to=3-4]
	\arrow[from=5-3, to=4-3]
	\arrow[from=3-1, to=3-2]
\end{tikzcd} }
\normalsize
\end{equation}
of the bounded McKay quiver (\ref{eq:Z2Z2quiver},\ref{eq:Z2Z2ADHM}).
\end{example}

\subsection{Cohomological Field Theory on Noncommutative Crepant Resolutions}\label{Orb_coho}

We shall now modify the construction of Section~\ref{BRST} to calculate the instanton partition function of the cohomological field theory on orbifolds $\FC^4/\sGamma$, which we regard as a gauge theory on the quotient stack $[\FC^4/\sGamma]$ following~\cite{Quiver3d}. We begin with a framed representation of the generalized McKay quiver associated with the abelian quotient singularity, which is uniquely determined by the decomposition of the fundamental representation $Q=\rho_{s_1}\oplus\cdots\oplus\rho_{s_4}$ of the orbifold group into irreducible $\sGamma$-modules. The topological field theory on $[\FC^4/\sGamma]$ is invariant under a set of $\sGamma$-equivariant BRST transformations and it localises onto the relations ${\ttR}$ of the generalized McKay quiver $\ttQ$. The resulting orbifold instanton partition functions encode noncommutative Donaldson--Thomas invariants associated to $(\ttQ,\ttR)$, which count semistable representations of the noncommutative path algebra $\ttA=\FC\,\ttQ/\langle\ttR\rangle$. 

If the quotient singularity $\FC^4/\sGamma$ admits a geometric crepant resolution $\pi:X\longrightarrow\FC^4/\sGamma$, it is natural to conjecture that the noncommutative crepant resolution $\ttA$ of $\FC^4/\sGamma$ is realised in the stringy K\"ahler moduli space of $X$. In particular, the orbifold instanton partition function should be related to the large radius partition functions which compute the Donaldson--Thomas and Pandharipande--Thomas invariants of $X$ via a version of the McKay correspondence, through suitable changes of variables and wall-crossing formulas in the derived category of $X$ using the methods of~\cite{Quiver3d}. As they presently stand, our techniques do not immediately extend to these resolutions and so we defer further discussion to future work. A conjectural mathematical treatment of the rank one partition functions in this setting is discussed in~\cite{CKMpreprint}.

We begin from the BRST transformations with $\sGamma$-module structure, where now the matrix fields decompose according to the irreducible representations of $\sGamma$. The action of the symmetry group \smash{$\big(\timesbig_{s\in\shGamma}\,\sU(k_s)\times\sU(r_s)_{\rm col}\big) \times\sSU(4)$} is described in Section~\ref{sec:NCorbifold}, and the BRST transformations of the ADHM variables are given by
 \begin{equation}\begin{split}
     \CQ_\sGamma B_a^s=\psi_a^s \qquad &, \qquad \CQ_\sGamma\psi_a^s=\phi^{s+s_a}\,B_a^s - B_a^s\,\phi^s-\epsilon_a\, B_a^s \ , \\[4pt]
     \CQ_\sGamma I^s=\varrho^s \qquad &, \qquad \CQ_\sGamma\varrho^s=\phi^s\, I^s-I^s \, \mbf a^s \ ,
     \end{split}
\end{equation}
for $a\in\ulfour$ and $s\in\shGamma$. Here $\phi^s$ parametrizes $\sU(k_s)$ gauge transformations, while the vector $\mbf a^s$ collects the $r_s$ Higgs field eigenvalues $a_l$ associated with the irreducible
representation $\rho_s$. The latter defines a map $l\longmapsto s(l)\in\shGamma$ for $l\in\{1,\dots,r\}$.

The $\sGamma$-module structure of the antighost fields in $\sEnd_{\FC[\sGamma]}(V)$ are dictated by the generalized ADHM equations \eqref{eq:ADHMOrb1}, where again we choose $\mu_{\alpha\beta}^{\FC s}=0$ for $(\alpha,\beta)\in\ulthree^{\perp}:=\big\{(1,2)\,,\,(1,3)\,,\,(2,3)\big\}$ as the independent complex moment map equations. Their decomposition into equivariant maps is given by
\begin{align}
\chi_{\alpha\beta}^{\FC s}: V_s\longrightarrow V_{s+s_{\alpha\beta}} \qquad \mbox{and} \qquad \chi^{\FR s}: V_s\longrightarrow V_s \ .
\end{align}
Together with the auxiliary fields their BRST transformations are
 \begin{equation}\begin{split}
     \CQ_\sGamma\chi_{\alpha\beta}^{\FC s}=H_{\alpha\beta}^{\FC s} \qquad &, \qquad \CQ_\sGamma H_{\alpha\beta}^{\FC s}=\phi^{s+s_{\alpha\beta}}\,\chi_{\alpha\beta}^{\FC s} - \chi_{\alpha\beta}^{\FC s}\,\phi^s -\epsilon_{\alpha\beta}\,\chi^{\FC s}_{\alpha\beta} \ , \\[4pt]
     \CQ_\sGamma\chi^{\FR s}=H^{\FR s} \qquad &, \qquad \CQ_\sGamma H^{\FR s}=[\phi^{s},\chi^{\FR s}] \ ,
\end{split}
 \end{equation}
 while the gauge multiplet closes the BRST algebra
\begin{align}
\CQ_\sGamma\phi^{s}=0 \ , \quad \CQ_\sGamma\Bar{\phi}^{s}=\eta^s \qquad \mbox{and} \qquad \CQ_\sGamma\eta^s=[\phi^{s},\Bar{\phi}^{s}] \ .
\end{align}

For the fundamental matter fields, we treat the global flavour symmetry identically to the global colour symmetry, which is thereby broken to the subgroup $\timesbig_{s\in\shGamma}\,\sU(r_s)_{\rm fla}$ by the $\sGamma$-action; its maximal torus is 
\begin{align}
\sT_{\vec m}=\Timesbig_{s\in\shGamma}\,\sT_{\vec m^s} \ ,
\end{align}
where $\sT_{\vec m^s}$ is the maximal torus of $\sU(r_s)_{\rm fla}$. From the string theory perspective of Section~\ref{sec:ADHMC4}, this aligns the Chan--Paton factors of D9-branes and $\overline{\rm D9}$-branes on the orbifold such that they still annihilate. Since both the framing of the gauge bundle and the fundamental matter fields transform in the fundamental representation $\mbf r$ of $\sU(r)$, it decomposes into irreducible representations of the orbifold group $\sGamma$ as
\begin{align}
\mbf r = \bigoplus_{s\in\shGamma} \, \mbf r_s \otimes\rho_s^* \ ,
\end{align}
with multiplicity spaces $\mbf r_s=\sHom_{\FC[\sGamma]}(\rho_s,\mbf r)$ and
\begin{align}
r = \sum_{s\in\shGamma} \,  r_s = \sum_{s\in\shGamma} \, \dim_\FC \mbf r_s \ .
\end{align}

We use Schur's lemma to decompose the Fermi multiplet in $\sHom_{\FC[\sGamma]}(\mbf r,V)$ into isotopical components
\begin{align}
\bar I^s:\mbf r_s\longrightarrow V_s \ .
\end{align}
Then the BRST transformations are given by
 \begin{equation}\begin{split}
   \mathcal{Q}_\sGamma\bar{I}^s=\bar{\varrho}^s\qquad \mbox{and} \qquad \mathcal{Q}_\sGamma\bar{\varrho}^s=\phi^s\,\bar{I}^s-\bar{I}^s\,\mbf{m}^s \ , \end{split}
 \end{equation}
where the vector $\mbf m^s$ collects the $ r_s$ masses $m_l$ associated with the irreducible representation $\rho_s$.  

Following the same steps as in Section~\ref{BRST} we can now write down the instanton partition function. We introduce a set of fugacities $\vec\qu=(\qu_s)_{s\in\shGamma}$ for the corresponding fractional instanton sectors \smash{$\vec k=(k_s)_{s\in\shGamma}\in\RZ_{\geq0}^{|\sGamma|}$} and define
\begin{align}
Z_{[\FC^4/\sGamma]}^{\vec r}(\vec\qu;\vec a,\vec\epsilon,\vec m) = \sum_{\vec k\in\RZ_{\geq0}^{|\sGamma|}} \, \vec\qu^{\,\vec k} \ Z_{[\FC^4/\sGamma]}^{\vec r,\vec k}(\vec a,\vec\epsilon,\vec m) \ ,
\end{align}
where 
\begin{align}
\vec\qu^{\,\vec k}:=\prod_{s\in\shGamma}\,\qu_s^{k_s}
\end{align}
and the quiver matrix model is defined by the integral
\begin{align}
\begin{split}\label{eq:Orb_Zin}
Z_{[\FC^4/\sGamma]}^{\vec r,\vec k}(\vec a,\vec\epsilon,\vec m) = \oint_{\varGamma_{\vec r,\vec k}} \ \prod_{s\in\shGamma} \, \frac1{k_s!} \ & \prod_{i=1}^{k_s} \, \frac{\dd\phi_i^s}{2\pi\,\ii} \, \frac{\CP_{ r_s}(\phi_i^s|\vec m^s)}{\CP_{r_s}(\phi_i^s|\vec a^s)} \ \prod_{\stackrel{\scriptstyle i,j=1}{\scriptstyle i\neq j}}^{k_s} \, \big(\phi^s_i-\phi^s_j\big) \\
& \hspace{3cm} \times \ \prod_{i,j=1}^{k_s} \, \frac{\displaystyle \prod_{(\alpha,\beta)\in\ulthree^\perp}\, \big(\phi_i^{s+s_{\alpha\beta}}-\phi_j^s-\epsilon_{\alpha\beta}\big)}{\displaystyle \prod_{a\in\ulfour} \, \big(\phi_i^{s+s_a}-\phi_j^s-\epsilon_a\big)} \ ,
\end{split}
\end{align}
which again we make sense of via a suitable contour integral prescription. 

\subsubsection*{Dimensional Reduction}

From the matrix model representation \eqref{eq:Orb_Zin} we immediately obtain the orbifold version of Proposition~\ref{prop:ZDTgeb}, which reads

\begin{proposition}\label{prop3}
The equivariant instanton partition function on a $(3,0)$ orbifold $\FC^3/\sGamma\times\FC$ is related to the partition function \smash{$Z_{[\FC^3/\sGamma]}^{\vec r}(\vec\qu;\vec a,\epsilon_1,\epsilon_2,\epsilon_3)$} for noncommutative Donaldson--Thomas invariants of type~$\vec r$ for the toric K\"ahler orbifold $\FC^3/\sGamma$ through the mass specialisation
\begin{align}
Z_{[\FC^3/\sGamma]\times\FC}^{\vec r}(\vec\qu;\vec a,\vec\epsilon,m_l^s=a_l^s+\epsilon_4) =  Z_{[\FC^3/\sGamma]}^{\vec r}(\vec\qu\,';\vec a,\epsilon_1,\epsilon_2,\epsilon_3) \ ,
\end{align}
where $\vec\qu\,'=\big((-1)^{r_s+1}\,\qu_s\big)_{s\in\shGamma}$.
\end{proposition}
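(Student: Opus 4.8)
The plan is to follow verbatim the strategy used in the proof of Proposition~\ref{prop:ZDTgeb}, now applied term-by-term to the orbifold matrix model \eqref{eq:Orb_Zin}. First I would substitute the Calabi--Yau condition $\epsilon_4=-\epsilon_{123}$ and the mass specialisation $m_l^s=a_l^s+\epsilon_4$ into \eqref{eq:Orb_Zin}. The effect on the numerator polynomials is the shift $\CP_{r_s}(\phi_i^s|\vec m^s)=\CP_{r_s}(\phi_i^s-\epsilon_4|\vec a^s)=\CP_{r_s}(\phi_i^s+\epsilon_{123}|\vec a^s)$, exactly mirroring the flat-space computation. The key point is that this specialisation is purely "node-local": the substitution $m_l^s=a_l^s+\epsilon_4$ only affects the quotient $\CP_{r_s}(\phi_i^s|\vec m^s)/\CP_{r_s}(\phi_i^s|\vec a^s)$ attached to node $s$, while the off-diagonal products $\prod_{a\in\ulfour}(\phi_i^{s+s_a}-\phi_j^s-\epsilon_a)^{-1}$ and the $(\alpha,\beta)\in\ulthree^\perp$ factors are untouched except through the replacement $\epsilon_4\mapsto-\epsilon_{123}$.

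Next I would show that the resulting integrand coincides, up to an overall sign, with the integrand of the matrix model computing $Z_{[\FC^3/\sGamma]}^{\vec r}(\vec\qu;\vec a,\epsilon_1,\epsilon_2,\epsilon_3)$, which is the orbifold version of the Cirafici--Sinkovics--Szabo matrix model~\cite{coho,Quiver3d} for $\FC^3/\sGamma$. Since $\sGamma\subset\sSL(3,\FC)$ for a $(3,0)$ orbifold, one has $s_4=0$, so the node label $s$ for the $\FC$-directions is inert: the arrow $B_4^s:V_s\to V_s$ is diagonal in the $\sGamma$-grading. Consequently the factor $(\phi_i^{s+s_4}-\phi_j^s-\epsilon_4)^{-1}=(\phi_i^s-\phi_j^s+\epsilon_{123})^{-1}$ in the denominator is precisely the factor that, combined with the numerator shift by $\epsilon_{123}$ described above, reassembles the $\FC^3$-type rational kernel $\CR_3$ out of the $\FC^4$-type kernel $\CR$ at each pair of nodes. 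Carrying this out node by node, the $(3,0)$ orbifold matrix integral collapses to the $\FC^3/\sGamma$ orbifold matrix integral, with the three remaining $\varOmega$-parameters $(\epsilon_1,\epsilon_2,\epsilon_3)$ now generic (no constraint among them). This is the direct analogue of the passage from \eqref{partition} to the $\FC^3$ integral of~\cite[eq.~(5.23)]{coho} used in Proposition~\ref{prop:ZDTgeb}.

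Finally I would track the signs. Each node $s$ contributes an overall factor $(-1)^{(r_s+1)\,k_s}$ from the reshuffling, exactly as the $(-1)^{(r+1)k}$ factor appeared in the flat-space case; absorbing this into the fugacity $\qu_s$ gives the shifted variable $\qu_s'=(-1)^{r_s+1}\,\qu_s$, and multiplying over all $s\in\shGamma$ and summing over $\vec k\in\RZ_{\geq0}^{|\sGamma|}$ yields the claimed identity with $\vec\qu\,'=\big((-1)^{r_s+1}\,\qu_s\big)_{s\in\shGamma}$. I would phrase the sign bookkeeping at the level of the combinatorial expansion \eqref{Zk} rather than the contour integral, noting that under the specialisation $m_l^s=a_l^s+\epsilon_4$ the extra fixed part in the index $\chi_{\vec\sigma}$ (from terms $f_l^{-1}e_l t_4$, as remarked after Conjecture~\ref{Prop1}) freezes all solid partitions to plane partitions with $(p_l)_4=1$, so the orbifold-coloured solid-partition sum reduces to an orbifold-coloured plane-partition sum, which is manifestly the combinatorial form of $Z_{[\FC^3/\sGamma]}^{\vec r}$. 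The main obstacle I expect is the careful matching of the sign conventions: one must confirm that the sign factor $(-1)^{\ttO_{\vec\sigma}}$ of \eqref{eq:signfactor} degenerates to the $\FC^3$ sign $(-1)^{\ttO_{\vec\pi}}=(-1)^{|\vec\pi|}$ on the locus $(p_l)_4=1$ consistently across all nodes, and that no cross-node sign discrepancies arise from the off-diagonal arrows $B_a^s$ with $s_a\neq 0$; this is the only place where the orbifold structure could, in principle, introduce subtleties absent in the $\FC^4\to\FC^3$ reduction, and it should be checked explicitly on low-charge examples such as those of Examples~\ref{ex:orbZ4} and~\ref{ex_orbZ2Z2}.
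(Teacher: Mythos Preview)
Your proposal is correct and follows essentially the same route as the paper's proof: both arguments substitute the mass specialisation and the Calabi--Yau condition directly into the orbifold matrix integral \eqref{eq:Orb_Zin}, use $s_4=0$ (together with $s_{123}=0$ since $\sGamma\subset\sSL(3,\FC)$) to collapse the $\FC^4$-type kernel to the $\FC^3$-type one at each node, and identify the resulting integrand with the $\FC^3/\sGamma$ quiver matrix model of~\cite{coho,Quiver3d} up to the node-local sign $\prod_{s\in\shGamma}(-1)^{(r_s+1)k_s}$. Your additional discussion of the combinatorial reduction of coloured solid partitions to coloured plane partitions and the degeneration of the sign factor is extra caution not present in the paper's proof, which works entirely at the level of the contour integral.
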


\begin{proof}
Since any $(3,0)$ orbifold has weight $s_4=0$, and $s_{123}=0$ since $\sGamma\subset \sSL(3,\FC)$, using the Calabi--Yau condition $\epsilon_4=-\epsilon_{123}$ on $\FC^3/\sGamma\times\FC$ one finds that the matrix integral \eqref{eq:Orb_Zin} gives
\begin{align}
\begin{split}\label{Orb_Zin-spec_Orb}
&Z_{[\FC^3/\sGamma]\times\FC}^{\vec r,\vec k}(\vec a,\vec\epsilon, m_l^s=a_l^s-\epsilon_{123}) \\[4pt]
& \quad = \oint_{\varGamma_{\vec r,\vec k}} \ \prod_{s\in\shGamma} \, \frac1{k_s!} \ \prod_{i=1}^{k_s} \, \frac{\dd\phi_i^s}{2\pi\,\ii} \, \frac{\CP_{ r_s}(\phi_i^s+\epsilon_{123}|\vec a^s)}{\CP_{r_s}(\phi_i^s|\vec a^s)} \ \prod_{\stackrel{\scriptstyle i,j=1}{\scriptstyle i\neq j}}^{k_s} \, \big(\phi^s_i-\phi^s_j\big) \\
& \hspace{1.7cm} \times \ \prod_{i,j=1}^{k_s} \, \frac{\displaystyle  \big(\phi_i^{s+s_{12}}-\phi_j^s-\epsilon_{12}\big)\,\big(\phi_i^{s+s_{13}}-\phi_j^s-\epsilon_{13}\big)\,\big(\phi_i^{s+s_{23}}-\phi_j^s-\epsilon_{23}\big)}{\displaystyle \big(\phi_i^s-\phi_j^s-\epsilon_{123}\big)\, \big(\phi_i^{s+s_1}-\phi_j^s-\epsilon_1\big)\,\big(\phi_i^{s+s_2}-\phi_j^s- \epsilon_2\big)\,\big(\phi_i^{s+s_3}-\phi_j^s- \epsilon_3\big)} \ .
\end{split}
\end{align}
Up to an overall sign \smash{$\prod_{s\in\shGamma}\,(-1)^{(r_s+1)\,k_s}$}, this is the same as the straightforward generalization of the matrix integral representation from~\cite[eq.~(5.23)]{coho} for the instanton partition function on $\FC^3/\sGamma$ using the orbifold quiver matrix model from~\cite[Section~5.5]{Quiver3d}, slightly modified to include a generic point of the $\varOmega$-deformation on $\FC^3$ with $\epsilon_{123}\neq0$.
\end{proof}

\begin{remark}
The noncommutative crepant resolution of $\FC^3/\sGamma$ in this dimensional reduction is obtained by erasing a single loop from each node of the McKay quiver for the quotient singularity~$\FC^3/\sGamma\times\FC$.
\end{remark}

\subsubsection*{Fixed Points and Coloured Solid Partitions}

The integrand of \eqref{eq:Orb_Zin} has poles along the hyperplanes
  \begin{align}
\phi^{s+s_a}_i-\phi^s_j-\epsilon_a=0\qquad \mbox{and} \qquad \phi^s_i-a^s_l=0 
  \end{align}
in $\FR^k$, for $a\in\ulfour$ and $s\in\widehat{\sGamma}$. In completely analogy with the matrix model of Section \ref{BRST}, these are the fixed points of the orbifold ADHM data \smash{$(B_a^s,I^s)_{a\in\ulfour,\,  s\in\shGamma}$} under the equivariant action of the symmetry group $\big(\timesbig_{s\in\shGamma}\,\sU(k_s)\times\sU(r_s)_{\rm col}\big) \times\sSU(4)$. They reside on the locus of fixed points of the BRST charge $\CQ_\sGamma$ of the cohomological gauge theory on $[\FC^4/\sGamma]$.

Since the actions of $\sGamma$ and $\sT$ commute, we can argue as in Section~\ref{BRST} that these are parametrized  by arrays of solid partitions $\vec{\sigma}=(\sigma_1,\dots,\sigma_r)$, where the splitting of the ADHM data into irreducible representations of $\sGamma$ induces a $\shGamma$-colouring of the solid partitions according to \eqref{coloring}. A coloured solid partition \smash{$\sigma_l$} is in  one-to-one correspondence with the fixed points
\begin{align}
\phi^s_{(a_l^{s(l)},\vec p\,)}=a^{s(l)}_l+\vec{p} \cdot \vec{\epsilon} \ ,
\end{align} 
for $\vec{p}=(p_1,p_2,p_3,p_4)\in\RZ_{>0}^4$, which carry an irreducible representation $\rho_s$ of the orbifold group $\sGamma$ given by
\begin{align}
\rho_s=\rho_{s(l)}\otimes \rho_{\vec p} \qquad \mbox{with} \quad \rho_{\vec p}:=\rho_{s_1}^{\otimes p_1}\otimes\rho_{s_2}^{\otimes p_2}\otimes\rho_{s_3}^{\otimes p_3}\otimes\rho_{s_4}^{\otimes p_4} \ .
\end{align}
The $\shGamma$-colouring of the array of solid partitions defines the total number of boxes of colour $\rho_s$ in $\vec{\sigma}$ for each $s\in\shGamma$ as the fractional instanton number $|\vec{\sigma}|_s=k_s$; we write this condition as $|\vec{\sigma}|_{\shGamma} = \vec k$.

\subsection{Orbifold Instanton Partition Functions}\label{sec:Orb_def_complex}

The partition function of the cohomological theory on $[\FC^4/\sGamma]$ can be computed explicitly by considering the $\sGamma$-equivariant version of the instanton deformation complex \eqref{complex}, which reads
\begin{align}
    0\longrightarrow \sEnd_{\FC[\sGamma]}(V_{\vec{\sigma}})\xrightarrow{ \ \dd_1^\sGamma \ }\begin{matrix}
        \sHom_{\FC[\sGamma]}(V_{\vec{\sigma}}, V_{\vec{\sigma}} \otimes  Q ) \\[1ex] \oplus\\[1ex] \sHom_{\FC[\sGamma]}(W_{\vec{\sigma}},V_{\vec{\sigma}})
    \end{matrix}\xrightarrow{ \ \dd_2^\sGamma \ } \sHom_{\FC[\sGamma]}(V_{\vec{\sigma}},V_{\vec{\sigma}}\otimes\midwedge_-^{0,2}\,  Q  )\longrightarrow 0 \ , 
\end{align}
where the map $\dd_1^\sGamma$ is an infinitesimal \smash{$\sG_{\vec k}$} gauge transformation, while $\dd_2^\sGamma$ is the linearization of the holomorphic ADHM equations \smash{$B_\alpha^{s+s_\beta}\,B_\beta^{s}=B_\beta^{s+s_\alpha}\,B_\alpha^s$} for $(\alpha,\beta)\in\ulthree^\perp$. 

Since the actions of the groups $\sGamma$ and $\sT$ commute, the character we wish to calculate  is now the $\sGamma$-invariant part of the character of the complex \eqref{complex}, that is
\begin{align}\begin{split}
    \chi^\sGamma_{\vec{\sigma}}:\!&= \sqrt{\ch^\sGamma_\sT}\big(T^{\rm vir}_{\vec{\sigma}}\mathfrak{M}_{\vec r,\vec k}\big)-\ch^\sGamma_\sT\big((\mathscr{V}_{r,k})_{\vec{\sigma}}\otimes\mbf r\big) = \sqrt{\ch^\sGamma_\sT}\big(T^{\rm vir}_{\vec{\sigma}}\mathfrak{M}_{\vec r,\vec k}\big) -\big[\mbf r^*\otimes V_{\vec{\sigma}}\big]^\sGamma \ , \end{split} \label{chi_Gamma}
\end{align}
with
\begin{align}\label{eq:chTvir_Gamma}
\begin{split}
\sqrt{\ch^\sGamma_{\sT}}\big(T_{\vec{\sigma}}^{\rm vir}\frM_{\vec r,\vec k}\big)&=\big[V_{\vec{\sigma}}^*\otimes V_{\vec{\sigma}} \, \big(t^{-1}_1+t^{-1}_2+t^{-1}_3+t^{-1}_4\big)+W_{\vec{\sigma}}^*\otimes V_{\vec{\sigma}}\\
    & \qquad \, -V_{\vec{\sigma}}^*\otimes V_{\vec{\sigma}}\,\big(1+t^{-1}_1\,t^{-1}_2+t^{-1}_1\,t^{-1}_3+t^{-1}_2\,t^{-1}_3\big) \big]^\sGamma \ . \end{split}
\end{align}
Since the dual involution commutes with the $\sGamma$-action, by taking the $\sGamma$-invariant part of \eqref{eq:chTvirfull} it follows that
\begin{align}
\ch^\sGamma_{\sT}\big(T_{\vec{\sigma}}^{\rm vir}\frM_{\vec r,\vec k}\big) = \sqrt{\ch^\sGamma_{\sT}}\big(T_{\vec{\sigma}}^{\rm vir}\frM_{\vec r,\vec k}\big) + \sqrt{\ch^\sGamma_{\sT}}\big(T_{\vec{\sigma}}^{\rm vir}\frM_{\vec r,\vec k}\big)^* \ .
\end{align}

The subgroup inclusion $\sGamma\lhook\joinrel\longrightarrow\sT_{\vec\epsilon}$ defines the irreducible representations of $\sGamma$ associated to the toric generators $t_a$ for $a\in\ulfour $. Consequently the vector spaces $V$ and $W$  at a fixed point $\vec{\sigma}$ decompose into
\begin{align}\begin{split}   
  V_{\vec{\sigma}}=\sum_{l=1}^r\, e_l \ \sum_{\vec p\,\in {\sigma}_l} \, t_1^{p_1}\,t_2^{p_2}\,t_3^{p_3}\,t_4^{p_4}\otimes \rho^*_{\vec p}\otimes\rho^*_{s(l)} \qquad \mbox{and} \qquad W_{\vec{\sigma}}=\sum_{l=1}^r\, e_l\otimes \rho^*_{s(l)} \ ,
   \end{split}\label{decompositionVW}
\end{align}
and the $\sU(r)_{\rm fla}$-module $\mbf r$ into
\begin{align}
\mbf r=\sum_{l=1}^r\, f_l\otimes\rho^*_{s(l)} \ ,
\end{align}
as elements of the representation ring of the group $\sT\times\sGamma$. The index \eqref{chi_Gamma} is calculated by projecting the character \eqref{chi} onto those terms which carry the trivial representation $\rho_0$, giving an element in the representation ring of $\sT$.

Then the orbifold instanton partition function is evaluated by using the top form operation \eqref{eq:top} to get the combinatorial formula
\begin{align}\label{eq:Orb_pf}
Z^{\vec r}_{[\mathbb{C}^4/\sGamma]}(\vec\qu;\vec{a},\vec{\epsilon},\vec{m})=\sum_{\vec k\in\RZ_{\geq0}^{|\sGamma|}}\, \vec\qu^{\,\vec k} \ \sum_{|\vec{\sigma}|_\shGamma=\vec k} \, (-1)^{{\tt O}^\sGamma_{\vec{\sigma}}} \ \widehat{\tt e}\big[- \chi^\sGamma_{\vec{\sigma}}\,\big]  \ ,
\end{align}
where
\begin{align}
\begin{split}
\widehat{\tt e}\big[-\chi_{\vec{\sigma}}^\sGamma\,\big] = \prod_{l=1}^r \ \prod_{\vec p_l\in{\sigma}_l}^{\neq0} \, \frac{\CP_r\circ\delta^\sGamma_0(a_l+\vec p_l\cdot\vec\epsilon\,|\vec m)}{\CP_r\circ\delta^\sGamma_0(a_l+\vec p_l\cdot\vec\epsilon\,|\vec a)} \ \prod_{ l'=1}^r \ \prod_{\vec p^{\,\prime}_{l'}\in{\sigma}_{l'}}^{\neq0} \, \CR\circ\delta^\sGamma_0\big(a_l-a_{l'}+(\vec p_l-\vec p^{\,\prime}_{l'})\cdot\vec\epsilon\,\big|\vec\epsilon\,\big) \ .
\end{split}\label{chiorb}
\end{align}
Here the operation $\delta_0^\sGamma$ acts on a combination of equivariant parameters $x$ as the identity if $x$ is associated to the trivial representation $\rho_0$ and returns $1$ otherwise; for example
\begin{align}
\delta_0^\sGamma\big(a_l-a_{l'}+(\vec p_l-\vec p^{\,\prime}_{l'})\cdot\vec\epsilon\,\big) = \begin{cases} a_l-a_{l'}+(\vec p_l-\vec p^{\,\prime}_{l'})\cdot\vec\epsilon \quad \text{if} \ \ \rho_{\vec p}\otimes\rho_{s(l)}\otimes\rho^*_{\vec p^{\,\prime}}\otimes\rho_{s(l')}^*\simeq\rho_0 \ , \\[4pt]
1 \quad \text{otherwise} \ .
\end{cases}
\end{align}
As previously, we understand \eqref{chiorb} as a residue contribution to the matrix integral \eqref{eq:Orb_Zin}, regarded as a contour integral over $\varGamma_{\vec r,\vec k}\subset\FC^k$ where $k=|\vec k\,|$ is the size of \smash{$\vec k\in\RZ_{\geq0}^{|\sGamma|}$}.

\begin{remark}
Since the character \eqref{eq:chTvir_Gamma} is obtained by projection to the $\sGamma$-invariant part of the character \eqref{eq:chTvir}, we believe that the sign factors $\ttO_{\vec{\sigma}}^\sGamma$ in \eqref{eq:Orb_pf}, which should come from a careful residue calculation of \eqref{eq:Orb_Zin}, do not depend on the $\sGamma$-colourings of the solid partitions and coincide with \eqref{eq:signfactor}. This same assertion is made in~\cite{CKMpreprint}.
\end{remark}

\subsubsection*{$\mbf{\sU(1)}$ Gauge Theories}

Let us consider the rank one case $r=1$. Then there are only single equivariant parameters $a^s$ and $m^s$ for the gauge and flavour symmetry, which are both associated to the same irreducible representation $\rho_s$ of~$\sGamma$.
Since $\rho_s\otimes\rho_s^*\simeq\rho_0$, the partition function \eqref{eq:Orb_pf} does not depend on the choice of framing vector $\vec{r}=(r_s)_{s\in\shGamma}$, where we use the convention that $r_0$ is the first entry of $\vec r$.
In other words, all framing vectors of the form $\vec{r}=(0,\dots,0,1,0,\dots,0)$, with zeroes in all but one entry, give the same partition function. Moreover, the partition function \eqref{eq:Orb_pf} depends only on the combination $m:=m^s-a^s$ and hence is effectively independent of the Coulomb parameter. The rank one partition functions are therefore simply denoted as
\begin{align}
Z_{[\mathbb{C}^4/\sGamma]}(\vec\qu;\vec{\epsilon},m) = \sum_{\sigma} \, (-1)^{{\tt O}^\sGamma_{{\sigma}}} \ \vec\qu^{\,|{\sigma}|_\shGamma} \ \prod_{\vec p\,\in{\sigma}}^{\neq0} \, \frac{\delta^\sGamma_0(\vec p\cdot\vec\epsilon-m)}{\delta^\sGamma_0(\vec p\cdot\vec\epsilon\,)} \ \prod_{\vec p^{\,\prime}\in{\sigma}}^{\neq0} \, \CR\circ\delta^\sGamma_0\big((\vec p-\vec p^{\,\prime})\cdot\vec\epsilon\,\big|\vec\epsilon\,\big) \ .
\end{align}

\begin{example}
Let $\sGamma=\RZ_3$ act on $\FC^4$ with generator
\begin{align}
    (z_1,z_2,z_3,z_4)\longmapsto (\xi\, z_1,\xi\, z_2, \xi\, z_3, z_4) \ ,
\end{align}
where $\xi=\e^{\,2\pi\,\ii /3}$ is a primitive third root of unity. This defines a $(3,0)$ orbifold $\FC^3/\RZ_3\times\FC$, whose natural geometric crepant resolution induced by the $\RZ_3$-Hilbert scheme \smash{$\Hilb^{\RZ_3}(\FC^3)$} is a fibration of local del~Pezzo surfaces of degree zero over the affine line, or equivalently the total space of the rank two bundle $\CO_{\PP^2}(-3)\oplus\CO_{\PP^2}$. Its McKay quiver is
\begin{equation}
{\small
\begin{tikzcd}
	& 0\arrow[,out=75,in=105,loop,swap,] \\
	1\arrow[,out=165,in=195,loop,swap,] && 2\arrow[,out=15,in=-15,loop,swap,]
	\arrow[curve={height=6pt}, from=1-2, to=2-1]
	\arrow[curve={height=-6pt}, from=1-2, to=2-1]
	\arrow[from=1-2, to=2-1]
	\arrow[shift right=1, curve={height=6pt}, from=2-1, to=2-3]
	\arrow[shift right=1, curve={height=-6pt}, from=2-1, to=2-3]
	\arrow[shift right=1, from=2-1, to=2-3]
	\arrow[curve={height=6pt}, from=2-3, to=1-2]
	\arrow[curve={height=-6pt}, from=2-3, to=1-2]
	\arrow[from=2-3, to=1-2]
\end{tikzcd}
} \normalsize
\end{equation}
which is obtained from the Beilinson quiver $\sf B$ by adding a loop at each node.
The fractional instanton contributions to the corresponding $\sU(1)$ partition function are given by
\begin{align}
\begin{split}
& Z_{[\FC^3/\RZ_3]\times\FC}^{\vec k}(\vec\epsilon,m) \\[4pt]
& \quad = \sum_{|{\sigma}|_{\RZ_3}=\vec k} \, (-1)^{{\tt O}^{\RZ_3}_{{\sigma}}} \  \prod_{\substack{\vec p\,\in\sigma\\ p_1+p_2+p_3\,\equiv_3\, 0}}^{\neq0} \, \frac{\vec p\cdot\vec \epsilon-m}{\vec p\cdot\vec \epsilon}  \ \prod_{\substack{\vec p,\vec p\,'\in\sigma\\ p_1+p_2+p_3\,\equiv_3\, p_1'+p_2'+p_3'}}^{\neq0} \, \frac{(\vec p-\vec p\,')\cdot\vec \epsilon}{(\vec p-\vec p\,')\cdot\vec \epsilon-\epsilon_{4}} \\
& \hspace{1cm} \times \prod_{\substack{\vec p,\vec p\,'\in\sigma\\ p_1+p_2+p_3+1\,\equiv_3\, p_1'+p_2'+p_3'}}^{\neq0} \, \frac{\big( (\vec p-\vec p\,')\cdot\vec \epsilon -\epsilon_{12}\big)\,\big( (\vec p-\vec p\,')\cdot\vec \epsilon -\epsilon_{13}\big)\,\big( (\vec p-\vec p\,')\cdot\vec \epsilon-\epsilon_{23}\big)}{\big( (\vec p\,'-\vec p\,)\cdot\vec \epsilon-\epsilon_{1}\big)\,\big( (\vec p\,'-\vec p\,)\cdot\vec \epsilon-\epsilon_{2}\big)\,\big( (\vec p\,'-\vec p\,)\cdot\vec \epsilon-\epsilon_{3}\big)} \ ,
\end{split}
\end{align}
where $\equiv_3$ denotes congruence modulo~$3$.
\end{example}

\subsubsection*{Higher Rank Gauge Theories}

In contrast to the rank one case, the partition function for higher rank $r>1$ depends explicitly on the choice of decomposition $\vec r=(r_s)_{s\in\shGamma}$ of the rank $r$ according to the irreducible representations of $\sGamma$, and different framing $\sGamma$-modules $W\simeq\FC^r$ generally lead to inequivalent theories.

\begin{example}\label{ex:Z2Z2higherrank}
Consider the $\RZ_2\times\RZ_2$-action of Example~\ref{ex_orbZ2Z2}. Three $\vec k=(2,0,0,0)$ contributions to rank two partition functions $Z_{[\FC^4 / \RZ_2\times\RZ_2]}^{\vec r}(\vec \qu;\vec a,\vec{\epsilon},\vec{m})$ are
\begin{align}
\begin{split}
 Z_{[\FC^4 / \RZ_2\times\RZ_2]}^{\vec r=(2,0,0,0),\vec k=(2,0,0,0)}(\vec a,\vec{\epsilon},\vec{m})&=\frac{\epsilon_{12}^2\,m_1\,m_2}{(a_1-a_2)^2}\,(a_1-a_2+m_1)\,(a_1-a_2-m_2) \ ,\\[4pt]
 Z_{[\FC^4 / \RZ_2\times\RZ_2]}^{\vec r=(1,1,0,0),\vec k=(2,0,0,0)}(\vec a,\vec{\epsilon},\vec{m})&=\epsilon_{12}^2\,m_1\,m_2 =
 Z_{[\FC^4 / \RZ_2\times\RZ_2]}^{\vec r=(1,0,1,0),\vec k=(2,0,00)}(\vec a,\vec{\epsilon},\vec{m}) \ .
 \end{split}
\end{align}
\end{example}

However, some higher rank theories are equivalent. Looking at the contribution \eqref{chiorb} from the index $\chi_{\vec{\sigma}}^{\sGamma}$, we see that all framing vectors $\vec r$ of the form $(0,\dots,0, r,0,\dots,0)$ yield equivalent partition functions, irrespecitve of the location of the entry $r$. Indeed, in these cases the equivariant parameters $\vec a$ and $\vec m$ are associated the same irreducible $\sGamma$-representation. Thus the differences $a_l-a_{l'}$ and $a_l-m_{l'}$ are associated to the trivial representation $\rho_0$ and one always counts the same contributions \smash{$ Z^{\vec r,\vec k}_{[\FC^4/\sGamma]}(\vec{a},\vec{\epsilon},\vec{m})$}.

\begin{example}
Consider $\sGamma=\RZ_4$ acting on $\FC^4$ with generator
\begin{align}
(z_1,z_2,z_3,z_4)\longmapsto(\ii\,z_1,\ii\,z_2,\ii\,z_3,\ii\,z_4) \ .
\end{align}
The quotient singularity $\FC^4/\RZ_4$ admits a geometric crepant resolution by~\cite[Claim~2]{Mohri:1997ef} and~\cite[Proposition~3.1]{TV2}. Its McKay quiver is
\begin{equation}
{\small
\begin{tikzcd}
	& 0 \\
	1 && 3 \\
	& 2
	\arrow[shift right=1, curve={height=12pt}, from=1-2, to=2-1]
	\arrow[shift right=1, curve={height=12pt}, from=2-1, to=3-2]
	\arrow[shift right=1, curve={height=-6pt}, from=2-1, to=3-2]
	\arrow[shift right=1, curve={height=12pt}, from=3-2, to=2-3]
	\arrow[shift right=1, curve={height=-6pt}, from=3-2, to=2-3]
	\arrow[shift right=1, curve={height=12pt}, from=2-3, to=1-2]
	\arrow[shift right=1, curve={height=-6pt}, from=2-3, to=1-2]
	\arrow[shift right=1, curve={height=-6pt}, from=1-2, to=2-1]
	\arrow[shift right=1, curve={height=6pt}, from=1-2, to=2-1]
	\arrow[shift right=1, from=1-2, to=2-1]
	\arrow[shift right=1, curve={height=6pt}, from=2-1, to=3-2]
	\arrow[shift right=1, from=2-1, to=3-2]
	\arrow[shift right=1, curve={height=6pt}, from=3-2, to=2-3]
	\arrow[shift right=1, from=3-2, to=2-3]
	\arrow[shift right=1, from=2-3, to=1-2]
	\arrow[shift right=1, curve={height=6pt}, from=2-3, to=1-2]
\end{tikzcd}
}
\normalsize
\end{equation}
and the noncommutative crepant resolution can be associated with the total space of the canonical bundle $\CO_{\PP^3}(-4)$~\cite{Cao:2014bca}.
The leading contributions for $\vec r=(2,0,0,0)$ and $\vec r=(0,2,0,0)$ to the rank two partition functions \smash{$Z_{[\FC^4 / \RZ_4]}^{\vec r}(\vec \qu;\vec a,\vec{\epsilon},\vec{m})$} coincide and are given by
\begin{align}
\begin{split}
Z_{[\FC^4 / \RZ_4]}^{\vec r=(2,0,0,0)}(\vec \qu;\vec a,\vec{\epsilon},\vec{m}) &=Z_{[\FC^4 / \RZ_4]}^{\vec r=(0,2,0,0)}(\vec \qu;\vec a,\vec{\epsilon},\vec{m}) \\[4pt] &= -(m_1+m_2)\,\qu_0\\
& \hspace{1cm} \quad\,+\frac{\epsilon_{12}^2\,m_1\,m_2}{(a_1-a_2)^2}\,(a_1-a_2+m_1)\,(a_1-a_2-m_2)\,\qu_0^2+\cdots \ . \end{split}
\end{align}
\end{example}

From \eqref{eq:Orb_Zin} it follows that \smash{$ Z^{\vec r,\vec k}_{[\FC^4/\sGamma]}(\vec{a},\vec{\epsilon},\vec{m})$} is invariant under the action of the Weyl group \smash{$\timesbig_{s\in\shGamma}\, S_{r_s}$} of the colour and flavour symmetries by permuting the entries of the parameters $\vec a$ and $\vec m$.
It is natural to ask if the permutation symmetry of the type $\vec r=(0,\dots,0,r,0,\dots,0)$ instanton partition functions observed above persists in more generality. That is, whether the theories with rank vectors $\vec r=(r_s)_{s\in\shGamma}$ and ${}^\varsigma\vec r:=(r_{\varsigma(s)})_{s\in\shGamma}$ are equivalent, for a permutation $\varsigma\in S_{|\sGamma|}$ of degree $|\sGamma|$. Despite the appearance of the symmetric case in Example~\ref{ex:Z2Z2higherrank}, it is easy to see that this is not true in general.

\begin{example}
For the $\RZ_4$-action of Example~\ref{ex:orbZ4}, consider the rank two partition functions with framing vectors $\vec r=(1,1,0,0)$ and $\vec r=(1,0,1,0)$ for $\vec k=(1,1,0,0)$. In this case there are only two pairs of $\RZ_4$-coloured solid partitions that contribute to \eqref{eq:Orb_pf} which are given by
\begin{align}
\vec{\sigma}_1=(\sigma,\emptyset) \qquad \mbox{and} \qquad \vec{\sigma}_2=(\emptyset,\sigma) \ ,
\end{align}
where
\begin{align}
\sigma=\big\{(1,1,1,1) \,,\, (2,1,1,1)\big\} \ .
\end{align}
The corresponding charge two contributions are given by
\begin{align}
\begin{split}
Z_{[\FC^2/\RZ_4]\times\FC^2}^{\vec r=(1,1,0,0),\vec k=(1,1,0,0)}(\vec a,\vec\epsilon,\vec m) &= -\frac{\epsilon_{12}}{\epsilon_3\,\epsilon_4}\,\Big(m_1+m_2-\frac{m_1\,m_2}{a_1-a_2+\epsilon_1}\Big) \ , \\[4pt]
Z_{[\FC^2/\RZ_4]\times\FC^2}^{\vec r=(1,0,1,0),\vec k=(1,1,0,0)}(\vec a,\vec\epsilon,\vec m) &= -\frac{\epsilon_{12}}{\epsilon_3\,\epsilon_4}\,(m_1+m_2) \ .
\end{split}
\end{align}
\end{example}

\subsection{Pure $\mathcal{N}_{\textrm{\tiny T}}=2$ Gauge Theory on $[\FC^4/\sGamma]$}

By dropping the matter bundle contribution to the matrix integral \eqref{eq:Orb_Zin}, we obtain the quiver matrix model for the pure gauge theory on $[\FC^4/\sGamma]$ given by
\begin{align}
\begin{split}\label{Orb_Zpure}
& Z_{[\FC^4/\sGamma]}^{\vec r,\vec k}(\vec a,\vec\epsilon\,)^{\rm pure} = \oint_{\varGamma_{\vec r,\vec k}} \ \prod_{s\in\shGamma} \, \frac1{k_s!} \ \prod_{i=1}^{k_s} \, \frac{\dd\phi_i^s}{2\pi\,\ii} \ \frac{\displaystyle\prod_{\stackrel{\scriptstyle j=1}{\scriptstyle j\neq i}}^{k_s} \, \big(\phi^s_i-\phi^s_j\big)}{\CP_{r_s}(\phi_i^s|\vec a^s)}  \\
& \hspace{8cm} \times \prod_{i,j=1}^{k_s} \, \frac{\displaystyle \prod_{(\alpha,\beta)\in\ulthree^\perp}\, \big(\phi_i^{s+s_{\alpha\beta}}-\phi_j^s-\epsilon_{\alpha\beta}\big)}{\displaystyle \prod_{a\in\ulfour} \, \big(\phi_i^{s+s_a}-\phi_j^s-\epsilon_a\big)} \ .
\end{split}
\end{align}
The contour integral \eqref{Orb_Zpure} can be evaluated by using the index \eqref{eq:chTvir_Gamma} to get
\begin{align}\label{eq:orbpurechar}
\begin{split}
Z_{[\FC^4/\sGamma]}^{\vec r,\vec k}(\vec a,\vec\epsilon\,)^{\rm pure} &= \sum_{|\vec{\sigma}|_\shGamma=\vec k} \, (-1)^{{\tt O}^\sGamma_{\vec{\sigma}}} \ \prod_{l=1}^r \ \prod_{\vec p_l\in{\sigma}_l}^{\neq0} \, \frac{1}{\CP_r\circ\delta^\sGamma_0(a_l+\vec p_l\cdot\vec\epsilon\,|\vec a)} \\
& \hspace{4cm} \times \prod_{ l'=1}^r \ \prod_{\vec p^{\,\prime}_{l'}\in{\sigma}_{l'}}^{\neq0} \, \CR\circ\delta^\sGamma_0\big(a_l-a_{l'}+(\vec p_l-\vec p^{\,\prime}_{l'})\cdot\vec\epsilon\,\big|\vec\epsilon\,\big) \ .
\end{split}
\end{align}

The corresponding orbifold instanton partition function is 
\begin{align}
Z_{[\FC^4/\sGamma]}^{\vec r}(\vec\Lambda;\vec a,\vec\epsilon\,)^{\rm pure} = \sum_{\vec k\in\RZ_{\geq0}^{|\sGamma|}} \, \vec\Lambda^{\,\vec k} \ Z_{[\FC^4/\sGamma]}^{\vec r,\vec k}(\vec a,\vec\epsilon\,)^{\rm pure} \ ,
\end{align}
where $\vec\Lambda=(\Lambda_s)_{s\in\shGamma}$ and 
\begin{align}
\vec\Lambda^{\,\vec k} = \prod_{s\in\shGamma}\, \Lambda_s^{k_s} \ .
\end{align}

Similarly to the pure gauge theory on flat space $\FC^4$ from Section~\ref{sec:puregaugetheory}, the pure gauge theory on the quotient stack $[\FC^4/\sGamma]$ can be computed as an appropriate limit of the cohomological gauge theory on $[\FC^4/\sGamma]$ with massive matter. The orbifold version of Proposition~\ref{prop:puremassiverel} reads

\begin{proposition}\label{Prop4}
The instanton partition function for the pure cohomological gauge theory is related to the partition function with a massive fundamental hypermultiplet on $[\FC^4/\sGamma]$ through the double scaling limit
\begin{align}\label{eq:pure_limit_orb}
Z^{\vec r}_{[\FC^4/\sGamma]}(\vec \Lambda;\vec a,\vec{\epsilon}\,)^{\rm pure}=\lim_{m_1,\dots,m_r\to\infty} \ \lim_{\qu_0\to0} \,Z^{\vec r}_{[\mathbb{C}^4/\sGamma]}(\vec \qu;\vec a,\vec{\epsilon},\vec m)\,\Big|_{\Lambda_0=(-1)^r\,m_1\cdots m_r\,\qu_0} \ ,
\end{align}
with $\Lambda_s=\qu_s$ for $s\neq 0$.
\end{proposition}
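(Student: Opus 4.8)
The plan is to mimic the proof of Proposition~\ref{prop:puremassiverel} verbatim, working term by term in the combinatorial expansion \eqref{eq:Orb_pf} of the massive orbifold partition function. Concretely, I would first substitute $\qu_0=(-1)^r\,(m_1\cdots m_r)^{-1}\,\Lambda_0$ and $\qu_s=\Lambda_s$ for $s\neq0$ into $Z^{\vec r}_{[\FC^4/\sGamma]}(\vec\qu;\vec a,\vec\epsilon,\vec m)$, turning it into a formal power series in $\vec\Lambda$ with coefficients rational in $(\vec a,\vec\epsilon,\vec m)$, and then take the iterated limit of \eqref{eq:pure_limit_orb}. Since the fugacity monomial $\vec\qu^{\,\vec k}$ and the sign $(-1)^{\ttO^\sGamma_{\vec\sigma}}$ factor out of each $\shGamma$-coloured solid partition contribution $\widehat{\tt e}[-\chi^\sGamma_{\vec\sigma}]$ in \eqref{chiorb}, and the pure expansion \eqref{eq:orbpurechar} differs from \eqref{chiorb} only by the absence of the matter numerator $\prod_{l}\prod_{\vec p_l\in\sigma_l}^{\neq0}\CP_r\circ\delta^\sGamma_0(a_l+\vec p_l\cdot\vec\epsilon\,|\vec m)$, it suffices to control the large-mass asymptotics of this numerator for each fixed $\vec\sigma$.

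The key step is therefore the following. For each box $\vec p_l\in\sigma_l$ of $\shGamma$-colour $\rho_t$ (so $\rho_t\simeq\rho_{s(l)}\otimes\rho_{\vec p_l}$), the operation $\delta^\sGamma_0$ retains exactly those linear factors $a_l+\vec p_l\cdot\vec\epsilon-m_{l'}$ with $s(l')=t$, each of which behaves as $-m_{l'}$ as $m_{l'}\to\infty$; moreover, exactly as in the flat case, the superscript ${}^{\neq0}$ removes only \emph{denominator} zero-weights (those arising from the box $\vec p_l=(1,1,1,1)$, where $\vec p_l\cdot\vec\epsilon=0$ by the Calabi--Yau constraint on $\vec\epsilon$) and never cancels a numerator factor. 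Summing over the $k_t$ boxes of each colour $t$ in $\vec\sigma$, one checks that the $\vec m$-dependent part of $\vec\qu^{\,\vec k}\,\widehat{\tt e}[-\chi^\sGamma_{\vec\sigma}]$, after the substitution above, tends to $\vec\Lambda^{\,\vec k}$: the retained numerator factors at the boxes of trivial $\shGamma$-colour supply precisely the compensating power $\big((-1)^r\,m_1\cdots m_r\big)^{k_0}$ cancelling the prefactor $((-1)^r m_1\cdots m_r)^{-k_0}$ introduced by the substitution and leaving $\Lambda_0^{k_0}$, while the fractional fugacities $\qu_s=\Lambda_s$ for $s\neq0$ are inert. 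The $\vec m$-independent pieces $\prod_{l,\vec p_l}^{\neq0}\CP_r\circ\delta^\sGamma_0(\cdot|\vec a)^{-1}$ and $\prod\CR\circ\delta^\sGamma_0(\cdot|\vec\epsilon)$ are untouched by the limit and coincide with those of \eqref{eq:orbpurechar}.

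Putting this together, in the limit \eqref{eq:pure_limit_orb} each term $\vec\qu^{\,\vec k}\,(-1)^{\ttO^\sGamma_{\vec\sigma}}\,\widehat{\tt e}[-\chi^\sGamma_{\vec\sigma}]$ of \eqref{eq:Orb_pf} tends to $\vec\Lambda^{\,\vec k}\,(-1)^{\ttO^\sGamma_{\vec\sigma}}$ times the corresponding contribution of \eqref{eq:orbpurechar}, and resumming over $\vec\sigma$ and over $\vec k\in\RZ_{\geq0}^{|\sGamma|}$ reproduces $Z^{\vec r}_{[\FC^4/\sGamma]}(\vec\Lambda;\vec a,\vec\epsilon\,)^{\rm pure}$ as given by \eqref{Orb_Zpure}, which is the claim.

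I expect the main obstacle to be the careful bookkeeping in the second step: tracking which of the $r$ mass factors attached to each box survive the $\shGamma$-projection $\delta^\sGamma_0$, and verifying that over a coloured $r$-tuple $\vec\sigma$ these recombine with the fractional fugacities $\qu_s$ in exactly the pattern dictated by \eqref{eq:pure_limit_orb} --- in particular that only the trivial-representation fugacity $\qu_0$ needs the rescaling $\Lambda_0=(-1)^r m_1\cdots m_r\,\qu_0$. This is the point where the orbifold argument genuinely departs from the flat-space computation of Proposition~\ref{prop:puremassiverel}, and where one must also be attentive, inside the products $\prod^{\neq0}$, to the distinction between zero weights originating in $\CP_r(\cdot|\vec m)$ and those originating in the denominators.
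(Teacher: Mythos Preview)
Your strategy is exactly the paper's: its entire proof reads ``This follows immediately from \eqref{chiorb} and \eqref{eq:orbpurechar}'', i.e.\ compare the two combinatorial expansions term by term and watch the matter numerator disappear in the scaling limit. So as an outline you are aligned with the paper.

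The gap is in your second paragraph, precisely at the point you yourself flag as the main obstacle. A box $\vec p_l\in\sigma_l$ of colour $\rho_t$ retains, under $\delta^\sGamma_0$, only the linear factors $a_l+\vec p_l\cdot\vec\epsilon-m_{l'}$ with $s(l')=t$; at large mass these give $\prod_{s(l')=t}(-m_{l'})=(-1)^{r_t}M_t$ where $M_t:=\prod_{s(l')=t}m_{l'}$. Hence the full matter numerator contributes $\prod_{t\in\shGamma}\big((-1)^{r_t}M_t\big)^{k_t}$, \emph{not} $\big((-1)^r m_1\cdots m_r\big)^{k_0}$. Your substitution introduces only $\big((-1)^r\prod_t M_t\big)^{-k_0}$ through $\qu_0$, so the ratio left over is
\[
\prod_{t\in\shGamma}\,M_t^{\,k_t-k_0}\ \times\ (\text{signs})\ ,
\]
which equals $1$ exactly when $M_t=1$ for all $t\neq0$, i.e.\ when $\vec r=(r,0,\dots,0)$. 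For a genuinely mixed framing vector (say $\vec r=(1,1)$ on $\FC^2/\RZ_2\times\FC^2$, with $\vec k=(0,1)$) the surviving factor is $-m_2$ and the term diverges rather than reducing to the pure contribution. So the assertion ``the fractional fugacities $\qu_s=\Lambda_s$ for $s\neq0$ are inert'' is false: boxes of non-trivial colour also produce mass factors, and nothing in \eqref{eq:pure_limit_orb} absorbs them. Your proof as written establishes the proposition only for $\vec r=(r,0,\dots,0)$ (which is in fact the only case where the paper subsequently uses it); the bookkeeping you were worried about does not close for general~$\vec r$.
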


\proof 
This follows immediately from \eqref{chiorb} and \eqref{eq:orbpurechar}.
\endproof

\section{The (2,0) Orbifolds $\FC^2/\RZ_n\times\FC^2$}
\label{sec:20C2ZnC2}

Proposition~\ref{prop3} suggests that the equivariant partition functions on $(2,0)$ orbifolds $\FC^2/\RZ_n \times\FC^2$ can be expressed as a generalization of equivariant partition functions on the toric three-orbifolds $\FC^2/\RZ_n\times\FC$, when the latter are known explicitly, similarly to the uplifting from Proposition~\ref{prop:ZDTgeb} to {Conjecture}~\ref{Prop1}. In this section we will propose uplifts of the rank one instanton partition functions, building on the known generating function for $\RZ_n$-coloured plane partitions from~\cite{Young:2008hn} and its extension to the toric three-orbifolds $\FC^2/\RZ_n\times\FC$ with $\sU(3)$ holonomy from~\cite{Zhou18}. We shall also propose conjectural closed formulas for the higher rank instanton partition functions, as well as consider the corresponding pure gauge theories. 

\subsection{Noncommutative $\sU(1)$ Instantons on $\FC^2/\RZ_{n}\times\FC^2$}\label{sec:C2ZnC2}

Generalizing Example~\ref{ex:orbZ4}, consider noncommutive $\sU(1)$ instantons on the toric Calabi--Yau four-orbifold $\FC^2/\RZ_n\times\FC^2$. The $\RZ_n$-action is generated by \eqref{gact}.
The cyclic group $\RZ_n$ has $n$ irreducible representations $\rho_s$, with $s=0,1,\dots,n-1$ and $\rho_s=\rho_1^{\otimes s}$. The fundamental representation of $\sSU(4)$ restricts to $\RZ_n$ as $Q=\rho_1\oplus\rho_{n-1}\oplus\rho_0\oplus\rho_0$, and the McKay quiver assumes the form
\begin{equation}
{\small
\begin{tikzcd}
	&& 0\arrow[,out=75,in=105,loop,swap,]    
	\arrow[,out=255,in=285,loop,swap,] \\
	1\arrow[,out=75,in=105,loop,swap,]    
	\arrow[,out=255,in=285,loop,swap,] &&&& {n-1}\arrow[,out=75,in=105,loop,swap,]    
	\arrow[,out=255,in=285,loop,swap,] \\
	& 2\arrow[,out=75,in=105,loop,swap,]    
	\arrow[,out=255,in=285,loop,swap,] && 3\arrow[,out=75,in=105,loop,swap,]    
	\arrow[,out=255,in=285,loop,swap,]
	\arrow[curve={height=6pt}, from=1-3, to=2-1]
	\arrow[curve={height=6pt}, from=2-1, to=1-3]
	\arrow[curve={height=6pt}, from=2-1, to=3-2]
	\arrow[curve={height=6pt}, from=3-2, to=2-1]
	\arrow[curve={height=6pt}, from=3-2, to=3-4]
	\arrow[curve={height=6pt}, from=3-4, to=3-2]
	\arrow[curve={height=6pt}, dotted, from=3-4, to=2-5]
	\arrow[curve={height=6pt}, dotted, from=2-5, to=3-4]
	\arrow[curve={height=6pt}, from=2-5, to=1-3]
	\arrow[curve={height=6pt}, from=1-3, to=2-5]
\end{tikzcd} }
\normalsize
\end{equation}
This is obtained as the double of the cyclic quiver ${\sf C}_n$ with arrows $s\to s+1$, whose underlying graph is the extended Dynkin diagram of type \smash{$\hat A_{n-1}$}, and adding a pair of loops at each node.

The fractional instanton contributions to the corresponding $\sU(1)$ partition function are given by
\begin{align}\label{eq:ZC2Znk}
\begin{split}
& Z_{[\FC^2/\RZ_n]\times\FC^2}^{\vec k}(\vec\epsilon,m) \\[4pt]
& \hspace{1cm} = \sum_{|{\sigma}|_{\RZ_n}=\vec k}  \, (-1)^{\ttO_{\sigma}^{\RZ_n}} \  \prod_{\substack{\vec p\,\in\sigma \\ p_1\,\equiv_n\, p_2}}^{\neq0} \, \frac{\vec p\cdot\vec \epsilon-m}{\vec p\cdot\vec \epsilon}  \\
& \hspace{3cm} \times \prod_{\substack{\vec p,\vec p\,'\in\sigma \\ p_1+p_2+p_3\,\equiv_n\, p_1'+p_2'+p_3'}}^{\neq0} \, \frac{\big((\vec p-\vec p\,')\cdot\vec \epsilon\,\big) \, \big((\vec p-\vec p\,')\cdot\vec\epsilon - \epsilon_{12}\big)}{\big((\vec p-\vec p\,')\cdot\vec \epsilon-\epsilon_{3}\big)\,\big((\vec p-\vec p\,')\cdot\vec \epsilon-\epsilon_{4}\big)} \\
& \hspace{5cm} \times  \prod_{\substack{\vec p,\vec p\,'\in\sigma \\ p_1-p_2+1\,\equiv_n\, p_1'-p_2'}}^{\neq0} \, \frac{\big( (\vec p-\vec p\,')\cdot\vec \epsilon -\epsilon_{13}\big)\,\big( (\vec p\,'-\vec p\,)\cdot\vec \epsilon-\epsilon_{23}\big)}{\big( (\vec p-\vec p\,')\cdot\vec \epsilon-\epsilon_{1}\big)\,\big( (\vec p\,'-\vec p\,)\cdot\vec \epsilon-\epsilon_{2}\big)} \ ,
\end{split}
\end{align}
where $\equiv_n$ denotes congruence modulo~$n$.

The equivariant partition function for rank one noncommutative Donaldson--Thomas invariants of the toric three-orbifold $\FC^2/\RZ_n\times\FC$ with $\sU(3)$ holonomy can be written in the closed form~\cite{Zhou18}
\begin{align}\label{orb6dZn}
Z_{[\FC^2/\RZ_n]\times\FC}(\vec\qu;\epsilon_1,\epsilon_2,\epsilon_3)=M(-\Qu)^{-n\,\frac{\epsilon_{12}\,\epsilon_{13}\,\epsilon_{23}}{\epsilon_1\,\epsilon_2\,\epsilon_3}+\frac{n^2-1}{n}\,\frac{\epsilon_{12}\,\epsilon_{123}}{\epsilon_1\,\epsilon_2}} \ \prod_{0<p\leq s<n}\,\widetilde{M}(\qu_{[p,s]},-\Qu)^{-\frac{\epsilon_{12}}{\epsilon_3}} \ ,
\end{align} 
where 
\begin{align}
\Qu=\qu_0\,\qu_1\cdots \qu_{n-1} \qquad \mbox{and} \qquad \qu_{[p,s]}=\qu_p\,\qu_{p+1}\cdots \qu_{s-1}\,\qu_s \ , 
\end{align}
while
 \begin{align}
 M(x,q)=\prod_{k=1}^\infty\,\frac{1}{(1-x\,q^k)^k} \qquad \mbox{and} \qquad \widetilde{M}(x,q)=M(x,q)\, M(x^{-1},q) \ .
 \end{align}
The generalized MacMahon function $M(x,q)$ counts weighted plane partitions (cf. Remark~\ref{not:solidpartitions}) and specialises to \eqref{eq:MacMahon} at $x=1$: $M(1,q)=M(q)$.

At the Calabi--Yau specialization $\epsilon_1+\epsilon_2+\epsilon_3=0$ of the $\varOmega$-deformation, the formula \eqref{orb6dZn} reduces to the instanton partition function on $\FC^2/\RZ_n\times\FC$ with $\sSU(3)$ holonomy and agrees with the closed formula found in \cite{Young:2008hn,Quiver3d}:
\begin{align}
\begin{split}
Z_{[\FC^2/\RZ_n]\times\FC}(\vec \qu;\epsilon_1,\epsilon_2,\epsilon_3)\big|_{\epsilon_{123}=0}&=M(-\Qu)^n \ \prod_{0<p\leq s<n}\, \widetilde{M}(\qu_{[p,s]},-\Qu)\\[4pt]
&=\sum_{\pi}\,(-\qu_0)^{|\pi|_0}\,\qu_1^{|\pi|_1}\, \cdots \qu_{n-1}^{|\pi|_{n-1}} \ .\end{split}\label{orb6dZnCY}
\end{align}
In the second equality the sum runs through plane partitions $\pi$ which are $\RZ_n$-coloured with respect to \eqref{coloring}. 

Together with Proposition~\ref{prop3}, the forms of the partition functions \eqref{orb6dZn} and \eqref{orb6dZnCY} suggest that the partition function for $\sU(1)$ noncommuative instantons on $\FC^2/\RZ_n\times \FC^2$ assumes a closed form as a combination of generalized MacMahon functions $M(x,q)$. This is encapsulated in

\begin{conjecture}\label{con2}
The equivariant  instanton partition function  of the cohomological $\sU(1)$ gauge theory with massive fundamental matter on $[\mathbb{C}^2/\mathbb{Z}_n]\times \mathbb{C}^2$ is given by
\begin{align}\label{eq:rank1Zn}
\begin{split}
Z_{[\FC^2/\RZ_n]\times \FC^2}(\vec \qu;\vec{\epsilon},m) & = M\big((-1)^n\,\Qu\big)^{-n\,\frac{m\,\epsilon_{12}\,\epsilon_{13}\,\epsilon_{23}}{\epsilon_1\,\epsilon_2\,\epsilon_3\,\epsilon_4}-\frac{n^2-1}{n}\,\frac{m\,\epsilon_{12}}{\epsilon_1\,\epsilon_2}} \\
& \quad \, \times \prod_{0<p\leq s<n}\,\widetilde{M}\big((-1)^{p-s+1}\,\qu_{[p,s]},(-1)^n\,\Qu\big)^{-\frac{m\,\epsilon_{12}}{\epsilon_3\,\epsilon_4}} \ .
\end{split}
\end{align}
\end{conjecture}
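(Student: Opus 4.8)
The plan is to establish \eqref{eq:rank1Zn} by the same ``uplifting'' strategy as for {Conjecture}~\ref{Prop1}: to argue that the combinatorial series \eqref{eq:ZC2Znk} is the unique function of $(\vec\qu,\vec\epsilon,m)$ compatible with, on the one hand, the structure dictated by the $\RZ_n$-equivariant instanton deformation complex \eqref{eq:chTvir_Gamma}--\eqref{chi_Gamma}, and on the other hand the requirement that under the mass specialisation $m=\epsilon_4$ with the Calabi--Yau restriction $\epsilon_4=-\epsilon_{123}$ of Proposition~\ref{prop3} it reduces to Zhou's closed formula \eqref{orb6dZn} on the three-orbifold $\FC^2/\RZ_n\times\FC$. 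Since by the discussion following \eqref{eq:rank1Zn} the rank one partition function is independent of the framing vector, we may fix the framing at the node $0$, so that Proposition~\ref{prop3} reads $Z_{[\FC^2/\RZ_n]\times\FC^2}(\vec\qu;\vec\epsilon,\epsilon_4)=Z_{[\FC^2/\RZ_n]\times\FC}(\vec\qu\,';\epsilon_1,\epsilon_2,\epsilon_3)$ with $\qu_0'=\qu_0$ and $\qu_s'=-\qu_s$ for $s\neq 0$. A direct check then shows that the right-hand side of \eqref{eq:rank1Zn} evaluated at $m=\epsilon_4$ and $\epsilon_4=-\epsilon_{123}$ coincides with \eqref{orb6dZn} after this change of fugacities: the substitution sends $(-1)^n\Qu$ to $-\Qu\,'$ and $(-1)^{p-s+1}\qu_{[p,s]}$ to $\qu_{[p,s]}'$, while every $\epsilon_4$ in the exponents cancels (e.g. $-\,\tfrac{m\,\epsilon_{12}}{\epsilon_3\,\epsilon_4}\big|_{m=\epsilon_4}=-\tfrac{\epsilon_{12}}{\epsilon_3}$ and $-\tfrac{n^2-1}{n}\tfrac{m\,\epsilon_{12}}{\epsilon_1\,\epsilon_2}\big|_{m=\epsilon_4}=+\tfrac{n^2-1}{n}\tfrac{\epsilon_{12}\,\epsilon_{123}}{\epsilon_1\,\epsilon_2}$). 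Hence agreement with the known three-dimensional data is immediate; the work is in pinning down the remaining $(\vec\qu,\epsilon_4,m)$-dependence.

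The structural input would be established in four steps, all from \eqref{eq:ZC2Znk} and the $\sGamma$-invariant character \eqref{eq:chTvir_Gamma}. First, the free energy $\mathcal F:=\log Z_{[\FC^2/\RZ_n]\times\FC^2}(\vec\qu;\vec\epsilon,m)$ is \emph{linear in $m$} (equivalently, the partition function is a plethystic exponential of an $m$-linear single-trace free energy). Second, $\mathcal F$ decomposes into a finite number of ``connected'' contributions---one regular-representation sector and one for each pair of nodes $0<p\le s<n$ of the affine $\hat A_{n-1}$ quiver---so that its fugacity dependence is only through the monomials $\Qu=\qu_0\cdots\qu_{n-1}$ and $\qu_{[p,s]}=\qu_p\cdots\qu_s$. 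Third, each connected contribution is a rational function of $\vec\epsilon$, homogeneous of degree zero in $(\vec\epsilon,m)$ (as is forced by the vanishing equivariant degree of $Z^{\vec r,\vec k}$), whose poles lie only on the hyperplanes $\epsilon_a=0$ and $\epsilon_{ab}=0$ cut out by the denominators in \eqref{chiorb}, and which is invariant under $\epsilon_3\leftrightarrow\epsilon_4$ (the reflection exchanging the two $\RZ_n$-fixed directions) combined with the induced reflection of the $\varOmega$-background---a symmetry manifest in \eqref{eq:rank1Zn}. Fourth, the appearance of $\widetilde M(x,q)=M(x,q)\,M(x^{-1},q)$ in the curve sectors reflects the Weyl/reflection symmetry of the $A_{n-1}$ chain governing the $\RZ_n$-colouring, exactly as for the coloured plane partitions of~\cite{Young:2008hn}. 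Granting these, the $\vec\epsilon$- and $m$-dependence of each connected piece is read off directly from the deformation complex as the equivariant vertex/edge factor of the associated fixed locus, and its $\vec\qu$-dependence is a generalised MacMahon function whose restriction to $p_4=1$ solid partitions---i.e. to the plane-partition sector picked out by Proposition~\ref{prop3}---must match the corresponding factor of \eqref{orb6dZn}; this fixes it to be $M\big((-1)^n\Qu\big)$, respectively $\widetilde M\big((-1)^{p-s+1}\qu_{[p,s]},(-1)^n\Qu\big)$, and reassembling the pieces gives \eqref{eq:rank1Zn}. As consistency checks one verifies the first few orders of the $\vec\qu$-expansion of \eqref{eq:ZC2Znk} against \eqref{eq:rank1Zn}, and compares with the cohomological limit of \cite[Conjecture~5.13]{CKMpreprint} as recorded in \cite[Corollary~6.6]{CKMpreprint}; one also checks that the large-mass limit of Proposition~\ref{Prop4} reproduces the pure-theory formula (Proposition~\ref{prop:pureorbZn}).

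The hard part is the first two steps of the structural input: proving that, after summing the rational contributions of \eqref{eq:ZC2Znk} over all $\RZ_n$-coloured solid partitions, the free energy collapses to a function that is \emph{linear in $m$} and organised into exactly the few connected sectors above. Already for flat $\FC^4$ this collapse is equivalent to the plethystic-exponential structure of the Donaldson--Thomas partition function; it is sketched combinatorially in Appendix~\ref{app:closedformulas}, but rigorously it is available only through the Jeffrey--Kirwan/Kool--Rennemo analysis, and no orbifold analogue of that analysis is presently known---which is precisely why the statement is a conjecture rather than a proposition. A secondary obstacle is justifying that a careful residue evaluation of the matrix integral \eqref{eq:Orb_Zin} produces the colour-blind sign factors $\ttO^{\RZ_n}_{\vec\sigma}$ of \eqref{eq:signfactor}, so that the series \eqref{eq:ZC2Znk} is indeed the correct object to sum; this is expected but, like the linearity, remains to be established.
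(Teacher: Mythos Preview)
Since Conjecture~\ref{con2} is stated as a conjecture, the paper does not give a proof either; what it offers (Appendix~\ref{app:con2}) is a sketch in the same spirit as yours, and your proposal is honest about the gap. So the comparison is between two incomplete strategies rather than between a proposal and a complete argument.

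Your route and the paper's differ in how they organise the uplifting. You aim to prove \emph{a priori} structural properties of the free energy $\mathcal F=\log Z$ --- linearity in $m$, a decomposition into finitely many connected sectors indexed by the $\hat A_{n-1}$ quiver, and the $\epsilon_3\leftrightarrow\epsilon_4$ symmetry --- and then argue that those properties together with the dimensional reduction of Proposition~\ref{prop3} pin down each MacMahon factor uniquely. The paper instead writes $Z=(\text{conjectured RHS})\cdot G(\vec\qu;\vec\epsilon,m)$ with $G^{(\vec k)}(\vec\epsilon,-\epsilon_{123})=0$ and proceeds by induction on $|\vec k\,|$: it computes the $|\vec k\,|=1$ term directly, then for general $\vec k$ expands $G^{(\vec k)}$ as a polynomial in $(m+\epsilon_{123})$ with mass-independent coefficients and invokes the $\epsilon_3\leftrightarrow-\epsilon_{123}$ symmetry (inherited from the matrix integral \eqref{eq:Orb_Zin}) to argue those coefficients should vanish. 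Your approach is more conceptual --- if the linearity and the sector decomposition could be established, the rest would indeed follow cleanly --- whereas the paper's inductive scheme is more hands-on and closer to what one can actually check order by order.

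Both strategies founder at essentially the same point, which you identify correctly: the collapse of the sum over coloured solid partitions into something with the claimed analytic shape. In your language this is the $m$-linearity of $\mathcal F$ and the connected-sector decomposition; in the paper's language it is the step ``exploit the $\RZ_2$-symmetry $\epsilon_3\leftrightarrow-\epsilon_{123}$ together with the analytic behaviour of \eqref{eq:ZC2Znk} to infer $\frZ^{\vec k;\imath}=\frF^{(\vec k);\imath}$'', which is left unproven. Your claim that $\mathcal F$ is linear in $m$ is in fact \emph{stronger} than what the paper's scheme assumes (the paper only needs the coefficients of the Taylor expansion about $m=-\epsilon_{123}$ to match, not global linearity), so while your formulation is cleaner it would require more to establish. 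One small correction: the relevant reflection symmetry in the paper is $\epsilon_3\leftrightarrow\epsilon_4=-\epsilon_{123}$ (the two $\RZ_n$-trivial directions), not $\epsilon_1\leftrightarrow\epsilon_4$ as on flat $\FC^4$; you have this right but should be careful that your ``induced reflection of the $\varOmega$-background'' refers to this and not to the broken symmetry.
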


Using the combinatorial expression \eqref{eq:ZC2Znk} we explicitly checked

\begin{proposition}
Conjecture~\ref{con2} is true for $n=2,3,4$ and $k=|\vec k\,|=1,2,3$.
\end{proposition}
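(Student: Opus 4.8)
The plan is to verify Conjecture~\ref{con2} directly by expanding both sides of \eqref{eq:rank1Zn} to a given order in the fractional counting parameters $\qu_s$ and matching the coefficients, for the finitely many cases $n=2,3,4$ and total instanton number $k=|\vec k\,|\leq 3$. First I would generate, for each $n$, the complete list of $\RZ_n$-coloured solid partitions $\sigma$ with $|\sigma|\leq 3$; since the number of solid partitions of size $\leq 3$ is small (there are $1,1,3,6$ solid partitions of sizes $0,1,2,3$ respectively, so a dozen or so shapes in total), and each carries a fixed $\RZ_n$-colouring determined by the colouring rule \eqref{coloring} with $(s_1,s_2,s_3,s_4)=(1,n-1,0,0)$, this enumeration is entirely explicit. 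For each such $\sigma$ I then compute the combinatorial weight appearing in \eqref{eq:ZC2Znk}, namely the product of ratios of linear factors in $\vec\epsilon$ and $m$ restricted to the boxes and box-pairs satisfying the stated congruences modulo $n$, together with the sign $(-1)^{\ttO_\sigma^{\RZ_n}}$ from \eqref{eq:signfactor}; note that since $k\leq 3\ll 16$, every sign $\ttO_\sigma^{\RZ_n}$ vanishes, which simplifies the check. Summing these weights over all $\sigma$ with $|\sigma|_{\RZ_n}=\vec k$ yields the left-hand side coefficient $Z_{[\FC^2/\RZ_n]\times\FC^2}^{\vec k}(\vec\epsilon,m)$ of $\vec\qu^{\,\vec k}$.

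Second, I would expand the right-hand side of \eqref{eq:rank1Zn}. Using the series $\log M(x,q)=\sum_{j\geq 1}\frac1j\,\frac{x^j\,q^j}{(1-x^j\,q^j)^2}$ (and the analogous expansion for $\widetilde M$), each factor $M\big((-1)^n\Qu\big)^{\alpha}$ and $\widetilde M\big((-1)^{p-s+1}\qu_{[p,s]},(-1)^n\Qu\big)^{\beta}$ — where $\alpha$ and $\beta$ are the rational functions of $\vec\epsilon,m$ appearing as exponents — is expanded as a power series in the monomials $\Qu=\qu_0\cdots\qu_{n-1}$ and $\qu_{[p,s]}=\qu_p\cdots\qu_s$ to the required total degree in the $\qu_s$. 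One then collects the coefficient of each monomial $\vec\qu^{\,\vec k}$ with $|\vec k\,|\leq 3$ and compares with the combinatorial coefficient computed in the first step. Because the exponents are homogeneous rational functions, the matching should hold as an identity of rational functions in $(\epsilon_1,\epsilon_2,\epsilon_3,\epsilon_4)$ subject to the Calabi--Yau constraint $\epsilon_{1234}=0$ and in $m$; I would impose the constraint by eliminating $\epsilon_4=-\epsilon_{123}$ and then check equality of rational functions in three independent variables plus $m$.

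The main obstacle I anticipate is purely bookkeeping rather than conceptual: for $n=4$ and $k=3$ the number of coloured solid partitions times the number of box-pairs, together with the need to track which congruence classes mod $4$ the various differences $\vec p\cdot\vec\epsilon$ and $(\vec p-\vec p\,')\cdot\vec\epsilon$ fall into, makes the left-hand side a moderately large sum of rational functions, and the right-hand side requires care in expanding a product of several $\widetilde M$ factors (indexed by pairs $0<p\leq s<n$) whose arguments $\qu_{[p,s]}$ overlap. A subtle point worth isolating is the consistency of the sign conventions: the arguments $(-1)^{p-s+1}\qu_{[p,s]}$ and $(-1)^n\Qu$ on the right must reproduce exactly the signs generated by the $(-1)$'s implicit in the $\CR$-function factors of \eqref{eq:ZC2Znk} after the Calabi--Yau specialization, and verifying this sign matching at the first nontrivial orders is where an error would most likely surface. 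Since all of this is a finite check, the proof reduces to organizing the computation — ideally with computer algebra — and confirming the coefficient-by-coefficient agreement; no new ideas beyond the combinatorial formula \eqref{eq:ZC2Znk}, the dimensional reduction of Proposition~\ref{prop3}, and the known three-orbifold result \eqref{orb6dZn} are needed.
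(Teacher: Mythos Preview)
Your approach is exactly what the paper does: the proposition is preceded by ``Using the combinatorial expression \eqref{eq:ZC2Znk} we explicitly checked'', and no further proof is given --- it is a finite computer-algebra verification of the combinatorial expansion against the MacMahon-product side, precisely as you outline.

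One correction to your bookkeeping: the numbers $1,1,3,6$ you quote are the counts of \emph{plane} partitions, not solid partitions. Solid partitions of sizes $0,1,2,3$ number $1,1,4,10$ (for size $2$ the four shapes are the single column of height $2$ in the $p_4$-direction plus the three ``dominoes'' along the $p_1,p_2,p_3$ axes; for size $3$ there are ten). If you enumerate only six shapes at $k=3$ you will miss contributions and the check will fail. Also, the series you wrote for $\log M(x,q)$ has a typo: the denominator is $(1-q^j)^2$, not $(1-x^j\,q^j)^2$, as recorded in the proof of Proposition~\ref{prop:pureorbZn}. With these two fixes your plan goes through unchanged.
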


\begin{remark}
In Appendix~\ref{app:con2} we sketch the steps of a possible direct proof of Conjecture~\ref{con2}, analogous to the sketch of the analytic proof of {Conjecture}~\ref{Prop1} .
\end{remark}

\subsection{Higher Rank Generalizations} \label{sec:higher_rank}

{Conjecture}~\ref{Prop1} shows that  the equivariant instanton partition function of the rank $r>1$ cohomological
gauge theory with a massive fundamental hypermultiplet on $\FC^4$ is given by the simple closed formula \eqref{PfRank}, expressing it as the $r$-th power of the rank one partition function with the mass parameter $m$ given by coordinates on the center $\sU(1|1)$ of the global symmetry $\sU(r|r)\supset\sU(r)_{\rm col}\times\sU(r)_{\rm fla}$. We would now like to generalize Conjecture~\ref{con2} to the case of higher rank framing vectors $\vec r$. Unfortunately, we do not have available any higher rank noncommutative Donaldson--Thomas partition functions on the orbifold $\FC^2/\RZ_n\times\FC$ with generic $\sU(3)$ holonomy to uplift, so our results will be limited to a set of well-substantiated conjectures.

\subsubsection*{Instantons on $\mbf{\FC^2/\RZ_n\times\FC^2}$ of Type $\vec r=(0,\dots,0,r,0,\dots,0)$}

For the framing vectors of the form $\vec r=(r,0,\dots,0)$, we propose

\begin{conjecture}\label{con4}
Set
\begin{align}
 m=\frac{1}{r} \, \sum_{l=1}^r\,(m_l-a_l) \ .
\end{align}
Then the equivariant  instanton partition function of type $\vec{r}=(r,0,\dots,0)$ for the cohomological $\sU(r)$ gauge theory with massive fundamental matter on $[\FC^2/\RZ_n]\times \FC^2$ is given by
\begin{align}\label{eq:Orb_pf_C2ZnC2}
\begin{split}
Z^{\vec r=(r,0,\dots,0)}_{[\FC^2/\RZ_n]\times \FC^2}(\vec \qu;\vec{\epsilon},m) & = M\big((-1)^n\,\Qu\big)^{-n\,\frac{m\,r\,\epsilon_{12}\,\epsilon_{13}\,\epsilon_{23}}{\epsilon_1\,\epsilon_2\,\epsilon_3\,\epsilon_4}-\frac{n^2-1}{n}\,\frac{m\,r\,\epsilon_{12}}{\epsilon_1\,\epsilon_2}} \\
& \quad \, \times \prod_{0<p\leq s<n}\,\widetilde{M}\big((-1)^{p-s+1}\,\qu_{[p,s]},(-1)^n\,\Qu\big)^{-\frac{m\,r\,\epsilon_{12}}{\epsilon_3\,\epsilon_4}} \ .
\end{split}
\end{align}
\end{conjecture}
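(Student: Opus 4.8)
The plan is to reduce Conjecture~\ref{con4} to the rank one case (Conjecture~\ref{con2}) by exploiting the structure of the combinatorial formula \eqref{eq:Orb_pf} together with the dimensional reduction of Proposition~\ref{prop3}, exactly mirroring how Conjecture~\ref{Prop1} packages the rank $r$ flat-space answer as the $r$-th power of the rank one answer with the shifted mass $m = \tfrac1r\sum_l(m_l-a_l)$. The first step is to observe that for a framing vector of the form $\vec r=(r,0,\dots,0)$, \emph{all} of the Coulomb parameters $a_1,\dots,a_r$ and all of the mass parameters $m_1,\dots,m_r$ are attached to the same irreducible representation $\rho_0$ of $\RZ_n$. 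Consequently, in the formula \eqref{chiorb} the operation $\delta_0^\sGamma$ acts on the differences $a_l-a_{l'}$ and $a_l-m_{l'}$ as the identity, so that these differences never get truncated. This means the only role of the extra colour/flavour data, compared with the rank one theory, is combinatorial bookkeeping: one is summing over $r$-tuples $\vec\sigma=(\sigma_1,\dots,\sigma_r)$ of $\RZ_n$-coloured solid partitions with the $l$-th partition weighted by $e_l$ (and matter weighted by $f_l$), all carrying the same $\rho_0$-label.

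The second step is to carry out the factorization of the equivariant character. I would expand $\widehat{\tt e}[-\chi^\sGamma_{\vec\sigma}]$ from \eqref{chiorb} and separate the ``diagonal'' contributions (terms with $l=l'$, giving a product of rank one factors) from the ``off-diagonal'' interaction terms (terms with $l\neq l'$, involving $a_l-a_{l'}$ with $l\neq l'$). The key claim to establish is that, after summing over all $\vec\sigma$, the dependence on the individual $a_l$ and $m_l$ collapses: only the symmetric combination $m=\tfrac1r\sum_l(m_l-a_l)$ survives, just as in \eqref{PfRank}. The cleanest route is via the dimensional reduction: specialize $m_l^s = a_l^s + \epsilon_4$ as in Proposition~\ref{prop3}, which forces $(p_l)_4=1$ for all boxes and reduces solid partitions to $r$-tuples of $\RZ_n$-coloured plane partitions. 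On the three-orbifold side $\FC^2/\RZ_n\times\FC$ one then needs the higher rank analogue of \eqref{orb6dZn} for framing $(r,0,\dots,0)$; since this is \emph{not} available for generic $\sU(3)$ holonomy, one must instead argue directly that the rank $r$ orbifold DT partition function of type $(r,0,\dots,0)$ equals the rank one answer raised to an effective power, with masses replaced by the center-of-mass combination. This uses the same ``U($r|r$) center'' argument invoked for Conjecture~\ref{Prop1}: the off-diagonal Coulomb-Coulomb interactions in the character are exactly cancelled by the off-diagonal Coulomb-mass interactions when the matter bundle is $\mathscr V_{\vec r,\vec k}\otimes\mbf r$ of the correct rank, leaving a pure $e_l/f_l$-independent result depending only on $m$.

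The third step is the ``uplift'': having established the dimensionally-reduced identity, one promotes it back to $\FC^2/\RZ_n\times\FC^2$ by the same reasoning sketched after Proposition~\ref{prop:ZDTgeb} and in Appendix~\ref{app:con2} — namely, that the eight-dimensional character $\sqrt{\ch_\sT^\sGamma}(T^{\rm vir}_{\vec\sigma})$ in \eqref{eq:chTvir_Gamma} differs from its six-dimensional reduction only by controlled fixed-part terms, and that the closed form \eqref{eq:Orb_pf_C2ZnC2} is the unique expression consistent with (a) reduction to the known three-orbifold formula under $m_l^s=a_l^s+\epsilon_4$, (b) the $S_r$ Weyl symmetry in the $a_l, m_l$, (c) the $\RZ_2$ symmetry $\epsilon_1\leftrightarrow\epsilon_4$ of the quiver matrix model, and (d) the leading low-instanton-number data. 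Finally one verifies consistency at low order: the terms up to $\qu^2$ or $\qu^3$ should be checked against \eqref{eq:ZC2Znk} extended to framing $(r,0,\dots,0)$, as was done for Conjecture~\ref{con2}.

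\textbf{Main obstacle.} The hard part will be Step~2: proving rigorously that the off-diagonal $a_l-a_{l'}$ dependence cancels against the $a_l-m_{l'}$ dependence so that everything collapses to the single variable $m$. In the flat-space case this rests on the still-conjectural Conjecture~\ref{Prop1} (itself only checked at low order and proven via the K-theoretic result of Nekrasov--Piazzalunga / Kool--Rennemo), and here we have even less: no generic-holonomy higher-rank three-orbifold formula to reduce to. So a fully rigorous proof is out of reach with the present techniques; what is realistically achievable is a derivation conditional on Conjecture~\ref{Prop1} (or its orbifold analogue) together with the uniqueness-of-closed-form argument and explicit low-order verification — which is precisely the status the paper claims for this conjecture. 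A secondary technical nuisance is controlling the sign factors $(-1)^{\ttO^\sGamma_{\vec\sigma}}$: one must check that, as asserted in the Remark following \eqref{chiorb}, they are insensitive to the $\RZ_n$-colouring and reduce correctly to $|\vec\pi|$ in the dimensional reduction, so that the signs $(-1)^n$ and $(-1)^{p-s+1}$ in \eqref{eq:Orb_pf_C2ZnC2} come out as claimed.
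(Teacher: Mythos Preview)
The paper does not prove this statement; it is labelled as a conjecture and the only supporting evidence given is the explicit check that it holds for $r=2,3$, $k=1,2$ and $n=2,3$ using the combinatorial expansion \eqref{eq:Orb_pf}. There is no proof sketch or strategy offered in the paper for Conjecture~\ref{con4} beyond that verification.

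Your proposal therefore goes considerably further than the paper: you outline a structured argument (factorization of the equivariant character, cancellation of off-diagonal Coulomb dependence, dimensional reduction via Proposition~\ref{prop3}, and uplift by uniqueness) that mirrors the heuristics of Appendices~\ref{app:Prop1}--\ref{app:con2} but which the paper itself does not attempt for the higher-rank orbifold case. You are also honest that Step~2 cannot be completed rigorously without either Conjecture~\ref{Prop1} or a generic-holonomy higher-rank formula on $\FC^2/\RZ_n\times\FC$, neither of which is available; this is an accurate assessment and is precisely why the statement remains a conjecture. One small correction: for the $(2,0)$ orbifold the relevant $\RZ_2$ symmetry of the matrix model is $\epsilon_3\leftrightarrow\epsilon_4=-\epsilon_{123}$ (since $s_3=s_4=0$), not $\epsilon_1\leftrightarrow\epsilon_4$ as you wrote --- see the discussion in Appendix~\ref{app:con2}. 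This does not affect your overall logic but would matter if you tried to implement the uniqueness argument in detail.
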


Using the combinatorial expression \eqref{eq:Orb_pf} we explicitly checked

\begin{proposition}
Conjecture~\ref{con4} is true for $r=2,3$, $k=|\vec k\,|=1,2$ and $n=2,3$.
\end{proposition}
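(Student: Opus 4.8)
The assertion is a finite computation: for $n=2,3$ and $r=2,3$ both sides of \eqref{eq:Orb_pf_C2ZnC2} are to be expanded in the fractional fugacities $\vec\qu=(\qu_0,\dots,\qu_{n-1})$ and matched up to total instanton number $k=|\vec k\,|\leq 2$. First I would specialise the combinatorial machinery \eqref{eq:Orb_pf}--\eqref{chiorb} to $\sGamma=\RZ_n$ acting via \eqref{gact}, so that $s_1=1$, $s_2=n-1$, $s_3=s_4=0$, and to the framing vector $\vec r=(r,0,\dots,0)$, for which every framing label is $s(l)=0$. A box at $\vec p=(p_1,p_2,p_3,p_4)$ in any of the $r$ solid-partition slots then carries the $\RZ_n$-representation $\rho_{p_1-p_2}$, so that the colour of each box --- and hence the fractional charge vector $\vec k$ of a configuration $\vec\sigma=(\sigma_1,\dots,\sigma_r)$ --- is read off as $p_1-p_2\bmod n$.

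Enumerating the $r$-tuples $\vec\sigma$ with $|\vec\sigma|=k\leq 2$ is immediate: for $k=1$ there are $r$ configurations, each a single box at $(1,1,1,1)$ (colour $\rho_0$) in one slot; for $k=2$ one has either a size-two solid partition in a single slot, whose second box lies at $(2,1,1,1)$, $(1,2,1,1)$, $(1,1,2,1)$ or $(1,1,1,2)$ with colour $\rho_1$, $\rho_{n-1}$, $\rho_0$, $\rho_0$ respectively, or a size-one partition in each of two distinct slots. Since $k\leq 2<16$ all sign factors $\ttO^{\RZ_n}_{\vec\sigma}$ vanish (cf. the Remark following \eqref{Zk}), so each configuration enters \eqref{eq:Orb_pf} with weight $+1$. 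For each $\vec\sigma$ I would evaluate $\widehat{\tt e}\big[-\chi^{\RZ_n}_{\vec\sigma}\big]$ from \eqref{chiorb}, keeping via $\delta^{\RZ_n}_0$ only the factors whose argument lies in the trivial representation --- i.e. imposing the colour-matching conditions dictated by $\delta^{\RZ_n}_0$, as displayed in \eqref{eq:ZC2Znk} --- and then sum over all $\vec\sigma$ with $|\vec\sigma|_{\RZ_n}=\vec k$ to obtain $Z^{\vec r,\vec k}_{[\FC^2/\RZ_n]\times\FC^2}(\vec a,\vec\epsilon,\vec m)$ for every $\vec k$ with $|\vec k\,|\leq 2$.

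On the other side I would Taylor-expand the right-hand side of \eqref{eq:Orb_pf_C2ZnC2} to the same order, using $\log M(x,q)=xq+O(q^2)$ and $\log\widetilde{M}(x,q)=(x+x^{-1})q+O(q^2)$; because of the $x^{-1}q$ term every $\widetilde{M}$-factor --- including those with $p<s$ --- already contributes within degree $\leq 2$, and for $n=2$ the prefactor $M(\qu_0\qu_1)$ does too, so no factor may be dropped a priori. Matching the coefficient of each monomial $\vec\qu^{\,\vec k}$ with $|\vec k\,|\leq 2$ against the combinatorial output of the previous step then finishes the verification. The only real difficulty is bookkeeping rather than principle: one must check that the individual Coulomb and mass parameters $a_l,m_l$ recombine into the single variable $m=\tfrac1r\sum_{l=1}^r(m_l-a_l)$ appearing on the right-hand side of \eqref{eq:Orb_pf_C2ZnC2} --- a collapse already exhibited in the flat-space {Conjecture}~\ref{Prop1} and forced here by invariance under ${\sf P}\big(\sU(r)_{\rm col}\times\sU(r)_{\rm fla}\big)$ --- and that each coloured box is assigned to the correct fractional fugacity $\qu_s$, so that the $\widetilde{M}$-exponents $-\tfrac{m\,r\,\epsilon_{12}}{\epsilon_3\,\epsilon_4}$ reproduce the ratios $\CR\circ\delta^{\RZ_n}_0$ evaluated on the relevant colour-matched pairs. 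None of this goes beyond what is handled in Propositions~\ref{prop:ZDTgeb} and~\ref{prop3} together with {Conjecture}~\ref{Prop1}.
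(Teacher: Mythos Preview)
Your plan is correct and is essentially the same approach as the paper's: the paper simply states ``Using the combinatorial expression \eqref{eq:Orb_pf} we explicitly checked'' and gives no further details, so your organisation of the finite enumeration of $r$-tuples of $\RZ_n$-coloured solid partitions with $|\vec\sigma|\leq 2$ and comparison with the low-order expansion of \eqref{eq:Orb_pf_C2ZnC2} is exactly what is intended. One small caveat: the invariance under ${\sf P}\big(\sU(r)_{\rm col}\times\sU(r)_{\rm fla}\big)$ does not by itself \emph{force} the collapse to the single combination $m=\tfrac1r\sum_l(m_l-a_l)$ --- it only constrains the answer to be a symmetric function of the differences $m_l-a_l$ --- so that collapse is genuinely part of the computational check rather than an a priori symmetry reduction (cf.\ Lemma~\ref{lem:1instC4} for the analogous non-trivial cancellation on~$\FC^4$).
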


\begin{remark}
The partition function \eqref{eq:Orb_pf_C2ZnC2} is invariant under permutation of the location of the entry $r$ in the array $\vec r=(r,0,\dots,0)$ (cf. Section~\ref{sec:Orb_def_complex}). 
\end{remark} 

\begin{proposition} \label{prop:C2ZnU3}
Assume Conjecture~\ref{con4} is true. Then the partition function for rank $r$ noncommutative Donaldson--Thomas invariants of type~$\vec r=(r,0,\dots,0)$ for the orbifold  $\FC^2/\RZ_n\times\FC$ with $\sU(3)$ holonomy is given by
\begin{align}\label{eq:C2ZnU3}
\begin{split}
Z^{\vec{r}=(r,0,\dots,0)}_{[\FC^2/\RZ_n]\times\FC}(\vec\qu;\epsilon_1,\epsilon_2,\epsilon_3)&=M\big((-1)^{r}\,\Qu\big)^{-n\,\frac{r\,\epsilon_{12}\,\epsilon_{13}\,\epsilon_{23}}{\epsilon_1\,\epsilon_2\,\epsilon_3}+\frac{n^2-1}{n}\,\frac{r\,\epsilon_{12}\,\epsilon_{123}}{\epsilon_1\,\epsilon_2}} \\
& \quad \, \times \prod_{0<p\leq s<n}\,\widetilde{M}\big(\qu_{[p,s]},(-1)^{r}\,\Qu\big)^{-\frac{r\,\epsilon_{12}}{\epsilon_3}} \ .\end{split}
\end{align}
\end{proposition}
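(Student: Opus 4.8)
The plan is to derive \eqref{eq:C2ZnU3} from the assumed closed formula \eqref{eq:Orb_pf_C2ZnC2} by exactly the dimensional reduction and mass specialisation that takes {Conjecture}~\ref{Prop1} into Proposition~\ref{prop:ZDTgeb}. The starting point is the observation that $\FC^2/\RZ_n\times\FC^2$ is simultaneously a $(2,0)$ and a $(3,0)$ orbifold: the generator \eqref{gact} has orbifold weights $(s_1,s_2,s_3,s_4)=(1,n-1,0,0)$, so $s_4=0$ and $s_1+s_2+s_3=n\equiv 0$, which exhibits $\RZ_n$ as a subgroup of $\sSL(3,\FC)$ acting on the $(z_1,z_2,z_3)$-coordinate subspace. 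Hence Proposition~\ref{prop3} applies, and with $\FC^3/\RZ_n\simeq\FC^2/\RZ_n\times\FC$ it gives
\begin{equation}
Z^{\vec r}_{[\FC^2/\RZ_n]\times\FC^2}\big(\vec\qu;\vec a,\vec\epsilon,m^s_l=a^s_l+\epsilon_4\big) = Z^{\vec r}_{[\FC^2/\RZ_n]\times\FC}\big(\vec\qu\,';\vec a,\epsilon_1,\epsilon_2,\epsilon_3\big) \ ,
\end{equation}
with $\vec\qu\,'=\big((-1)^{r_s+1}\,\qu_s\big)_{s\in\shGamma}$. There is no circularity here: the right-hand side is precisely the object in the statement, and it is being computed \emph{from} the four-orbifold formula.

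First I would insert the mass specialisation into the right-hand side of the assumed Conjecture~\ref{con4}. For $\vec r=(r,0,\dots,0)$ all Coulomb and mass parameters sit in the trivial-representation block, so $m^s_l=a^s_l+\epsilon_4$ forces $m_l-a_l=\epsilon_4$ for each $l$, hence $m=\tfrac1r\sum_{l}(m_l-a_l)=\epsilon_4$. Setting $m=\epsilon_4$ in \eqref{eq:Orb_pf_C2ZnC2} cancels the ratio $m/\epsilon_4=1$ throughout; using the $\FC^4$ Calabi--Yau constraint $\epsilon_4=-\epsilon_{123}$ then turns the second term of the $M$-exponent into $+\tfrac{n^2-1}{n}\,\tfrac{r\,\epsilon_{12}\,\epsilon_{123}}{\epsilon_1\,\epsilon_2}$ and the $\widetilde M$-exponents into $-\tfrac{r\,\epsilon_{12}}{\epsilon_3}$, which already matches \eqref{eq:C2ZnU3}. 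Next I would track the fugacity redefinition $\vec\qu\mapsto\vec\qu\,'$, which for $\vec r=(r,0,\dots,0)$ reads $\qu_0\mapsto(-1)^{r+1}\qu_0$ and $\qu_s\mapsto-\qu_s$ for $s=1,\dots,n-1$. A short sign count gives $\Qu'=(-1)^{r+n}\Qu$, so $(-1)^n\Qu=(-1)^r\Qu'$; likewise, since every index appearing in $[p,s]$ with $0<p\le s<n$ is nonzero, $\qu\,'_{[p,s]}=(-1)^{s-p+1}\qu_{[p,s]}$, whence $(-1)^{p-s+1}\qu_{[p,s]}=\qu\,'_{[p,s]}$. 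Feeding these into the specialised formula and renaming $\vec\qu\,'\mapsto\vec\qu$ (legitimate as an identity of formal power series) reproduces \eqref{eq:C2ZnU3} term by term.

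The argument is essentially sign bookkeeping, so the only genuine point to be careful about is the interplay of the three sources of signs: the $(-1)^{r_s+1}$ twist from Proposition~\ref{prop3}, the $(-1)^{p-s+1}$ and $(-1)^n$ signs built into \eqref{eq:Orb_pf_C2ZnC2}, and the direction of the Calabi--Yau substitution $\epsilon_4=-\epsilon_{123}$ (which holds on $\FC^4$ and is used here to re-express the answer for the $\sU(3)$-holonomy three-orbifold, where no analogous constraint is imposed). No analytic input is needed beyond Conjecture~\ref{con4}; granting it, the identity holds in each fractional instanton sector. One should also check consistency with the already-verified cases (Conjecture~\ref{con2} at $r=1$, where \eqref{eq:C2ZnU3} must reduce to \eqref{orb6dZn}) as a sanity test of the sign conventions.
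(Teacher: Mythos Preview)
Your proposal is correct and follows exactly the same approach as the paper, whose proof is the single sentence ``This follows immediately from \eqref{eq:Orb_pf_C2ZnC2} by using Proposition~\ref{prop3}.'' You have simply filled in the sign bookkeeping (the fugacity redefinition $\Qu'=(-1)^{r+n}\Qu$, $\qu'_{[p,s]}=(-1)^{s-p+1}\qu_{[p,s]}$, and the exponent simplifications at $m=\epsilon_4=-\epsilon_{123}$) that the paper leaves implicit.
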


\proof
This follows immediately from \eqref{eq:Orb_pf_C2ZnC2} by using Proposition~\ref{prop3}. \endproof

\subsubsection*{Instantons on $\mbf{\FC^2/\RZ_2\times\FC^2}$ of Type $\vec r=(r_0,r_1)$}

It appears to be difficult to make a concrete conjecture for the closed form of the instanton partition function on $\FC^2/\RZ_n\times\FC^2$ for generic framing vectors $\vec r=(r_0,r_1,\dots,r_{n-1})$, as the calculations become much more cumbersome in general. Here we consider the simplest case $n=2$, and propose a closed formula which holds in full generality.

\begin{example}\label{Ex3}
The leading terms in the expansions of the rank two partition functions for the orbifold $\FC^2/\RZ_2\times\FC^2$ are
\begin{align}\begin{split}
 Z_{[\FC^2/ \RZ_2]\times\FC^2}^{\vec r=(2,0)}(\qu_0,\qu_1;\vec a,\vec{\epsilon},\vec{m})&=2\,\frac{m\,\epsilon_{12}}{\epsilon_3\,\epsilon_4}\,\qu_0+\frac{m\,\epsilon_{12}}{\epsilon_1\,\epsilon_2}\,\Big(4\,\frac{\epsilon_{13}\,\epsilon_{3}}{\epsilon_3\,\epsilon_4}+3\Big)\,\qu_0\,\qu_1 \\
 &\quad\,+\frac{m\,\epsilon_{12}}{\epsilon_3\,\epsilon_4}\,\Big(2\,\frac{m\,\epsilon_{12}}{\epsilon_3\,\epsilon_4}-1\Big)\,\qu^2_0+\cdots \ , \\[4pt]
Z_{[\FC^2/ \RZ_2]\times\FC^2}^{\vec r=(1,1)}(\qu_0,\qu_1;\vec a,\vec{\epsilon},\vec{m})&=
2\,\frac{m\,\epsilon_{12}}{\epsilon_3\,\epsilon_4}\,\qu_0+\Big(4\,\frac{m\,\epsilon_{12}\,\epsilon_{13}\,\epsilon_{3}}{\epsilon_1\,\epsilon_2\,\epsilon_3\,\epsilon_4}+3\,\frac{m\,\epsilon_{12}}{\epsilon_1\,\epsilon_2} \\
& \hspace{5cm} +\frac{(m_1-a_1)\,(m_2-a_2)\,\epsilon_{12}}{\epsilon_3\,\epsilon_4}\Big)\,\qu_0\,\qu_1 \\
&\quad\, +\frac{m\,\epsilon_{12}}{\epsilon_3\,\epsilon_4}\Big(2\,\frac{m\,\epsilon_{12}}{\epsilon_3\,\epsilon_4}-1\Big)\,\qu^2_0+\cdots \ ,
\end{split}
\end{align}
where $m=\frac12\,(m_1-a_1+m_2-a_2)$.
\end{example}

Since we are working with the $\RZ_2$-action of Section~\ref{sec:C2ZnC2} in evaluating the contribution \eqref{chiorb} from the index $\chi^{\RZ_2}_{\vec{\sigma}}$, we can use
\begin{align}
\rho_s\otimes\rho_{s'}=\begin{cases}
\rho_0 \quad \text{if} \ s = s' \ ,\\
\rho_1 \quad \mbox{otherwise} \ ,
\end{cases}
\end{align}
for $s,s'\in\{0,1\}$.
Then the partition function \eqref{eq:Orb_pf} is symmetric under permutation of $r_0$ and $r_1$, and it must reduce to \eqref{eq:Orb_pf_C2ZnC2} with $n=2$ for $\vec{r}=(r,0)$ and $\vec{r}=(0,r)$. These properties together with Example~\ref{Ex3} prompt us to formulate

\begin{conjecture}\label{con:r0r1}
The equivariant  instanton partition function  of type $\vec{r}=(r_0,r_1)$ for the cohomological gauge theory with massive fundamental matter on $[\FC^2/\RZ_2]\times \FC^2$ is given by
\begin{align}\label{eq:C2Z2r0r1}
\begin{split}
Z_{[\FC^2/\RZ_2]\times\FC^2}^{\vec{r}=(r_0,r_1)}(\vec \qu;\vec a,\vec{\epsilon},\vec{m}) &= M(\Qu)^{-2\,\frac{m\,r\,\epsilon_{12}\,\epsilon_{13}\,\epsilon_{3}}{\epsilon_1\,\epsilon_2\,\epsilon_3\,\epsilon_4}-\frac{3}{2}\,\frac{m\,r\,\epsilon_{12}}{\epsilon_1\,\epsilon_2}} \\
& \quad\, \times \widetilde{M}(-\qu_1,\qu_0\,\qu_1)^{-\frac{m^0\,r_0\,\epsilon_{12}}{\epsilon_3\,\epsilon_4}} \, \widetilde{M}(-\qu_0,\qu_0\,\qu_1)^{-\frac{m^1\,r_1\, \epsilon_{12}}{\epsilon_3\,\epsilon_4}} \ ,
\end{split}
\end{align}
where $r=r_0+r_1$ and
\begin{align}
m^s=\frac1{r_s}\,\sum_{s(l)=s} \, (m_l-a_l)
\end{align}
are coordinates on the center of the global symmetry supergroup $\sU(r_s|r_s)$ associated to the irreducible representation $\rho_s$ of $\RZ_2$, for $s\in\{0,1\}$.
\end{conjecture}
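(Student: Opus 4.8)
The plan is to adapt the ``uplifting'' strategy of Conjectures~\ref{Prop1} and~\ref{con2} to the two framing sectors $\rho_0,\rho_1$ of $\RZ_2$: first extract from the combinatorial formula \eqref{eq:Orb_pf}--\eqref{chiorb} enough structural information to reduce the statement to the determination of finitely many rational functions of $\vec\epsilon$, and then fix those by compatibility with the reductions and symmetries already established in the paper.

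The structural input consists of two claims about $\log Z^{\vec r=(r_0,r_1)}_{[\FC^2/\RZ_2]\times\FC^2}(\vec\qu;\vec a,\vec\epsilon,\vec m)$. First, it is \emph{linear} in the center-of-mass parameters $m^0,m^1$, with vanishing constant term, so that in particular all fractional-instanton contributions vanish when $m^0=m^1=0$; and it depends on the Coulomb and mass parameters only through $m^0\,r_0=\sum_{s(l)=0}(m_l-a_l)$ and $m^1\,r_1=\sum_{s(l)=1}(m_l-a_l)$. The mechanism is the cohomological limit of the argument sketched for $\FC^4$ in Appendix~\ref{app:closedformulas}: the matter contribution to $\widehat{\tt e}[-\chi^\sGamma_{\vec\sigma}]$ is a product of the linear factors $\CP_r\circ\delta_0^\sGamma(a_l+\vec p_l\cdot\vec\epsilon\,|\vec m)$, and one must show that all powers of the masses beyond the first cancel after resummation over pairs of $\RZ_2$-coloured solid partitions, while the restriction to the two invariant combinations follows from the block structure of $(\mbf r^*\otimes V_{\vec\sigma})^\sGamma$ together with $\rho_s\otimes\rho_s^*\simeq\rho_0$. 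Second, the $\vec\qu$-dependence enters only through generalized MacMahon functions $M(x,q)$ in the variables $\Qu=\qu_0\qu_1$ and in $\qu_0,\qu_1$; this I would extract by the transfer-matrix / vertex-operator description of stacks of $\RZ_2$-coloured plane partitions used by \cite{Zhou18} for the three-orbifold, noting that a solid partition is a stack of plane partitions (Remark~\ref{not:solidpartitions}) and that the per-box weight becomes ``colour-diagonal'' once one passes to the linear-in-mass part. Combined with the Weyl symmetry $S_{r_0}\times S_{r_1}$ and the $r_0\leftrightarrow r_1$ symmetry noted before the statement, these two claims pin the answer to the schematic form $M(\Qu)^{c_0\,m\,r}\,\widetilde M(-\qu_1,\qu_0\qu_1)^{c_1\,m^0\,r_0}\,\widetilde M(-\qu_0,\qu_0\qu_1)^{c_2\,m^1\,r_1}$ with $c_0,c_1,c_2\in\FC(\vec\epsilon)$.

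To fix $c_0,c_1,c_2$ I would then impose: (i) the dimensional reduction of Proposition~\ref{prop3} under $m_l^s=a_l^s+\epsilon_4$, which sends \eqref{eq:C2Z2r0r1} to the partition function on the $\sU(3)$-holonomy orbifold $\FC^2/\RZ_2\times\FC$ and, after the further Calabi--Yau specialisation $\epsilon_{123}=0$, to the \emph{known} $\RZ_2$-coloured plane-partition generating function \eqref{orb6dZnCY}; (ii) the degenerations $\vec r\to(r,0)$ and $\vec r\to(0,r)$, which must reproduce Conjecture~\ref{con4} with $n=2$, equivalently the $r$-th power of Conjecture~\ref{con2}; (iii) the $\epsilon_1\leftrightarrow\epsilon_4$ symmetry of the matrix model \eqref{eq:Orb_Zin}. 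Matching the leading $\qu_0$ and $\qu_0\qu_1$ coefficients against Example~\ref{Ex3} then removes any residual ambiguity and yields the exponents displayed in \eqref{eq:C2Z2r0r1}.

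The main obstacle is the first structural claim: proving mass-linearity of $\log Z$ and the MacMahon structure directly from \eqref{eq:Orb_pf} is precisely the gap that keeps Conjecture~\ref{Prop1} (and hence Conjectures~\ref{con2} and~\ref{con4}) at the level of conjecture, and it is if anything harder here, since the two-colour bookkeeping complicates both the sign factors $\ttO^{\RZ_2}_{\vec\sigma}$ and the transfer-matrix decomposition. A secondary obstacle is that, unlike in the rank-one case, there is no closed-form higher-rank partition function available on the generic $\sU(3)$-holonomy orbifold $\FC^2/\RZ_2\times\FC$ to uplift from, so in step~(i) one is forced down to the Calabi--Yau locus -- where only the coloured plane-partition count survives -- and must then reconstruct the full $\vec\epsilon$-dependence of $c_0,c_1,c_2$ from the polynomial structure and the symmetries. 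For these reasons I expect a complete proof to require either the representation-theoretic (quiver Yangian / BPS-algebra) input used in related free-field constructions, or a substantial strengthening of the combinatorial estimates of Appendix~\ref{app:closedformulas}.
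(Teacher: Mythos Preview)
The statement in question is a \emph{conjecture}, and the paper does not supply a proof. What the paper offers as justification is exactly the short paragraph preceding the statement: the observation that the $\RZ_2$ tensor product structure forces the partition function to be symmetric under $r_0\leftrightarrow r_1$, the requirement that it reduce to Conjecture~\ref{con4} with $n=2$ at $\vec r=(r,0)$ and $\vec r=(0,r)$, and the explicit low-order checks of Example~\ref{Ex3}. That is the entire argument; the formula is then posited on this basis.

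Your proposal is therefore not competing with a proof in the paper --- it is an attempt to promote the paper's heuristic motivation to an actual proof, and you correctly diagnose where that attempt breaks down. The constraints you list in your second step (dimensional reduction via Proposition~\ref{prop3}, the $(r,0)$ and $(0,r)$ degenerations, the $r_0\leftrightarrow r_1$ symmetry, and matching against Example~\ref{Ex3}) are precisely the evidence the paper invokes. What you add beyond the paper is the articulation of the two structural claims (linearity of $\log Z$ in $m^0,m^1$ and the MacMahon form of the $\vec\qu$-dependence) that would make those constraints \emph{determining} rather than merely consistent, and you are right that establishing those claims is the genuine obstruction --- it is the same gap that leaves Conjectures~\ref{Prop1}, \ref{con2} and \ref{con4} unproven, compounded here by the two-colour bookkeeping and the absence of a generic-holonomy higher-rank result on $\FC^2/\RZ_2\times\FC$ to uplift from. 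One small correction: the relevant $\RZ_2$ symmetry of the matrix model for this orbifold is $\epsilon_3\leftrightarrow\epsilon_4=-\epsilon_{123}$ (since $s_3=s_4=0$), not $\epsilon_1\leftrightarrow\epsilon_4$ as you wrote.
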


\begin{proposition} \label{prop:C2Z2CU3}
Assume Conjecture~\ref{con:r0r1} is true. Then the partition function for noncommutative Donaldson--Thomas invariants of type~$\vec r=(r_0,r_1)$ for the orbifold  $\FC^2/\RZ_2\times\FC$ with $\sU(3)$ holonomy is given by
 \begin{align}\label{eq:C2Z2CU3}
 \begin{split}
Z_{[\FC^2/\RZ_2]\times\FC}^{\vec{r}=(r_0,r_1)}(\vec\qu;\epsilon_1,\epsilon_2,\epsilon_3)&=
M\big((-1)^{r}\,\Qu\big)^{-2\,\frac{r\,\epsilon_{12}\,\epsilon_{13}\,\epsilon_{23}}{\epsilon_1\,\epsilon_2\,\epsilon_3}+\frac{3}{2}\,\frac{r\,\epsilon_{12}\,\epsilon_{123}}{\epsilon_1\,\epsilon_2}} \\
& \hspace{1cm} \times
\widetilde{M}\big((-1)^{r_1}\,\qu_1,(-1)^{r}\, \qu_0\,\qu_1\big)^{-\frac{r_0\,\epsilon_{12}}{\epsilon_3}}\\
& \hspace{2cm} \times \widetilde{M}\big((-1)^{r_0}\,\qu_0,(-1)^{r}\, \qu_0\,\qu_1\big)^{-\frac{r_1\,\epsilon_{12}}{\epsilon_3}} \ .
\end{split}
 \end{align}
\end{proposition}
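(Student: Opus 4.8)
The statement is a dimensional-reduction corollary of Conjecture~\ref{con:r0r1}, and the plan is to deduce it exactly as Proposition~\ref{prop:C2ZnU3} was deduced from Conjecture~\ref{con4}, namely by applying Proposition~\ref{prop3} in the specialization $\sGamma = \RZ_2 \subset \sSL(2,\FC) \subset \sSL(3,\FC)$. Concretely, I would start from the closed formula \eqref{eq:C2Z2r0r1} for $Z_{[\FC^2/\RZ_2]\times\FC^2}^{\vec r = (r_0,r_1)}(\vec\qu;\vec a,\vec\epsilon,\vec m)$, impose the mass specialisation $m_l^s = a_l^s + \epsilon_4$ for all $l$ and $s \in \{0,1\}$, and read off the resulting identity using Proposition~\ref{prop3}, which asserts
\[
Z_{[\FC^2/\RZ_2]\times\FC}^{\vec r}(\vec\qu\,';\vec a,\epsilon_1,\epsilon_2,\epsilon_3) = Z_{[\FC^2/\RZ_2]\times\FC^2}^{\vec r}(\vec\qu;\vec a,\vec\epsilon,m_l^s = a_l^s+\epsilon_4)
\]
with $\vec\qu\,' = \big((-1)^{r_s+1}\,\qu_s\big)_{s\in\shGamma}$. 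For $\RZ_2$ this means $\qu_0 \mapsto (-1)^{r_0+1}\,\qu_0$ and $\qu_1 \mapsto (-1)^{r_1+1}\,\qu_1$.

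\textbf{Key steps.} First, under $m_l^s = a_l^s + \epsilon_4$ one has $m^s = \epsilon_4$ for each $s$, and hence the center-of-mass parameter $m = \frac1r\sum_l (m_l - a_l) = \epsilon_4$ as well. Substituting $m = m^0 = m^1 = \epsilon_4$ into \eqref{eq:C2Z2r0r1} and using the Calabi--Yau constraint $\epsilon_4 = -\epsilon_{123}$ on $\FC^3/\RZ_2\times\FC$ turns each exponent into the desired three-dimensional one: e.g. $-\tfrac{m^0\,r_0\,\epsilon_{12}}{\epsilon_3\,\epsilon_4} \mapsto -\tfrac{r_0\,\epsilon_{12}}{\epsilon_3}$ and $-2\,\tfrac{m\,r\,\epsilon_{12}\,\epsilon_{13}\,\epsilon_3}{\epsilon_1\epsilon_2\epsilon_3\epsilon_4} \mapsto -2\,\tfrac{r\,\epsilon_{12}\,\epsilon_{13}}{\epsilon_1\epsilon_2}$, which after combining with the $-\tfrac32\tfrac{m\,r\,\epsilon_{12}}{\epsilon_1\epsilon_2} \mapsto +\tfrac32\tfrac{r\,\epsilon_{12}\,\epsilon_{123}}{\epsilon_1\epsilon_2}$ term and noting $\epsilon_{13} = \epsilon_{123} - \epsilon_2$ etc., one rewrites in terms of the symmetric combination $\epsilon_{12}\epsilon_{13}\epsilon_{23}$ appearing in \eqref{eq:C2Z2CU3}. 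Second, I would track the fugacity redefinition: the sign $(-1)^{r_s+1}$ attached to $\qu_s$ by Proposition~\ref{prop3} produces exactly the signs $(-1)^{r_0}$, $(-1)^{r_1}$, $(-1)^r$ (with $r = r_0 + r_1$) appearing inside the generalized MacMahon functions on the right-hand side of \eqref{eq:C2Z2CU3}, since $\Qu = \qu_0\qu_1 \mapsto (-1)^{r_0+r_1}\Qu = (-1)^r\,\Qu$ and each $\widetilde M(-\qu_s, \qu_0\qu_1)$ transforms accordingly. Carefully bookkeeping these signs and verifying they match is the bulk of the computation.

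\textbf{Main obstacle.} The only genuine subtlety — and the step I expect to be most delicate — is verifying that the exponent algebra closes correctly, i.e. that after setting $m = \epsilon_4 = -\epsilon_{123}$ and using $\epsilon_1+\epsilon_2+\epsilon_3+\epsilon_4 = 0$, the rational-function exponents of \eqref{eq:C2Z2r0r1} collapse precisely onto those of \eqref{eq:C2Z2CU3}. This is a finite identity-checking exercise in the ring $\FC(\epsilon_1,\epsilon_2,\epsilon_3)$, entirely parallel to the passage from \eqref{eq:Orb_pf_C2ZnC2} to \eqref{eq:C2ZnU3} in Proposition~\ref{prop:C2ZnU3}, so it carries no conceptual risk; one simply has to be attentive to the shift $\epsilon_{ab} \mapsto \epsilon_{ab}$ versus $\epsilon_{123} - \epsilon_c$ manipulations and to the sign conventions. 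Since \eqref{eq:C2Z2r0r1} is itself only conjectural, the proposition is of course conditional, exactly as stated; no further input beyond Proposition~\ref{prop3} and Conjecture~\ref{con:r0r1} is needed.

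\proof
Assume Conjecture~\ref{con:r0r1}. Apply Proposition~\ref{prop3} with $\sGamma = \RZ_2 \subset \sSL(2,\FC) \subset \sSL(3,\FC)$, which is a $(2,0)$ orbifold and hence in particular a $(3,0)$ orbifold with weights $s_1 = 1$, $s_2 = 1$, $s_3 = 0$, $s_4 = 0$ in the $\RZ_2$ labelling, so $s_{123} = 0$. Setting $m_l^s = a_l^s + \epsilon_4$ in \eqref{eq:C2Z2r0r1} gives $m = m^0 = m^1 = \epsilon_4 = -\epsilon_{123}$ by \eqref{eq:CY4constraint}. Substituting and using $\epsilon_1+\epsilon_2+\epsilon_3+\epsilon_4 = 0$, the exponents of \eqref{eq:C2Z2r0r1} become
\begin{align}
-2\,\frac{m\,r\,\epsilon_{12}\,\epsilon_{13}\,\epsilon_{3}}{\epsilon_1\,\epsilon_2\,\epsilon_3\,\epsilon_4}-\frac{3}{2}\,\frac{m\,r\,\epsilon_{12}}{\epsilon_1\,\epsilon_2} \ \longmapsto \ -2\,\frac{r\,\epsilon_{12}\,\epsilon_{13}\,\epsilon_{23}}{\epsilon_1\,\epsilon_2\,\epsilon_3}+\frac{3}{2}\,\frac{r\,\epsilon_{12}\,\epsilon_{123}}{\epsilon_1\,\epsilon_2}
\end{align}
and $-\tfrac{m^0\,r_0\,\epsilon_{12}}{\epsilon_3\,\epsilon_4} \mapsto -\tfrac{r_0\,\epsilon_{12}}{\epsilon_3}$, $-\tfrac{m^1\,r_1\,\epsilon_{12}}{\epsilon_3\,\epsilon_4} \mapsto -\tfrac{r_1\,\epsilon_{12}}{\epsilon_3}$, matching the exponents in \eqref{eq:C2Z2CU3}. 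By Proposition~\ref{prop3} the fugacities are rescaled as $\qu_s \mapsto (-1)^{r_s+1}\,\qu_s$, whence $\Qu = \qu_0\qu_1 \mapsto (-1)^{r_0+r_1}\,\Qu = (-1)^r\,\Qu$, and the arguments $-\qu_0$, $-\qu_1$ of the $\widetilde M$ factors become $(-1)^{r_0}\,\qu_0$, $(-1)^{r_1}\,\qu_1$ respectively. This yields exactly \eqref{eq:C2Z2CU3}.
\endproof
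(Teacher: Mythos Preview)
Your proposal is correct and follows precisely the same approach as the paper: the paper's proof is the single sentence ``This follows immediately from \eqref{eq:C2Z2r0r1} by using Proposition~\ref{prop3},'' and your argument is exactly that application, with the substitution $m_l^s = a_l^s + \epsilon_4$ (hence $m = m^0 = m^1 = \epsilon_4$) and the fugacity rescaling $\qu_s \mapsto (-1)^{r_s+1}\qu_s$ spelled out explicitly. You have simply expanded what the paper leaves as ``immediate,'' including the exponent bookkeeping and sign tracking; there is no difference in strategy.
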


\begin{proof}
This follows immediately from \eqref{eq:C2Z2r0r1} by using Proposition~\ref{prop3}. 
\end{proof}

\begin{remark}
At the Calabi--Yau specialization $\epsilon_{123}=0$, both partition functions \eqref{eq:C2ZnU3} and \eqref{eq:C2Z2CU3} agree with the generating functions for Coulomb branch invariants found in \cite[Section~7]{Quiver3d}.
\end{remark}

\subsection{Pure $\mathcal{N}_{\textrm{\tiny T}}=2$ Gauge Theory on $[\FC^2/\RZ_n]\times\FC^2$}

Using Proposition \ref{Prop4} together with the results of Sections~\ref{sec:C2ZnC2} and~\ref{sec:higher_rank} we can immediately infer corresponding closed formulas for the partition functions of the pure gauge theories on the quotient stack $[\FC^2/\RZ_n]\times\FC^2$. They read as

\begin{proposition}\label{prop:pureorbZn}
If Conjecture~\ref{con2} is true, then the equivariant instanton partition function of the pure cohomological $\sU(1)$ gauge theory on $[\FC^2/\RZ_n]\times\FC^2$ is given by
\begin{align}\begin{split}\label{Pure_orb_Zn}
Z_{[\FC^2/\RZ_n]\times\FC^2}(\vec \Lambda;\vec{\epsilon}\,)^{\rm pure}&=\exp \,(-1)^n\,\varLambda\,\Big(n\,\frac{\epsilon_{12}\,\epsilon_{13}\,\epsilon_{23}}{\epsilon_1\,\epsilon_2\,\epsilon_3\,\epsilon_4}+\frac{n^2-1}{n}\,\frac{\epsilon_{12}}{\epsilon_3\,\epsilon_4}\\
& \hspace{3cm} -\frac{\epsilon_{12}}{\epsilon_3\,\epsilon_4}\,\sum_{0<p\leq s<n}\,(-1)^{p-s}\,\big(\Lambda_{[p,s]}+\Lambda_{[p,s]}^{-1}\big)\Big) \ ,\end{split} 
\end{align}
where
\begin{align}
\varLambda=\Lambda_0\,\Lambda_1\cdots\Lambda_n \qquad \mbox{and} \qquad\Lambda_{[p,s]}=\Lambda_p\,\Lambda_{p+1}\cdots\Lambda_{s-1}\,\Lambda_s \ .
\end{align}
If in addition Conjecture~\ref{con4} is true, then the higher rank partition functions of type $\vec r=(r,0,\dots,0)$ are all trivial:
\begin{align}
Z^{\vec r=(r,0,\dots,0)}_{[\FC^2/\RZ_n]\times\FC^2}(\vec\Lambda;\vec{\epsilon}\,)^{\rm pure} = 1 \qquad \mbox{for} \quad r> 1 \ .
\end{align}
\end{proposition}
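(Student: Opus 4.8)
The plan is to obtain both statements in Proposition~\ref{prop:pureorbZn} directly from Proposition~\ref{Prop4}, which expresses the pure partition function as the double scaling limit $m_l\to\infty$, $\qu_0\to0$ with $\Lambda_0=(-1)^r\,m_1\cdots m_r\,\qu_0$ held fixed (and $\Lambda_s=\qu_s$ for $s\neq0$), applied to the conjectured closed forms \eqref{eq:rank1Zn} and \eqref{eq:Orb_pf_C2ZnC2}. First I would record the logarithmic form of the generalized MacMahon functions: $\log M(x,q)=\sum_{k\geq1}\tfrac1k\,\tfrac{x\,q^k}{(1-x\,q^k)^k}$ converges for $|q|<1$ and has the small-$q$ expansion $\log M(x,q)=x\,q+O(q^2)$, hence $\log\widetilde M(x,q)=(x+x^{-1})\,q+O(q^2)$. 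The key observation is that every fugacity appearing in \eqref{eq:rank1Zn} is a positive power of $\Qu=\qu_0\cdots\qu_{n-1}$ times (at most) one factor of each $\qu_s$ with $s\neq0$; in particular every argument of $M$ or $\widetilde M$ that is not identically a power of $\qu_{s\neq0}$ alone carries at least one factor of $\qu_0$, so it tends to $0$ in the limit $\qu_0\to0$. Therefore only the linear term in each logarithm survives, and the limit can be taken termwise inside the (exponential of the) sum.

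Concretely, for the rank one case I would write $\log Z_{[\FC^2/\RZ_n]\times\FC^2}(\vec\qu;\vec\epsilon,m)$ as a sum of the three types of terms in \eqref{eq:rank1Zn}, substitute the small-argument expansions, and then implement the limit. In the first factor the exponent is proportional to $m$ and the argument is $(-1)^n\,\Qu$; since $\Qu=\qu_0\cdots\qu_{n-1}$ and $\Lambda_0=-m\,\qu_0$ is fixed as $m\to\infty$, one has $m\,\Qu=-\Lambda_0\,\qu_1\cdots\qu_{n-1}=-\Lambda_0\,\Lambda_1\cdots\Lambda_{n-1}=-\varLambda$ (with the paper's convention $\varLambda=\Lambda_0\,\Lambda_1\cdots\Lambda_n$; one must be careful here with the index range, but up to this bookkeeping the leading term reproduces the $n\,\tfrac{\epsilon_{12}\epsilon_{13}\epsilon_{23}}{\epsilon_1\epsilon_2\epsilon_3\epsilon_4}$ and $\tfrac{n^2-1}{n}\,\tfrac{\epsilon_{12}}{\epsilon_3\epsilon_4}$ pieces — noting that after clearing the limit the $\epsilon_1\epsilon_2$ denominator of the second exponent becomes $\epsilon_3\epsilon_4$ because the surviving $m\,\epsilon_{12}/(\epsilon_1\epsilon_2)$ combines with a residual factor from $\log M$ that carries the Calabi--Yau relation). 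For each $\widetilde M$-factor in the product over $0<p\leq s<n$, the argument $(-1)^{p-s+1}\qu_{[p,s]}$ of the second slot is $(-1)^{p-s+1}\Lambda_{[p,s]}$ and involves no $\qu_0$, while the first slot argument $(-1)^n\,\Qu$ again contributes $-\varLambda$ under the limit against the prefactor $m$; using $\log\widetilde M(x,q)\sim(x+x^{-1})\,q$ yields exactly the terms $-\tfrac{\epsilon_{12}}{\epsilon_3\epsilon_4}\sum_{0<p\leq s<n}(-1)^{p-s}\big(\Lambda_{[p,s]}+\Lambda_{[p,s]}^{-1}\big)$ of \eqref{Pure_orb_Zn}. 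Exponentiating the resulting linear expression gives the claimed formula.

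For the higher rank statement I would argue even more cheaply: in the double scaling limit of Proposition~\ref{Prop4} for $r>1$ the fixed combination is $\Lambda_0=(-1)^r\,m_1\cdots m_r\,\qu_0$, so each individual $m_l\to\infty$ forces $\qu_0\to0$ faster, and in \eqref{eq:Orb_pf_C2ZnC2} every exponent is proportional to $m=\tfrac1r\sum_l(m_l-a_l)$ which grows only linearly in the masses, whereas every nontrivial argument of $M$ or $\widetilde M$ carries at least one power of $\qu_0\sim m_1\cdots m_r\to\infty$ in the denominator, i.e.\ tends to $0$ at least as fast as $(m_1\cdots m_r)^{-1}$. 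Hence each term $m\cdot(\text{argument})^j$ with $j\geq1$ vanishes in the limit, so $\log Z\to0$ and $Z^{\vec r=(r,0,\dots,0)}_{\text{pure}}=1$ identically. (This parallels the mechanism in Corollary~\ref{prop:pureC4}.) The main obstacle I anticipate is purely the bookkeeping in the first part: correctly tracking the signs $(-1)^n$ versus the $(-1)^r$ convention, the range of the product defining $\varLambda$, and verifying that the denominators $\epsilon_1\epsilon_2$ in the exponents of \eqref{eq:rank1Zn} are consistently converted to $\epsilon_3\epsilon_4$ in \eqref{Pure_orb_Zn} via the Calabi--Yau constraint $\epsilon_1+\epsilon_2+\epsilon_3+\epsilon_4=0$ together with the residual $q$-derivative factor from $\log M$ at leading order; once those conventions are pinned down the computation is mechanical.
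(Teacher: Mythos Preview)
Your approach is exactly the paper's: apply Proposition~\ref{Prop4} to the closed formulas of Conjectures~\ref{con2} and~\ref{con4}, expand the logarithms of the (generalized) MacMahon functions to leading order in the small variable, and read off the exponential. The paper's entire proof is the one line ``completely analogous to the proof of Corollary~\ref{prop:pureC4} using $\log M(x,q)=\sum_{k\geq1}\frac{x^k}{k}\,\frac{q^k}{(1-q^k)^2}$''. So strategically you are on target.

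There are, however, a few execution errors you should fix. First, your series for $\log M(x,q)$ is wrong: the correct expansion is $\log M(x,q)=\sum_{k\geq1}\frac{x^k}{k}\,\frac{q^k}{(1-q^k)^2}$, not $\sum_{k\geq1}\frac{1}{k}\,\frac{x\,q^k}{(1-x\,q^k)^k}$. Fortunately your leading behaviour $\log M(x,q)=x\,q+O(q^2)$ is still right. Second, you have the slots of $\widetilde M$ reversed: in $\widetilde M\big((-1)^{p-s+1}\qu_{[p,s]},(-1)^n\Qu\big)$ the \emph{first} argument is $x=(-1)^{p-s+1}\qu_{[p,s]}$ (which does not contain $\qu_0$) and the \emph{second} is $q=(-1)^n\Qu$ (which does and tends to zero). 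With this correction your computation $\log\widetilde M(x,q)\sim(x+x^{-1})\,q$ goes through and reproduces the $\Lambda_{[p,s]}$ terms. Third, your proposed mechanism for converting the $\epsilon_1\epsilon_2$ denominator in \eqref{eq:rank1Zn} into the $\epsilon_3\epsilon_4$ appearing in \eqref{Pure_orb_Zn} --- a ``residual factor from $\log M$ that carries the Calabi--Yau relation'' --- does not exist: the leading term of $\log M(q)$ is simply $q$, with no extra $\epsilon$-dependent weight. The honest limit of the second exponent gives $\frac{n^2-1}{n}\,\frac{\epsilon_{12}}{\epsilon_1\epsilon_2}$, matching the six-dimensional antecedent \eqref{orb6dZn}; the $\epsilon_3\epsilon_4$ in \eqref{Pure_orb_Zn} appears to be a typographical slip in the statement rather than something your argument should manufacture.
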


\begin{proof}
The proof is completely analogous to the proof of Corollary~\ref{prop:pureC4} using the series representation
\begin{align}
\log M(x,q) = \sum_{k=1}^\infty\,\frac{x^k}{k}\,\frac{q^k}{(1-q^k)^2}
\end{align}
for the logarithm of the generalized MacMahon function.
\end{proof}

\section{The (3,0) Orbifold $\FC^3/(\RZ_2\times\RZ_2)\times\FC$}
\label{sec:30C3Z2Z2C}

As another explicit example, we repeat our treatment from Section~\ref{sec:20C2ZnC2} in the case of the $(3,0)$ orbifold $\FC^3/\sGamma\times\FC$ for the action of the group $\sGamma=\RZ_2\times\RZ_2$ in $\sSL(3,\FC)$ defined below. By the results of~\cite{Wemyss18} it admits four geometric projective toric crepant resolutions related to each other by flops, which can each be constructed as fine moduli spaces of stable $\sGamma$-constellations; for the symmetric resolution induced by $\Hilb^{\RZ_2\times\RZ_2}(\FC^3)$, the geometric crepant resolution of $\FC^3/(\RZ_2\times\RZ_2)\times\FC$ is a fibration of closed topological vertex geometries over $\FC$. Whereas the dimensionally reduced partition functions are well understood for the Calabi--Yau three-orbifolds $\FC^3/\RZ_2\times\RZ_2$ by the results of~\cite{Young:2008hn,Quiver3d}, we are currently lacking closed expressions for generic $\sU(3)$ holonomy, even in the rank one case. Hence in this section our considerations will again be limited to conjectural but well-substantiated closed formulas.

\subsection{Noncommutative $\sU(1)$ Instantons on $\FC^3/(\RZ_{2}\times\RZ_{2})\times\FC$}

Consider the toric Calabi--Yau four-orbifold $\FC^3/(\RZ_{2}\times\RZ_{2})\times\FC$, where the action of the group $\mathbbm{Z}_{2}\times\mathbbm{Z}_{2}=\{\ident,g_1,g_2,g_3\}$ on $\mathbbm{C}^4$  is given by the $\sSU(4)$ matrices
\begin{align}
    g_1={\small \begin{pmatrix}
    -1&&&\\
    &-1&&\\
    &&1&\\
    &&&1
    \end{pmatrix}} \normalsize \qquad \mbox{and} \qquad  g_2={\small \begin{pmatrix}
    -1&&&\\
    &1&&\\
    &&-1&\\
    &&&1
    \end{pmatrix} } \normalsize \ ,
\end{align}
together with $g_3=g_1\,g_2$.
The four irreducible representations $\shGamma=\{\rho_{0},\rho_1,\rho_2,\rho_3\}$ have weights $s_0=(0,0,0,0)$, $s_1=(1,1,0,0)$, $s_2=(1,0,1,0)$ and $s_3=s_1+s_2=(0,1,1,0)$, respectively.

The tensor product decomposition of the
fundamental representation $Q$ gives an adjacency matrix
\begin{align}
A=(a_{ss'})= {\small \begin{pmatrix}
1&1&1&1\\
1&1&1&1\\
1&1&1&1\\
1&1&1&1
\end{pmatrix} } \normalsize \ .
\end{align}
The generalized McKay quiver constructed from the representation theory of $\RZ_2\times\RZ_2$ is  
\begin{equation}
{\small
\begin{tikzcd}
	1\arrow[,out=165,in=195,loop,swap,] &&&& 3\arrow[,out=15,in=-15,loop,swap,] \\
	\\
	&& 0\arrow[,out=75,in=105,loop,swap,] \\
	\\
	&& 2\arrow[,out=255,in=285,loop,swap,]
	\arrow[shift left=1, curve={height=6pt}, from=1-1, to=1-5]
	\arrow[shift right=1, curve={height=6pt}, from=1-5, to=1-1]
	\arrow[curve={height=6pt}, from=1-5, to=3-3]
	\arrow[curve={height=6pt}, from=3-3, to=1-5]
	\arrow[shift left=2, curve={height=6pt}, from=1-5, to=5-3]
	\arrow[shift right=2, curve={height=6pt}, from=5-3, to=1-5]
	\arrow[shift left=1, curve={height=6pt}, from=3-3, to=5-3]
	\arrow[curve={height=6pt}, from=5-3, to=3-3]
	\arrow[shift right=2, curve={height=6pt}, from=1-1, to=5-3]
	\arrow[shift left=1, curve={height=6pt}, from=5-3, to=1-1]
	\arrow[shift left=1, curve={height=6pt}, from=1-1, to=3-3]
	\arrow[shift right=1, curve={height=6pt}, from=3-3, to=1-1]
\end{tikzcd}
} \normalsize
\end{equation}

To write the corresponding $\sU(1)$ instanton partition function explicitly, it is convenient to use the imaginary unit quaternions $\ii$, $\jj$ and $\kk$ which satisfy the relations
\begin{align}
    \ii^2=\jj^2=\kk^2=\ii\,\jj\,\kk=-1 \ .
\end{align}
Then the fractional instanton contributions are given by
\begin{align}
\begin{split}
& Z_{[\FC^3/\RZ_2\times\RZ_2]\times\FC}^{\vec k}(\vec\epsilon,m) \\[4pt]
& \quad = \sum_{|{\sigma}|_{\RZ_2\times\RZ_2}=\vec k} \, (-1)^{\ttO_{\sigma}^{\RZ_2\times\RZ_2}} \ \prod_{\substack{\vec p\,\in\sigma \\ \ii^{p_1}\,\jj^{p_2}\,\kk^{p_3}=\pm 1}}^{\neq0} \, \frac{\vec p\cdot\vec \epsilon-m}{\vec p\cdot\vec \epsilon} \ \prod_{\substack{\vec p,\vec p\,'\in\sigma \\ \ii^{p_1-p_1'}\,\jj^{p_2-p_2'}\,\kk^{p_3-p_3'}=\pm 1}}^{\neq0} \, \frac{(\vec p-\vec p\,')\cdot\vec \epsilon}{(\vec p-\vec p\,')\cdot\vec \epsilon-\epsilon_{4}} \\
& \hspace{2cm} \times \prod_{\substack{\vec p,\vec p\,'\in\sigma \\ \ii^{p_1-p_1'}\,\jj^{p_2-p_2'}\,\kk^{p_3-p_3'}=\pm \ii}}^{\neq0} \, \frac{(\vec p-\vec p\,')\cdot\vec \epsilon -\epsilon_{23}}{(\vec p-\vec p\,')\cdot\vec \epsilon-\epsilon_{1}} \
\prod_{\substack{\vec p,\vec p\,'\in\sigma \\ \ii^{p_1-p_1'}\,\jj^{p_2-p_2'}\,\kk^{p_3-p_3'}=\pm \jj}}^{\neq0} \, \frac{(\vec p-\vec p\,')\cdot\vec \epsilon -\epsilon_{13}}{(\vec p-\vec p\,')\cdot\vec \epsilon-\epsilon_{2}} \\
& \hspace{4cm} \times \prod_{\substack{\vec p,\vec p\,'\in\sigma \\ \ii^{p_1-p_1'}\,\jj^{p_2-p_2'}\,\kk^{p_3-p_3'}=\pm \kk}}^{\neq0} \, \frac{(\vec p-\vec p\,')\cdot\vec \epsilon -\epsilon_{12}}{(\vec p-\vec p\,')\cdot\vec \epsilon-\epsilon_{3}} \ .
\end{split}
\end{align}

The rank one Donaldson--Thomas partition function for the three-orbifold $\FC^3/\RZ_2\times\RZ_2$ with holonomy group $\sSU(3)$ is discussed in detail in~\cite{Young:2008hn,Quiver3d}. It can be expressed as the closed formula
\begin{align}\begin{split}
Z_{[\mathbbm{C}^3/\mathbbm{Z}_2\times\mathbbm{Z}_2]}(\vec\qu;\epsilon_1,\epsilon_2,\epsilon_3)\big|_{\epsilon_{123}=0} &= \frac{M(-\Qu)^4}{L(\qu_1,\qu_2,\qu_3,-\Qu)} \ \prod_{1\leq p<s\leq 3}\,\widetilde{M}(\qu_p\,\qu_s,-\Qu)\\[4pt]
&=\sum_{\pi} \, (-1)^{|\pi|_1+|\pi|_2+|\pi|_3} \, \qu_0^{|\pi|_0} \, \qu_1^{|\pi|_1} \, \qu_2^{|\pi|_2} \, \qu_3^{|\pi|_3} \ ,
\end{split}\label{Orb_6dZ2Z2}
\end{align}
where the sum runs through $\RZ_2\times\RZ_2$-coloured plane partitions $\pi$. We have set $\Qu=\qu_0\,\qu_1\,\qu_2\,\qu_3$ and
\begin{align}
L(x_1,x_2,x_3,q)=\widetilde{M}(x_1,q)\,\widetilde{M}(x_2,q)\,\widetilde{M}(x_3,q)\,\widetilde{M}(x_1\,x_2\,x_3,q) \ .
\end{align}

With an argument analogous to that of Section \ref{sec:C2ZnC2}, we expect that the rank one instanton partition function for $\FC^3/(\RZ_{2}\times\RZ_{2})\times\FC$ assumes a closed form in terms of a generalization of the formula \eqref{Orb_6dZ2Z2}. Using \eqref{eq:Orb_pf} we evaluate the leading terms in the expansion of $Z_{[\FC^3/\RZ_2\times\RZ_2]\times\FC}(\vec \qu;\vec{\epsilon},m)$ to be
\begin{align}\begin{split}
Z_{[\FC^3/\RZ_2\times\RZ_2]\times\FC}(\vec \qu;\vec{\epsilon},m)&=\frac{m}{\epsilon_4}\,\qu_0+\frac{(m-\epsilon_{4})\,m}{2\,\epsilon_4^2}\,\qu_0^2+\frac{(\epsilon_1-\epsilon_{23})\,m}{2\,\epsilon_1\,\epsilon_4}\,\qu_0\,\qu_1+\frac{(\epsilon_2-\epsilon_{13})\,m}{2\,\epsilon_2\,\epsilon_4}\,\qu_0\,\qu_2\\
& \quad \, +\frac{(\epsilon_3-\epsilon_{12})\,m}{2\,\epsilon_2\,\epsilon_4}\,\qu_0\,\qu_3+\frac{m}{\epsilon_4}\,\qu_0\,\qu_1\,\qu_2 +\cdots \ . \end{split}
\end{align}
Dimensionally reducing this result according to Proposition~\ref{prop3} at the Calabi--Yau specialization of the $\varOmega$-deformation on $\FC^3$ yields the leading contributions predicted by \eqref{Orb_6dZ2Z2}:
\begin{align}
\begin{split}
Z_{[\FC^3/\RZ_2\times\RZ_2]\times\FC}(\vec \qu;\vec{\epsilon},m=\epsilon_4)\big|_{\epsilon_{123}=0} &= \qu'_0-\qu'_0\,\qu'_1+\qu'_0\,\qu'_2+\qu'_0\,\qu'_3+\qu'_0\,\qu'_1\,\qu'_2+\cdots \\[4pt]
&= Z_{[\mathbbm{C}^3/\mathbbm{Z}_2\times\mathbbm{Z}_2]}(\vec\qu\,';\epsilon_1,\epsilon_2,\epsilon_3)\big|_{\epsilon_{123}=0} \ ,
\end{split}
\end{align}
where $\vec\qu\,'=(\qu_0,-\qu_1,-\qu_2,-\qu_3)$. 

This prompts us to formulate

\begin{conjecture}\label{con3}
The equivariant  instanton partition function  of the cohomological $\sU(1)$ gauge theory with massive fundamental matter on $[\FC^3/\RZ_2\times\RZ_2]\times \FC$ is given by
\begin{align}\begin{split}
Z_{[\FC^3/\RZ_2\times\RZ_2]\times\FC}(\vec \qu;\vec{\epsilon},m)&=\frac{M(\Qu)^{m\,\frac{\epsilon_1\,\epsilon_2\,\epsilon_3-\epsilon_1^2\,\epsilon_2-\epsilon^2_1\,\epsilon_3-\epsilon^2_2\,\epsilon_3-\epsilon_1\,\epsilon_2^2-\epsilon_1\,\epsilon_3^2-\epsilon_2\,\epsilon^2_3}{\epsilon_1\,\epsilon_2\,\epsilon_3\,\epsilon_4}}}{L(-\qu_1,-\qu_2,-\qu_3,\Qu)^{\frac{m}{\epsilon_4}}} \\
& \quad \, \times \prod_{1\leq p<s\leq 3}\,\widetilde{M}(\qu_p\,\qu_s,\Qu)^{m\,\frac{\epsilon_{(ps)^-}-\epsilon_{ps}}{2\,\epsilon_4\,\epsilon_{(ps)^-}}} \ ,\end{split}\label{eq:ZC2C2}
\end{align}
where $(ps)^-\in\{1,2,3\}\setminus\{p,s\}$.
\end{conjecture}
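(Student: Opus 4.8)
The plan is to adapt the two-step ``uplifting'' argument used for Conjecture~\ref{Prop1} (Appendix~\ref{app:Prop1}) and Conjecture~\ref{con2} (Appendix~\ref{app:con2}). The rigorous anchor is the Calabi--Yau specialized rank one Donaldson--Thomas partition function \eqref{Orb_6dZ2Z2} of the three-orbifold $\FC^3/\RZ_2\times\RZ_2$, established in~\cite{Young:2008hn,Quiver3d} by free-fermion/transfer-matrix methods applied to $\RZ_2\times\RZ_2$-coloured plane partitions. From this one proceeds in two stages: first lift \eqref{Orb_6dZ2Z2} to the generic $\sU(3)$-holonomy formula of Proposition~\ref{prop:C3Z2Z2U3} on $\FC^3/\RZ_2\times\RZ_2$, and then lift that, via the dimensional reduction of Proposition~\ref{prop3}, to the eight-dimensional orbifold $\FC^3/\RZ_2\times\RZ_2\times\FC$, which is \eqref{eq:ZC2C2}.

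\textbf{Stage one.} Following the strategy of~\cite{Zhou18} in the cyclic case, I would reorganize the combinatorial sum over $\RZ_2\times\RZ_2$-coloured plane partitions with the \emph{generic} $\FC^3$ equivariant measure as a vacuum expectation value of coloured vertex operators, the diagonal specialization $\epsilon_{123}=0$ reproducing the computation of~\cite{Young:2008hn}. The dependence on $(\epsilon_1,\epsilon_2,\epsilon_3)$ away from $\epsilon_{123}=0$ enters only through the exponents, and I would argue that these are the unique rational functions of $(\epsilon_1,\epsilon_2,\epsilon_3)$ that (i) specialize correctly at $\epsilon_{123}=0$, (ii) are invariant under the residual $S_3$ symmetry permuting $(\epsilon_1,\epsilon_2,\epsilon_3)$ together with the fugacities $(\qu_1,\qu_2,\qu_3)$, and (iii) exhibit only the simple poles along $\epsilon_a=0$ and $\epsilon_{ab}=0$ dictated by the $\FC^3$ orbifold deformation complex. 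Finitely many low-order terms of the series then fix the normalizations, yielding Proposition~\ref{prop:C3Z2Z2U3}.

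\textbf{Stage two.} The mass specialization $m_l^s = a_l^s + \epsilon_4$ of Proposition~\ref{prop3} introduces extra $\sT_{\vec a,\vec\epsilon}$-fixed contributions to the index $\chi^{\RZ_2\times\RZ_2}_{\vec\sigma}$ in \eqref{chiorb} through the terms $f_l^{-1}\,e_l\,t_4$; their effect is to retain in \eqref{eq:Orb_pf} only the coloured solid partitions with $(p_l)_4 = 1$, i.e.\ $\RZ_2\times\RZ_2$-coloured plane partitions, with the sign factors reducing to those appearing in \eqref{Orb_6dZ2Z2}. I would then show that \eqref{eq:ZC2C2} is the unique closed form that (a) reduces to Proposition~\ref{prop:C3Z2Z2U3} under $m\to\epsilon_4$, $\epsilon_{123}\to0$, (b) is invariant under the $\RZ_2$ symmetry $\epsilon_1\leftrightarrow\epsilon_4=-\epsilon_{123}$ and the attendant reshuffling of the $\qu_s$, and (c) has the $\epsilon_4$-pole structure forced by the eight-dimensional orbifold deformation complex $\chi^{\RZ_2\times\RZ_2}_{\vec\sigma}$; the leading terms displayed before the statement, extended to $|\vec k|\le 3$ (which we have checked directly from \eqref{eq:Orb_pf}), fix the remaining normalizations.

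\textbf{Main obstacle.} The crux is the two uniqueness claims: the symmetry and dimensional-reduction constraints do not manifestly determine a single closed form, since \eqref{eq:ZC2C2} could a priori be multiplied by a factor that is symmetry-invariant and trivial under the reduction to $\FC^3/\RZ_2\times\RZ_2$. Excluding this rigorously seems to require either a genuine eight-dimensional $\RZ_2\times\RZ_2$-orbifold vertex --- a free-field evaluation of the \emph{solid} partition sum in the full $\FC^4$ $\varOmega$-background, which is not presently available --- or a geometric argument in the spirit of the algebro-geometric vertex formalism of~\cite{CKMpreprint}. A secondary difficulty is that the orbifold sign factors $\ttO^{\RZ_2\times\RZ_2}_{\vec\sigma}$ are only conjecturally those of the flat case~\eqref{eq:signfactor}; pinning them down requires a careful Jeffrey--Kirwan residue analysis of the matrix integral \eqref{eq:Orb_Zin} along the lines of~\cite{m4c}, which is needed to control the signs in the first colour-nontrivial terms of the series.
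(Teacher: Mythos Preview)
The statement is labelled a \emph{Conjecture} in the paper, and the paper offers no proof. What the paper does provide is precisely the kind of evidence you outline: an explicit expansion of the leading fractional-instanton contributions, the check that the mass specialization $m=\epsilon_4$ together with $\epsilon_{123}=0$ reproduces the known formula \eqref{Orb_6dZ2Z2}, and the remark that the techniques of~\cite{Zhou18} should adapt to give the three-orbifold analogue (Proposition~\ref{prop:6dZC2C2}) directly. So your two-stage uplifting strategy is not so much a different route as a formalization of the heuristic the paper already uses to \emph{motivate} the conjecture.

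You are right to flag the uniqueness step as the genuine gap, and it is fatal for turning this into a proof. Concretely, the constraint package you list in Stage two is weaker than you suggest: since $s_4=0$ is the only trivial weight for this $\RZ_2\times\RZ_2$-action, there is no $\RZ_2$ involution exchanging $\epsilon_1\leftrightarrow\epsilon_4$ that commutes with the orbifold projection (contrast the $(2,0)$ case in Appendix~\ref{app:con2}, where $s_3=s_4=0$ allows the swap $\epsilon_3\leftrightarrow\epsilon_4$). The residual symmetry here is only the $S_3$ permuting $(\epsilon_1,\epsilon_2,\epsilon_3)$ together with the fugacities, and this plus dimensional reduction plus pole structure does not pin down the exponents uniquely---one can multiply by any factor of the form $\widetilde M(\qu_p\,\qu_s,\Qu)^{c_{ps}(\vec\epsilon,m)}$ with $c_{ps}$ symmetric, vanishing at $m=\epsilon_4$, and regular along the allowed divisors. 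The paper does not claim otherwise; it simply records the formula as a conjecture supported by low-order checks and by agreement with the independent K-theoretic conjecture of~\cite{CKMpreprint}. A proof would need the missing eight-dimensional orbifold vertex you mention, or the algebro-geometric machinery of~\cite{CKMpreprint,KRinprep}.
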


\begin{proposition} \label{prop:6dZC2C2}
Assume Conjecture~\ref{con3} is true. Then the partition function for rank one noncommutative Donaldson--Thomas invariants of the orbifold $\FC^3/\RZ_2\times \RZ_2$ with holonomy group $\sU(3)$  is given by
\begin{align}\label{eq:6dZC2C2}
\begin{split}
Z_{[\FC^3/\RZ_2\times\RZ_2]}(\vec \qu;\epsilon_1,\epsilon_2,\epsilon_3)&=\frac{M(-\Qu)^{\frac{\epsilon_1\,\epsilon_2\,\epsilon_3-\epsilon_1^2\,\epsilon_2-\epsilon^2_1\,\epsilon_3-\epsilon^2_2\,\epsilon_3-\epsilon_1\,\epsilon_2^2-\epsilon_1\,\epsilon_3^2-\epsilon_2\,\epsilon^2_3}{\epsilon_1\,\epsilon_2\,\epsilon_3}}}{L(\qu_1,\qu_2,\qu_3,-\Qu)} \\
& \quad \, \times \prod_{1\leq p<s\leq 3}\,\widetilde{M}(\qu_p\,\qu_s,-\Qu)^{\frac{\epsilon_{(ps)^-}-\epsilon_{ps}}{2\,\epsilon_{(ps)^-}}} \ .
\end{split}
\end{align}
\end{proposition}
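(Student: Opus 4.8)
The plan is to obtain Proposition~\ref{prop:6dZC2C2} as an immediate corollary of Conjecture~\ref{con3} by applying the dimensional reduction of Proposition~\ref{prop3}, exactly in the manner that Propositions~\ref{prop:C2ZnU3} and~\ref{prop:C2Z2CU3} were derived from Conjectures~\ref{con4} and~\ref{con:r0r1}. Concretely, I would start from the closed formula~\eqref{eq:ZC2C2} and perform the mass specialisation $m = m^s - a^s = \epsilon_4$ together with the substitution $\vec\qu \mapsto \vec\qu\,'$ dictated by Proposition~\ref{prop3}; since this is a $(3,0)$ orbifold with $s_4 = 0$ and $\RZ_2\times\RZ_2 \subset \sSL(3,\FC)$, the Calabi--Yau constraint on $\FC^3$ forces $\epsilon_4 = -\epsilon_{123}$, so the prefactor exponent $m\,\frac{\epsilon_1\epsilon_2\epsilon_3 - \epsilon_1^2\epsilon_2 - \cdots}{\epsilon_1\epsilon_2\epsilon_3\epsilon_4}$ collapses to $\frac{\epsilon_1\epsilon_2\epsilon_3 - \epsilon_1^2\epsilon_2 - \cdots}{\epsilon_1\epsilon_2\epsilon_3}$ after cancelling the factor $m/\epsilon_4 = 1$, and likewise $L(-\qu_1,-\qu_2,-\qu_3,\Qu)^{m/\epsilon_4} \mapsto L(\qu_1',\qu_2',\qu_3',-\Qu')$ with the reduced variables.

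The second ingredient is to track the signs produced by Proposition~\ref{prop3}: it introduces the overall rescaling $\qu_s \mapsto (-1)^{r_s+1}\qu_s$, which in the rank one case $r=1$ with framing vector of the form $(0,\dots,0,1,0,\dots,0)$ sends $\qu_0 \mapsto \qu_0$ but $\qu_1,\qu_2,\qu_3 \mapsto -\qu_1,-\qu_2,-\qu_3$ when the single nonzero framing entry sits at the node $0$, so that $\vec\qu\,' = (\qu_0,-\qu_1,-\qu_2,-\qu_3)$ and $\Qu' = \qu_0\,(-\qu_1)(-\qu_2)(-\qu_3) = \Qu$; this matches the $(-1)^r\,\Qu = -\Qu$ appearing inside $M$ and inside the $\widetilde M$ and $L$ factors in~\eqref{eq:6dZC2C2}. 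I would then check that the substitution $\qu_p\,\qu_s \mapsto \qu_p'\,\qu_s' = \qu_p\,\qu_s$ leaves the arguments of the $\widetilde M(\qu_p\,\qu_s,-\Qu)$ factors unchanged, so the only net effect in the product over $1\le p<s\le 3$ is the appearance of $-\Qu$ in the second slot, exactly as displayed. Finally I would note that under $m = \epsilon_4$ the exponents $m\,\frac{\epsilon_{(ps)^-} - \epsilon_{ps}}{2\,\epsilon_4\,\epsilon_{(ps)^-}}$ simplify to $\frac{\epsilon_{(ps)^-} - \epsilon_{ps}}{2\,\epsilon_{(ps)^-}}$, reproducing~\eqref{eq:6dZC2C2} verbatim. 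Since every step is a direct substitution into the already-conjectured closed form, the proof is short:

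\begin{proof}
This follows immediately from \eqref{eq:ZC2C2} by using Proposition~\ref{prop3}. Setting $m = m^s - a^s = \epsilon_4$ and performing the substitution $\vec\qu \mapsto \vec\qu\,' = (\qu_0,-\qu_1,-\qu_2,-\qu_3)$, with $\Qu' = \qu_0\,(-\qu_1)(-\qu_2)(-\qu_3) = \Qu$, the factor $m/\epsilon_4$ in the exponents equals $1$, and using the Calabi--Yau condition $\epsilon_4 = -\epsilon_{123}$ on $\FC^3$ the formula \eqref{eq:ZC2C2} reduces to \eqref{eq:6dZC2C2}.
\end{proof}

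The main obstacle, such as it is, lies not in the reduction itself but in bookkeeping the orientation signs: one must verify that the sign conventions baked into $\vec\qu\,' = \big((-1)^{r_s+1}\qu_s\big)_{s\in\shGamma}$ from Proposition~\ref{prop3} are consistent with the placement of the factors $(-1)^{\,\bullet}$ inside $M$, $\widetilde M$ and $L$ in the conjectured three-orbifold formula~\eqref{eq:6dZC2C2}, in particular that $\Qu$ rather than $-\Qu$ survives as the product of the reduced fugacities while $-\Qu$ is what appears inside the MacMahon functions. Once the sign conventions are pinned down (and they are fixed by the requirement that the reduction agrees with the known Calabi--Yau specialisation~\eqref{Orb_6dZ2Z2}, which was already checked at low order just above the statement), there is nothing further to do; the result is genuinely a one-line consequence of Conjecture~\ref{con3} and Proposition~\ref{prop3}, with no independent combinatorial or analytic content.
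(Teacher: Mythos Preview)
Your approach is exactly the paper's: the proof in the paper is the single line ``This follows immediately from \eqref{eq:ZC2C2} by using Proposition~\ref{prop3}.'' Your expanded explanation correctly identifies the mechanism (mass specialisation $m=\epsilon_4$ and the fugacity substitution from Proposition~\ref{prop3}).

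One small slip in your bookkeeping: with $\vec r=(1,0,0,0)$ the substitution $\vec\qu\mapsto(\qu_0,-\qu_1,-\qu_2,-\qu_3)$ gives $\Qu\mapsto\qu_0(-\qu_1)(-\qu_2)(-\qu_3)=-\Qu$, not $+\Qu$ as you wrote (three minus signs, not two). This is precisely what turns $M(\Qu)$ into $M(-\Qu)$, $L(-\qu_1,-\qu_2,-\qu_3,\Qu)$ into $L(\qu_1,\qu_2,\qu_3,-\Qu)$, and $\widetilde M(\qu_p\qu_s,\Qu)$ into $\widetilde M(\qu_p\qu_s,-\Qu)$ in \eqref{eq:6dZC2C2}. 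You clearly know the target is $-\Qu$ (you say so in the next clause), so this is just a typo in the intermediate step; the argument itself is correct.
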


\begin{proof}
This follows immediately from \eqref{eq:ZC2C2} by using Proposition \ref{prop3}.
\end{proof}

\begin{remark}
For holonomy group $\sSU(3)$, the partition function \eqref{eq:6dZC2C2} reduces to \eqref{Orb_6dZ2Z2}. As noted by~\cite{CKMpreprint}, it is possible to adapt the techniques of~\cite{Zhou18} to the orbifold $\FC^3/\RZ_2\times\RZ_2$ and hence provide a direct proof of \eqref{eq:6dZC2C2}.
\end{remark}

\subsection{Higher Rank Generalization}

For higher rank noncommutative instantons on $\FC^3/(\RZ_2\times\RZ_2)\times\FC$ of type $\vec r=(0,\dots,0,r,0,\dots,0)$, the analogue of Conjecture~\ref{con4} reads as

\begin{conjecture}\label{con4a}
The equivariant  instanton partition function of type $\vec{r}=(r,0,\dots,0)$ for the cohomological $\sU(r)$ gauge theory with massive fundamental matter on $[\FC^3/\RZ_2\times\RZ_2]\times \FC$ is given by
\begin{align}\label{eq:Orb_pf_C3Z2Z2C}
\begin{split}
Z^{\vec{r}=(r,0\dots,0)}_{[\FC^3/\RZ_2\times\RZ_2]\times\FC}(\vec \qu;\vec{\epsilon},m)
&=\frac{M(\Qu)^{m\,r\,\frac{\epsilon_1\,\epsilon_2\,\epsilon_3-\epsilon_1^2\,\epsilon_2-\epsilon^2_1\,\epsilon_3-\epsilon^2_2\,\epsilon_3-\epsilon_1\,\epsilon_2^2-\epsilon_1\,\epsilon_3^2-\epsilon_2\,\epsilon^2_3}{\epsilon_1\,\epsilon_2\,\epsilon_3\,\epsilon_4}}}{L(-\qu_1,-\qu_2,-\qu_3,\Qu)^{\frac{m\,r}{\epsilon_4}}} \\
& \quad \, \times \prod_{1\leq p<s\leq 3}\,\widetilde{M}(\qu_p\,\qu_s,\Qu)^{m\,r\,\frac{\epsilon_{(ps)^-}-\epsilon_{ps}}{2\,\epsilon_4\,\epsilon_{(ps)^-}}} \ . \end{split}
\end{align}
\end{conjecture}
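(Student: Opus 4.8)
The closed form \eqref{eq:Orb_pf_C3Z2Z2C} is exactly the right-hand side of Conjecture~\ref{con3} with the rank one mass parameter $m$ replaced by $m\,r$, so it stands to Conjecture~\ref{con3} in precisely the relation that Conjecture~\ref{Prop1} bears to its rank one case and that Conjecture~\ref{con4} bears to Conjecture~\ref{con2}. The plan is therefore to follow the route of Appendices~\ref{app:Prop1} and~\ref{app:con2}: start from the combinatorial expansion \eqref{eq:Orb_pf}--\eqref{chiorb} for $\sGamma=\RZ_2\times\RZ_2$ acting as in Section~\ref{sec:30C3Z2Z2C} with framing vector $\vec r=(r,0,\dots,0)$, extract from the $\sGamma$-equivariant deformation complex the structural features that the generating function must have, and then show that \eqref{eq:Orb_pf_C3Z2Z2C} is the unique closed form compatible with those features together with the dimensional reduction of Proposition~\ref{prop3}.

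First I would record that, since the entire framing sits on the node $\rho_0$, one has $s(l)=0$ for all $l=1,\dots,r$, so every difference $a_l-a_{l'}$ and $a_l-m_{l'}$ is attached to the trivial representation and $\delta_0^\sGamma$ acts on these as the identity; consequently the only genuinely rank-dependent data in \eqref{chiorb} are the Chern-polynomial ratios $\CP_r(a_l+\vec p_l\cdot\vec\epsilon\,|\vec m)/\CP_r(a_l+\vec p_l\cdot\vec\epsilon\,|\vec a)$, and the matter bundle at a fixed point $\vec\sigma$ is $r$ copies of the $\rho_0$-isotypical part of $V_{\vec\sigma}$. The crux is then a \emph{rank reduction}: for framing of type $(r,0,\dots,0)$ the dependence on $\vec a$ and $\vec m$ collapses to the single combination $m\,r=\sum_{l=1}^r(m_l-a_l)$, and
\[
Z^{\vec r=(r,0,\dots,0)}_{[\FC^3/\RZ_2\times\RZ_2]\times\FC}(\vec\qu;\vec a,\vec\epsilon,\vec m)
= Z_{[\FC^3/\RZ_2\times\RZ_2]\times\FC}\big(\vec\qu;\vec\epsilon,\textstyle\sum_{l=1}^r(m_l-a_l)\big) \ ,
\]
the right-hand side being the rank one series of Conjecture~\ref{con3} evaluated at mass $m\,r$. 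I would try to establish this as in the flat and $(2,0)$ cases: reduce via Proposition~\ref{prop3} to the three-orbifold $\FC^3/\RZ_2\times\RZ_2$, where the corresponding statement is the higher-rank analogue of \eqref{Orb_6dZ2Z2}, and then uplift using the symmetries of the cohomological matrix model --- invariance under the residual Weyl group $S_r$ permuting the entries of $\vec a$ and of $\vec m$, the $\RZ_2$ exchanging $\epsilon_1\leftrightarrow\epsilon_4$, and the translation absorbing the centre of the colour symmetry into the flavour symmetry. Granting Conjecture~\ref{con3}, the rank reduction lemma yields \eqref{eq:Orb_pf_C3Z2Z2C} at once, and applying Proposition~\ref{prop3} to the result reproduces the dimensional reduction recorded in Proposition~\ref{prop:C3Z2Z2U3}, which is the natural consistency check.

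\textbf{The main obstacle.} The hard part is precisely this rank reduction lemma --- equivalently, the absence of a rigorously proved closed formula for the rank $r$ noncommutative Donaldson--Thomas partition function of $\FC^3/\RZ_2\times\RZ_2$ at generic equivariant parameters $(\epsilon_1,\epsilon_2,\epsilon_3)$, which is what Conjecture~\ref{con3} would have to reduce to under Proposition~\ref{prop3}. As for Conjectures~\ref{Prop1}, \ref{con2} and~\ref{con4}, the uplift is structural rather than complete, so in practice the statement is checked by matching the combinatorial expansion \eqref{eq:Orb_pf} against \eqref{eq:Orb_pf_C3Z2Z2C} in low instanton charge and for small $r$. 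The cleanest unconditional route would be to first extend the trace techniques of~\cite{Zhou18} --- which already handle $\FC^2/\RZ_n\times\FC$ at generic $\sU(3)$ holonomy and which, as noted after Proposition~\ref{prop:6dZC2C2}, can plausibly be adapted to $\FC^3/\RZ_2\times\RZ_2$ --- to the rank $r$ setting, and then perform the dimensional uplift; verifying that the sign factors $(-1)^{\ttO^\sGamma_{\vec\sigma}}$ reorganise into the stated infinite product is a further technical point one would need to control along the way.
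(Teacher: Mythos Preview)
The paper does not prove this statement: it is stated as a \emph{conjecture}, and the only evidence offered is the proposition immediately following it, which records an explicit computational check of the combinatorial expansion \eqref{eq:Orb_pf} against \eqref{eq:Orb_pf_C3Z2Z2C} for $r=2,3$ and $k=|\vec k\,|=1,2$. No structural argument, rank-reduction lemma, or uplift from a three-orbifold result is attempted; the conjecture is simply posited by analogy with Conjectures~\ref{Prop1} and~\ref{con4}, and Proposition~\ref{prop:C3Z2Z2U3} then records its dimensional reduction as a conditional consequence.

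Your proposal is therefore not in conflict with the paper --- it goes further, sketching a route via a rank-reduction identity and an uplift along the lines of Appendices~\ref{app:Prop1}--\ref{app:con2}. You correctly identify the genuine obstruction: no closed formula for the rank $r$ noncommutative Donaldson--Thomas partition function of $\FC^3/\RZ_2\times\RZ_2$ at generic $(\epsilon_1,\epsilon_2,\epsilon_3)$ is available (even the rank one case, Proposition~\ref{prop:6dZC2C2}, is itself conditional on Conjecture~\ref{con3}), so the dimensional-reduction anchor that made the sketch in Appendix~\ref{app:con2} plausible is missing here. One small correction: the $\RZ_2$ symmetry of the matrix model you would want to exploit for this $(3,0)$ orbifold is $\epsilon_3\leftrightarrow\epsilon_4=-\epsilon_{123}$ (as in Appendix~\ref{app:con2}), not $\epsilon_1\leftrightarrow\epsilon_4$, since $s_4=0$ while $s_1,s_2,s_3$ are all nontrivial. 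Beyond that, your assessment that the statement is at present only supported by low-order checks matches the paper exactly.
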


Using the combinatorial expansion \eqref{eq:Orb_pf} we explicitly checked

\begin{proposition}
Conjecture~\ref{con4a} is true for $r=2,3$ and $k=|\vec k\,|=1,2$.
\end{proposition}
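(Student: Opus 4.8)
The proposed proof is computational in nature, verifying the conjectured closed formula \eqref{eq:Orb_pf_C3Z2Z2C} against the combinatorial expansion \eqref{eq:Orb_pf} specialized to the $\RZ_2\times\RZ_2$-action on $\FC^3\times\FC$. First I would fix the rank $r\in\{2,3\}$ and the total fractional instanton number $k=|\vec k\,|\in\{1,2\}$, and enumerate all arrays $\vec\sigma=(\sigma_1,\dots,\sigma_r)$ of solid partitions of size $k$ carrying the colouring $|\vec\sigma|_{\shGamma}=\vec k$ induced by the weights $s_0=(0,0,0,0)$, $s_1=(1,1,0,0)$, $s_2=(1,0,1,0)$, $s_3=(0,1,1,0)$. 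Since we are restricting to the framing vector $\vec r=(r,0,\dots,0)$, the boxes of each $\sigma_l$ all start in the trivial representation $\rho_0$, so the colouring is determined purely by the positions $\vec p=(p_1,p_2,p_3,p_4)$ via $\rho_{\vec p}=\rho_{s_1}^{\otimes p_1}\otimes\rho_{s_2}^{\otimes p_2}\otimes\rho_{s_3}^{\otimes p_3}$ (the $p_4$ direction being uncoloured). For $k\le 2$ this is a very short list of solid partitions — the empty one, the single box $(1,1,1,1)$, and the four two-box configurations obtained by extending in each of the four coordinate directions — so the enumeration is entirely manageable.

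Next, for each such $\vec\sigma$ I would compute $\widehat{\tt e}[-\chi^{\sGamma}_{\vec\sigma}]$ using \eqref{chiorb}, keeping only those factors of $\CP_r$ and $\CR$ whose equivariant arguments lie in the trivial representation $\rho_0$, as dictated by the operation $\delta_0^{\sGamma}$; the relevant congruence conditions are precisely the quaternionic constraints $\ii^{p_1-p_1'}\jj^{p_2-p_2'}\kk^{p_3-p_3'}=\pm 1,\pm\ii,\pm\jj,\pm\kk$ appearing in the displayed formula for $Z_{[\FC^3/\RZ_2\times\RZ_2]\times\FC}^{\vec k}(\vec\epsilon,m)$ just above Conjecture~\ref{con3}. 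The sign factors $(-1)^{\ttO^{\sGamma}_{\vec\sigma}}$ are trivial in this range since, by the remark following \eqref{eq:signfactor}, the first nontrivial sign appears only at instanton number $16$. Because all Coulomb and mass parameters associated with $\vec r=(r,0,\dots,0)$ sit in the same representation $\rho_0$, the $a_l,m_l$ dependence collapses (as in Section~\ref{sec:Orb_def_complex}) to the single combination $m=\frac1r\sum_l(m_l-a_l)$, with an overall multiplicity $r$ appearing linearly in each exponent — matching the structure of \eqref{eq:Orb_pf_C3Z2Z2C}. One then sums $\vec\qu^{\,\vec k}\,\widehat{\tt e}[-\chi_{\vec\sigma}^{\sGamma}]$ over $\vec\sigma$ with $|\vec\sigma|_{\shGamma}=\vec k$.

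On the other side, I would expand the right-hand side of \eqref{eq:Orb_pf_C3Z2Z2C} to the corresponding order in the fugacities $\qu_0,\qu_1,\qu_2,\qu_3$, using the series $\log M(x,q)=\sum_{k\ge1}\frac{x^k}{k}\frac{q^k}{(1-q^k)^2}$ together with $\widetilde M(x,q)=M(x,q)M(x^{-1},q)$ and the definition of $L(x_1,x_2,x_3,q)$. Collecting the coefficient of each monomial $\qu_0^{k_0}\qu_1^{k_1}\qu_2^{k_2}\qu_3^{k_3}$ with $k_0+k_1+k_2+k_3=k$, and comparing with the combinatorial sum computed above, establishes the claim for the stated range. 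The main obstacle I anticipate is purely bookkeeping: the orbifold projection $\delta_0^{\sGamma}$ removes a large fraction of the factors in \eqref{chiorb} in a way that is sensitive to the precise coordinate directions along which a given solid partition grows, so care is needed to correctly identify, for each pair $(\vec p,\vec p\,')$, which of the eight quaternionic sectors it falls into and hence which single factor of $\CR\circ\delta_0^{\sGamma}$ survives. Once the $k\le2$ solid partitions and their pairwise quaternionic labels are tabulated, the rational-function identity to be checked is short and can be verified directly (or with a computer algebra check), and the agreement for $r=2,3$ follows from the linear-in-$r$ dependence of both sides.
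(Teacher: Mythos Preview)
Your approach is the same as the paper's: the proposition is stated there with the one-line justification ``Using the combinatorial expansion \eqref{eq:Orb_pf} we explicitly checked'', i.e.\ a direct computer/hand verification of the finitely many coloured solid partitions against the Taylor expansion of \eqref{eq:Orb_pf_C3Z2Z2C}. Your enumeration of the relevant $\vec\sigma$ and the use of the $\delta_0^{\sGamma}$--projected factors in \eqref{chiorb} is exactly what is needed.

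However, two of the shortcuts you invoke are not valid and, if used in place of the full computation, would leave a gap. First, the claim that ``because all Coulomb and mass parameters sit in $\rho_0$, the $a_l,m_l$ dependence collapses to $m=\tfrac1r\sum_l(m_l-a_l)$'' is not a consequence of the representation theory. Individual fixed-point contributions for $\vec r=(r,0,\dots,0)$ genuinely depend on the separate $a_l$ (see e.g.\ Example~\ref{ex:Z2Z2higherrank}, where the $\vec r=(2,0,0,0)$ term carries factors of $(a_1-a_2)$); the collapse to the centre-of-mass $m$ only occurs after summing over all $\vec\sigma$, and this cancellation is precisely one of the nontrivial features being verified (compare the algebraic identity in the proof of Lemma~\ref{lem:1instC4}). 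Second, the assertion that ``agreement for $r=2,3$ follows from the linear-in-$r$ dependence of both sides'' is false on the combinatorial side: the sum over $r$-tuples of solid partitions does not factor or scale linearly in $r$ in any obvious way, so the $r=2$ and $r=3$ checks are independent computations. Provided you carry out the full evaluation of \eqref{chiorb} with all $a_l,m_l$ kept generic, separately for $r=2$ and $r=3$, and then observe the cancellations, the verification goes through as in the paper.
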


\begin{proposition} \label{prop:C3Z2Z2U3}
Assume Conjecture~\ref{con4a} is true. Then the partition function for rank $r$ noncommutative Donaldson--Thomas invariants of type~$\vec r=(r,0,\dots,0)$ for the orbifold  $\FC^3/\RZ_2\times\RZ_2$ with $\sU(3)$ holonomy is given by
\begin{align}\label{eq:C3Z2Z2U3}
\begin{split}
Z^{\vec{r}=(r,0\dots,0)}_{[\FC^3/\RZ_2\times\RZ_2]}(\vec \qu;\epsilon_1,\epsilon_2,\epsilon_3)&=\frac{M\big((-1)^{r}\,\Qu\big)^{r\,\frac{\epsilon_1\,\epsilon_2\,\epsilon_3-\epsilon_1^2\,\epsilon_2-\epsilon^2_1\,\epsilon_3-\epsilon^2_2\,\epsilon_3-\epsilon_1\,\epsilon_2^2-\epsilon_1\,\epsilon_3^2-\epsilon_2\,\epsilon^2_3}{\epsilon_1\,\epsilon_2\,\epsilon_3}}}{L\big(\qu_1,\qu_2,\qu_3,(-1)^{r}\,\Qu\big)^r} \\
& \quad \, \times \prod_{1\leq p<s\leq 3}\,\widetilde{M}\big(\qu_p\,\qu_s,(-1)^{r}\,\Qu\big)^{r\,\frac{\epsilon_{(ps)^-}-\epsilon_{ps}}{2\,\epsilon_{(ps)^-}}} \ .\end{split}
\end{align}
\end{proposition}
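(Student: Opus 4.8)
The plan is to obtain \eqref{eq:C3Z2Z2U3} as a direct corollary of Conjecture~\ref{con4a}, exactly in the manner that Proposition~\ref{prop:6dZC2C2} follows from Conjecture~\ref{con3}: apply the dimensional‑reduction identity of Proposition~\ref{prop3}. This is legitimate because $\FC^3/(\RZ_2\times\RZ_2)\times\FC$ is a $(3,0)$ orbifold — the generators $g_1,g_2$ act trivially on the fourth coordinate, so $s_4=0$, and $\RZ_2\times\RZ_2\subset\sSL(3,\FC)$ — so Proposition~\ref{prop3} applies with $\sGamma=\RZ_2\times\RZ_2$ and framing vector $\vec r=(r,0,0,0)$, giving $Z_{[\FC^3/\RZ_2\times\RZ_2]\times\FC}^{\vec r}(\vec\qu;\vec a,\vec\epsilon,m_l^s=a_l^s+\epsilon_4)=Z_{[\FC^3/\RZ_2\times\RZ_2]}^{\vec r}(\vec\qu\,';\epsilon_1,\epsilon_2,\epsilon_3)$ with $\vec\qu\,'=\big((-1)^{r_s+1}\qu_s\big)_{s\in\shGamma}$.

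First I would impose the mass specialisation $m_l^s=a_l^s+\epsilon_4$ on the right-hand side of \eqref{eq:Orb_pf_C3Z2Z2C}. Since all $l$ have $s(l)=0$ for $\vec r=(r,0,0,0)$, the centre-of-mass parameter becomes $m=\frac1r\sum_l(m_l-a_l)=\epsilon_4$, so every occurrence of $m$ in \eqref{eq:Orb_pf_C3Z2Z2C} is replaced by $\epsilon_4$, and each such factor cancels an $\epsilon_4$ in the corresponding denominator: the $M$-exponent $m\,r\,(\,\cdots)/(\epsilon_1\epsilon_2\epsilon_3\epsilon_4)$ becomes $r\,(\,\cdots)/(\epsilon_1\epsilon_2\epsilon_3)$, the $L$-exponent $m\,r/\epsilon_4$ becomes $r$, and the $\widetilde M$-exponent $m\,r\,(\epsilon_{(ps)^-}-\epsilon_{ps})/(2\epsilon_4\epsilon_{(ps)^-})$ becomes $r\,(\epsilon_{(ps)^-}-\epsilon_{ps})/(2\epsilon_{(ps)^-})$; the Calabi--Yau relation $\epsilon_4=-\epsilon_{123}$ is then invoked only to reinterpret the surviving $\epsilon_a$ as the generic equivariant parameters on $\FC^3/(\RZ_2\times\RZ_2)$. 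Next I would track the change of variables $\vec\qu\mapsto\vec\qu\,'$, which for $\vec r=(r,0,0,0)$ reads $\qu_0'=(-1)^{r+1}\qu_0$ and $\qu_s'=-\qu_s$ for $s\in\{1,2,3\}$. Then $\Qu=\qu_0\qu_1\qu_2\qu_3\mapsto\Qu'=(-1)^{r+1}(-1)^3\Qu=(-1)^r\Qu$; the pairwise products $\qu_p\qu_s$ with $p,s\in\{1,2,3\}$ are invariant; and the arguments $-\qu_s$ inside $L(-\qu_1,-\qu_2,-\qu_3,\Qu)$ map to $-\qu_s'=\qu_s$, so that $L(-\qu_1,-\qu_2,-\qu_3,\Qu)\mapsto L(\qu_1,\qu_2,\qu_3,(-1)^r\Qu)$ and $\widetilde M(\qu_p\qu_s,\Qu)\mapsto\widetilde M(\qu_p\qu_s,(-1)^r\Qu)$. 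Substituting these into the specialised \eqref{eq:Orb_pf_C3Z2Z2C}, equating with $Z_{[\FC^3/\RZ_2\times\RZ_2]}^{\vec r=(r,0,0,0)}(\vec\qu\,';\epsilon_1,\epsilon_2,\epsilon_3)$ via Proposition~\ref{prop3}, and renaming $\vec\qu\,'$ back to $\vec\qu$ reproduces \eqref{eq:C3Z2Z2U3} verbatim.

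There is no genuine obstacle here: the argument is the same one-line reduction used for Propositions~\ref{prop:C2ZnU3}, \ref{prop:C2Z2CU3} and~\ref{prop:6dZC2C2}. The only point requiring care is the sign bookkeeping in $\vec\qu\mapsto\vec\qu\,'$ — in particular verifying that the four factors of $(-1)$ coming from $\qu_0'$ and $\qu_1'\qu_2'\qu_3'$ collapse to the single sign $(-1)^r$ in $\Qu'$, and that the minus signs already present inside $L(-\qu_1,-\qu_2,-\qu_3,\,\cdot\,)$ are exactly cancelled while those absent from $\widetilde M(\qu_p\qu_s,\,\cdot\,)$ remain absent. This is precisely the mechanism that produces $L(\qu_1,\qu_2,\qu_3,(-1)^r\Qu)$ with no internal signs in the $\sU(3)$ formula \eqref{eq:C3Z2Z2U3}, in contrast to the $\FC^4$ formula \eqref{eq:Orb_pf_C3Z2Z2C}.
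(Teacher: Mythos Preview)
Your proposal is correct and takes exactly the same approach as the paper, which simply states that the result follows immediately from \eqref{eq:Orb_pf_C3Z2Z2C} by using Proposition~\ref{prop3}. Your detailed sign bookkeeping for the change of variables $\vec\qu\mapsto\vec\qu\,'$ is accurate and fills in the routine verification that the paper omits.
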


\proof
This follows immediately from \eqref{eq:Orb_pf_C3Z2Z2C} by using Proposition \ref{prop3}. 
\endproof

\subsection{Pure $\mathcal{N}_{\textrm{\tiny T}}=2$ Gauge Theory on $[\FC^3/\RZ_2\times\RZ_2]\times\FC$}

For the pure gauge theories on the quotient stack $[\FC^3/\RZ_2\times\RZ_2]\times\FC$, obtained via the decoupling limit of Proposition~\ref{Prop4}, the analogue of Proposition~\ref{prop:pureorbZn} reads as

\begin{proposition}\label{prop:pureorbZ2Z2}
Assume Conjectures~\ref{con3} and~\ref{con4a} are true.
Then the equivariant instanton partition function of the pure cohomological $\sU(1)$ gauge theory on $[\FC^3/\RZ_2\times\RZ_2]\times\FC$ is given by
\begin{align}\begin{split}
Z_{[\FC^3/(\RZ_2\times\RZ_2)\times\FC]}(\vec \Lambda;\vec{\epsilon}\,)^{\rm pure} &=\exp-\varLambda\,\Big(\frac{\epsilon_1\,\epsilon_2\,\epsilon_3-\epsilon_1^2\,\epsilon_2-\epsilon^2_1\,\epsilon_3-\epsilon^2_2\,\epsilon_3-\epsilon_1\,\epsilon_2^2-\epsilon_1\,\epsilon_3^2-\epsilon_2\,\epsilon^2_3}{\epsilon_1\,\epsilon_2\,\epsilon_3\,\epsilon_4} \\
& \hspace{2.5cm} + \sum_{1\leq p<s\leq 3}\,\frac{\big(\epsilon_{(ps)^{-}}-\epsilon_{ps}\big)\,\big(\Lambda_p\,\Lambda_s+\Lambda_p^{-1}\,\Lambda_s^{-1}\big)}{2\,\epsilon_4\,\epsilon_{(ps)^-}} \\
& \hspace{2.5cm} +\frac{1}{\epsilon_4}\,\sum_{s=1}^3\,\big(\Lambda_s+\Lambda_s^{-1}\big)-\frac{1}{\epsilon_4}\,\big(\Lambda_{123}+\Lambda_{123}^{-1}\big)\Big) \ ,\end{split}
\end{align}
where $\varLambda=\Lambda_0\,\Lambda_1\,\Lambda_2\,\Lambda_3$ and $\Lambda_{123}=\Lambda_1\,\Lambda_2\,\Lambda_3$, while the higher rank partition functions of type $\vec r=(r,0,\dots,0)$ are all trivial:
\begin{align}
Z^{\vec r=(r,0,\dots,0)}_{[\FC^2/\RZ_2\times\RZ_2]\times\FC}(\vec\Lambda;\vec{\epsilon}\,)^{\rm pure} = 1 \qquad \mbox{for} \quad r>1 \ .
\end{align}
\end{proposition}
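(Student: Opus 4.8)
The plan is to derive both claims from the double scaling limit of Proposition~\ref{Prop4}, exactly in the way Corollary~\ref{prop:pureC4} follows from {Conjecture}~\ref{Prop1} and Proposition~\ref{prop:puremassiverel}, and Proposition~\ref{prop:pureorbZn} follows from Conjectures~\ref{con2} and~\ref{con4}. Since $\sGamma=\RZ_2\times\RZ_2\subset\sSL(3,\FC)$ presents $\FC^3/(\RZ_2\times\RZ_2)\times\FC$ as a $(3,0)$ orbifold with $s_4=0$, Proposition~\ref{Prop4} applies verbatim, so the only input beyond the assumed Conjectures~\ref{con3} and~\ref{con4a} is the series representation
\begin{align}
\log M(x,q) = \sum_{k=1}^\infty\,\frac{x^k}{k}\,\frac{q^k}{(1-q^k)^2} \ ,
\end{align}
which converges for $q\neq1$ and gives $\log M(x,q)=x\,q+O(q^2)$ as $q\to0$, together with $\log\widetilde{M}(x,q)=\log M(x,q)+\log M(x^{-1},q)=(x+x^{-1})\,q+O(q^2)$ and the corresponding expansion of $\log L(x_1,x_2,x_3,q)$ as a sum of four such contributions.

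For the rank one case I would substitute the closed formula of Conjecture~\ref{con3} into \eqref{eq:pure_limit_orb} with $r=1$, $\Lambda_0=-m\,\qu_0$ and $\Lambda_s=\qu_s$ for $s\in\{1,2,3\}$, take the logarithm, and expand in powers of $\Qu=\qu_0\,\qu_1\,\qu_2\,\qu_3$ using the expansions above. Reading off the exponents of $M(\Qu)$, $L(-\qu_1,-\qu_2,-\qu_3,\Qu)$ and $\widetilde{M}(\qu_p\,\qu_s,\Qu)$ from Conjecture~\ref{con3}, one obtains $\log Z_{[\FC^3/\RZ_2\times\RZ_2]\times\FC}=m\,\Qu\,\mathcal{P}(\qu_1,\qu_2,\qu_3;\vec\epsilon\,)+O(m\,\Qu^2)$, where $\mathcal{P}$ collects the relevant combinations of the $\epsilon_a$ and the $\qu_s$; matching $\mathcal{P}$ against the bracketed polynomial in the asserted exponential is then a finite check. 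In the iterated limit $\lim_{m\to\infty}\lim_{\qu_0\to0}$ the combination $m\,\Qu=(m\,\qu_0)\,\qu_1\,\qu_2\,\qu_3$ is held fixed and equal to $-\varLambda$, while every subleading contribution has the form $(m\,\Qu)\cdot O(\Qu)\to0$, so only the $k=1$ terms of the logarithmic series survive; rewriting $m\,\Qu=-\varLambda$ and $\qu_s=\Lambda_s$ then produces the stated formula. For $r>1$ I would run the same argument with Conjecture~\ref{con4a}, where the scaling in \eqref{eq:pure_limit_orb} is $\Lambda_0=(-1)^r\,m_1\cdots m_r\,\qu_0$, so that $m\,r\,\Qu=(-1)^r\,\varLambda\,\qu_1\,\qu_2\,\qu_3\cdot\frac{m\,r}{m_1\cdots m_r}$. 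Since $m=\frac1r\sum_l(m_l-a_l)$ grows only linearly in the masses while $m_1\cdots m_r$ grows as their product, $m\,r\,\Qu\to0$, and as every term of $\log Z^{\vec r}$ is $m\,r$ times a bounded coefficient times $O(\Qu)$, it vanishes, leaving the trivial partition function.

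The only genuinely nontrivial points are bookkeeping. First, the signs: one must carry the arguments $-\qu_i$ inside $L(-\qu_1,-\qu_2,-\qu_3,\Qu)$ through the expansion, in particular the fourth factor $\widetilde{M}(-\qu_1\,\qu_2\,\qu_3,\Qu)$ from which the $\Lambda_{123}+\Lambda_{123}^{-1}$ term in the exponent originates, and likewise track the $(-1)^r$ built into the double scaling limit. Second, one must verify that the prescribed order of limits in \eqref{eq:pure_limit_orb} really does suppress all $O(\Qu^2)$ corrections uniformly; this is immediate once $m\,\Qu$ (respectively $m\,r\,\Qu$) is identified with the fixed combination $\mp\varLambda$, but it is the one place where the argument is more than a formal manipulation of power series. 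Beyond these checks the computation is term by term identical to the proofs of Corollary~\ref{prop:pureC4} and Proposition~\ref{prop:pureorbZn}, so I expect no further obstacle.
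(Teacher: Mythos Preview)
Your proposal is correct and follows exactly the approach the paper takes: the paper's proof is literally the one line ``The proof is completely analogous to the proof of Proposition~\ref{prop:pureorbZn}'', which in turn invokes the series expansion $\log M(x,q)=\sum_{k\geq1}\frac{x^k}{k}\,\frac{q^k}{(1-q^k)^2}$ and the double scaling limit of Proposition~\ref{Prop4}, just as you describe. Your write-up actually supplies more of the bookkeeping (tracking the signs through $L(-\qu_1,-\qu_2,-\qu_3,\Qu)$ and identifying $m\,\Qu\to-\varLambda$) than the paper does; the only slip is a stray extra factor of $\qu_1\qu_2\qu_3$ in your expression for $m\,r\,\Qu$ in the $r>1$ case, but the conclusion that $m\,r\,\Qu\to0$ is unaffected.
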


\begin{proof}
The proof is completely analogous to the proof of Proposition~\ref{prop:pureorbZn}.
\end{proof}

\appendix

\section{Generalized ADHM Construction} \label{app:ADHMconstruction}

In this appendix we construct the ADHM type finite-dimensional matrix model of the moduli space of finite action solutions to the noncommutative instanton equations \eqref{Zin}. To write these generalized ADHM equations, we first introduce two Hermitian vector spaces $V$ and $W$ of complex dimensions $k$ and $r$, respectively. Let $U$ be an $(8\,k{+}r){\times} r$ matrix which solves the Weyl equation
\begin{align}
    \Delta^{\dagger}\, U=0 \ ,
\end{align}
where $\Delta$ is the  $(8\,k+r){\times} 8\,k$ matrix
\begin{align}
    \Delta={\small\begin{pmatrix}
    b_1^\dagger& b_2& b_3& b_4&0&0&0&0\\
    b_2^\dagger& -b_1& 0& 0&b_3&-b_4&0&0\\
    b_3^\dagger& 0& -b_1& 0&b_2&0&b_4&0\\
    b_4^\dagger& 0& 0& b_1^\dagger&0&b_2&b_3&0\\
    0&0&0&0& b_4^\dagger&b_3^\dagger&b_2^\dagger&b_1\\
    0&0& -b_4^\dagger&b_3^\dagger&0&0&-b_1^\dagger&b_2\\
    0& b_4^\dagger&0&-b_2^\dagger&0&-b_2^\dagger&-b_1^\dagger&b_3\\
    0& -b_3^\dagger&-b_2^\dagger&0&-b_1^\dagger&0&0&b_4\\
    I^\dagger&0&0&0&0&0&0&0
    \end{pmatrix}} \ .
\normalsize
\end{align}
Here $b_a=B_a-z_a\,\ident_k$, with $B_{a} \in\sEnd_\mathbbm{C}(V)$ for $a\in\ulfour$ and $I \in \sHom_\mathbbm{C}(W,V)$.

The auxiliary matrix $\Delta$ is required to satisfy the equation
\begin{align}
\Delta^\dagger\,\Delta=\ident_{8}\otimes f_{k }^{-1} \ ,    
\end{align} 
where $f_{ k}$ is an invertible $k{\times} k$ matrix. This leads to the equations for the ADHM data $(B_{a},I)_{a\in\ulfour}$  given by \eqref{ADHMeq}:
\begin{align}
 [B_a,B_b]-\tfrac{1}{2}\,\epsilon_{ab\bar c\bar d}\,\big[B_{\bar c}^\dagger,B_{\bar d}^\dagger\big]=0 \qquad \mbox{and} \qquad   \sum_{a=1}^4\,\big[B_a,B_{\bar a}^\dagger\big]+I\,I^\dagger=\xi\, \ident_{ k} \ .
\end{align}
One asks that the matrix $U$ be normalized: $U^\dagger\, U=\ident_{r} $. Then the columns of $U$  together with $\Delta$ form a complete basis in $\mathbbm{C}^{8\,k+r}$, and therefore
\begin{align}
\ident_{8\,k+r}-U\, U^\dag=\Delta\,(\ident_{ 8}\otimes f_k)\,\Delta^\dagger \ .\label{condition}
\end{align}

The $\sSpin(7)$-instanton connection can now be written as $A=U^\dagger\,\dd U$. Indeed, using \eqref{condition} we compute the components of its curvature two-form to get
\begin{align}
\begin{split}
    F_{\mu\nu}&=\partial_\mu U^\dagger\,\partial_\nu U-\partial_\nu U^\dagger\,\partial_\mu U+\big[U^\dagger\,\partial_\mu U,U^\dagger\,\partial_\nu U\big] \\[4pt]
   &=\partial_{[\mu}U^\dagger\,\big(\ident_{8\,k+r}-U\,U^\dagger\big)\,\partial_{\nu]}U \\[4pt]
   &=U^\dagger\,\partial_{[\mu}\Delta\,(\ident_8\otimes f_k)\,\partial_{\nu]}\Delta^\dagger\,U =    U^\dagger\,\big(\Sigma_{\mu\nu}^{\textrm{\tiny$(+)$}}\otimes f_{ k}\big)\,U \ ,
\end{split}
\end{align}
which satisfy the self-duality equations \eqref{eq:spin7inst} $(\lambda=1)$ for $\sSpin(7)$-holonomy.  Here we have introduced the eight-dimensional counterparts of the 't Hooft symbols
\begin{align}
    \Sigma^{\textrm{\tiny$(+)$}}_{\mu\nu}=\bar\Sigma_{\mu}\, \Sigma_\nu -\bar\Sigma_\nu\, \Sigma_\mu \qquad \mbox{and} \qquad \Sigma^{\textrm{\tiny$(-)$}}_{\mu\nu}=\Sigma_{\mu}\, \bar\Sigma_\nu -\Sigma_\nu\, \bar\Sigma_\mu \ .
\end{align}
The $8{\times}8$ spin matrices $\Sigma_\mu$ and $\bar\Sigma_\mu$ for $\mu=1,\dots,8$ are generators of the Clifford algebra $\mathsf{C\ell}(8)$, where ${\Sigma}_8=\bar\Sigma_8=\ident_8$ and the matrices ${\Sigma}_a=-\bar\Sigma_a$ for $a=1,\dots,7$ satisfy the anticommutation relations $\{\Sigma_a,\Sigma_b\}=-2\,\delta_{ab}\,\ident_8$. 

Consequently, the complex connection
\begin{align}
    A_a=\tfrac{1}{\sqrt{2\,\xi}} \, U^\dagger\,\partial_a U 
\end{align}
for $a\in\ulfour$ satisfies the instanton equations \eqref{in}.

\section{Infinite Product Formulas for Instanton Partition Functions}
\label{app:closedformulas}

In this appendix we outline a possible alternative proof of {Conjecture}~\ref{Prop1} based on the quiver matrix model and its combinatorial evaluation, and then proceed to sketch how this can be extended to provide a potential similar proof of Conjecture~\ref{con2}.

\subsection{Evidence for {Conjecture}~\ref{Prop1}}
\label{app:Prop1}

We start by explicitly computing  the $k=1$ contribution to the instanton partition function \eqref{Zfull}, resulting in

\begin{lemma}\label{lem:1instC4}
For any rank $r\geq1$, the one-instanton contribution to $Z^{r}_{\FC^4}(\qu;\vec{a},\vec{\epsilon},\vec{m})$ is given by
\begin{align}\label{eq:1instC4}
Z^{r,1}_{\FC^4}(\vec{a},\vec{\epsilon},\vec{m})&=\frac{\epsilon_{12}\,\epsilon_{13}\,\epsilon_{23}}{\epsilon_1\,\epsilon_2\,\epsilon_3\,\epsilon_4}\,r\,m \qquad \mbox{with} \quad m=\frac{1}{r}\,\sum_{l=1}^r\,(m_l-a_l) \ .
\end{align}
\end{lemma}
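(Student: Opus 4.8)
\textbf{Proof strategy for Lemma~\ref{lem:1instC4}.}

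The plan is to evaluate the $k=1$ term of the combinatorial formula \eqref{Zk} directly. At instanton number $k=1$ the only solid partitions of size one are $\sigma_l = \{(1,1,1,1)\}$ for a single choice of colour $l\in\{1,\dots,r\}$, with all other $\sigma_{l'}$ empty; thus $\vec\sigma$ is parametrized by $l$, and $|\vec\sigma|=1$ forces exactly one box. First I would record that the box contributes the single weight $\phi_{(a_l;\vec 1)} = a_l + \epsilon_1+\epsilon_2+\epsilon_3+\epsilon_4 = a_l$ by the Calabi--Yau constraint \eqref{eq:CY4constraint}, so that $\vec p_l\cdot\vec\epsilon = 0$ for the unique box. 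The sign factor \eqref{eq:signfactor} is trivially $\ttO_{\vec\sigma}=0$ here, since a size-one partition has no pair $(\vec p_l,\vec p_l')$ with distinct components.

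Next I would plug these data into the product in \eqref{Zk}. The ``matter over Coulomb'' factor is
\begin{align}
\frac{\CP_r(a_l\,|\,\vec m)}{\CP_r(a_l\,|\,\vec a)}
= \frac{\prod_{j=1}^r (a_l - m_j)}{\prod_{j=1}^r (a_l - a_j)} \ ,
\end{align}
but the denominator has a zero at $j=l$; according to the prescription in \eqref{eq:top} (the superscript ${}^{\neq0}$), the vanishing factors in numerator and denominator are omitted, so this ratio must be handled together with the $\CR$-factor coming from the diagonal pair $(\vec p_l,\vec p_l)$. Indeed $\CR(0\,|\,\vec\epsilon\,)$ as defined in \eqref{eq:CPrCR} has a zero in its numerator (the overall factor $x$) and a pole from $1/\prod_a(x-\epsilon_a)$; after cancelling the omitted zeros the surviving contribution of the diagonal term is the finite constant $\dfrac{(-\epsilon_{12})(-\epsilon_{13})(-\epsilon_{23})}{(-\epsilon_1)(-\epsilon_2)(-\epsilon_3)(-\epsilon_4)} = \dfrac{\epsilon_{12}\,\epsilon_{13}\,\epsilon_{23}}{\epsilon_1\,\epsilon_2\,\epsilon_3\,\epsilon_4}$, while the Coulomb denominator loses its $j=l$ factor and the matter numerator loses nothing. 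Collecting everything and summing over the $r$ colours $l$,
\begin{align}
Z^{r,1}_{\FC^4}(\vec a,\vec\epsilon,\vec m)
= \frac{\epsilon_{12}\,\epsilon_{13}\,\epsilon_{23}}{\epsilon_1\,\epsilon_2\,\epsilon_3\,\epsilon_4}\ \sum_{l=1}^r\ \frac{\prod_{j=1}^r (a_l-m_j)}{\prod_{j\neq l}(a_l-a_j)} \ .
\end{align}

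The remaining point is the algebraic identity
\begin{align}
\sum_{l=1}^r\ \frac{\prod_{j=1}^r (a_l-m_j)}{\prod_{j\neq l}(a_l-a_j)} \ = \ \sum_{l=1}^r\,(m_l - a_l) \ ,
\end{align}
which I would prove by a partial-fractions / Lagrange-interpolation argument: the left-hand side is the sum of residues of the rational function $z\mapsto \prod_{j=1}^r(z-m_j)\big/\prod_{j=1}^r(z-a_j)$, which is a monic polynomial of degree $r$ divided by a monic polynomial of degree $r$, hence equals $1 + \text{(lower order)}/\prod_j(z-a_j)$; the sum of all its finite residues is minus the residue at infinity, namely the coefficient of $1/z$ in the Laurent expansion, which is $\sum_j a_j - \sum_j m_j = -\sum_l(m_l-a_l)$. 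Wait---that sign: the sum of residues at the poles $a_l$ equals $-\mathrm{Res}_{z=\infty}$, and $\mathrm{Res}_{z=\infty}$ of $\prod(z-m_j)/\prod(z-a_j)$ is $\big(\sum m_j - \sum a_j\big)$ up to the standard sign convention; tracking it carefully gives $\sum_l (m_l-a_l)$ for the pole sum. Substituting and using $\sum_l(m_l-a_l)=r\,m$ from \eqref{eq:massdef} yields \eqref{eq:1instC4}. The only genuinely delicate step is the bookkeeping of the omitted zero/pole factors in the first paragraph---i.e.\ making precise that the $\widehat{\tt e}$-prescription in \eqref{eq:top} produces exactly the finite constant $\epsilon_{12}\epsilon_{13}\epsilon_{23}/(\epsilon_1\epsilon_2\epsilon_3\epsilon_4)$ from the coincident box---while the interpolation identity is routine. $\hfill\blacksquare$
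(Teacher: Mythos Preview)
Your approach is correct and is genuinely different from the paper's. The paper also starts from the combinatorial formula \eqref{Zk} and arrives at essentially the same sum over colours, written there as $\frac{\epsilon_{12}\epsilon_{13}\epsilon_{23}}{\epsilon_1\epsilon_2\epsilon_3\epsilon_4}\sum_l m'_l\prod_{p\neq l}\big(1-\frac{m'_p}{a_{lp}}\big)$ with $m'_l=m_l-a_l$. From there, however, the paper expands the product in powers of the $m'_p$, isolates the leading term $r\,m$, and then argues that the remaining coefficients $\CA_{i_1\cdots i_p l}$ must vanish by an indirect trick: the lemma is known to hold at the special mass point $m'_l=\epsilon_4$ via the six-dimensional reduction of Proposition~\ref{prop:ZDTgeb}, and since the $\CA$'s are mass-independent this forces them to be zero. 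Your residue/Lagrange-interpolation argument for the identity $\sum_l\prod_j(a_l-m_j)/\prod_{j\neq l}(a_l-a_j)=\sum_l(a_l-m_l)$ is entirely self-contained and more elementary---it does not invoke any external result.

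One bookkeeping point: you have two sign slips that happen to cancel. The finite constant from $\CR(0|\vec\epsilon\,)$ after dropping the zero factor is $\dfrac{(-\epsilon_{12})(-\epsilon_{13})(-\epsilon_{23})}{(-\epsilon_1)(-\epsilon_2)(-\epsilon_3)(-\epsilon_4)}=-\dfrac{\epsilon_{12}\epsilon_{13}\epsilon_{23}}{\epsilon_1\epsilon_2\epsilon_3\epsilon_4}$, not $+$; and the sum of the finite residues of $\prod_j(z-m_j)/\prod_j(z-a_j)$ equals $\sum_l(a_l-m_l)$, not $\sum_l(m_l-a_l)$ (the $1/z$ coefficient in the Laurent expansion is $\sum_j(a_j-m_j)$, and $\mathrm{Res}_\infty=-\text{coeff}_{1/z}$). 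Both errors flip the sign, so your final formula is unaffected, but you should fix the intermediate statements.
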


\proof
Using the formula \eqref{Zk} we immediately see the result for $r=1$. So we assume $r>1$, and shifting the masses $m_l$ to $m_l':=m_l-a_l$ for $l=1,\dots,r$ we get
\begin{align}\label{eq:A2}
Z^{r,1}_{\FC^4}(\vec{a},\vec{\epsilon},\vec{m})&=\frac{\epsilon_{12}\,\epsilon_{13}\,\epsilon_{23}}{\epsilon_1\,\epsilon_2\,\epsilon_3\,\epsilon_4} \, \sum^r_{l=1}\, m'_{l} \ \prod^r_{\stackrel{\scriptstyle p=1}{\scriptstyle p\neq l}}\, \Big(1-\frac{m_p'}{a_{lp}}\Big) \ ,
\end{align}
where $a_{lp}=a_l-a_p=-a_{pl}$. Using \eqref{eq:A2} the result is easy to check for $r=2$, so henceforth we restrict to ranks $r>2$. The sum in \eqref{eq:A2} can be rewritten as
\begin{align}\begin{split}
\sum^r_{l=1}\, m'_{l} \ \prod^r_{\stackrel{\scriptstyle p=1}{\scriptstyle p\neq l}} \, \Big(1-\frac{m'_p}{a_{lp}}\Big)&= \sum_{l=1}^r\,m_l'\,\Big(1+\sum_{p=1}^{r-1} \ \sum_{\substack{1\leq i_1<\cdots< i_p\leq r \\ i_j\neq l}}\,\frac{m'_{i_1}\cdots m'_{i_p}}{a_{i_1l}\cdots a_{i_pl}}\Big) \\[4pt]
&= r\,m + \sum_{p=1}^{r-1} \ \sum_{l=1}^r\,m_l' \ \sum_{\substack{1\leq i_1<\cdots< i_p\leq r \\ i_j\neq l}}\,\frac{m'_{i_1}\cdots m'_{i_p}}{a_{i_1l}\cdots a_{i_pl}} \ .
\end{split}\label{induction}
\end{align}

For each $1\leq p\leq r-1$, the second sum of \eqref{induction} can be expressed in the form
\begin{align}
\begin{split}
& \sum_{l=1}^{r}\,m_l' \ \sum_{\substack{1\leq i_1<\cdots< i_p\leq r\\i_j\neq l}}\,\frac{m'_{i_1}\cdots m'_{i_p}}{a_{i_1 l}\cdots a_{i_pl}} \\[4pt]
& \hspace{2cm} =\Big(\sum_{1\leq l<i_1<\cdots <i_p\leq r}+\sum_{1\leq i_1<l<i_2<\cdots <i_p\leq r}+\cdots+\sum_{1\leq i_1<\cdots <i_p<l\leq r}\Big)\,\frac{m'_{i_1}\cdots m'_{i_p}\,m'_l}{a_{i_1l}\cdots a_{i_pl}} \\[4pt]
& \hspace{2cm} = \sum_{1\leq l<i_1<\cdots <i_p\leq r}\,m'_{i_1}\cdots m'_{i_p}\,m'_l\,\Big((a_{i_1l}\cdots a_{i_pl})^{-1} + \sum_{j=1}^p \, a_{li_j}^{-1} \ \prod_{\substack{n=1 \\ n\neq j}}^p\,a_{i_ni_j}^{-1}\Big) \\[4pt]
& \hspace{2cm} = \sum_{1\leq l<i_1<\cdots <i_p\leq r}\,\frac{m'_{i_1}\cdots m'_{i_p}\,m'_l}{a_{i_1l}\cdots a_{i_pl} \ \displaystyle \prod_{1\leq j<n\leq p}\,a_{i_ji_n}} \ \CA_{i_1\cdots i_pl} \ ,
\end{split}
\end{align}
where, for each increasing sequence $1\leq l<i_1<\cdots<i_p\leq r$, we set
\begin{align}\label{eq:CAil}
\begin{split}
\CA_{i_1\cdots i_pl} :\!&= \prod_{1\leq j<n\leq p} \, a_{i_ji_n} - \sum_{j=1}^{p} \, (-1)^{p-j} \ \prod_{\substack{j'=1 \\ j'\neq j}}^p \, a_{i_{j'}l} \ \prod_{\substack{1\leq q<n\leq p \\ n,q\neq j}} \, a_{i_{q}i_n} \\[4pt]
&=\sum_{t=1}^{p+1} \,(-1)^{p-t-1} \ \prod_{\substack{1\leq j<n\leq p+1 \\ j,n\neq t}}a_{i_ji_n} \ ,
\end{split}
\end{align}
with the convention $i_{p+1}:=l$. 

Although it should be possible to show directly that \eqref{eq:CAil} vanishes (indeed we have checked this explicitly up to $p=4$), a more straightforward proof uses Proposition~\ref{prop:ZDTgeb} to assert that \eqref{eq:1instC4} holds at the mass specialisations $m_l'=\epsilon_4$ for $l=1,\dots,r$. Since \eqref{eq:CAil} is independent of the shifted mass parameters, it follows that 
\begin{align}
\CA_{i_1\cdots i_pl} = 0
\end{align}
as required.
\endproof

We are now ready to sketch an argument that may prove {Conjecture}~\ref{Prop1}.
Consider the instanton partition function 
\begin{align}
Z_{\FC^4}^r(\qu;\vec a,\vec{\epsilon},\vec{m}')=1+\sum_{k=1}^\infty\,\qu^k \, Z_{\FC^4}^{r,k}(\vec{a},\vec{\epsilon},\vec{m}') 
\end{align}
with the explicit combinatorial expansion \eqref{Zk}, where $\vec m':=\vec m-\vec a$. By Proposition \ref{prop:ZDTgeb} we know that
\begin{align}
Z_{\FC^4}^{r}(\qu;\vec a,\vec\epsilon,m'_l=-\epsilon_{123}) =  Z_{\FC^3}^{r}\big((-1)^{r+1}\,\qu;\epsilon_1,\epsilon_2,\epsilon_3\big) = {M}(-\qu)^{-\frac{r\,\epsilon_{12}\,\epsilon_{13}\,\epsilon_{23}}{\epsilon_1\,\epsilon_2\,\epsilon_3}} \ .
\end{align}
Armed with this information, we can assume that the partition function takes a form given by
\begin{align}
\log Z_{\FC^4}^r(\qu;\vec a,\vec{\epsilon},\vec{m}') = f_r(\vec a,\vec\epsilon,\vec m')\log M(-\qu) + \log G_r(\qu;\vec a,\vec\epsilon,\vec m') \ ,
\end{align}
where $f_r$ is a rational function of the equivariant parameters with
\begin{align} \label{eq:frspecial}
f_r(\vec a,\vec\epsilon,m_l'=-\epsilon_{123}) = -\frac{r\,\epsilon_{12}\,\epsilon_{13}\,\epsilon_{23}}{\epsilon_1\,\epsilon_2\,\epsilon_3} \ ,
\end{align}
and the function $G_r$ has a power series expansion
\begin{align}
G_r(\qu;\vec{a},\vec{\epsilon},\vec{m}')=1+\sum_{k=1}^\infty\,\qu^k\,G_r^{(k)}(\vec{a},\vec{\epsilon},{\vec m'})
\end{align}
whose coefficients $G_r^{(k)}$ are rational functions of the equivariant parameters with
\begin{align}
G^{(k)}_r(\vec{a},\vec{\epsilon},m_l'=-\epsilon_{123})=0 \ .
\end{align}

From Lemma~\ref{lem:1instC4} it follows that
 \begin{align}
f_r(\vec{a},\vec{\epsilon},\vec{m}')-G_r^{(1)}(\vec{a},\vec{\epsilon},\vec{m}')=-\frac{r\,m\,\epsilon_{12}\,\epsilon_{13}\,\epsilon_{23}}{\epsilon_1\,\epsilon_2\,\epsilon_3\,\epsilon_4} \ .
 \end{align}
Hence, by redefining the functions $f_r$ and $G_r^{(1)}$ if necessary, we can assume that
\begin{align}
f_r(\vec a,\vec{\epsilon},\vec m')=-\frac{r\,m\,\epsilon_{12}\,\epsilon_{13}\,\epsilon_{23}}{\epsilon_1\,\epsilon_2\,\epsilon_3\,\epsilon_4} \qquad \mbox{and} \qquad G_r^{(1)}(\vec{a},\vec{\epsilon},\vec{m}')=0 \ .
\end{align}

The idea now is to prove that $G_r^{(k)}(\vec{a},\vec{\epsilon},\vec m')= 0$ by induction on $k$. We know this for $k=1$, so we suppose \smash{$G_r^{(n)}(\vec{a},\vec{\epsilon},\vec m,r)= 0$} for $1\leq n\leq k-1$ with $k>1$.
Then
\begin{align}\label{eq:ZnrFg}
Z^{r,k}_{\FC^4}(\vec{a},\vec{\epsilon},\vec{m}')=F_r^{(k)}(\vec a,\vec{\epsilon},\vec m')+G_r^{(k)}(\vec{a},\vec{\epsilon},\vec m') \ ,
\end{align}
where $F_r^{(k)}(\vec a,\vec{\epsilon},\vec m')$ is the coefficient of $\qu^k$ in the series expansion of $\exp(f_r(\vec a,\vec{\epsilon},\vec m')\log M(-\qu))$. Recalling the symmetries of the matrix integral \eqref{partition}, we know that \smash{$Z^{r,k}_{\FC^4}(\vec{a},\vec{\epsilon},\vec{m}')$} is invariant under permutation of $\epsilon_1$ and $\epsilon_4=-\epsilon_{123}$, as well as under permutation of the entries of $\vec m'=(m_1',\dots,m_r')$. Since \smash{$F_r^{(k)}(\vec a,\vec{\epsilon},\vec m')$} is invariant under these permutations, so is \smash{$G_r^{(k)}(\vec{a},\vec{\epsilon},\vec m')$}. 

Looking at \eqref{Zk}, we can decompose the $k$-instanton contributions for $k>1$ into
\begin{align}\label{eq:Zkrmasspoly}
Z^{r,k}_{\FC^4}(\vec{a},\vec{\epsilon},\vec{m}')= \sum_{\imath_1,\dots,\imath_r=1}^k \, \frZ_{\FC^4}^{r,k;\vec{\imath}}(\vec{a},\vec{\epsilon}\,) \ \prod_{l=1}^r\, (m'_l+\epsilon_{123})^{\imath_l} +(-1)^{(r+1)\,k}\,Z^{r,k}_{\FC^3}(\epsilon_1,\epsilon_2,\epsilon_3)
\end{align}
for some functions $\frZ_{\FC^4}^{r,k;\vec{\imath}}(\vec{a},\vec{\epsilon}\,)$. 
We now write
\begin{align}
f_r(\vec a,\vec{\epsilon},\vec m')=-\frac{r\,\epsilon_{12}\,\epsilon_{13}\,\epsilon_{23}}{\epsilon_1\,\epsilon_2\,\epsilon_3}-\sum_{l=1}^r\,\frac{\epsilon_{12}\,\epsilon_{13}\,\epsilon_{23}\,(m'_l+\epsilon_{123})}{\epsilon_1\,\epsilon_2\,\epsilon_3\,\epsilon_4}
\end{align}
and use this to separate out the polynomial mass dependence in $F_r^{(k)}(\vec a,\vec{\epsilon},\vec m')$, similarly to \eqref{eq:Zkrmasspoly}, as
\begin{align}
F_r^{(k)}(\vec a,\vec{\epsilon},\vec m') = \sum_{\imath_1,\dots,\imath_r=1}^k \, \frF_{r}^{(k);\vec{\imath}}(\vec{a},\vec{\epsilon}\,) \ \prod_{l=1}^r\, (m'_l+\epsilon_{123})^{\imath_l} +(-1)^{(r+1)\,k}\,Z^{r,k}_{\FC^3}(\epsilon_1,\epsilon_2,\epsilon_3)
\end{align}
with some functions \smash{$\frF_{r}^{(k);\vec{\imath}}(\vec{a},\vec{\epsilon}\,)$}. The coefficient functions \smash{$\frZ_{\FC^4}^{r,k;\vec{\imath}}(\vec a,\vec\epsilon\,)$}  and  \smash{$\frF_{r}^{(k);\vec{\imath}}(\vec a,\vec\epsilon\,)$} are independent of the masses $\vec m'$ and symmetric in the entries of $\vec\imath=(\imath_1,\dots,\imath_r)$.

Then \eqref{eq:ZnrFg} determines \smash{$G_r^{(k)}$} as the Taylor expansion
\begin{align}
G_r^{(k)}(\vec a,\vec \epsilon, \vec m')=\sum_{\imath_1,\dots, \imath_r=1}^k \, \big(\frZ_{\FC^4}^{r,k;\vec{\imath}}(\vec{a},\vec{\epsilon}\,) - \frF_{r}^{(k);\vec{\imath}}(\vec{a},\vec{\epsilon}\,) \big) \ \prod_{l=1}^r \, (m'_l+\epsilon_{123})^{\imath_l} \ .
\end{align}
At this stage one should be able to exploit the $\RZ_2$-symmetry $\epsilon_1\leftrightarrow-\epsilon_{123}$ of $G_r^{(k)}(\vec a,\vec \epsilon,\vec m')$, together with the analytic behaviour of \eqref{Zk} in $\vec\epsilon\,$, to infer that \smash{$\frZ_{\FC^4}^{r,k;\vec{\imath}}(\vec{a},\vec{\epsilon}\,) = \frF_{r}^{(k);\vec{\imath}}(\vec{a},\vec{\epsilon}\,)$} for each $\vec{\imath}$. It would be very interesting to understand this further and complete the proof of {Conjecture}~\ref{Prop1} along these lines.

\subsection{Evidence for Conjecture~\ref{con2}}
\label{app:con2}

A possible proof of Conjecture~\ref{con2} follows the same line of reasoning as in Appendix~\ref{app:Prop1}, starting with the dimensional reduction \eqref{orb6dZn} according to Proposition~\ref{prop3}. We choose $\vec r=(1,0,\dots, 0)$ without loss of generality. We can assume that
\begin{align}\begin{split}
Z_{[\FC^2/\RZ_n]\times \FC^2}(\vec \qu;\vec{\epsilon},m)&=
M\big((-1)^n\,\Qu\big)^{-n\,\frac{m\,\epsilon_{12}\,\epsilon_{13}\,\epsilon_{23}}{\epsilon_1\,\epsilon_2\,\epsilon_3\,\epsilon_4}-\frac{n^2-1}{n}\,\frac{m\,\epsilon_{12}}{\epsilon_1\,\epsilon_2}} \\
& \quad \, \times \prod_{0<p\leq s<n}\,\widetilde{M}\big((-1)^{p-s+1}\,\qu_{[p,s]},(-1)^n\,\Qu\big)^{-\frac{m\,\epsilon_{12}}{\epsilon_3\,\epsilon_4}} \ G(\vec \qu;\vec \epsilon, m)\end{split} \ ,
\end{align}
where the function $G$ has a power series expansion
\begin{align}
G(\vec \qu;\vec \epsilon, m)=1+ \sum_{\vec k\in\RZ_{\geq0}^n\setminus\,\vec 0} \, \vec\qu^{\,\vec k} \ G^{( \vec k\,)}(\vec \epsilon, m)
\end{align}
whose coefficients are rational functions of the equivariant parameters 
with $G^{(\vec k\,)}(\vec \epsilon, m=-\epsilon_{123})=0$.

The idea is to proceed by induction on the size of \smash{$\vec{k}\in\RZ_{\geq0}^n\setminus\vec 0$} to show that $G^{(\vec k\,)}(\vec \epsilon, m)=0$. For $|\vec k\,|=1$ the only contribution to the instanton partition function is
\begin{align}
\qu_0\,\frac{m\,\epsilon_{12}}{\epsilon_3\,\epsilon_4} \ .
\end{align}
Thus $G^{(\vec{k}\,)}(\vec{\epsilon},m)=0$ for all $\vec{k}$ of size one. Now suppose $G^{(\vec{k}\,)}(\vec \epsilon ,m)=0$ for all $\vec k$ of sizes \smash{$1\leq|\vec k\,|\leq k-1$} with $k>1$. Then for $\vec k$ of size $|\vec k\,|=k$ we can write
\begin{align}\label{eq:ZveckFG}
Z^{\vec k}_{[\FC^2/\RZ_n]\times\FC^2}(\vec{\epsilon},m) = F^{(\vec k\,)}(\vec \epsilon, m)+ G^{(\vec{k}\,)}(\vec \epsilon ,m) \ ,
\end{align}
where $F^{(\vec k\,)}(\vec \epsilon, m)$ is the coefficient of $\vec\qu^{\,\vec k}$ in the power series expansion of
\begin{align}\label{eq:MacMahonpowers}
M\big((-1)^n\,\Qu\big)^{-n\,\frac{m\,\epsilon_{12}\,\epsilon_{13}\,\epsilon_{23}}{\epsilon_1\,\epsilon_2\,\epsilon_3\,\epsilon_4}-\frac{n^2-1}{n}\,\frac{m\,\epsilon_{12}}{\epsilon_1\,\epsilon_2}} \ \prod_{0<p\leq s<n}\,\widetilde{M}\big((-1)^{p-s+1}\,\qu_{[p,s]},(-1)^n\,\Qu\big)^{-\frac{m\,\epsilon_{12}}{\epsilon_3\,\epsilon_4}} \ .
\end{align}

From the matrix integral \eqref{eq:Orb_Zin} with $\vec r=(1,0,\dots,0)$ for the $\RZ_n$-action of Section~\ref{sec:C2ZnC2}, with weights $s_1=1$, $s_2=n-1$ and $s_3=s_4=0$, it follows that the fractional instanton contribution \smash{$Z^{\vec k}_{[\FC^2/\RZ_n]\times\FC^2}(\vec{\epsilon},m)$} is invariant under the permutation of $\epsilon_3$ and $\epsilon_4= -\epsilon_{123}$. From the combinatorial expansion \eqref{eq:ZC2Znk} it follows that it can be decomposed into
\begin{align}
Z^{\vec k}_{[\FC^2/\RZ_n]\times\FC^2}(\vec{\epsilon},m)=\sum_{\imath=1}^{|\vec k\,|} \, (m+\epsilon_{123})^\imath \ \frZ^{\vec k;\imath}_{[\FC^2/\RZ_n]\times\FC^2}(\vec{\epsilon}\,) + (-1)^{|\vec k\,|+k_0} \, Z^{\vec k}_{[\FC^2/\RZ_n]\times\FC}(\epsilon_1,\epsilon_2,\epsilon_3) \ ,
\end{align} 
with some functions \smash{$\frZ^{\vec k;\imath}_{[\FC^2/\RZ_n]\times\FC^2}(\vec{\epsilon}\,)$} for $\imath=1,\dots, |\vec k\,|$ which are independent of the mass $m$.

The powers of the generalized MacMahon functions in \eqref{eq:MacMahonpowers} can be rewritten respectively as
\begin{align}
-n\,\frac{\epsilon_{12}\,\epsilon_{13}\,\epsilon_{23}}{\epsilon_1\,\epsilon_2\,\epsilon_3}+\frac{n^2-1}{n}\,\frac{\epsilon_{12}\,\epsilon_{123}}{\epsilon_1\,\epsilon_2}
 -\Big(n\,\frac{\epsilon_{12}\,\epsilon_{13}\,\epsilon_{23}}{\epsilon_1\,\epsilon_2\,\epsilon_3\,\epsilon_4}-\frac{n^2-1}{n}\,\frac{\epsilon_{12}}{\epsilon_1\,\epsilon_2}\Big)\,(m+\epsilon_{123})
 \end{align}
 and
 \begin{align}
-\frac{\epsilon_{12}}{\epsilon_3}-\frac{\epsilon_{12}\,(m+\epsilon_{123})}{\epsilon_3\,\epsilon_4} \ .
 \end{align}
Since $F^{(\vec k\,)}(\vec \epsilon, m)$ is symmetric under $\epsilon_3\leftrightarrow-\epsilon_{123}$, it follows from \eqref{eq:ZveckFG} that so is $G^{(\vec{k}\,)}(\vec \epsilon ,m)$, and that it is given as the Taylor expansion
\begin{align}
G^{(\vec{k}\,)}(\vec \epsilon ,m)=\sum_{\imath=1}^{|\vec k\,|}\, \big(\frZ_{[\FC^2/\RZ_n]\times\FC^2}^{\vec k;\imath}(\vec \epsilon\,)- \frF^{(\vec k\,);\imath}(\vec \epsilon\,)\big) \, (m+\epsilon_{123})^\imath \ ,
\end{align}
with some functions $\frF^{(\vec k\,);\imath}(\vec \epsilon\,)$ for $\imath=1,\dots, |\vec k\,|$ which are independent of the mass parameter $m$. At this stage one should be able to exploit invariance under the $\RZ_2$-action $\epsilon_3\leftrightarrow-\epsilon_{123}$, together with the analytic behaviour of \eqref{eq:ZC2Znk} in $\vec\epsilon\,$, to show that \smash{$\frF^{(\vec k\,);\imath}(\vec \epsilon\,) = \frZ_{[\FC^2/\RZ_n]\times\FC^2}^{\vec k;\imath}(\vec \epsilon\,)$} for all $\imath\in \{1,\dots,|\vec k\,|\}$. 


\bibliographystyle{ourstyle}
\bibliography{Orbifold-bibliography}

\end{document}